\newtheorem{theorem}{Theorem}
\newtheorem{proposition}[theorem]{Proposition}
\DeclareMathOperator{\tr}{tr}
\DeclarePairedDelimiter\ceil{\lceil}{\rceil}
\DeclarePairedDelimiter\floor{\lfloor}{\rfloor}
\newtheorem{corollary}[theorem]{Corollary}
\newtheorem{lemma}[theorem]{Lemma}
\theoremstyle{definition}
\newtheorem{definition}{Definition}[section]
\theoremstyle{remark}
\title{Reflected entropy in random tensor networks II: a topological index from the canonical purification}
\author[1]{Chris Akers,}
\author[2]{Thomas Faulkner,}
\author[2]{Simon Lin,}
\author[3]{Pratik Rath}
\affiliation[1]{Center for Theoretical Physics,\\
Massachusetts Institute of Technology, Cambridge, MA 02139, USA}
\affiliation[2]{Department of Physics, University of Illinois,\\ 1110 W. Green St., Urbana, IL 61801-3080, USA}
\affiliation[3]{Department of Physics, University of California, Santa Barbara, CA 93106, USA}
\emailAdd{cakers@mit.edu}
\emailAdd{tomf@illinois.edu}
\emailAdd{shanlin3@illinois.edu}
\emailAdd{rath@ucsb.edu}
\abstract{In Ref.~\cite{Akers:2021pvd}, we analyzed the reflected entropy ($S_R$) in random tensor networks motivated by its proposed duality to the entanglement wedge cross section (EW) in holographic theories, $S_R=2 \frac{EW}{4G}$. In this paper, we discover further details of this duality by analyzing a simple network consisting of a chain of two random tensors. This setup models a multiboundary wormhole. We show that the reflected entanglement spectrum is controlled by representation theory of the Temperley-Lieb (TL) algebra. In the semiclassical limit motivated by holography, the spectrum takes the form of a sum over superselection sectors associated to different irreducible representations of the TL algebra and labelled by a topological index $k\in \mathbb{Z}_{\geq 0}$. Each sector contributes to the reflected entropy an amount $2k \frac{EW}{4G}$ weighted by its probability. We provide a gravitational interpretation in terms of fixed-area, higher-genus multiboundary wormholes with genus $2k-1$ initial value slices. These wormholes appear in the gravitational description of the canonical purification. 
We confirm the reflected entropy holographic duality away from phase transitions. 
We also find important non-perturbative contributions from the novel geometries with $k\geq 2$ near phase transitions, resolving the discontinuous transition in $S_R$. Along with analytic arguments, we provide numerical evidence for our results. 
We comment on the connection between TL algebras, Type II$_1$ von Neumann algebras and gravity.}
\begin{document}
\maketitle

\section{Introduction}\label{sec:intro}

The intriguing connection between geometry and entanglement in the context of holography has resulted in big leaps in our understanding of quantum gravity. The Ryu-Takayanagi (RT) formula \cite{Ryu:2006bv,Ryu:2006ef,Hubeny:2007xt} relating boundary entropy to the area of bulk extremal surfaces is the hallmark of such an emergence of spacetime from entanglement. In the pursuit of more such links, a proposal for the holographic dual to another geometric object, the entanglement wedge cross section, was made in Ref.~\cite{Dutta:2019gen}. The proposed dual, the reflected entropy, is a novel measure of correlation between bipartite mixed states, or equivalently tripartite pure states. 

The reflected entropy is defined as
\begin{equation}\label{eq:SR}
	S_R(A:B) = S(AA^*)_{\ket{\sqrt{\rho_{AB}}}},
\end{equation}
where the state $\ket{\sqrt{\rho_{AB}}}\in \mathcal{H}_{AB}\otimes \mathcal{H}_{A^*B^*}$ is the canonical purification of the density matrix $\rho_{AB}$. The subsystems $A^*,B^*$ are referred to as the reflected copies of the subsystems $A,B$ respectively. The holographic proposal then states
\begin{equation}\label{eq:EW}
	S_R(A:B) = \frac{2 EW(A:B)}{4G},
\end{equation} 
where $EW(A:B)$ is the minimal cross section splitting the entanglement wedge of $AB$ (see \figref{fig:EWCS}). In \Eqref{eq:EW}, we have ignored quantum corrections, as well as time dependence (see Refs.~\cite{Chandrasekaran:2020qtn,Hayden:2021gno} for details). For simplicity, this paper will be limited to discussing the static, classical proposal although there is no reason to suspect the results do not generalize.
\footnote{While time dependence is straightforward, quantum corrections would likely include subtleties arising from corrections to the QES formula \cite{Marolf:2020vsi,Dong:2020iod,Akers:2020pmf}.} 
This proposal has already been useful in demonstrating the need for large amounts of tripartite entanglement in holographic states \cite{Akers:2019gcv}. More so, it can be thought of as a generalization of the RT formula with a boundary dual that is rigorously well defined even in the continuum limit \cite{Dutta:2019gen}. Thus, it is interesting to find evidence for such a duality.

\begin{figure}[t]
    \centering
    \includegraphics[scale=0.5]{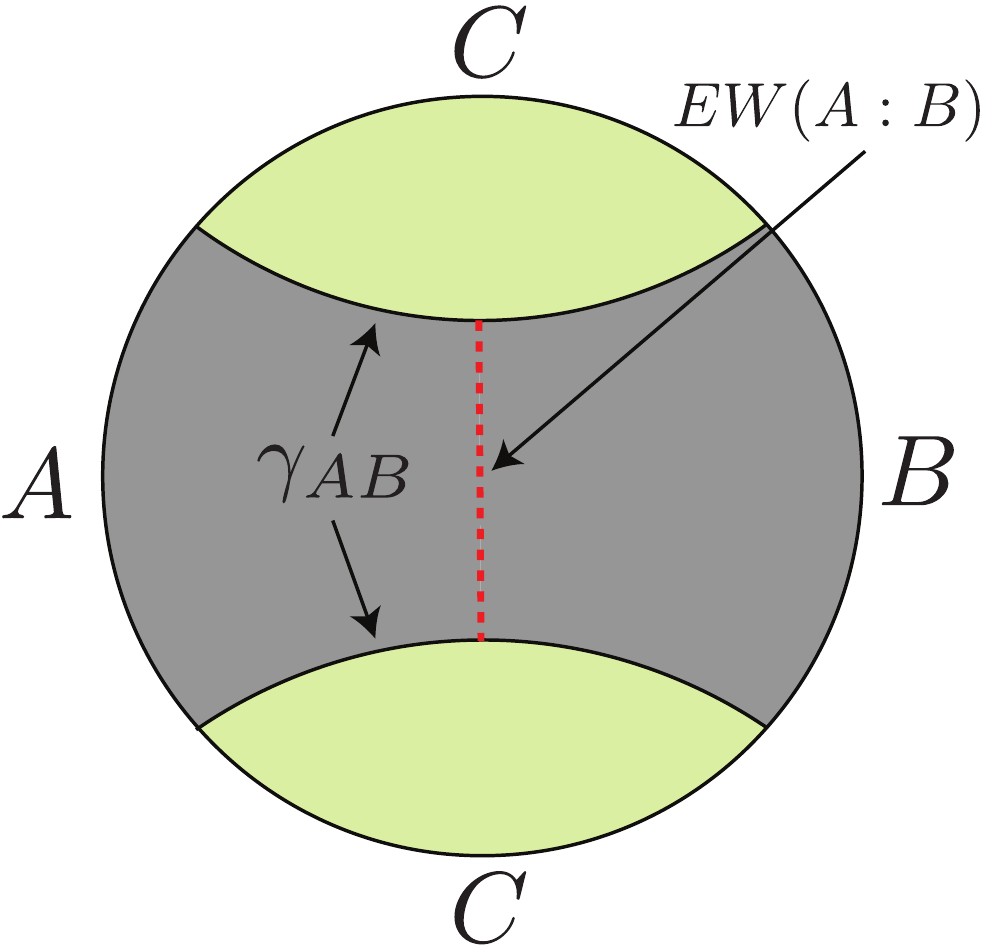}
    \caption{$EW(A:B)$ is the minimal area surface that divides the entanglement wedge of $AB$, bounded by the RT surface $\gamma_{AB}$, into regions homologous to subregions $A$ and $B$ respectively.}
    \label{fig:EWCS}
\end{figure}

The original argument for the proposal involved a two-parameter replica trick, followed by an analytic continuation \`a la Lewkowycz-Maldacena \cite{Lewkowycz:2013nqa}. While the proposal passes various sanity checks, it was noted in Ref.~\cite{Kusuki:2019evw} that the replica trick argument itself suffered from an order of limits issue. More so, the EW cross-section undergoes a discontinuous transition when the entanglement wedge changes from disconnected to connected. This raises the possibility of non-perturbative effects becoming important to resolve the phase transition. Thus, it is of interest to use solvable toy models to better understand the above issues. 

In Ref.~\cite{Akers:2021pvd}, we used random tensor networks (RTNs) \cite{Hayden:2016cfa} as a playground to understand the various subtleties associated with the replica trick argument. In particular, for a tripartite state generated from a single random tensor, we were able to use analytic and numerical techniques to extract the reflected spectrum, the entanglement spectrum of $\rho_{AA^*}$. A crucial role was played by the addition of a novel saddle which dominated in portions of parameter space and motivated a resolution to the order of limits issue. Our analysis provided evidence for the validity of the proposal in \Eqref{eq:EW}. Since the replica trick in RTNs involves a sum over permutations that is quite analogous to the sum over topologies in the gravitational path integral, there is good reason to believe that the analysis in RTNs is a faithful indicator of the calculation in gravity. Moreover, we were also able to solve the above problem in the West Coast model consisting of Jackiw-Teitelboim gravity coupled to end-of-the-world branes \cite{Penington:2019kki}, finding further evidence for \Eqref{eq:EW} along with novel features near phase transitions \cite{Akers:2022max}. 

\begin{figure}[t]
  \centering
  \raisebox{-0.5\height}{\includegraphics[scale=.3]{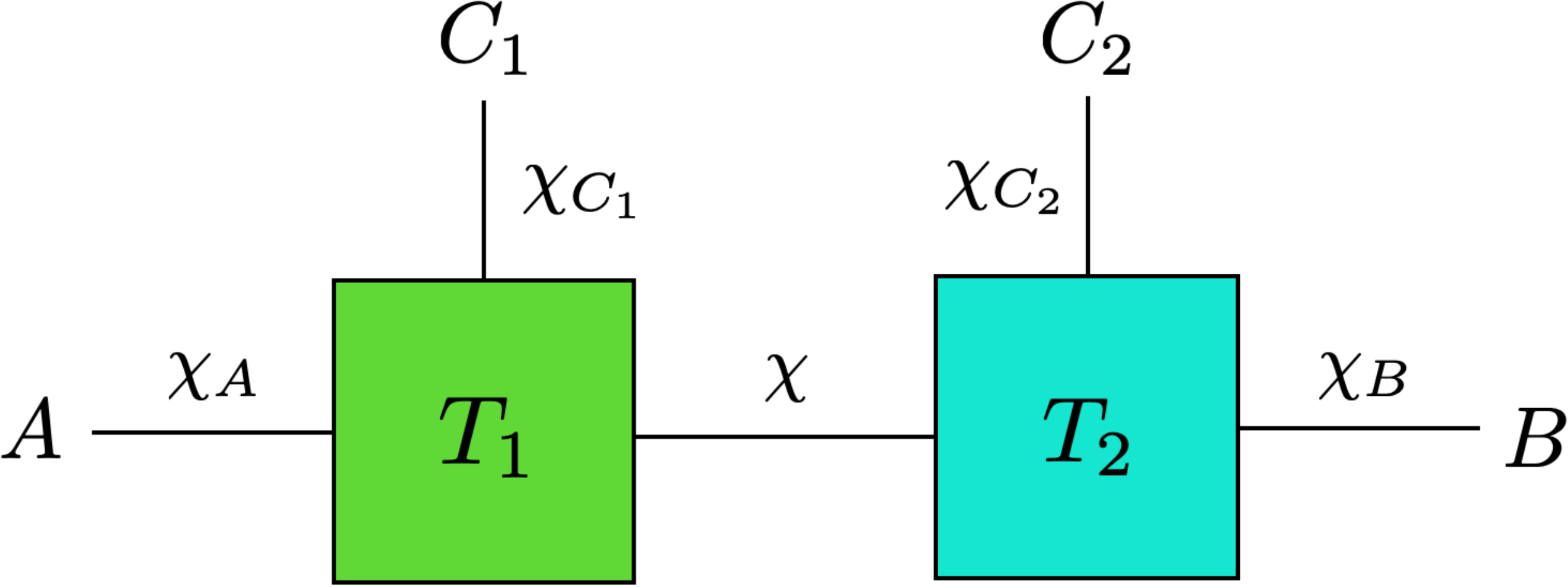}}
  \hspace{.2in}
  \raisebox{-0.5\height}{\includegraphics[scale=.3]{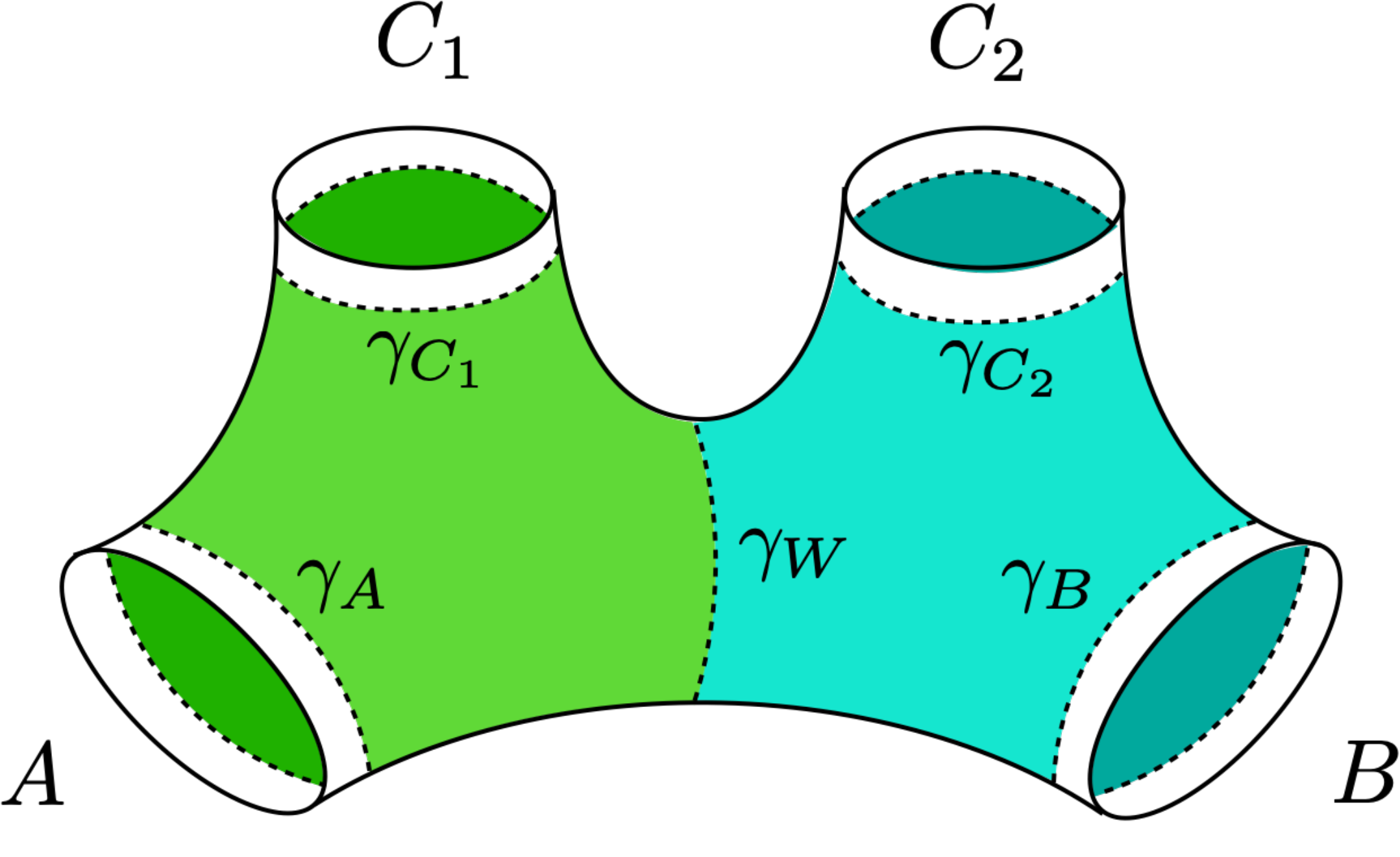}}
  \caption{(left) The 2TN tensor network considered in this section is built from two random tensors $T_1$ and $T_2$. The parameters are the boundary bond dimensions $\chi_A,\chi_B,\chi_{C_1},\chi_{C_2}$ and the internal bond dimension $\chi$.
    (right) The wormhole solution that is modeled by 2TN. The external bond dimensions corresponds to the three horizon areas and the internal bond dimension $\chi$ corresponds to the cross-section surface $\gamma_W$.}
  \label{fig:2TN}
\end{figure}

In this paper, we carry on with our analysis of reflected entropy in RTNs in the hope of finding other undiscovered aspects of the replica trick. In particular, we will focus on an RTN consisting of two random tensors, which we refer to as 2TN. 2TN can be interpreted as a model for a four-boundary wormhole as depicted in \figref{fig:2TN}, where the areas of the labelled surfaces are fixed to a narrow window \cite{Dong:2018seb, Akers:2018fow,Dong:2019piw}. More generally, we will provide heuristic arguments that the calculations in 2TN are also useful for more general settings, e.g., the familiar setup of two intervals in vacuum AdS depicted in \figref{fig:EWCS}. Since the bulk geometry is coarse-grained down to just two tensors, the model cannot capture any of the local dynamics. However, it does capture general topological aspects of the gravitational calculation which turn out to be the relevant aspect for the reflected entropy, including near phase transition effects.

In \secref{sec:gravity}, we start by motivating the gravitational construction of novel, higher genus saddles that contribute to the canonical purification. We consider the gravitational state corresponding to the four boundary wormhole depicted in \figref{fig:2TN}, prepared using a Euclidean path integral with fixed area boundary conditions. As discussed in Refs.~\cite{Dong:2018seb,Akers:2018fow,Dong:2019piw,Penington:2019kki,Akers:2020pmf,Marolf:2020vsi,Dong:2020iod}, the replica trick for such fixed-area states is simplified by the fact that one can simply glue together multiple copies of the original bulk geometry without having to solve for a new backreacted geometry. Thus, we have control over the different saddles contributing to the canonical purification. By doing a replica trick to construct the state $\ket{\rho_{AB}^{m/2}}$ for even integer $m$ \cite{Dutta:2019gen}, we find saddles labelled by a topological index $k\in \mathbb{Z}_{>0}$. They correspond to geometries with initial data slices obtained by gluing together $2 k$ copies of the shaded region (see \figref{fig:2TN}) of the connected entanglement wedge of $AB$ in the original state. Each such geometry contributes with an amplitude $\sqrt{p_k}$ computed from the path integral. The canonically purified state can then be obtained via analytic continuation to $m=1$, and is approximately given by a superposition over such geometries as shown in \figref{fig:genus}. Thus, we obtain a family of geometries that contribute to the entanglement wedge gluing procedure dual to the canonical purification \cite{Engelhardt:2017aux,Engelhardt:2018kcs,Dutta:2019gen,Marolf:2020vsi}. Finally, computing the reflected entropy, we see that the geometry labelled by $k$ contributes an amount $2k \frac{EW(A:B)}{4G}$ weighted by its probability.

\begin{figure}[t]
  \centering
  \includegraphics[width=.95\textwidth]{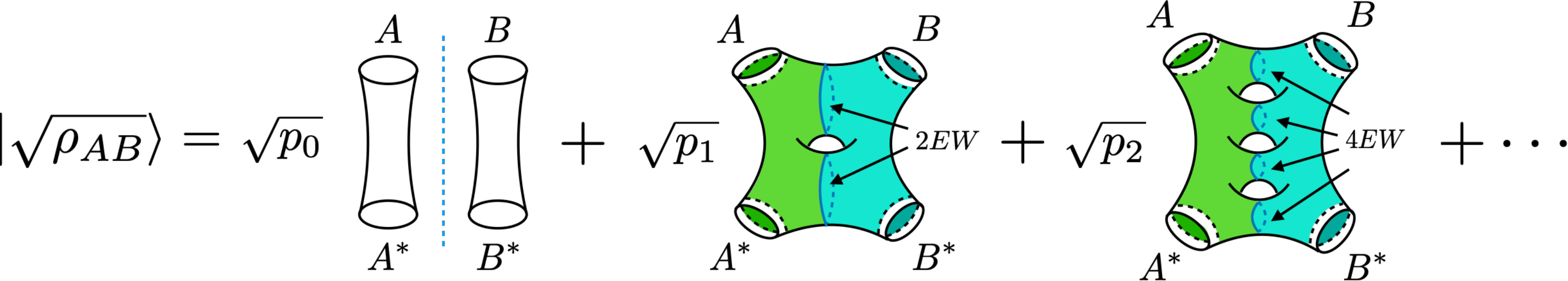}
  \caption{The canonical purification $\ket{\sqrt{\rho_{AB}}}$ consists of a superposition of a one parameter family of geometries labelled by $k$. They are obtained by gluing together $2k$ copies of the shaded portion (see \figref{fig:2TN}) of the connected entanglement wedge of $AB$ in the four boundary wormhole. The $k$-th geometry has $2k$ copies of the entanglement wedge cross section labelled $EW$.}
  \label{fig:genus}
\end{figure}

Having motivated the existence of these higher genus geometries from the gravitational path integral, we set up the 2TN problem in \secref{sec:setup} to get a better handle on such effects. The replica trick for reflected entropy, discussed in \secref{sub:rtns}, involves computing the so called $(m,n)$-R\'enyi reflected entropy \cite{Dutta:2019gen}. Here $n$ is the usual R\'enyi entropy index and $m$ labels the state $\ket{\rho_{AB}^{m/2}}$, a generalization of the canonical purification. For the reflected entropy, one needs to then analytically continue to $m,n\to 1$.
Analyzing the $(m,n)$ replica trick for the 2TN problem beyond the saddle point approximation requires a new tool, the Temperley-Lieb (TL) algebra \cite{Temperley:1971iq}, which we introduce in \secref{sub:TL}. 
Using the resolvent trick \cite{Penington:2019kki,Shapourian:2020mkc,Akers:2020pmf,Dong:2021oad,Vardhan:2021npf, Akers:2021pvd,Akers:2022max}, we show that the reflected spectrum can be categorized into different sectors in terms of the irreducible representations of the TL algebra. These irreps are labeled by an index $k$ which we call ``topological" since it corresponds to the topology of the higher genus saddles in the gravitational path integral.  

\begin{figure}[t]
  \centering
  \includegraphics[scale=.35]{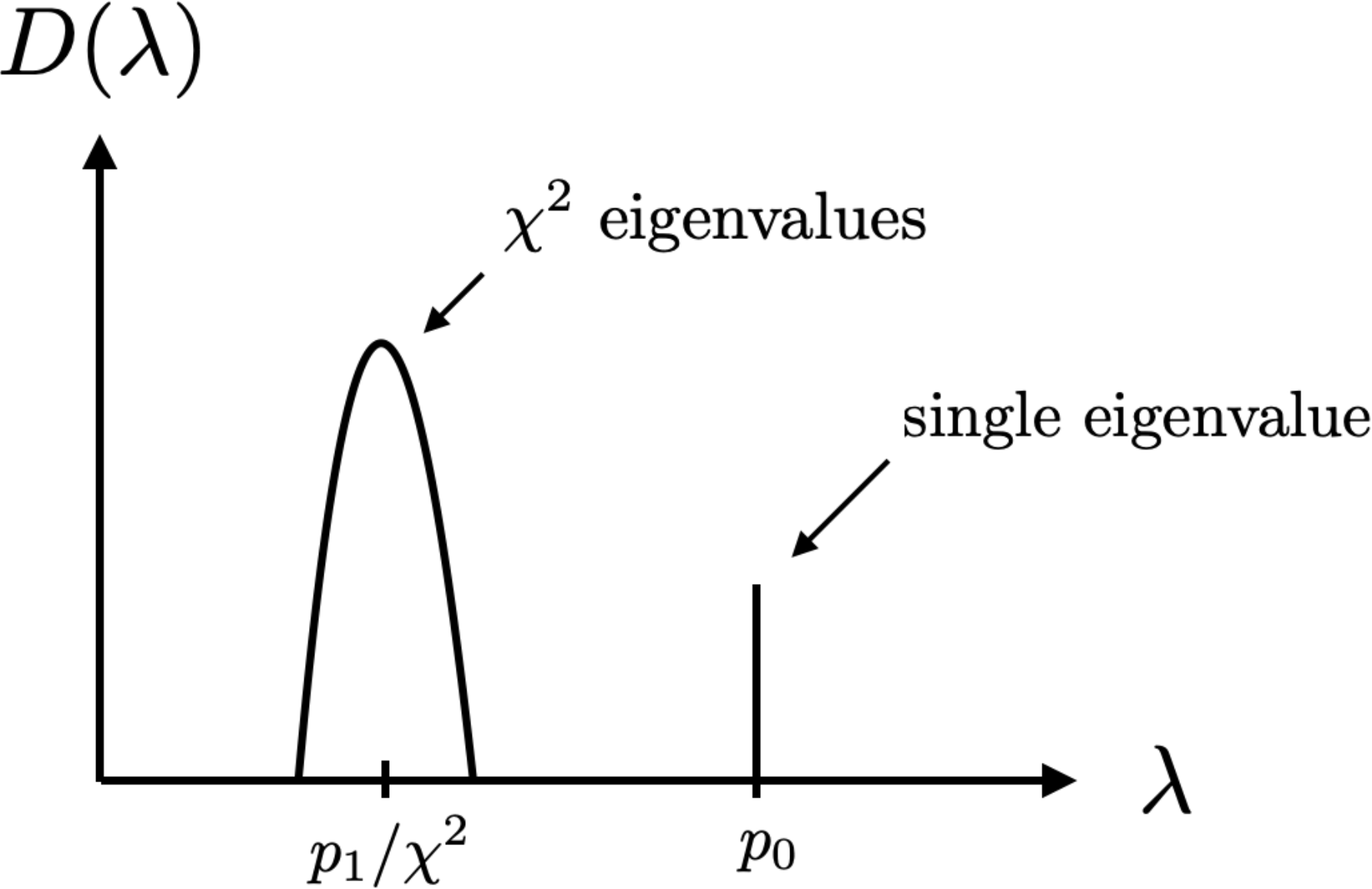}
  \includegraphics[scale=.35]{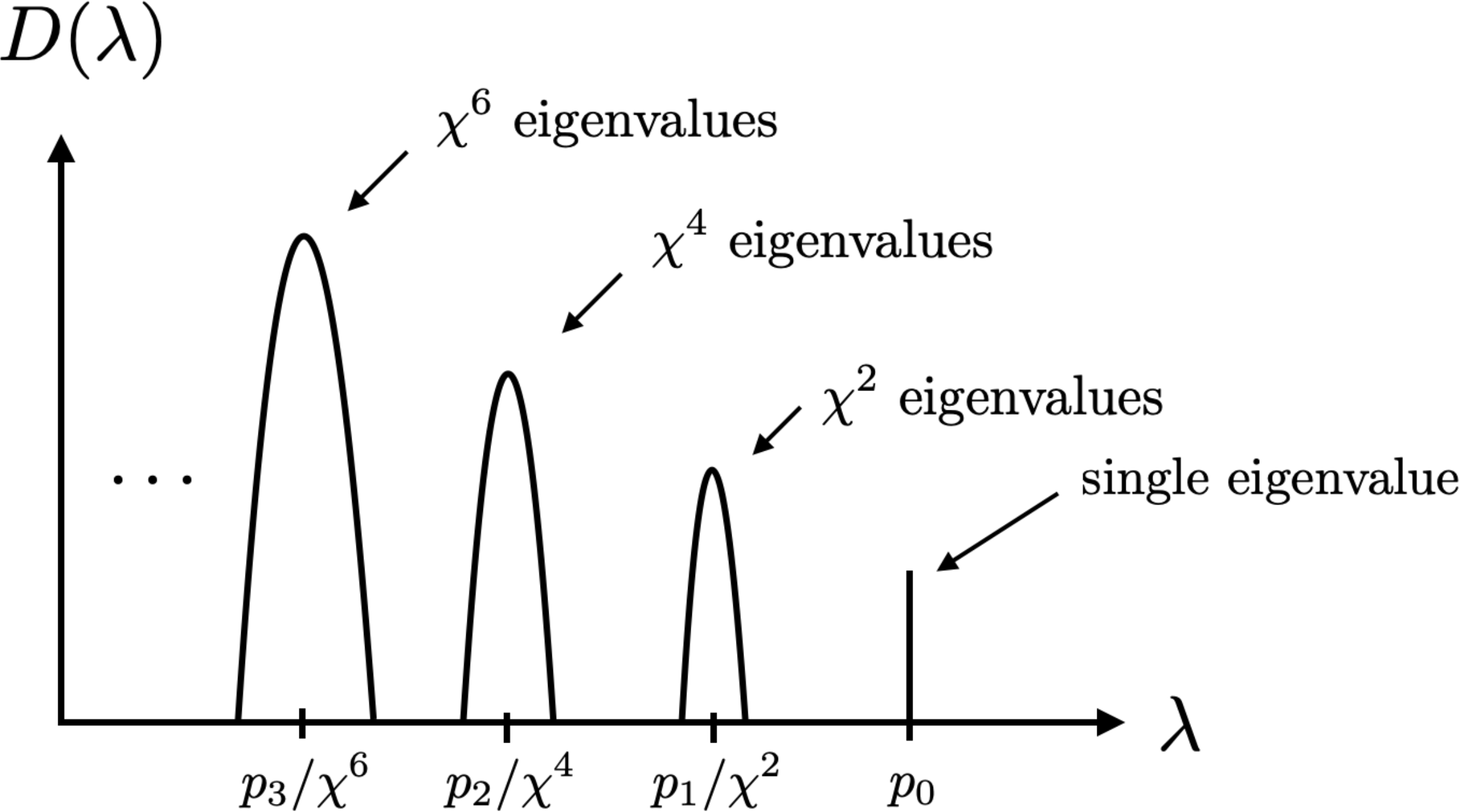}
  \caption{Sketch of spectrum of 1TN vs 2TN. While the 1TN model has two peaks corresponding to the connected and disconnected phases, the 2TN model has infinitely many peaks corresponding to the novel, higher-genus saddles discovered in \secref{sec:gravity}.}
  \label{fig:spectrum}
\end{figure}

With the formalism for computing the reflected spectrum set up, we compute and analyze the 2TN reflected spectrum and entropy using the TL algebra in \secref{sec:2TN}. As a proof of principle, we first solve for the spectrum at first few even integer values of $m$ in \secref{sub:finite}. In order to then relate to the semiclassical limit in gravity, we take the limit where $\chi$ is large in \secref{sub:large}. In this limit, we find the leading-$\chi$ contribution to reflected spectrum that can be analytically continued to $m=1$. The spectrum exhibits an infinite sequence of delta function peaks, labelled by the index $k\in\mathbb{Z}_{\ge 0}$. Each peak consists of $\chi^{2k}$ eigenvalues, thus contributing to the reflected entropy by an amount $2k EW(A:B)$.

This 2TN spectrum has a much richer structure than the single random tensor analyzed in Ref.~\cite{Akers:2021pvd}, see comparison in \figref{fig:spectrum}. We analyze the properties of the 2TN spectrum and its relation to emergent superselection sectors and quantum error correction in \secref{sub:sr}. Using the reflected spectrum, we find consistency with the holographic proposal, \Eqref{eq:EW}, away from phase transitions where either $k=0$ or $k=1$ dominates. More so, the new sectors $k\geq 2$ become important near the phase transition and smooth out the discontinuity in the reflected entropy. In \secref{sub:corrections} we consider the leading corrections to the large $\chi$ limit. We find that these corrections shift the locations of the delta functions and spread them into peaks with finite width containing $\chi^{2k}$ eigenvalues. We give an estimate of the shift and relevant widths. Finally, we demonstrate the consistency of our calculations with numerical results in \secref{sub:num}. 

With this, we conclude in \secref{sec:disc} with a discussion of how the results obtained here generalize to arbitrary RTNs which model multiboundary wormholes. We provide heuristic arguments that these additional sectors also contribute to the canonical purification in more general settings such as the two interval example in vacuum AdS. We also comment on the relation of our results to the emergence of non-trivial von Neumann algebras in gravitational theories.  More specifically we speculate that signatures of a non-trivial von Neumann algebra, connected to the TL algebra, will emerge from a modular flowed version of reflected entropy. 

We provide additional details about multiboundary wormholes in Appendix~\ref{sec:multi}, and the Temperley-Lieb algebra in Appendix~\ref{sec:TLalgebra}. Calculations of the leading corrections to the large $\chi$ limit can be found in Appendix~\ref{sec:finite_chi}. Proofs of various results used in \secref{sec:2TN} are available in Appendix~\ref{sec:proof}.

\section{Motivation: Canonical Purification in Gravity} 
\label{sec:gravity}

Before analyzing the 2TN model in detail, we provide motivation for the anticipated results by constructing novel gravitational saddles that contribute to the canonical purification. We will discover new features from the gravitational path integral that will be mirrored by the 2TN model in \secref{sec:2TN}. For simplicity, our discussion will focus on pure 3D gravity with a negative cosmological constant, where multiboundary wormholes are well understood \cite{Brill:1995jv,Krasnov:2000zq,Skenderis:2009ju}. In this context, we can make a direct connection between wormholes and RTNs, closely following and elaborating on the analysis in Ref.~\cite{Balasubramanian:2014hda}.

\begin{figure}[t]
  \centering
  \includegraphics[width=.3\textwidth]{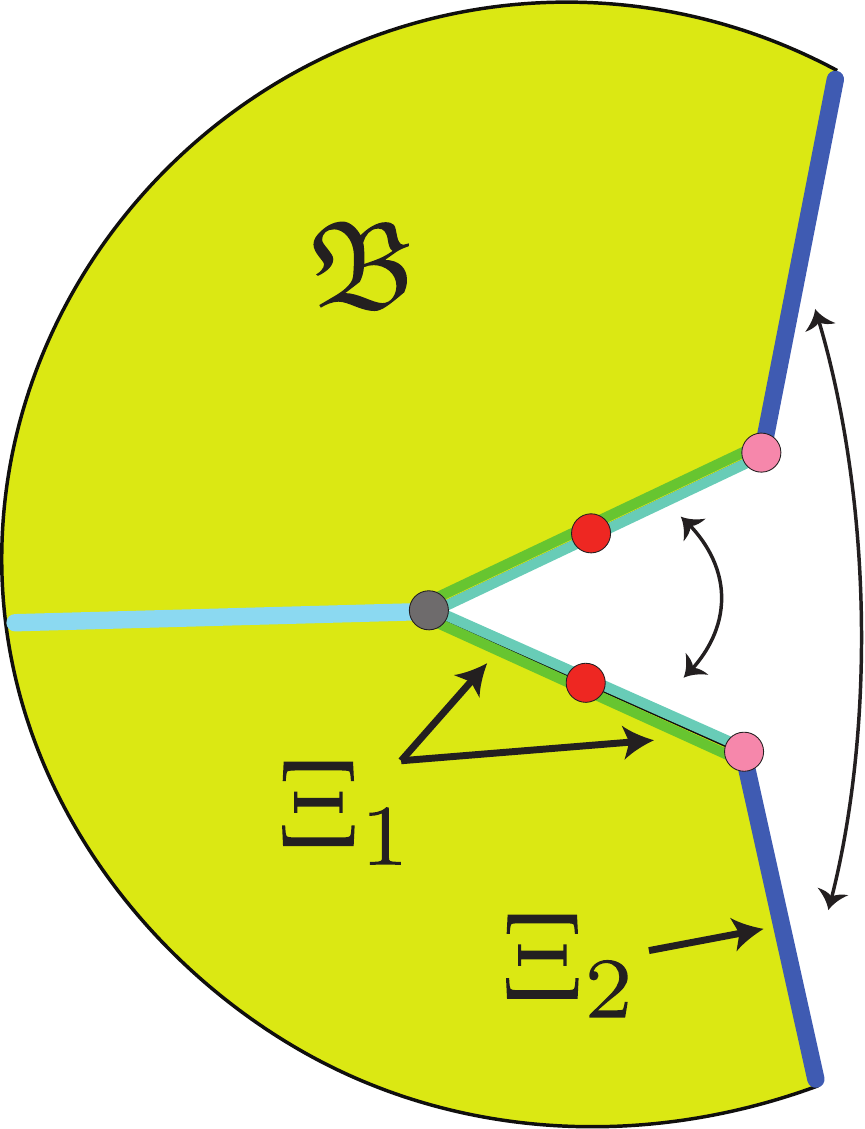}
  \hspace{.1\textwidth}
  \includegraphics[width=.5\textwidth]{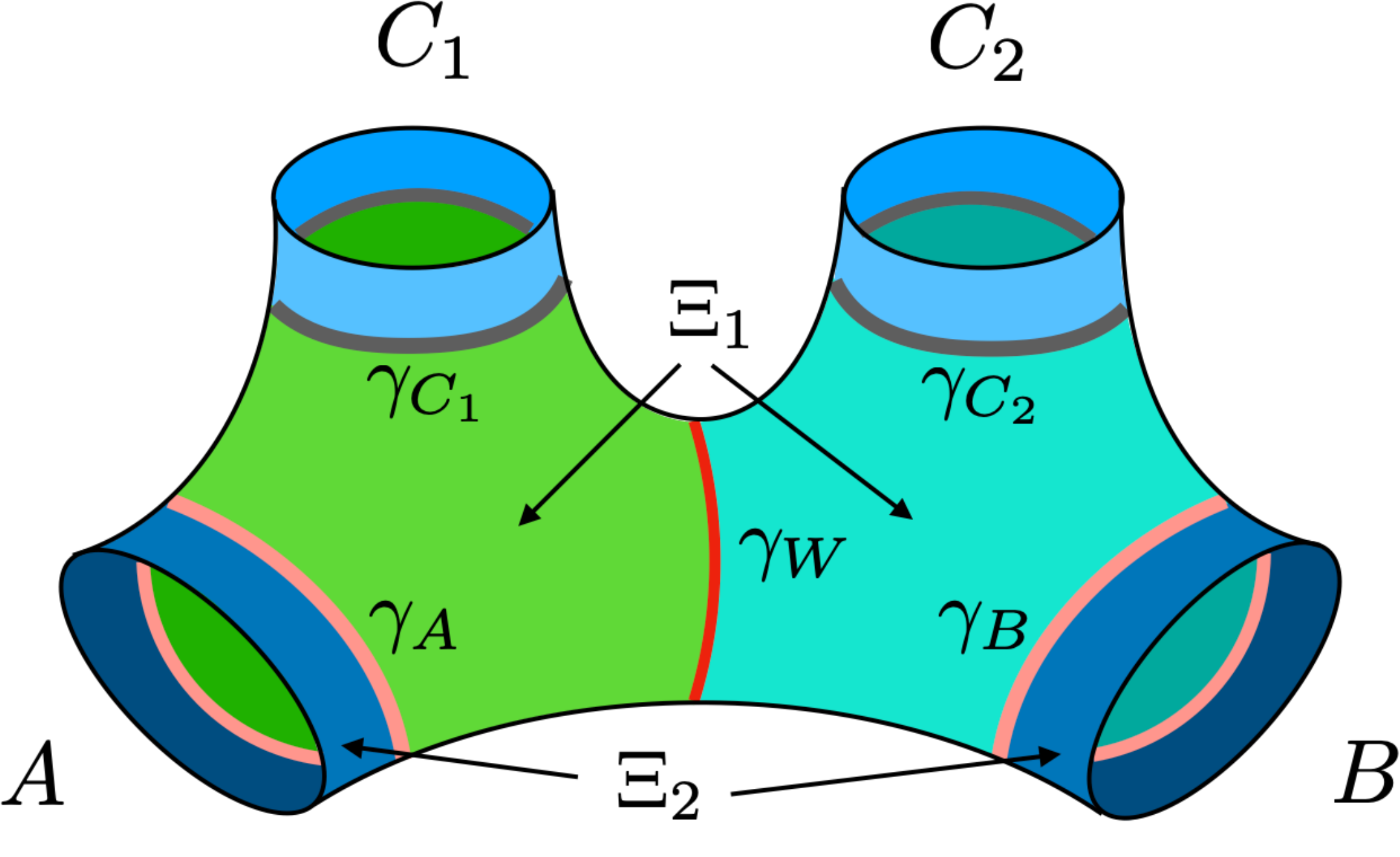}

  \caption{A fixed-area Euclidean saddle $\mathfrak{B}$ (left) computing the norm of the state $\ket{\psi}$ representing a four boundary wormhole (right). Due to fixed-area boundary conditions at the relevant surfaces $\gamma_{A,B,C,W}$, generically the saddle consists of conical defects at these locations. The $\mathbb{Z}_2$ symmetric slice of the saddle is a Cauchy slice $\Xi$ of the wormhole geometry with shaded regions and extremal surfaces represented on either side. The surfaces $\Xi_1$ (light and dark green) and $\Xi_2$ (dark blue) are identified as shown, but can be cut open to use as a building block for the replica trick.}
  \label{fig:fixedarea}
\end{figure}

The 2TN model can be directly translated into a four-boundary wormhole with a hyperbolic metric as shown in \figref{fig:2TN}. First, it is useful to match the degrees of freedom on either side. The parameters in the tensor network are the bond dimensions. On the other hand, the moduli of the wormhole can be understood by a pair-of-pants decomposition of the hyperbolic geometry into two constituent three-boundary wormholes. For each three-boundary wormhole, the moduli are the three horizon areas. Gluing them together removes one degree of freedom due to identification and simultaneously introduces additional Dehn twist moduli. In order to have a reflection symmetric Cauchy slice and be able to apply the RT formula, we can set the twist to zero \cite{Balasubramanian:2014hda}. This leaves us with the areas of the labelled extremal surfaces, each corresponding to a bond in the tensor network. For the calculation of reflected entropy, these are the only surfaces that are relevant.\footnote{There are other possible cross-sections in the geometry which could be minimal. We discuss these in Appendix~\ref{sec:multi} and for the comparison, one can restrict to a regime of parameters where the surface $\gamma_W$ is indeed minimal.} Thus, the 2TN model, despite being a rather coarse-grained description of the geometry, is sufficient to model the four-boundary wormhole accurately.

The construction of these wormhole geometries using a Euclidean path integral is also well understood (see Ref.~\cite{Balasubramanian:2014hda} and references therein). Namely, given a spatial geometry $\Xi$, a corresponding Euclidean spacetime geometry that prepares such initial data is given by
\begin{equation}\label{eq:euc}
	ds^2 = l^2\left(dt_E^2 + \cosh^2 t_E \,d\Xi^2 \right),
\end{equation}
where $l$ is the AdS scale and $t_E$ is the Euclidean time coordinate. However, since we are interested in preparing holographic states that are modelled by an RTN, it is important to fix the areas of the various surfaces that correspond to maximally entangled bonds \cite{Dong:2018seb,Akers:2018fow,Dong:2019piw}. Since these surfaces are all spacelike separated from each other, the areas can be simultaneously fixed. While the geometry in \Eqref{eq:euc} is a valid Euclidean saddle for the fixed-area problem, other ways of preparing the same state would generically contain conical defects at the fixed-area surfaces as shown in \figref{fig:fixedarea}. Moreover, Einstein's equations require the conical defects to be located at extremal surfaces \cite{Dong:2019piw}, and this is true by construction for the wormhole that we're interested in.\footnote{Note that despite the fact that \Eqref{eq:euc} provides a valid saddle, there is no guarantee that it dominates and in general, it is not completely well understood which saddles dominate the path integral \cite{Yin:2007at,Maxfield:2016mwh,Harlow:2018tng}. Our results should be understood to apply when such a dominant Euclidean saddle can be found.}

\begin{figure}[t]
  \centering
  \includegraphics[width=.9\textwidth]{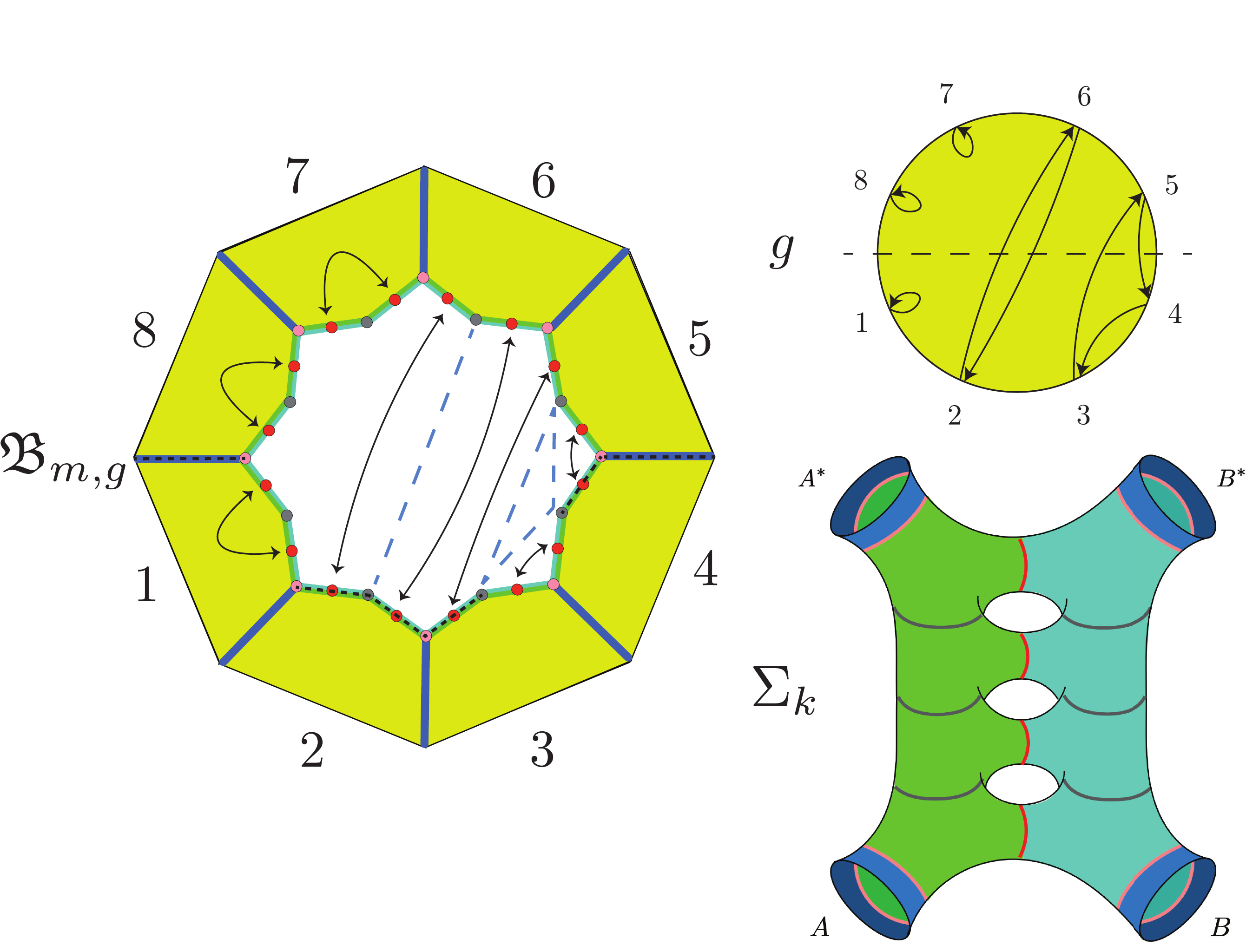}
  \caption{The computation of $\langle \rho_{AB}^{m/2}|\rho_{AB}^{m/2}\rangle=\tr(\rho_{AB}^m)$ (e.g. $m=8$) involves Euclidean saddles constructed by gluing $m$ copies of the original fixed area saddle in different ways. E.g., a particular saddle $\mathfrak{B}_{m,g}$ is in a one-to-one correspondence to permutation $g$ acting on $m$ elements. Here, we depict one such example. The $\mathbb{Z}_2$ symmetric slice (black dotted) is then a Cauchy slice $\Sigma_k$ for Lorentzian evolution and consists of $2 k$ copies of the entanglement wedge, where $k$ is the number of permutation cycles crossing the horizontal dotted line on the right (here $k=2)$.}
  \label{fig:gluing}
\end{figure}

Once we pick any such $\mathbb{Z}_2$ symmetric Euclidean geometry $\mathfrak{B}$, we can cut it open to obtain a preparation of the state $\ket{\psi}$ which has a spatial geometry $\Xi$. As usual, the norm of $\ket{\psi}$ is computed by $\mathfrak{B}$ as depicted in \figref{fig:fixedarea}. Moreover, the advantage of using fixed-area states is that we can find candidate geometries for computing $\tr\left(\rho_{AB}^m\right)$ by using $\mathfrak{B}$ as a building block. More specifically, we can cut open $\mathfrak{B}$ in the region $\Xi_1\cup \Xi_2$ such that $\partial \Xi_1=\gamma_A\cup \gamma_B\cup \gamma_{C_1} \cup \gamma_{C_2}$ and $\partial \Xi_2 = A\cup B \cup \gamma_A \cup \gamma_B$. We can then glue together $m$ copies cyclically in the region $\Xi_2$ as shown in \figref{fig:gluing}. We are then left with picking a way to glue together the remaining section $\Xi_1$. The different ways of gluing $\Xi_1$ are fixed by an element of the permutation group $S_m$. An example of this correspondence is demonstrated in \figref{fig:gluing}.\footnote{In principle, we have two independent permutations on the two different portions separated by the fixed-area surface $\gamma_W$ in correspondence with the two tensors in the RTN. However, saddles with different permutations are suppressed due to the cost of having a domain wall at $\gamma_W$.} The fixed-area boundary condition ensures that all the contributing geometries solve Einstein's equations and satisfy the correct boundary conditions. Any such geometry $B_{m,g}$, obtained by gluing in a manner corresponding to a permutation $g$, contributes to the computation of $\tr(\rho_{AB}^m)$. It is also easy to check that the Euclidean action agrees with the domain wall cost in the RTN \cite{Dong:2018seb}. Thus, it is clear that there is a direct correspondence between the RTN replica partition function and the gravitational path integral, which is a fact that has already been exploited in various calculations \cite{Dong:2018seb,Akers:2018fow,Akers:2020pmf,Marolf:2020vsi,Dong:2020iod}. 

We can now look at $B_{m,g}$ for even integer $m$, and interpret it as computing the norm of the state $\ket{\rho_{AB}^{m/2}}$. It is well-known from the correspondence to the RTN that only the saddles corresponding to non-crossing permutations are important \cite{Penington:2019kki,Akers:2020pmf,Marolf:2020vsi,Dong:2020iod}, thus we will neglect all other permutations. Once we do so, it is useful to classify the non-crossing permutations by the number of cycles crossing the horizontal middle line, which is representative of the $\mathbb{Z}_2$ symmetric Cauchy slice of the geometry.\footnote{The slice is only locally $\mathbb{Z}_2$ symmetric, but not globally so in general. The overall $\mathbb{Z}_2$ symmetry is restored by the sum over saddles.} For a crossing number $k$, it can be checked that the spatial geometry $\Sigma_k$ is obtained by gluing together sections of $2k$ copies of the original $\Xi$ at the horizons. We illustrate one such example in \figref{fig:gluing}. As described before, these are also hyperbolic geometries with a different pair-of-pants decomposition and by construction are prepared by a Euclidean path integral which solves Einstein equations with fixed-area boundary conditions.\footnote{It is important to note that in general there are also perturbative corrections which have completely neglected in this analysis since there is no corresponding feature in the RTN. Thus, even at this level the RTN only captures certain topological aspects of the gravitational path integral, but they are usually the important non-perturbative corrections near phase transitions.} Given the Cauchy data on $\Xi$ that satisfies the constraint equations, one can then evolve it in Lorentzian time to find the full spacetime geometry.\footnote{Depending on how sharply the areas have been fixed, there may or may not exist a smooth spacetime to the future and past of these fixed-area surfaces (see Ref.~\cite{Dong:2022ilf} for more details).} 

Since the spatial geometries on the $\mathbb{Z}_2$ symmetric slice are different for different values of $k$, the states are orthogonal in the gravitational path integral approximation. Thus, we can divide the path integral into a sum over the index $k$ as
\begin{equation}
	Z_m = \sum_{k=0}^{m/2} Z_{k,m},
\end{equation}
and for each value of $k$, we can write down
\begin{equation}
	Z_{k,m} = \langle \psi_{k,m} | \psi_{k,m}\rangle.
\end{equation}
The state $\ket{\psi_{k,m}}$ has an associated geometry $\Sigma_k$ and can be written as a superposition over different half-permutations with appropriate weight-factors that contribute to its norm. For example, we have (for $m=4$)
\begin{equation}\label{eq:super}
\begin{matrix}
\includegraphics[scale=0.5]{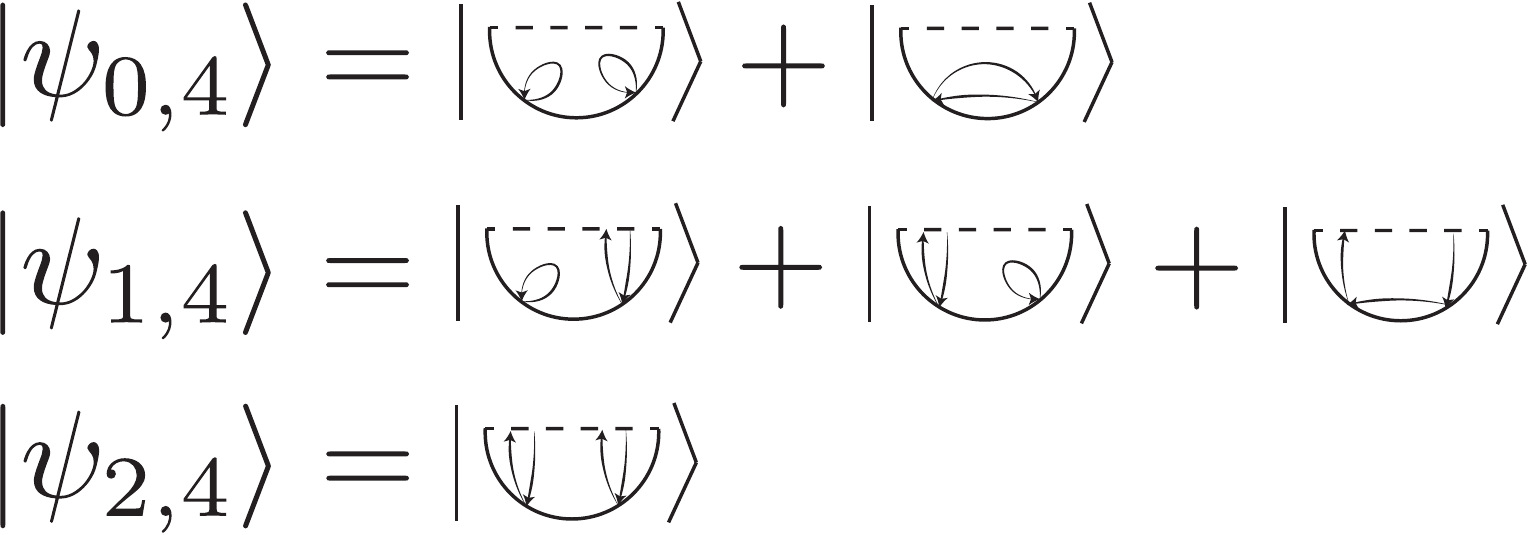},
\end{matrix}
\end{equation}
where the states written above are not normalized and the overlaps of these states can be computed by closing up the open ends of the permutation and computing the action of the corresponding Euclidean saddle. We will later see in \secref{sub:large} that the states $\ket{\psi_{k,m}}$ are naturally associated with specific states in the standard module of the Temperley-Lieb algebra that dominate in the large $\chi$ limit. Thus, we now have written the state $\ket{\rho_{AB}^{m/2}}$ in terms of a superposition of geometries with appropriate weights determined by the path integral.

The spatial geometry $\Sigma_k$ consists of $2k$ copies of the surface $\gamma_W$, which is the entanglement wedge cross section in the connected phase. Now we can use the fact that the entropy of a superposition of a small number of fixed-area states behaves approximately linearly as argued in Refs.~\cite{Almheiri:2016blp,Marolf:2020vsi,Akers:2020pmf}. Thus, the analog of reflected entropy for the state $\ket{\rho_{AB}^{m/2}}$ is computed by averaging over the different sectors, i.e.,
\begin{align}\label{eq:avg}
\begin{split}
	S_R^{(m)}(A:B) &= S(\rho_{AA^*})_{\ket{\rho_{AB}^{m/2}}}\\
	&=\sum_{k=0}^{m/2} p_k(m) (2k EW(A:B)) - p_k(m) \log p_k(m),
 \end{split}
\end{align}
where $EW(A:B)$ is used as a shorthand to represent the area of $\gamma_W$, independent of which phase dominates. The weights $p_k(m)=\langle \psi_{k,m} |\psi_{k,m}\rangle$ can be thought of as probability weights associated to the geometries $\Sigma_k$. We postpone the precise formulae for $p_k(m)$ to \secref{sub:large}, but note for now that we will find a function that is analytic in $m$. By extending the sum in \Eqref{eq:avg} to all integer $k$, we can thus analytically continue the answer away from even integer values of $m$. Doing so, we obtain an expression for the reflected entropy,
\begin{equation}\label{eq:refgrav}
    S_R(A:B) = \sum_{k=0}^{\infty} p_k (2k EW(A:B)) - p_k \log p_k,
\end{equation}
where $p_k\equiv p_k(1)$. Thus, we have found that the canonical purification is given by a superposition of geometries obtained by gluing together multiple copies of the entanglement wedge of $AB$ as demonstrated in \figref{fig:genus}. As we shall see later, the different geometries $\Sigma_k$ obtained by this construction can equivalently be interpreted in terms of new RTNs which are obtained by gluing multiple copies of the network at different bonds, see \figref{fig:interpret}. Thus, this provides a refined version of the effective description suggested in Ref.~\cite{Akers:2021pvd} for the canonical purification.


\section{Setup} 
\label{sec:setup}

In this section, we set up the problem of finding the reflected entropy in our model of interest, the 2TN model. \secref{sub:rtns} describes the replica trick for reflected entropy, discussing the relevant saddle point configurations. \secref{sub:TL} then sets up the problem of computing the resolvent for the reflected density matrix by introducing the Temperley-Lieb algebra. 

\subsection{Replica Trick for the 2TN Model} 
\label{sub:rtns}

RTNs prepare boundary states by choosing random tensors at the vertices of an arbitrary graph $G=(V,E)$ and contracting them by projecting onto maximally entangled pairs along the edges of $G$. Namely, the unnormalized state takes the form
\begin{equation}
	\ket{\psi} = \left(\otimes_{\{x,y\}\in E} \bra{xy}\right) \left(\otimes_{x\in V} \ket{V_x}\right),
\end{equation}
where $\ket{V_x}$ is a Haar random state chosen at vertex $x$ and $\ket{xy}$ represent maximally entangled pairs along the edge connecting vertices $x$ and $y$. 

The replica trick for the reflected entropy then involves computing the $(m,n)$ R\'enyi reflected entropy defined as \cite{Dutta:2019gen}
\begin{align}\label{eq:SRmn}
	S_R^{(m,n)}(A:B) =S_n(A A^*)_{\ket{\rho_{AB}^{m/2}}},	
\end{align}
which is the $n$th R\'enyi entropy of subregion $AA^*$ in a state $\ket{\rho_{AB}^{m/2}}$ that generalizes the canonical purification.\footnote{Note that we ignored normalization factors in the state for simplicity of notation.} For integer $n$ and $\frac{m}{2}$, \Eqref{eq:SRmn} can be computed using correlation functions of appropriately defined twist operators on $mn$ copies of the system \cite{Dutta:2019gen,Akers:2021pvd}, i.e.,
\begin{align}
	\label{eq:replica}
 \begin{split}
	S_R^{(m,n)}(A:B) &= \frac{1}{1-n}\log \left(\frac{Z_{m,n}}{\left(Z_{m,1}\right)^n}\right)\\
		Z_{m,n} &= \bra{\psi}^{\otimes mn} \Sigma_A(g_A) \Sigma_B(g_B) \ket{\psi}^{\otimes mn}\\
	Z_{m,1}&=\bra{\psi}^{\otimes m} \Sigma_{AB}(\tau_m) \ket{\psi}^{\otimes m},
 \end{split}
\end{align}
where $\Sigma_R(g)$ implement the permutation $g$ on subregion $R$. The permutation $\tau_m$ is a cyclic permutation on $m$ elements, while the permutations $g_A,g_B$ are depicted in \figref{fig:gAB}.

\begin{figure}[t]
  \centering
  \includegraphics[width=.7\textwidth]{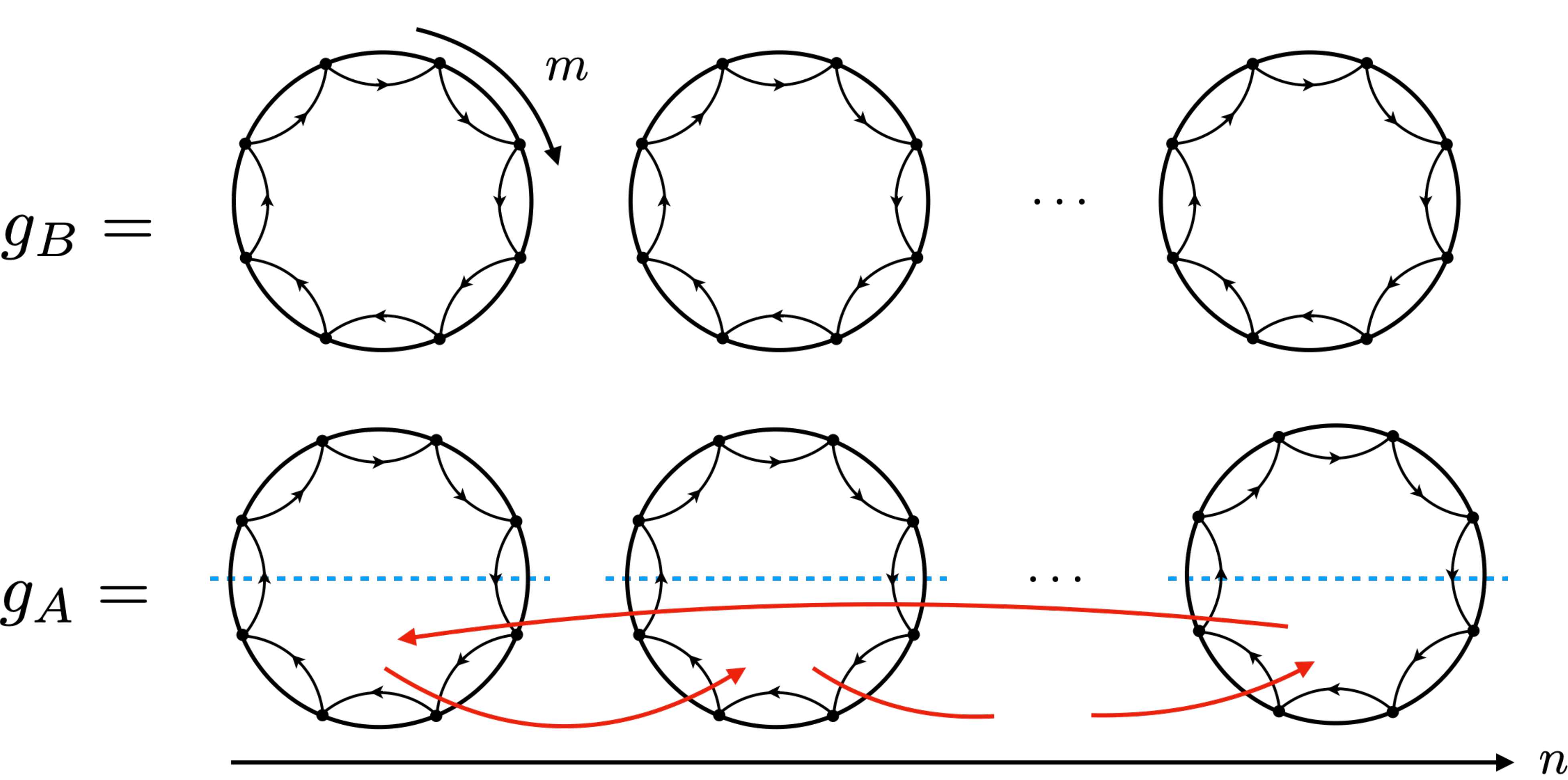}
  \caption{A graphical representation of the elements $g_A$ and $g_B$. Each circle represents $m$ replicas of the original tensor and each circle is replicated $n$ times. A cycle of the permutation is represented by a directed closed loop. The element $g_B$ is a product of individual cyclic permutations $\tau_m$ on each circle. The element $g_A$ is additionally conjugated by the twist operator $\gamma_\tau$ (indicated by red arrows) whose action can be thought of as slicing each circle in half and shifting the lower half in an $n$-cyclic order.}
  \label{fig:gAB}
\end{figure}

Further, for an RTN state $\ket{\psi}$, \Eqref{eq:replica} can be computed on average using Haar integrals to reduce $Z_{mn}$ to a sum over the permutation group $S_{mn}$ \cite{Akers:2021pvd}, i.e.,
\begin{equation}\label{eq:Zmn}
	\overline{Z_{m,n}} = \sum_{\substack{g_x\in S_{mn}\\
                  x\in \{V\backslash\partial\}}} \exp\left[-\sum_{\{x,y\}\in E} (\ln \chi_{xy}) \, d(g_x,g_y)\right],
\end{equation}
where $\partial$ represents the boundary vertices. The permutation group elements on $\partial$ are fixed as boundary conditions to be $g_A$ on subregion A, $g_B$ on subregion $B$ and $e$ (the identity element) on subregion $C$. The exponent is a sum over the Cayley distances\footnote{The Cayley distance $d(g_x,g_y)$ is a metric on $S_{mn}$ measuring the number of two-element swaps required to go from $g_x$ to $g_y$.} $d(g_x,g_y)$ weighted by the bond dimensions of each edge, denoted $\chi_{xy}$.

We can now specialize to the 2TN model, which is depicted in \figref{fig:2TN}. It models a four-boundary wormhole with horizon areas $\mathcal{A}_i$ which are related to the external bond dimensions $\chi_i=\exp[\frac{\mathcal{A}_i}{4G}]$, where $i\in\{A,B,C_1,C_2\}$. Moreover, the internal bond dimension $\chi$ is related to the area $\mathcal{A}_W$ of the cross section surface $\gamma_W$ in a similar fashion. For convenience, we denote $\chi_{C_1}\chi_{C_2}=\chi_C$, which is related to the total horizon area for region $C$.

When $\min(\mathcal{A}_{A},\mathcal{A}_{B})<\mathcal{A}_W$, the analysis is similar to that done for the single random tensor which was studied in detail in Ref.~\cite{Akers:2021pvd}. Thus, here we will only be interested in the situation where the minimal entanglement wedge cross section $EW(A:B)$ is indeed given by $\mathcal{A}_W$ in the connected phase. We will ensure this to be the case by picking $\mathcal{A}_W$ to be parametrically smaller than $\mathcal{A}_{A/B/C}$. 

For this problem, there are two limits of interest that we can discuss:
\begin{itemize}
    \item The Temperley-Lieb (TL) limit: $q_A=\frac{\chi_A}{\chi_{C_1}}$,$q_B=\frac{\chi_B}{\chi_{C_2}}$, $\chi$ held finite while taking all external bond dimensions $\chi_{A/B/C_1/C_2}\rightarrow \infty$.
    \item The saddle point limit: $\frac{\log \chi_i}{\log \chi_{C}}$ is held finite while taking all bond dimensions $\chi_i\rightarrow \infty$.
\end{itemize}

\begin{figure}[t]
    \centering
    \includegraphics[width=.9\textwidth]{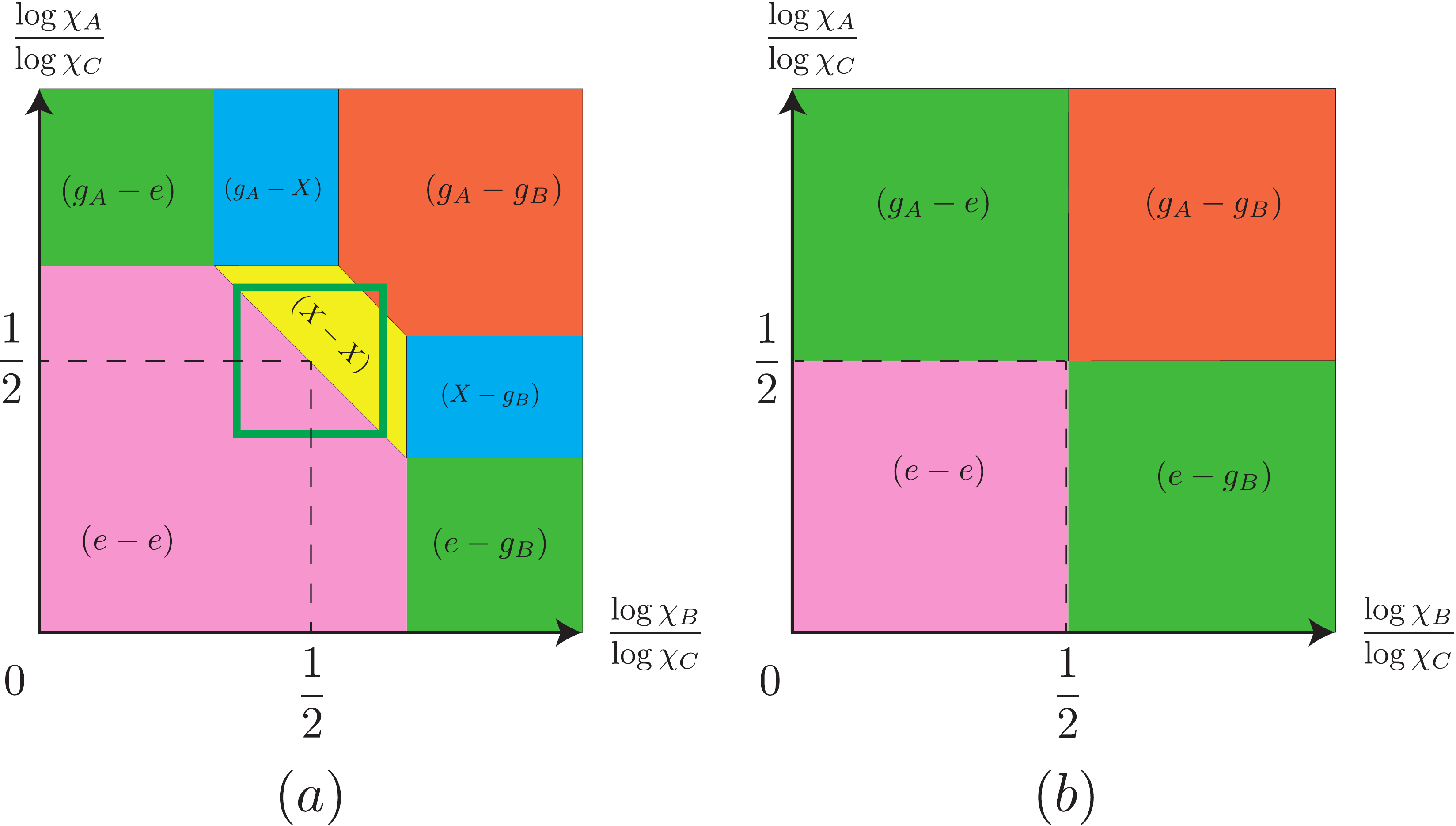}
    \caption{(a): An illustration of the phase diagram in the saddle point limit for 2TN for generic values of $m \geq 2, n\geq 1,\chi$ as a function of $\frac{\log \chi_A}{\log \chi_{C}}$ and $\frac{\log \chi_B}{\log \chi_{C}}$. The dominant elements in each phase are shown in a tuple as $(g_1,g_2)$. The green square roughly indicates the region of the phase diagram accessed by the TL limit. (b): As we take the limit $\frac{\log \chi}{\log \chi_{C}}\rightarrow 0$, the yellow and blue domains shrink in the phase diagram.
    }
    \label{fig:2TNphase}
\end{figure}

While our calculations in this paper will be in the TL limit, we briefly make a detour to discuss the phase diagram for the 2TN model in the saddle point limit. The saddle point limit is useful since it can be shown that at every point in phase space, one only needs to optimize over the set $\{e,g_A,g_B,X\}$\footnote{$X$ is the unique element that lies on the commmon geodesic $\Gamma(g_A,e)\cap \Gamma(g_B,e)$ while being closest to $g_A$ and $g_B$. Please refer to Ref.~\cite{Akers:2021pvd} for a detailed discussion on element $X$.} \cite{newpaper}.

For simplicity, we consider the phase diagram in the case $\chi_{C_1}=\chi_{C_2}=\sqrt{\chi_C}$. Doing so, we obtain the phase diagram shown in Fig.~(\ref{fig:2TNphase}a) at generic values of $m\geq 2$ and $n\geq 1$. The first thing to note is that the phase diagram is convex as expected from the fact that the contributions of each possible permutation tuple is linear in $\log \chi_{A/B}$. 

The TL limit approximately arises when we take the limit $\frac{\log \chi}{\log \chi_{C}}\rightarrow 0$ and zoom into a small, restricted region around $\left(\frac{1}{2},\frac{1}{2}\right)$. This follows from the fact that the ratios $q_A$ and $q_B$ are held finite. In the TL limit, the domains involving the $X$ element shrink and we obtain the simpler phase diagram shown in Fig.~(\ref{fig:2TNphase}b). 

\begin{figure}[t]
  \centering
  \includegraphics[width=.7\textwidth]{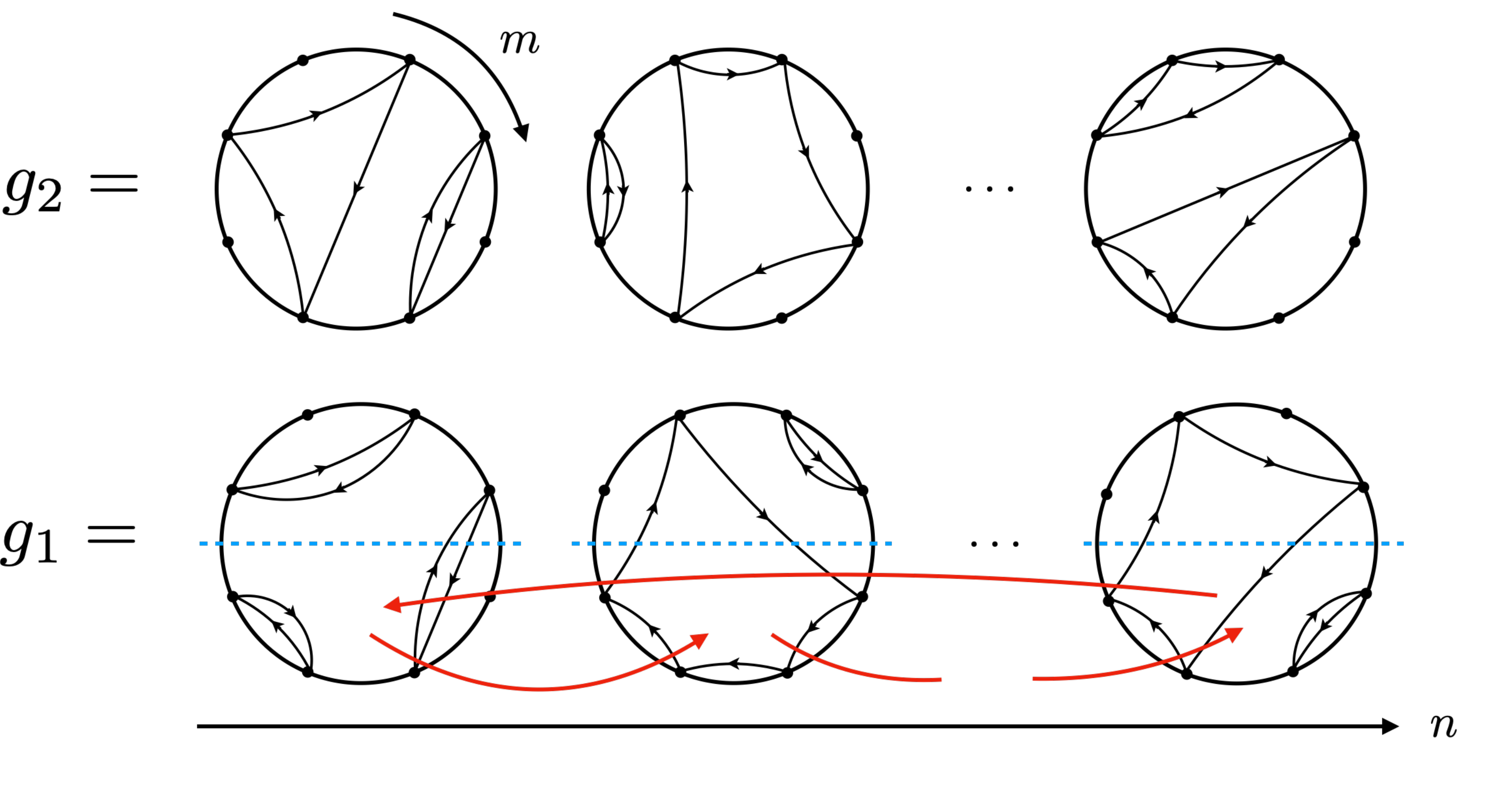}
  \caption{A graphical representation of the elements lying on the geodesics $\Gamma(g_{A},e)$ and $\Gamma(g_{B},e)$. An element $g_2 \in \Gamma(g_{B},e)$ consists of products of non-crossing permutations acting within each circle; whereas an element $g_1 \in \Gamma(g_{A},e)$ is similar to $g_2$, but is additionally conjugated by the twist operator $\gamma_\tau$.}
  \label{fig:g1g2}
\end{figure}

Returning to the calculation in \Eqref{eq:Zmn} for the 2TN model, we have
\begin{equation}\label{eq:partn}
  \overline{Z_{m,n}} = \frac{1}{(\chi_A\chi_B\chi_{C_1}\chi_{C_2}\chi)^{mn}}\sum_{g_1,g_2\in S_{mn}}\chi_A^{\#(g_1g_A^{-1})}\chi_B^{\#(g_2g_B^{-1})}\chi_{C_1}^{\#(g_1)}\chi_{C_2}^{\#(g_2)}\chi^{\#(g_1g_2^{-1})},
\end{equation}
where $\#(g)$ counts the number of cycles in permutation $g$ (including trivial ones) and we have used the relation $d(g,h)=n m -\#(g h^{-1})$. In the TL limit, the elements $g_{1/2}$ are then constrained to lie on the geodesics between $g_{A/B}$ and $e$, labelled by $\Gamma(g_{A/B},e)$, since the contributions of all other elements are infinitely suppressed. The relevant geodesic condition reads
\begin{equation}
  \#(g_1g_A^{-1})+\#(g_1) = n(m+1),  \quad   \#(g_2g_B^{-1})+\#(g_2) = n(m+1),
\end{equation}
and the permutation elements that satisfy these conditions can be parameterized as:
\begin{equation}\label{eq:par}
     g_1 = \gamma_\tau \left( \prod^n_{i=1}  h_i\right)\gamma_\tau^{-1}, \quad g_2 = \prod^n_{i=1}k_i,
\end{equation}
where $h_i, k_i$ are permutations that act in a non-crossing fashion on the elements $m(i-1)+1,\cdots,mi$ and trivially on all other elements. $\gamma_{\tau}$ can be thought of as a ``twist operator'' that acts on the lower $m/2$ elements in each circle, cyclically permuting them as shown in \figref{fig:g1g2}. We refer the reader to Ref.~\cite{Akers:2021pvd} for more details on the geodesic elements and non-crossing permutations.


\subsection{Resolvent via the Temperley-Lieb Algebra} 
\label{sub:TL}

Now, restricting the sum to the permutations on the relevant geodesics, we can write $Z_{mn}$ using the parametrization of \Eqref{eq:par} as
\begin{equation}\label{eq:Zmn_2TN}
  \overline{Z_{m,n}} =  \left(\frac{\chi_A\chi_B}{\chi_C^{m}\chi^m}\right)^n \sum_{\{h_i,k_i\}} q_A^{-\sum_i\#(h_i)} q_B^{-\sum_i\#(k_i)}\chi^{\#(g_1g_2^{-1})}.
\end{equation}
To proceed, we must find a way to express $\#(g_1g_2^{-1})$ solely in terms of $h_i$ and $k_i$. To attack this problem we introduce the \textit{Temperley-Lieb algebra}, denoted by $\text{TL}_m$. 

The Temperley-Lieb (TL) algebra is an abstract algebra with basis vectors consisting of diagrams of non-crossing strands between $2m$ points arranged on two vertical lines as follows:
\begin{equation}
  \begin{matrix}
    \includegraphics[width=.55\textwidth]{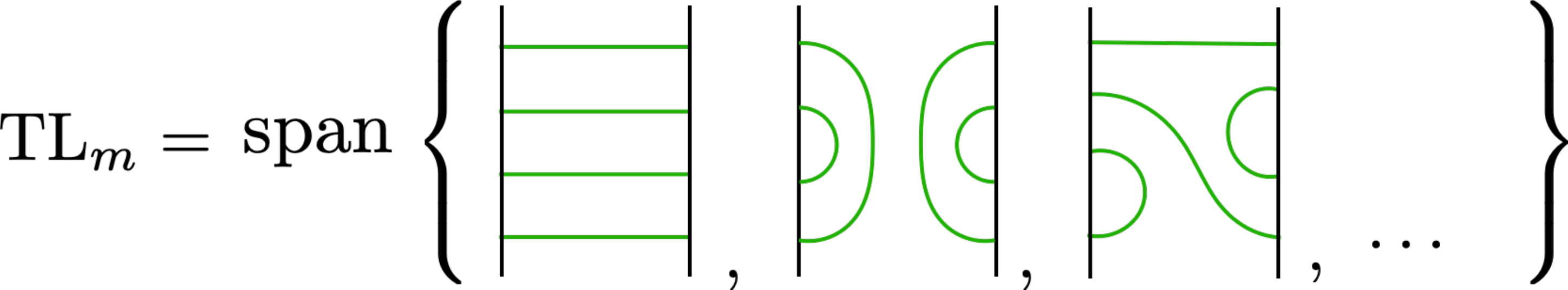},
  \end{matrix}
\end{equation}
where each point necessarily has a strand emerging from it. This vector space is further endowed with a bilinear product given by concatenating two diagrams and replacing each closed loop by a power of $\chi$. As an illustrative example, consider the following:
\begin{equation}
  \begin{matrix}
     \includegraphics[width=.55\textwidth]{TL_action}   
  \end{matrix}
\end{equation}
In our problem, the parameter $\chi$ in the algebra is chosen to be the same as the internal bond dimension in the 2TN model. 
For the interested reader, we present a short review of the TL algebra in Appendix~\ref{sec:TLalgebra}. 

For our purposes, the important point is that there is a natural one-to-one correspondence between the set of non-crossing permutations $\text{NC}_m$ and the elements of $\text{TL}_m$ as shown pictorially in \figref{fig:NC-TL}. We denote the element in $\text{TL}_m$ corresponding to $h \in \text{NC}_m$ as $D(h)$.
There is also a natural trace $\text{Tr}_{\text{TL}_m}$ on $\text{TL}_m$, defined diagrammatically for basis elements by closing the strands on opposing ends and assigning the value $\chi^{\#\text{loops}}$, e.g.
\begin{equation}
  \begin{matrix}
     \includegraphics[scale=.35]{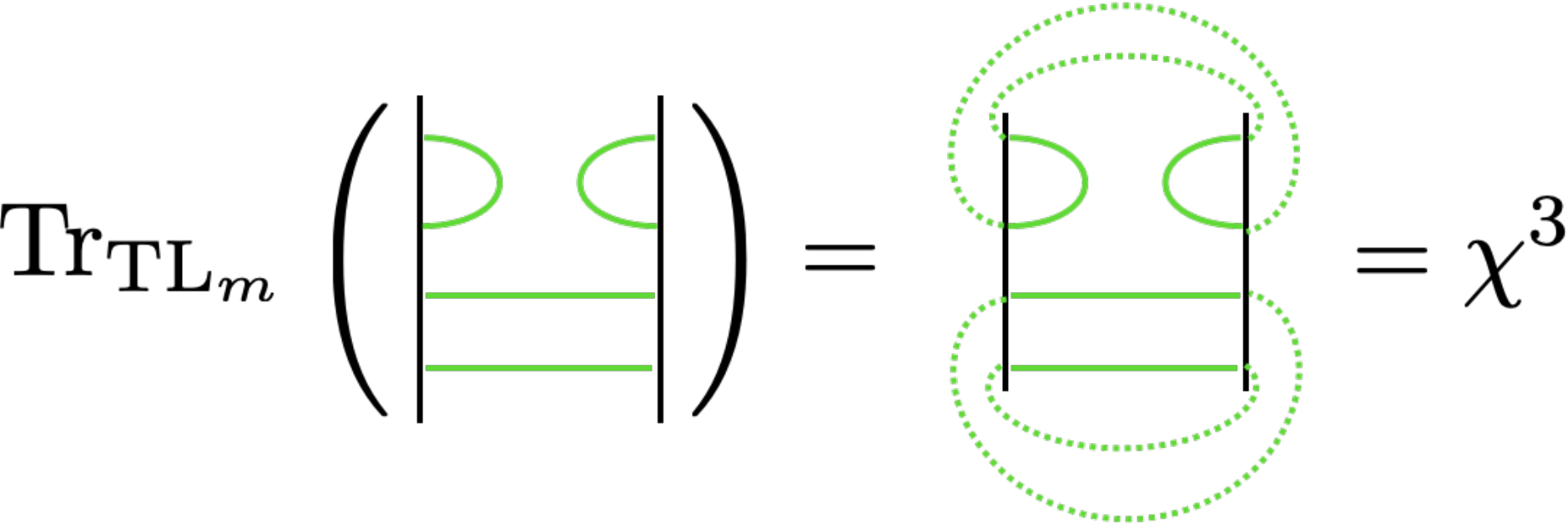}   
  \end{matrix}
\end{equation}
The trace is then linearly extended to the full algebra.

\begin{figure}[t]
  \centering
  \includegraphics[width=.6\textwidth]{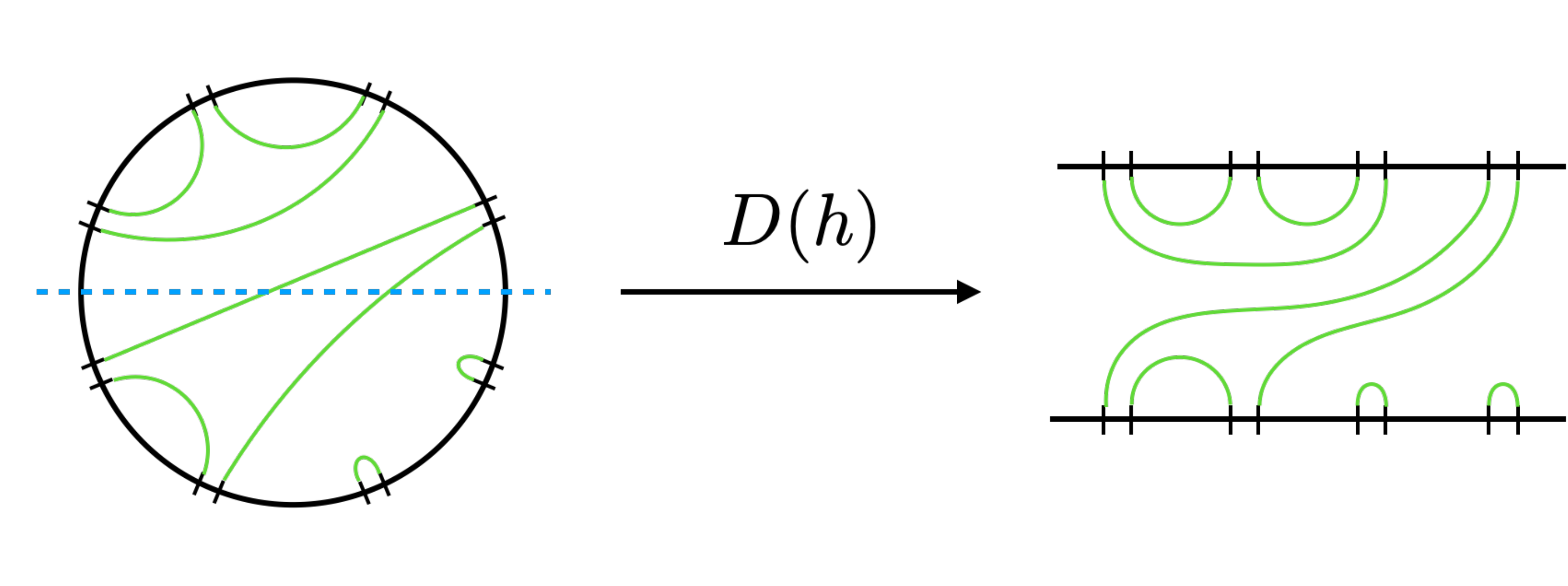}
  \caption{A non-crossing permutation is represented in double line notation, where there is an incoming and outgoing line at each vertex. The mapping from $h\in \text{NC}_m$ to $D(h)\in\text{TL}_m$ can be thought of as slicing the circle in half and straightening the two boundary arcs into lines while preserving the connections.}
  \label{fig:NC-TL}
\end{figure}

We now claim that
\begin{equation}\label{eq:bondTL}
  \chi^{\#(g_1g_2^{-1})} = \text{Tr}_{\text{TL}_m}\left[ D(h_1)^TD(k_1)D(h_2)^TD(k_2)\cdots D(h_n)^TD(k_n) \right],
\end{equation}
where $D(h)^T$ denotes the transpose of $D(h)$, obtained by flipping the diagram across its central axis between two boundary lines. Instead of providing a formal proof, we demonstrate the above statement via pictures. For example, for $(m,n)=(4,2)$, we have
\begin{equation}
  \#(g_1g_2^{-1}) = \#\left(\gamma_\tau h_1h_2\gamma_\tau^{-1}k_2^{-1}k_1^{-1}\right).
\end{equation}
Then, as a sample configuration consider:
\begin{equation}
  \begin{matrix}
    \includegraphics[width=.55\textwidth]{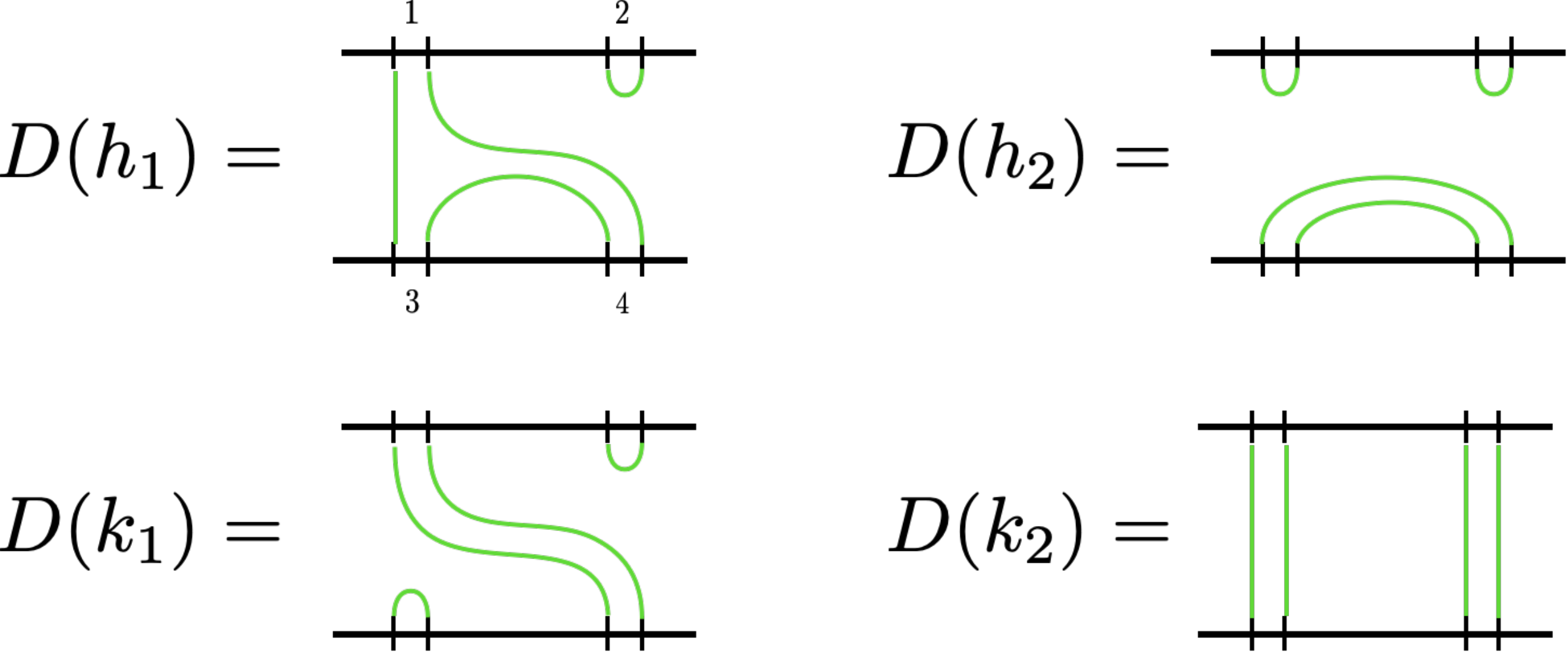}  .
  \end{matrix}
\end{equation}
For the above configuration, we have
\begin{equation}
  \begin{matrix}
    \includegraphics[width=.5\textwidth]{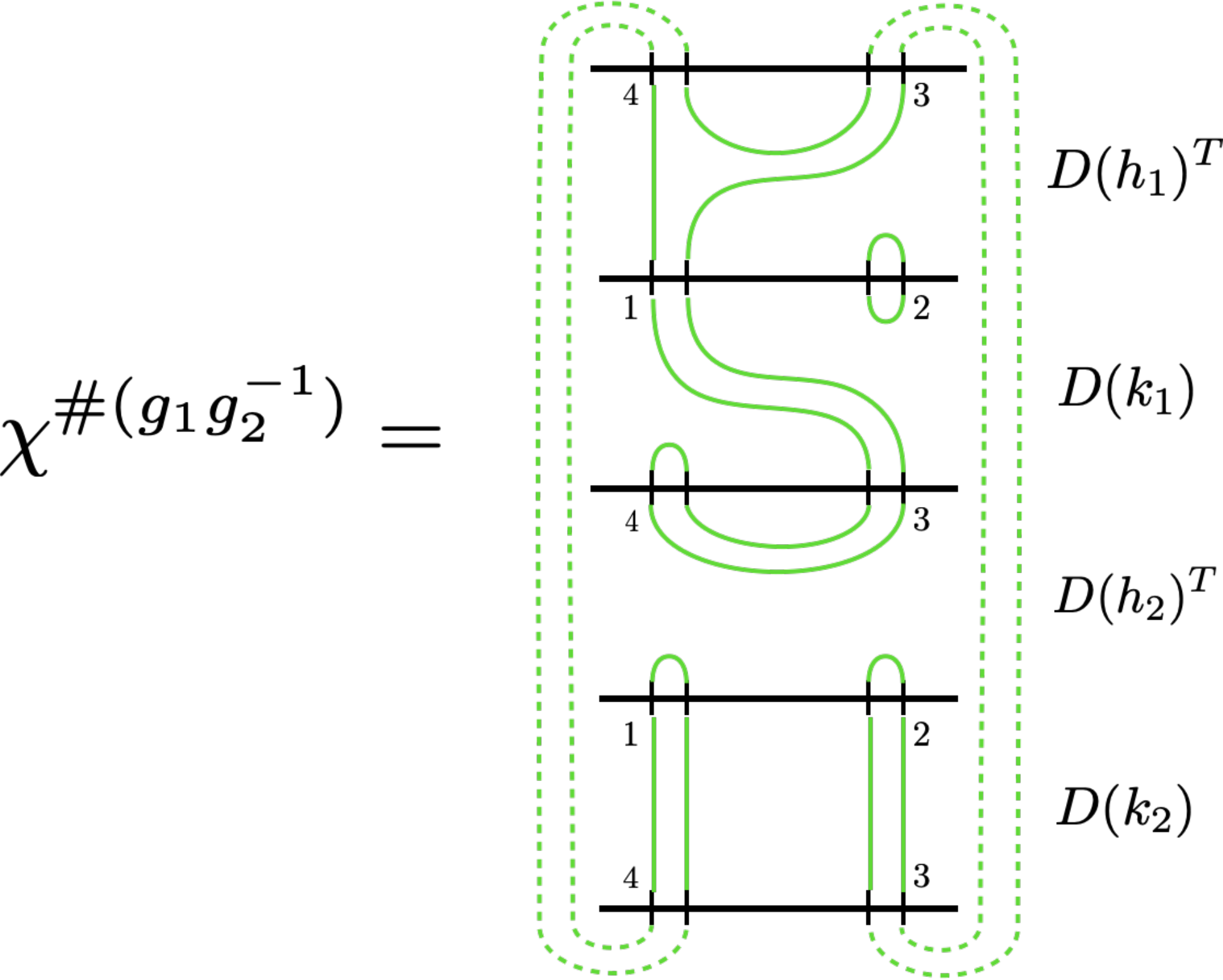},
  \end{matrix}
\end{equation}
as the reader can easily verify. This generalizes in a straightforward manner to arbitrary $(m,n)$.

Using this result we can then do the sum in \Eqref{eq:Zmn_2TN}:
\begin{equation}\label{eq:Zmn_TL}
  \overline{Z_{m,n}} = \left(\frac{\chi_A\chi_B}{\chi_C^{m}\chi^m}\right)^n \text{Tr}_{\text{TL}_m}
  \left[\left(\sum_{h\in \text{NC}_m}D(h)q_A^{-\#(h)}\right)\left(\sum_{k\in \text{NC}_m}D(k)q_B^{-\#(k)}\right)\right]^n
\end{equation}
where we have used the fact that the $h$ sum is invariant under $D(h)\to D(h)^T$ to drop the transpose. 

We will now use \Eqref{eq:Zmn_TL} to obtain the reflected entanglement spectrum via the resolvent trick \cite{Akers:2021pvd,Akers:2022max}. The resolvent for the reflected density matrix $\rho^{(m)}_{AA^*}$ is defined as
\begin{equation}\label{eq:res}
  R_m(\lambda) = \sum^\infty_{n=0}\frac{\overline{Z_{m,n}}}{\lambda^{1+n}}
\end{equation}
where $Z_{m,0} = \chi^2_A$ and $Z_{m,1} = \text{Tr}\rho^m_{AB}$ (which can also be checked from \Eqref{eq:Zmn_TL}). Plugging \Eqref{eq:Zmn_TL} into \Eqref{eq:res} we obtain a formal expression for the resolvent:
\begin{equation}
  \label{eq:TL_resolvent}
  R_m(\lambda) =  \text{Tr}_{\text{TL}_m}
  \left[ \lambda-\frac{\chi_A\chi_B}{\chi_C^{m}\chi^m}\left(\sum_{h\in \text{NC}_m}D(h)q_A^{-\#(h)}\right)\left(\sum_{k\in \text{NC}_m}D(k)q_B^{-\#(k)}\right)\right]^{-1}.
\end{equation}
Once we evaluate the resolvent, one can extract the eigenvalue spectrum $D_m(\lambda)$ of $\rho^{(m)}_{AA^*}$ from it using:
\begin{align}
	D_m(\lambda) &= -\frac{1}{\pi}\lim_{\epsilon \rightarrow 0} \text{Im} \,R_m(\lambda+i\epsilon).
\end{align}
From the spectrum, one can obtain all the $(m,n)$-R\'enyi reflected entropies as well as the reflected entropy after analytically continuing to $m=1$.

Now, in order to evaluate \Eqref{eq:TL_resolvent} we will pick a representation of $\text{TL}_m$. However, an arbitrary representation will not do the job for us. In particular, we must pick a representation such that the trace function on this representation correctly reproduces $\text{Tr}_{\text{TL}_m}$ defined above.
Such a representation will in general be decomposed into a direct sum of irreducible representations (irreps).
Thus, we expect the reflected entanglement spectrum to be grouped into different ``sectors'' labeled by these irreps. This is precisely the form of spectrum we will find for the 2TN model. 

There are many ways of classifying the irreps of the TL algebra. Here, we will make use of the \textit{standard module} \cite{Ridout:2012gg}. 
It is defined by considering vector spaces with basis vectors called \textit{link states}, made by cutting the basis elements of the $\text{TL}_m$ algebra into half. Cutting a diagram in half will always expose an even number of ``defects'' (since $m$ is even in our case), which we label by $2k$ with $0\le k\le m/2$. We call such a diagram with $2k$ defects a \emph{$(m,k)$-link state} and the set of all $(m,k)$-link states $\mathfrak{B}^{(m)}_k$.
The vector space spanned by $\mathfrak{B}^{(m)}_k$ is denoted by $\mathcal{V}^{(m)}_k$. 

There is a natural action of TL diagrams on $(m,k)$-link states given by concatenating and replacing closed loops with powers of $\chi$. Such a concatenation may result in a number of disconnected strands that decrease the number of defects. We will further require that the action maps the link state to zero whenever there are any disconnected strands after concatenation. For example:
\begin{equation}
  \begin{matrix}
     \includegraphics[width=.5\textwidth]{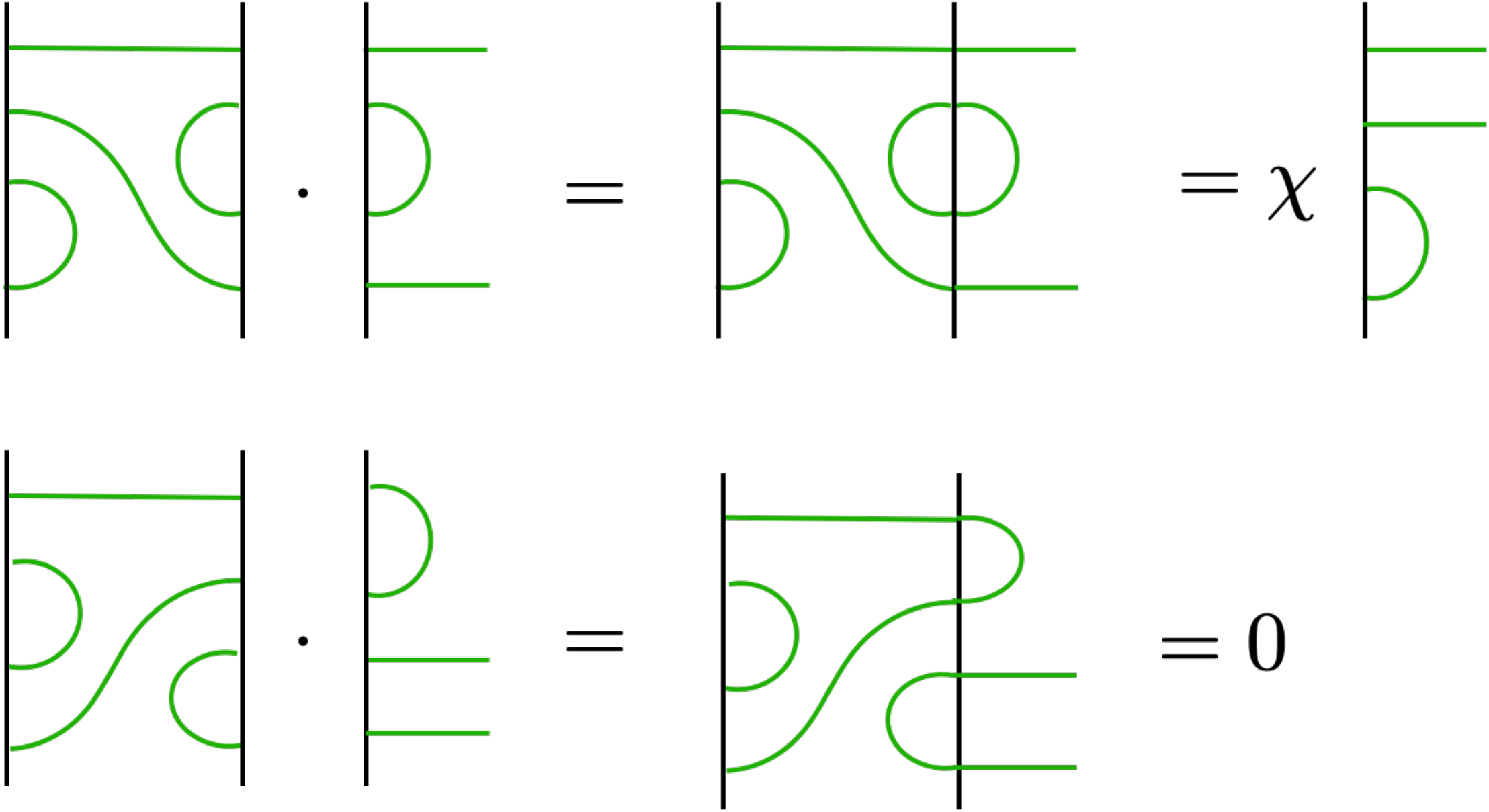} 
  \end{matrix}
\end{equation}
This action then defines a representation of $\text{TL}_m$ on $\mathcal{V}^{(m)}_k$, called the \emph{standard module}.
The usefulness of the standard modules comes from the fact that they classify all the finite dimensional irreducible representations of $\text{TL}_m$.\footnote{There are certain values of $\chi$ where the standard modules are reducible. However those values are discrete and only exist for $\chi < 2$, thus, irrelevant for our problem of interest.}
We will denote $\pi^{(m)}_k(t)$ to be the matrix representation of $t\in\text{TL}_m$ associated with $\mathcal{V}^{(m)}_k$. For more details on the standard module, we refer the reader to Appendix~\ref{sec:TLalgebra}. 

Our goal is to build a finite dimensional representation
\begin{equation}
  \mathcal{V}^{(m)} = \bigoplus_{k=0,1,\cdots,m/2} d^{(m)}_k\mathcal{V}^{(m)}_k,
\end{equation}
where the coefficients $d^{(m)}_k$ represent the number of times the irrep $\mathcal{V}^{(m)}_k$ appears in $\mathcal{V}^{(m)}$. 
These coefficients are uniquely determined by demanding that the trace of $\mathcal{V}^{(m)}$ agree with the trace in the TL algebra $\text{Tr}_{\text{TL}_m}$. 
We will see that $d^{(m)}_k$ is in fact independent of $m$ and thus, the superscript will be omitted from now on. The trace on $\text{TL}_m$ then decomposes into the sum of matrix traces in each submodule, i.e.,
\begin{equation}
  \text{Tr}_{\text{TL}_m} (t) = \sum_{k=0,1,\cdots,m/2} d_k \, \text{tr} (\pi^{(m)}_k(t)), \quad \forall t\in \text{TL}_m.
\end{equation}

Finally, in order to compute the resolvent, we are left with computing the matrix
\begin{equation}
  M^{(m)}_k(q) = \sum_{h\in \text{NC}_m} q^{-\#(h)}\pi^{(m)}_k (D(h)).
\end{equation}
for all $k$.
From \Eqref{eq:TL_resolvent}, we see that the spectrum of the product $M^{(m)}_k(q_A)M^{(m)}_k(q_B)$ determines the full spectrum of the reflected density matrix. In more detail, the resolvent is given by
\begin{equation}
	R_m(\lambda)=\sum_{k,i} \frac{d_k}{\lambda-\frac{\chi_A \chi_B}{(\chi \chi_C)^m}\lambda_{M^{(m)}_k,i}},
\end{equation}
where $\lambda_{M^{(m)}_k,i}$ are the eigenvalues of $M^{(m)}_k(q_A)M^{(m)}_k(q_B)$. Thus, we obtain the spectrum:
\begin{equation}\label{eq:spectrum}
	D(\lambda)=\sum_{k,i} d_k\, \delta\left(\lambda-\frac{\chi_A \chi_B}{(\chi \chi_C)^m}\lambda_{M^{(m)}_k,i}\right)=\sum_{k,i} d_k\, \delta\left(\lambda-\lambda^{(m)}_{k,i}\right),
\end{equation}
where $\lambda^{(m)}_{k,i}\equiv\frac{\chi_A \chi_B}{(\chi \chi_C)^m}\lambda_{M^{(m)}_k,i}$.
The spectrum takes the simple form of a sum over poles where the coefficients $d_k$ determine the degeneracy, while the eigenvalues are given by $\chi_A\chi_B/(\chi\chi_C)^m$ times the eigenvalues of $M^{(m)}_k(q_A)M^{(m)}_k(q_B)$.

We emphasize that, so far, our result is valid for large $\chi_{A/B/C} \gg 1$ with the ratios $q_A=\chi_A/\chi_{C_1}, q_B=\chi_B/\chi_{C_2}$ held fixed. The internal bond dimension is kept finite and thus, $\chi_{A/B/C} \gg \chi$. To match the gravitational saddles, we will eventually take the limit $\chi\gg 1$, but the more general result \Eqref{eq:spectrum} is valid for \emph{all} values of $\chi$.



\section{Reflected Entropy in 2TN} 
\label{sec:2TN}

Having set up the formalism for computing the reflected spectrum via the TL machinery, we will now compute and analyze the spectrum in this section. First, we demonstrate the formalism by performing the analysis at finite $\chi$ in \secref{sub:finite}, focusing on $m=2,4$ as illustrative examples. We will then analyze the large $\chi$ limit in \secref{sub:large}. Working in this limit enables us to obtain the spectrum as an analytic function of $m$ and thus, continue to $m=1$. With the spectrum at hand, we compute the reflected entropy and discuss the interpretations in terms of superselection sectors in \secref{sub:sr}. We then analyze the leading effect of finite external bond dimensions in \secref{sub:corrections}. We will see that they shift the locations and give rise to a width for each of the peaks in the spectrum. Finally, we demonstrate consistency of our analysis with numerics in \secref{sub:num}.

\subsection{Finite $\chi$} 
\label{sub:finite}
The recipe outlined in \secref{sub:TL} allows one to compute the spectrum exactly for arbitrary even integer $m$.
In this subsection, we will work out the detailed spectrum for $m=2$ and $m=4$ following this recipe.
The purpose of this analysis is twofold:
First, it serves as a proof of principle for analyzing the spectrum of arbitrary even integer $m$.
Note that in practice, such an analysis is not always feasible since the complexity of the calculation increases exponentially as $m$ increases. Second, since we pose no restriction on $\chi$ here rather than $\chi_{A/B/C}\gg \chi$, we expect our result to hold even when $\chi$ is small. This particular small $\chi$ regime allows us to make non-trivial predictions for the \emph{R\'enyi} reflected spectrum at even integer $m$. Such predictions can be checked numerically up to high accuracy, as opposed to the $\chi\to \infty$ results we will obtain in \secref{sub:large} and \secref{sub:sr}, where our numerics are limited by finite $\chi$ effects. 

\begin{itemize}
    \item $m=2$ \\
    Let us begin with a trivial case.
    The diagrammatic basis of the $\text{TL}_2$ algebra is given by
\begin{equation}
  \begin{matrix}
    \includegraphics[scale=.3]{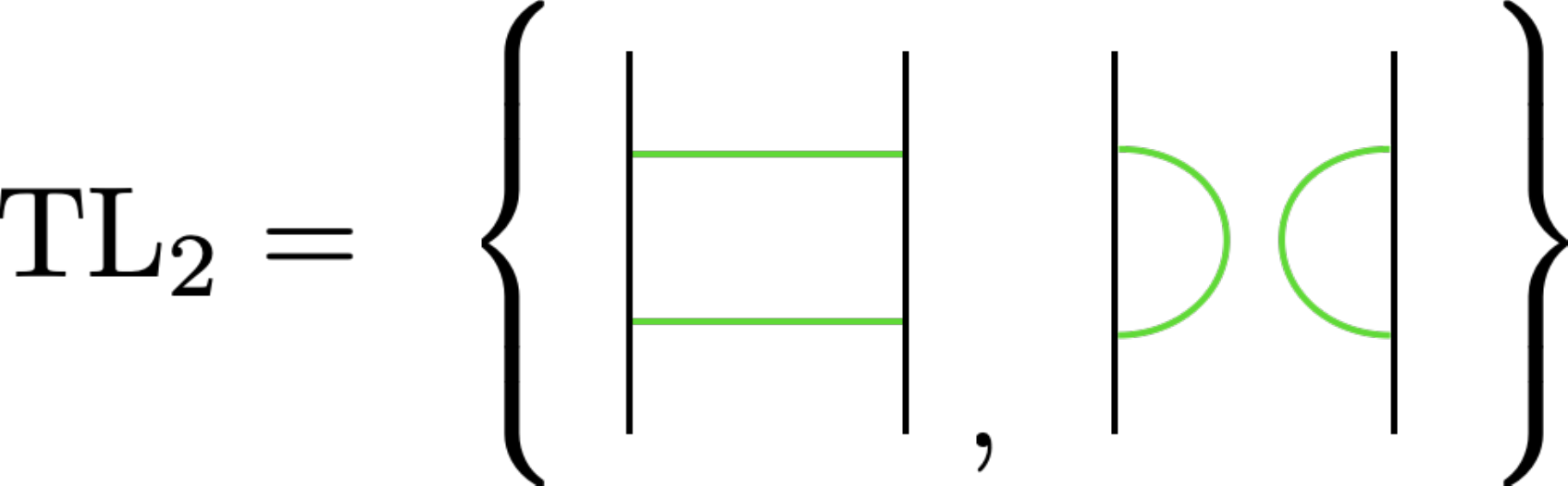}
  \end{matrix}
\end{equation}
which we will refer to as $D(\tau_2)$ and $D(e)$ following the notation of $\text{NC}_2$. There are two standard modules $\mathcal{V}^{(2)}_0$ and $\mathcal{V}^{(2)}_1$, each being one-dimensional:
\begin{equation}
  \begin{matrix}
    \includegraphics[scale=.3]{m=2},
  \end{matrix}
\end{equation}
where we remind the reader that the modules are labelled by half the number of open lines. Using the diagrammatic action described previously, we compute
\begin{align}
  \begin{split}
    \pi^{(2)}_0(D(\tau_2)) = 1,\quad   \pi^{(2)}_0(D(e)) = \chi; \\
  \pi^{(2)}_1(D(\tau_2)) = 1, \quad \pi^{(2)}_1(D(e)) = 0.
  \end{split}
\end{align}
To reproduce the TL trace, it is easy to check that we must pick $d_0=1$ and $d_1 = \chi^2-1$. Thus, the matrices $M^{(2)}_k(q)$ (which are simply 1-dimensional here) are:
\begin{align}
  M^{(2)}_0(q) = \chi q^{-2} + q^{-1}, \quad M^{(2)}_1(q)  = q^{-1}  
\end{align}
Using \Eqref{eq:spectrum}, we find the spectrum to be a sum over two poles, which have degeneracies and eigenvalues summarized in Table~\ref{tab:m=2}.
\begin{table}[H]
\centering
\begin{tabular}{|c|c|c|}
\hline
    sector & eigenvalue $\lambda^{(m)}_k$ & multiplicity $d_k$ \\
    \hline
     $k=0$ & $(q_A^{-1}+\chi^{-1})(q_B^{-1}+\chi^{-1})/\chi_C$ & 1 \\
     $k=1$ & $1/(\chi^2\chi_C)$ & $\chi^2-1$ \\
     \hline
\end{tabular}
\caption{The list of eigenvalues and their degeneracies for the $m=2$ reflected spectrum.}
\label{tab:m=2}
\end{table}

    \item $m=4$ \\
Moving on to a slightly more involved example, consider the $\text{TL}_4$ algebra which has $14$ basis diagrams:
\begin{equation}
  \begin{matrix}
    \includegraphics[scale=.3]{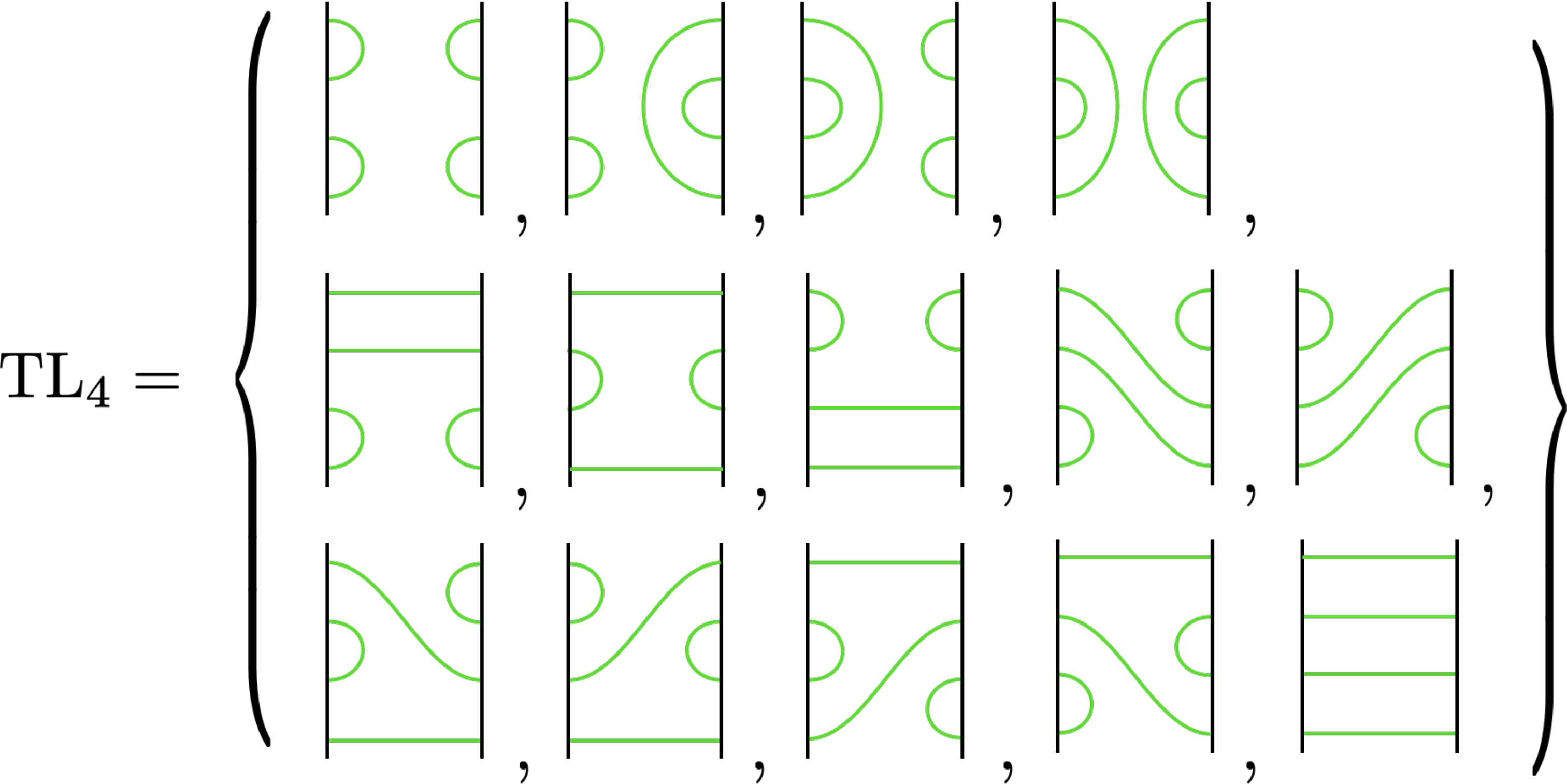}
  \end{matrix}
\end{equation}
There are three standard modules now: $\mathcal{V}^{(4)}_0$, $\mathcal{V}^{(4)}_1$, $\mathcal{V}^{(4)}_2$ with dimensions $2,3,1$ respectively:
\begin{equation}
  \begin{matrix}
    \includegraphics[scale=.3]{m=4}
  \end{matrix}
\end{equation}
Note that any two of the $(m,k)$-link states with same $k$ can be ``paired up'' to form a unique TL diagram with $2k$ crossing connections.
In the case at hand the total number of diagrams that can be formed this way are $2^2+3^2+1^2=14$, i.e. such pairings generate the entire set of $\text{TL}_4$ basis diagrams.
This pairing up action will be important in our large $\chi$ analysis in the next subsection.

Similar to the previous $m=2$ calculation, the module is determined by the diagrammatic action of TL diagrams on link states. The matrix $M^{(4)}_k$ is then determined by a weighted sum over the module representation of these diagrams. 
Since it is conceptually straightforward, albeit tedious, to carry out the analysis, we skip the details and present the results. They read:
\begin{align}
  \label{eq:m4_matrix}
  M^{(4)}_0(q) &=
                 \begin{pmatrix}
                   \chi^2q^{-4}+5\chi q^{-3} + 3q^{-2} & \chi^2q^{-3}+3\chi q^{-2}+q^{-1} \\
                   \chi^2q^{-3}+\chi(q^{-4}+2q^{-2})+4q^{-3} & \chi^2q^{-2}+\chi(q^{-3}+q^{-1})+3q^{-2}
                 \end{pmatrix}\\
  M^{(4)}_1(q) &=
                 \begin{pmatrix}
                   \chi q^{-3}+2q^{-2} & \chi q^{-2}+2q^{-3} & \chi q^{-3}+q^{-2} \\
                   \chi q^{-2}+q^{-1}& \chi q^{-1}+3q^{-2} & \chi q^{-2}+q^{-1}\\
                   \chi q^{-3}+q^{-2} & \chi q^{-2}+2q^{-3} & \chi q^{-3}+2q^{-2}
                 \end{pmatrix}\\
    M^{(4)}_2(q) &= q^{-2}.
\end{align}
Reproducing the TL trace requires us to pick $d_0=1, d_1 = \chi^2-1$ and $d_2=\chi^4-3\chi^2+1$. 
The eigenvalues and their corresponding degeneracies are given in Table~\ref{tab:m=4}.
\begin{table}[H]
\centering
\begin{tabular}{|c|c|c|}
\hline
    sector & eigenvalue $\lambda^{(m)}_k$ & multiplicity $d_k$ \\
    \hline
     $k=0$ & $\ell_+$ & 1 \\
     $k=0$ & $\ell_-$ & 1 \\
     $k=1$ & $(5q^{-1}\chi^{-1}+(2q^{-2}+1))^2/(\chi_C^3\chi^2)$ & $\chi^2-1$ \\
     $k=1,2$ & $1/(q^2 \chi_C^3\chi^4)$ & $\chi^4-\chi^2-1$ \\
     \hline
\end{tabular}
\caption{The list of eigenvalues and their degeneracies for the $m=4$ reflected spectrum. We have set $q_A=q_B=q$ to simplify expressions.}
\label{tab:m=4}
\end{table}
Note that that two eigenvalues of $M^{(4)}_2(q)$ turn out to coincide with the eigenvalue of $M^{(4)}_4(q)$. The $k=0$ eigenvalues $\ell_\pm$ is complicated, arising from the eigenvalues of the matrix in \Eqref{eq:m4_matrix}:
\begin{align}
\begin{split}
   \ell_\pm &= \frac{1}{4q^6\chi_C^3\chi^4}
    \Big((q^2+1)\chi^2+(q^3+6q)\chi+6q^2 \\ 
    &\pm \sqrt{(q^2+1)^2\chi^4+(2q^5+10q^3+12q)\chi^3+(q^6+20q^4+44q^2)\chi^2+(8q^5+52q^3)\chi+16q^4}  
    \Big)^2
\end{split}
\end{align}
At large $\chi$, the expressions simplify and we find
\begin{equation}
  \ell_+ \approx \frac{(q^2+1)^2}{\chi^3_Cq^6}, \quad \ell_- \approx \frac{1}{\chi^3_C\chi^2(q^2+1)^2}.
\end{equation}
We see that $\ell_+$ scales as $O(\chi^0)$, while $\ell_-$ is suppressed by $O(\chi^{-2})$ and comes close to the eigenvalue of the $k=1$ sector.
However, since the multiplicity of $\lambda^{(4)}_1$ scales as $O(\chi^2)$ at large $\chi$, $\lambda_-$ can be ignored (at leading order in $\chi$) when calculating the entropy.
The same can also be said for the sub-leading eigenvalue of the $k=1$ sector (they coincide with that of $k=2$). At large $\chi$ the $k=2$ sector multiplicity scales as $O(\chi^4)$ and it dominates over the subleading poles of the $k=1$ sector.
\end{itemize}

We depict the spectra we found for $m=2$ and $m=4$ in \figref{fig:spec_finite_chi}.
\begin{figure}
    \centering
    \includegraphics[scale=.4]{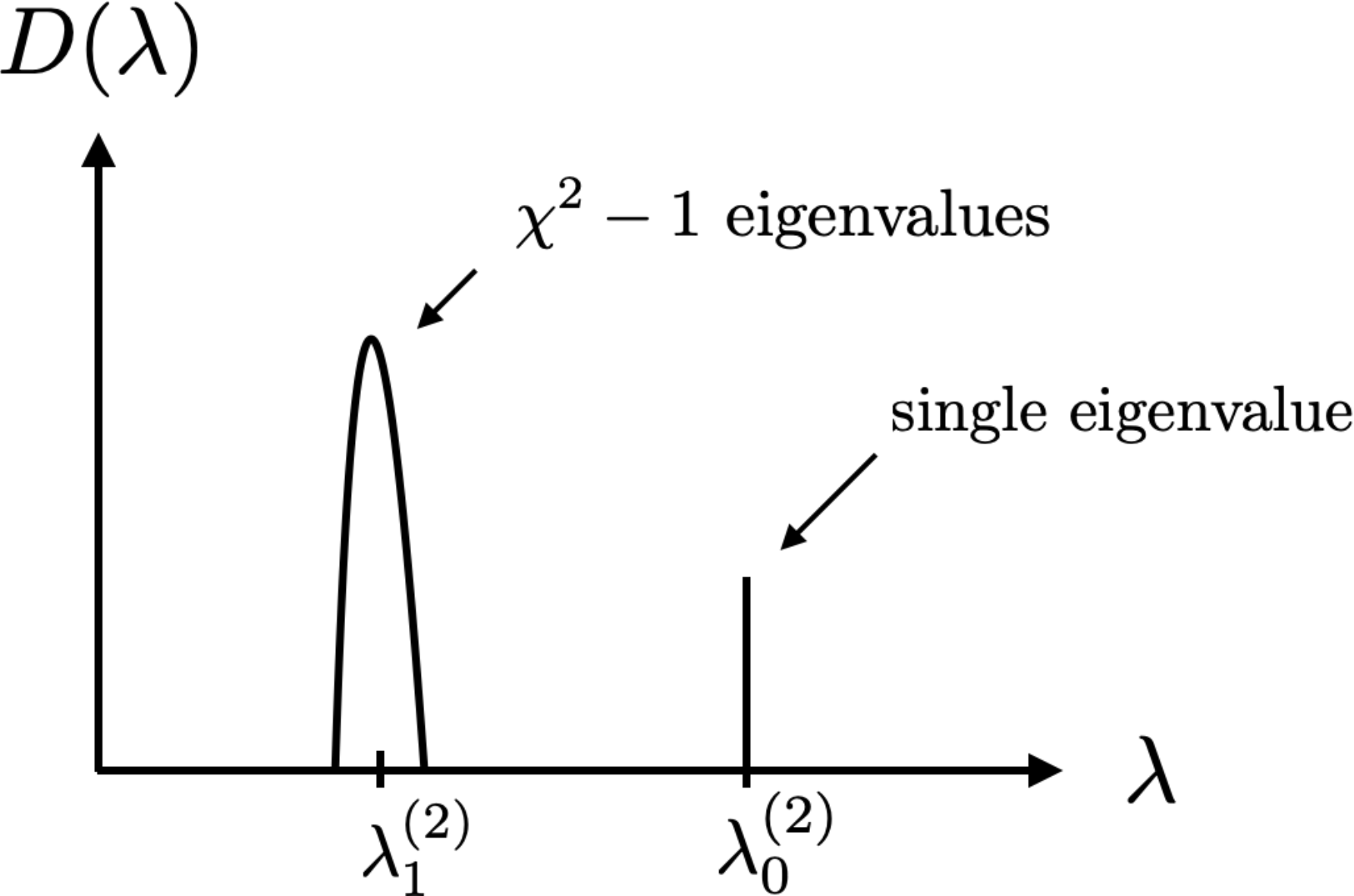}
    \hspace{.2in}
    \includegraphics[scale=.4]{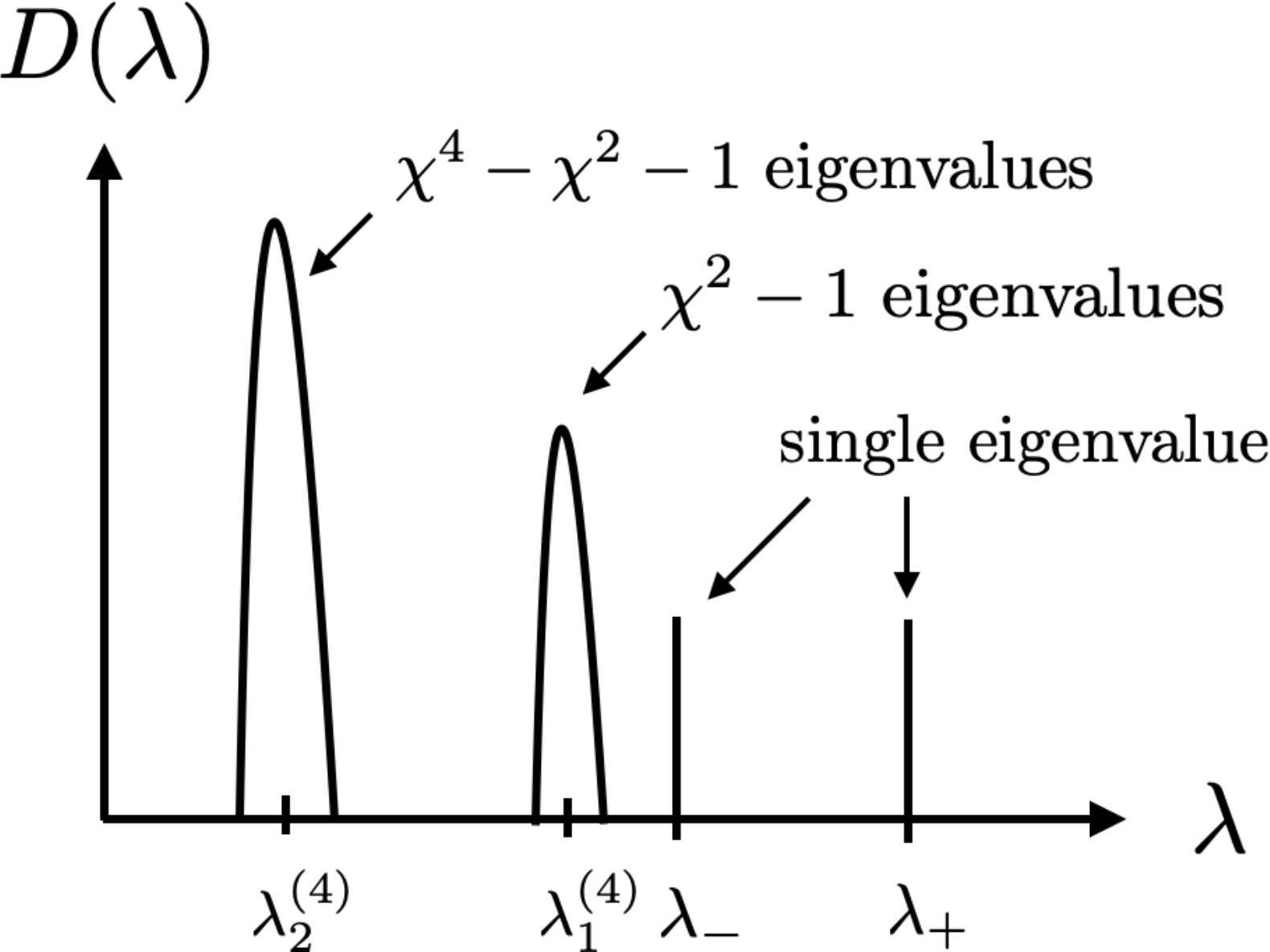}
    \caption{A sketch of the spectrum we obtained for $m=2$ (left) and $m=4$ (right). 
    The eigenvalues of $k>0$ sectors are depicted here as having a small width. This is found by including the finite external bond dimension effects, which we will study in \secref{sub:corrections}}
    \label{fig:spec_finite_chi}
\end{figure}
For higher $m$, similar calculations can be still be carried out, although the dimension of matrices $M^{(m)}_k(q)$ will be large and one has to revert back to numerical methods to find their eigenvalues.
However, the lesson we learnt from studying $m=2,4$ modules still holds:
In general there will be $m/2$ sectors, and the leading eigenvalue in each sector behaves as $O(\chi^{-2k})$ and with multiplicity $\sim \chi^{2k}$.
This hierarchy structure is a general feature for the 2TN spectra and we will see that it persists through analytical continuation $m\to 1$.

\subsection{Large $\chi$ limit} 
\label{sub:large}

We have already seen how one can use the standard module of the TL algebra to extract the reflected spectrum of the 2TN at even integer $m$ for arbitrary values of $\chi$. 
In this subsection we will study this problem in the limit $\chi\to\infty$.
This limit is physically relevant for comparison with holography since it corresponds to the $G\to 0$ limit in gravity. 
In \secref{ssub:even} we obtain the leading $\chi$ behavior of the spectrum in this limit.
Then we will analytically continue the spectrum in $m$ in this limit in \secref{ssub:analytic}.
This information will allow us to work at $m=1$ to understand the reflected entropy, which is the main focus of  \secref{sub:sr}.

\subsubsection{Even integer $m$} 
\label{ssub:even}

We begin with a preliminary statement on the coefficients $d^{(m)}_k$ introduced previously.
\begin{proposition}
  \label{lem:dk}
  The module multiplicity number $d^{(m)}_k$ satisfies the following properties:
  \begin{itemize}
  \item $d^{(m)}_k$  is independent of $m$.
  \item As a function of $\chi$, $d_k$ is determined by
    \begin{equation}
      d_k = [2k+1]_q,
    \end{equation}
    where the q-number $[\cdot]_q$ is defined as
    \begin{equation}
      [k]_q = \frac{q^k-q^{-k}}{q-q^{-1}} = q^{k-1} + q^{k-3} + \cdots + q^{-(k-3)} + q^{-(k-1)}.
    \end{equation}
    $\chi$ and $q$ are related by $\chi = [2]_q = q + q^{-1}$.
  \end{itemize}
\end{proposition}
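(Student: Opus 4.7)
The plan is to exploit the fact that for generic $\chi$ (i.e.\ $q$ not a root of unity), the Temperley--Lieb algebra is semisimple, with the standard modules $\mathcal{V}^{(m)}_k$ exhausting its irreducibles. Hence $\text{TL}_m \cong \bigoplus_k \text{End}(\mathcal{V}^{(m)}_k)$, and every trace on $\text{TL}_m$ has a unique decomposition as $\sum_k c_k\,\text{tr}\circ \pi^{(m)}_k$. In particular the coefficients $d^{(m)}_k$ exist and are uniquely determined by evaluating $\text{Tr}_{\text{TL}_m}$ on any sufficiently rich set of test diagrams, so the problem reduces to identifying convenient diagrams and matching.

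My preferred test elements are the Jones--Wenzl projectors $p_{2k} \in \text{TL}_{2k}$, embedded in $\text{TL}_m$ by appending $m-2k$ identity strands on the right. By their defining annihilation property under cup-cap generators, $p_{2k}$ acts as the identity on the top standard module $\mathcal{V}^{(2k)}_k$ and vanishes on all lower ones. Wenzl's recursion supplies the classical Markov-trace identity $\text{Tr}_{\text{TL}_{2k}}(p_{2k}) = [2k+1]_q$. Evaluating $\text{Tr}_{\text{TL}_m}(p_{2k}\otimes\mathbb{1}_{m-2k})$ diagrammatically gives $[2k+1]_q\,\chi^{m-2k}$, since the $m-2k$ residual strands close to $m-2k$ independent loops. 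Comparing this with the representation-theoretic side and proceeding by induction on $k$, starting from $d_0 = 1$ (which is forced by the trace of the identity together with the Catalan-number dimension of $\mathcal{V}^{(m)}_0$), pins down $d^{(m)}_k = [2k+1]_q$. Because the right-hand side has no $m$ in it, the $m$-independence claim drops out automatically.

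The main technical obstacle is computing $\text{tr}\bigl(\pi^{(m)}_j(p_{2k}\otimes\mathbb{1}_{m-2k})\bigr)$ for the \emph{higher} sectors $j \geq k$, since through strands passing through $p_{2k}$ can contribute non-trivially and must be summed carefully to close the induction. The cleanest conceptual way around this is quantum Schur--Weyl duality: $\text{TL}_m$ is the commutant of $U_q(\mathfrak{sl}_2)$ acting on $V^{\otimes m}$ with $V$ the two-dimensional irreducible, the diagrammatic trace defined by closing strands coincides with the quantum trace on $V^{\otimes m}$, and the decomposition $V^{\otimes m} = \bigoplus_k L_k \otimes \mathcal{V}^{(m)}_k$ (with $L_k$ the spin-$k$ irreducible of $U_q(\mathfrak{sl}_2)$) instantly yields $d_k = \dim_q L_k = [2k+1]_q$ together with manifest $m$-independence. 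For a fully combinatorial derivation without invoking quantum groups, the alternative is to unfold $p_{2k}\otimes\mathbb{1}_{m-2k}$ using standard bubble-removal identities for Jones--Wenzl idempotents, which reduces the higher-$j$ evaluations to the base case $j=k$ and lets the induction close cleanly.
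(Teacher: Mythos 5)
Your proposal is correct in substance but follows a genuinely different route from the paper. The paper's proof is elementary and self-contained: it evaluates $\text{Tr}_{\text{TL}_m}$ on the family $\{\mathrm{id},\,e_1,\,e_1e_3,\,e_1e_3e_5,\dots\}$, obtaining the triangular system $\sum_{k\le m/2-i+1} d^{(m)}_k\,|\mathcal{V}^{(m)}_k| = \chi^{m-2i+2}$; since the conditions at each $m$ contain all those at smaller $m$ plus one new one, $m$-independence follows, and the formula $d_k=[2k+1]_q$ is then verified by matching powers of $q$ against the $\#\mathrm{SYT}$ dimension count, reducing to a binomial identity. It then separately checks (via its Lemma on the characterization of the Markov trace) that the resulting linear combination of matrix traces really equals $\text{Tr}_{\text{TL}_m}$ on all of $\text{TL}_m$. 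You instead lean on semisimplicity to get existence and uniqueness of the decomposition for free, and then either (a) test against Jones--Wenzl projectors, or (b) invoke quantum Schur--Weyl duality. Route (b) is complete and is the cleanest conceptual derivation: the bimodule decomposition $V^{\otimes m}=\bigoplus_k L_k\otimes\mathcal{V}^{(m)}_k$ gives $d_k=\dim_q L_k=[2k+1]_q$ with manifest $m$-independence. Two caveats: your route (a) as written is not yet a proof, since the higher-sector traces $\mathrm{tr}\bigl(\pi^{(m)}_j(p_{2k}\otimes\mathbb{1}_{m-2k})\bigr)$ (equivalently, the ranks of these idempotents in each standard module) are exactly the hard part and you only gesture at how to compute them; and route (b) silently uses the identification of the diagrammatic closure trace with the quantum trace on $V^{\otimes m}$, which itself rests on the uniqueness of the Markov trace --- precisely the characterization the paper proves and uses at the end of its own argument. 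What your approach buys is conceptual transparency and immediate $m$-independence; what the paper's buys is a derivation entirely within the diagrammatic calculus it has set up, with no appeal to quantum groups or Jones--Wenzl machinery.
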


One can solve for $q$ in terms of $\chi$ to obtain
\begin{equation}
\label{eq:dk}
  d_k = \frac{1}{4^{k}} \sum^{k+1}_{n=1}
  \begin{pmatrix}
    2k+1 \\ 2n-1
  \end{pmatrix}
  \chi^{2(k-n+1)}(\chi^2-4)^{n-1},
\end{equation}
so that $d_k$ is a polynomial in $\chi$. In fact one can show that the coefficients of this polynomial are integers. For instance, the first few values of $d_k$ are
\begin{align}
  \begin{split}
  d_0&=1,\\
  d_1&=\chi^2-1,\\
  d_2&=\chi^4-3\chi^2+1,\\
  d_3&=\chi^6-5\chi^4+6\chi^2-1    
  \end{split}
\end{align}
Although this result holds for all $\chi$, we will mostly just need the large $\chi$ behavior of $d_k$, which is
\begin{equation}
  d_k \approx \chi^{2k} + O(\chi^{2k-2})
\end{equation}
Proving this proposition requires some additional facts about the standard module. 
For this reason we present it in Appendix~\ref{sec:proof}.

Before moving on, let us introduce some useful notation. We define an inner product $\braket{\cdot,\cdot}$ on $\mathcal{V}^{(m)}_k$ as follows: If $x, y$ are two link states, $\Braket{x,y}$ is given by flipping $x$ across the vertical axis, matching to $y$ and assigning a power of $\chi$ for every closed loop. Furthermore, we define $\Braket{x,y}$ to be nonzero only if every defect in $x$ ends up being connected to a defect in $y$. A few examples should suffice to clarify the definition:
\begin{equation}
  \begin{matrix}
     \includegraphics[scale=.3]{Vk_inner}
  \end{matrix}
\end{equation}
Note that if $x,y$ are two link states in $\mathcal{V}^{(m)}_k$, then
\begin{equation}
  \label{eq:approx_ortho}
  \Braket{x,y} = \delta_{x,y} \chi^{m/2-k} + O(\chi^{m/2-k-1}),
\end{equation}
where $\delta_{x,y}=1$ if $x=y$ and $0$ otherwise. In other words, the set of link states form an approximately orthogonal basis for $\mathcal{V}^{(m)}_k$ in the $\chi\to\infty$ limit. This fact will turn out to be useful later in this section, as well as in order to get a gravitational interpretation in \secref{sec:gravity}.

We also define a bilinear map $|\cdot~\cdot|: \mathcal{V}^{(m)}_k\times \mathcal{V}^{(m)}_k\to \text{TL}_m$, by flipping $y$ across the vertical axis and ``pairing up'' with $x$ to form a TL diagram. For instance,
\begin{equation}
  \begin{matrix}
     \includegraphics[scale=.35]{Vk_outer}.
  \end{matrix}
\end{equation}
This map has a nice property that for all $x,y,z\in \mathcal{V}^{(m)}_k$ we have
  \begin{equation}
    \label{eq:bilinear}
    |x~y| \,z = \Braket{y,z}x.
  \end{equation}
    The proof of this equality is provided in Appendix~\ref{sec:TLalgebra}.

Returning to our main problem of finding the leading $\chi$ behavior of $M^{(m)}_k$, we ask the following question: For any given link state $v\in \mathcal{B}^{(m)}_k$, which set of diagrams acting on $v$ produce the dominant power of $\chi$? 
It turns out that, to produce the leading power of $\chi$, the right half of the TL diagram must be exactly the mirror image of $v$, since this is the only way to get the maximum number of closed loops.
Moreover, since each closed loop contributes one power of $\chi$, the overall contribution for such a diagram is $O(\chi^{m/2-k})$.
This fact is illustrated in \figref{fig:mirror}.

  \begin{figure}[h]   
    \centering
    \includegraphics[scale=.4]{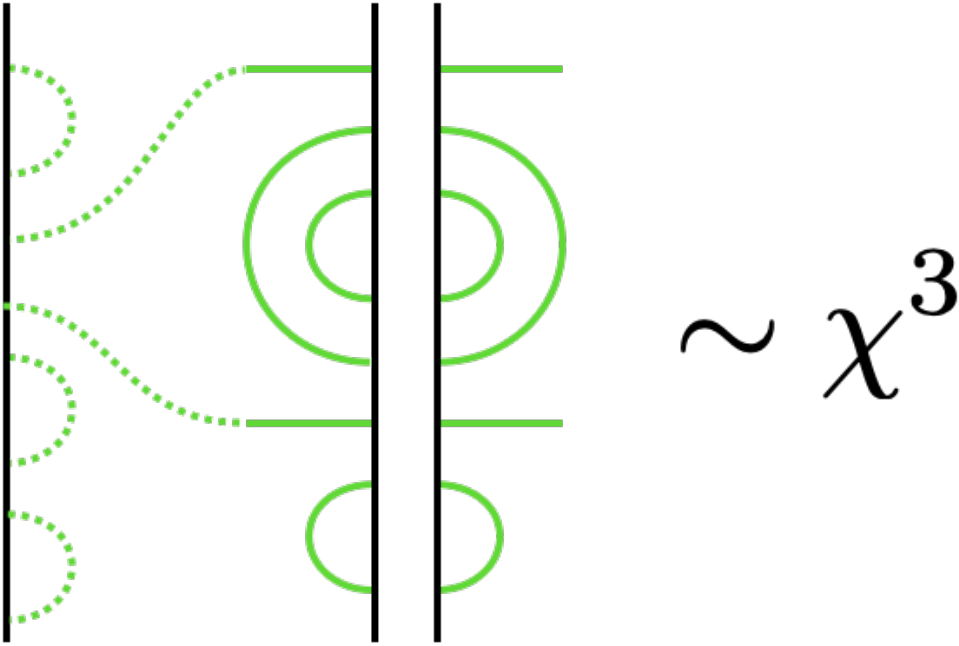}
    \caption{The leading contributions in $M^{(m)}_k$ come from the TL diagrams whose right half is a mirror image of the bases in $\mathcal{V}^{(m)}_k$. The contributing power is related to the number of defects by $\chi^{m/2-k}$. Here $m=8$ and $k=1$.}
    \label{fig:mirror}
  \end{figure}

Using our previous notation, the set of all possibly dominant diagrams for a given vector $y\in \mathcal{V}^{(m)}_k$ is then all elements of the form  $|x~y| \in \text{TL}_m$ where $x\in\mathcal{V}^{(m)}_{k}$. Note that every diagram in $\text{TL}_m$ will be dominant in exactly one module $\mathcal{V}^{(m)}_{k}$. It is easy to write down the sum of such diagrams by making use of the bilinearity of $|\cdot~\cdot|$. For example:
\begin{equation}
    \sum_{t\in \text{TL}_m} \pi_k^{(m)}(t) \approx \pi^{(m)}_k \left( \left|  \sum_{x\in \mathcal{B}^{(m)}_k}x  \quad  \sum_{y\in \mathcal{B}^{(m)}_k}y  \right| \right)+O(\chi^{m/2-k-1}),
\end{equation}
  
To actually compute the matrix $M^{(m)}_k(q)$, we still need to weigh the sum by powers of $q^{-\#(h)}$. Given $h\in \text{NC}_m$, $\#(h)$ can be computed by considering a two-sided concatenation of $D(h)$ with a special link state that we call $e_0 \in \mathcal{V}^{(m)}_0$, given by the relation $D(e) = |e_0 \quad e_0|$ where $e\in NC_m$ is the identity permutation. For example, if $h = (1456)(23) \in \text{NC}_6$, then we have
\begin{equation}
    \begin{matrix}
          \includegraphics[scale=.3]{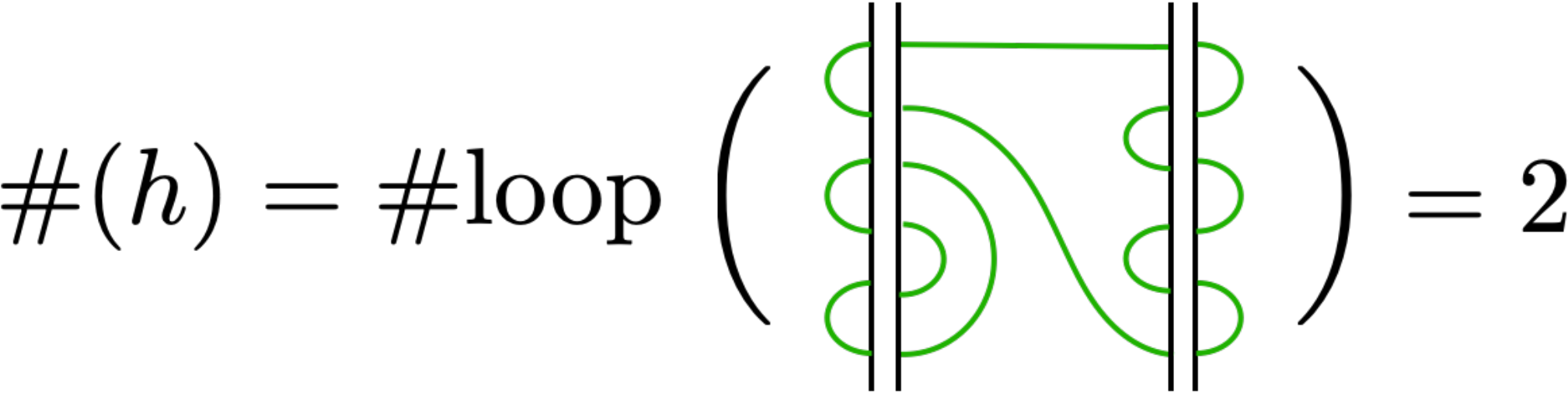}
    \end{matrix}   
\end{equation}
This facilitates the definition of the following linear functional $f_q:\mathcal{V}^{(m)}_k\to \mathbb{Z}[q^{-1}]$:
\begin{definition}
    If $v$ is a link state in $V^{(m)}_k$, the value of $f_q(v)$ is given by concatenating the mirrored reflection of $e_0$ with $v$ and assigning a factor of $q^{-1}$ for every closed loop. The action of $f_q$ on a general vector in $V^{(m)}_k$ follows from the linearity of $f_q$.
\end{definition}
We illustrate the definition with the following example:
\begin{equation}
    \label{eq:f_q}
    \begin{matrix}
       \includegraphics[scale=.3]{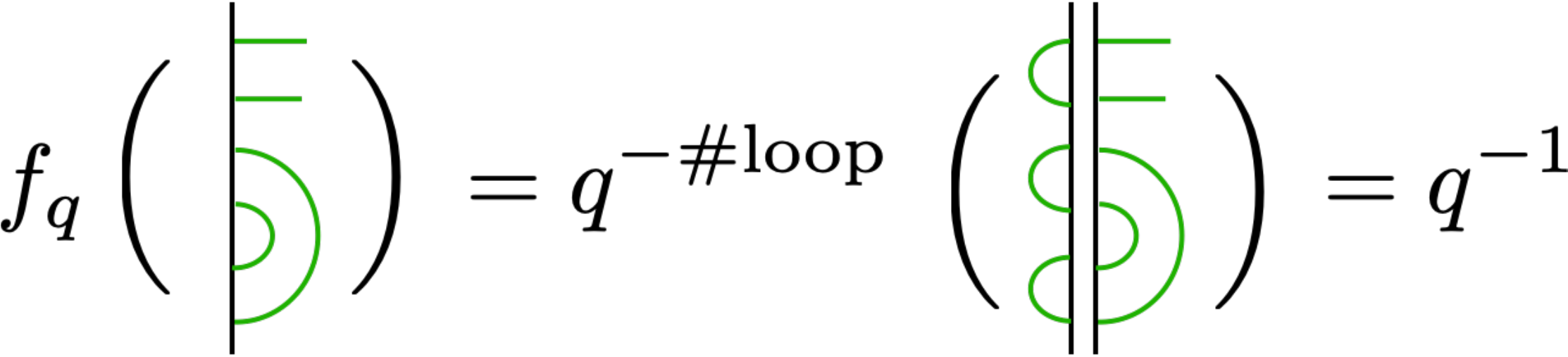}
    \end{matrix}
\end{equation}
where we emphasize that only closed loops are counted, whereas open strands are ignored. Using the linear functional $f_q$, we can then write
\begin{equation}
    q^{-\#(h)} = q^{-k} f_q(x)f_q(y)
\end{equation}
for all $h\in \text{NC}_m$ with the decomposition $D(h) = |x~y|$ and $x,y\in V^{(m)}_k$. It then follows that
\begin{equation}
    M^{(m)}_k(q)=\sum_{h\in \text{NC}_m} q^{-\#(h)}D(h) \approx q^{-k}\left| \sum_{x\in \mathcal{B}^{(m)}_k} f_q(x)x \quad \sum_{y\in \mathcal{B}^{(m)}_k}f_q(y)y \right| + O(\chi^{m/2-k-1})
\end{equation}
when considered as a linear operator acting on $\mathcal{V}^{(m)}_k$. 

For any $v\in V^{(m)}_k$ we can use \Eqref{eq:bilinear} to write
\begin{align}
    \begin{split}
       &\left(\sum_{h\in \text{NC}_m} q_A^{-\#(h)} D(h)\right)\left(\sum_{k\in \text{NC}_m} q_B^{-\#(k)} D(k)\right)v \\
    &\quad \approx (q_{AB})^{-k}\sum_{x,y,z,w\in \mathcal{B}^{(m)}_k}f_{q_A}(x)f_{q_A}(y)f_{q_B}(z)f_{q_B}(w)\Braket{y,z}\Braket{w,v}x,
    \end{split}
\end{align}
where we use the shorthand $q_{AB}\equiv q_Aq_B=\chi_A\chi_B/\chi_C$. It then follows that at large $\chi$, the operator product $M^{(m)}_k(q_A)M^{(m)}_k(q_B)$ is of rank one with eigenvector $\sum_{x\in \mathcal{B}^{(m)}_k} f_{q_A}(x)x$ and eigenvalue
  \begin{align}\label{eq:leadinglambda}
    \lambda_{M^{(m)}_k} \approx  (q_{AB})^{-k}\left( \sum_{x,y\in \mathcal{B}^{(m)}_k}f_{q_A}(x)f_{q_B}(y)\Braket{x,y} \right)^2
    \approx (q_{AB})^{-k}\chi^{m-2k}\left(\sum_{x\in \mathcal{B}^{(m)}_k}f_{q_A}(x)f_{q_B}(x)\right)^2,
  \end{align}
  where we have used \Eqref{eq:approx_ortho} to arrive at the second expression. All other eigenvalues in the same sector are suppressed by $O(\chi^{-2})$
  \footnote{Naively just from \Eqref{eq:leadinglambda} it seems reasonable to assume that the subleading eigenvalue is suppressed by $O(\chi^{-1})$, rather than $O(\chi^{-2})$. We will prove that the first order correction to the subleading eigenvalue vanishes in \secref{sub:corrections}.}.
  
The sum in \Eqref{eq:leadinglambda} is related to the generating function $G(q,r,z)$ of link states. This function can be computed in various ways. We present a diagrammatic derivation in Appendix~\ref{sec:proof} and use the result here without proof.
  \begin{proposition}
    \label{lem:gen_fun}
    The $f_q$-weighted generating function $G(q,r,z)$ for the $(m,k)$-link states is given by
    \begin{equation}
      \label{eq:gen_fun}
      G(q,r,z) = \frac{1-zC(q,z)}{1-z(q+r)-zC(q,z)}
    \end{equation}
    where $C(q,z) = \frac{1}{2z}(1-z(q-1)-\sqrt{(1+z(q-1))^2-4qz})$ is the generating function of $q$-Catalan numbers
    \footnote{For more information on $C(q,z)$ please see the proof of Lemma~\ref{lem:q_catalan} in Appendix~\ref{sec:TLalgebra}.}.
    The argument $r$ counts half the number of defects $k$ and the argument $z$ counts half the number of marked points $m/2$.
  \end{proposition}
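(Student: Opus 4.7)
The plan is to decompose each $(m,k)$-link state into building blocks determined by the positions of the $2k$ defects, verify that the $f_q$-weight factorizes over these blocks, and then assemble the generating function.

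First I would observe that the $2k$ defects at positions $p_1<p_2<\cdots<p_{2k}$ partition the $m$ points into $2k+1$ intervals $I_0,I_1,\ldots,I_{2k}$, each of which must carry a non-crossing matching and therefore has an even number of points. This forces the parities of the $p_i$ to alternate (odd, even, odd, even, $\ldots$), and combined with the fact that $e_0$ pairs adjacent points $(2i-1,2i)$, a short case analysis shows that the even-indexed intervals $I_0,I_2,\ldots,I_{2k}$ are internally $e_0$-closed, while each odd-indexed interval $I_{2j-1}$ has exactly two straddling $e_0$-pairs at its boundary. Consequently a single path of the combined graph of $v$ and $e_0$ threads through $I_{2j-1}$ from $p_{2j-1}$ to $p_{2j}$, while every closed loop stays inside a single interval. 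Grouping each odd interval with its two flanking defects into a \emph{super-block}, the link state becomes an alternating sequence of $k+1$ closed intervals and $k$ super-blocks, and the $f_q$-weight is multiplicative across these pieces.

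Let $C(q,z)$ be the closed-block generating function and $B(q,z)$ the ``bridge'' generating function of an odd-indexed interval. Summing over the number of super-blocks gives
\begin{equation*}
G(q,r,z)=\sum_{k\ge 0}C(q,z)^{\,k+1}\bigl(rz\,B(q,z)\bigr)^{k}=\frac{C(q,z)}{1-rz\,B(q,z)\,C(q,z)}.
\end{equation*}
To pin down $B$, I would decompose a bridge of size $2n$ by the outer $v$-arc $(1,2j)$: the $2j-2$ enclosed points inherit the $e_0$-structure of a standard closed block (contributing $C$), the $2n-2j$ points below form a smaller bridge (contributing $B$), and the through-path is unbroken so no new cycle appears. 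The resulting recursion $B=1+zCB$ yields $B=1/(1-zC)$. The same decomposition applied to a closed block differs in one crucial respect: the outer $v$-arc now \emph{closes} the inner bridge's path into exactly one extra cycle, introducing a factor of $q$, so that $C=1+qzBC$. Substituting $B=1/(1-zC)$ here gives the $q$-Catalan functional equation $C=1+(q-1)zC+zC^{2}$, which the proposition's closed form for $C(q,z)$ solves via the quadratic formula.

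Finally, substituting $B=1/(1-zC)$ into the expression for $G$ gives $G=C(1-zC)/(1-zC-rzC)$, and a short manipulation using only the $q$-Catalan functional equation converts this into the stated form $\frac{1-zC}{1-z(q+r)-zC}$. The hard step is the block-factorization of the $f_q$-weight in the first paragraph: one must verify carefully, using the alternating-parity structure of the defect positions, that every closed loop of the combined graph really does live inside a single block and that each through-path in a super-block is precisely the one supplied by the boundary $e_0$-pairs. Once this is in place, the two block recursions and the final assembly are routine.
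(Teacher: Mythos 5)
Your proof is correct, and your intermediate formula $G=C(1-zC)/(1-(1+r)zC)$ does reduce to the stated form precisely via the $q$-Catalan equation $C=1+(q-1)zC+zC^2$, which your two block recursions correctly produce. The paper's own proof is a close cousin of yours but organized differently: it writes $G=1/(1-H)$ where $H=z(q+r)/(1-zC)$ is the generating function of connected (``1PI'') pieces, treating a piece whose outermost feature is a closed arc (weight $q$) and one whose outermost feature is a defect pair (weight $r$) on an equal footing, and it lands on \Eqref{eq:gen_fun} directly without any final manipulation. Your decomposition first stratifies by the defect positions, groups maximal runs of defect-free material into single $C$-factors, and introduces the bridge function $B=1/(1-zC)$ as a separate object; since your recursion $C=1+qzBC$ gives $C=1/(1-qzB)$, one checks that $C/(1-rzBC)=1/(1-z(q+r)B)=1/(1-H)$, so the two answers coincide term by term and the difference is only a regrouping of the same geometric series. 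What your route buys is the part the paper leaves implicit in its pictures: the parity analysis of the defect positions, the localization of every closed loop of the $v\cup e_0$ graph inside a single interval, and the identification of the unique through-path in each odd interval --- i.e.\ the actual justification that the $f_q$-weight factorizes over blocks. The cost is the extra algebraic step at the end and the need to track two slightly different boundary conditions ($B$ versus $C$) for what is combinatorially the same first-arc decomposition.
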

$G(q,r,z)$ has the following series expansion:
  \begin{align}
    \begin{split}
       G(q,r,z) &= 1 + (q+r)z \\
                &\quad+ \left((q+q^2)+(1+2q)r+r^2\right)z^2   \\
      &\quad + \left( (q+3q^2+q^3)+(1+5q+3q^2)r+(2+3q)r^2+r^3 \right)z^3 + \cdots
    \end{split}
  \end{align}
where the formal expression $G(q,r,z) = \sum_{\mu,k}g_{\mu,k}(q)z^\mu r^k$ allows us to compute the eigenvalues relevant for the reflected spectrum. 

  \begin{corollary}
    The matrix $M^{(m)}_k(q_A)M^{(m)}_k(q_B)$ has a single nonzero eigenvalue at $\chi\to \infty$ given by
    \begin{align}
      \label{eq:lambda_m}
      \lambda_{M^{(m)}_k} = (q_{AB})^{-k}\chi^{m-2k}(g_{m/2,k}(q_{AB}^{-1}))^2
    \end{align}
    with multiplicity $\chi^k$.
  \end{corollary}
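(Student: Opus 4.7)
The plan is to combine the rank-one identity Eq.~\eqref{eq:leadinglambda} with the generating function of Proposition~\ref{lem:gen_fun}, interpreting the resulting sum as a single coefficient of $G(q,r,z)$. Starting from
\begin{equation}
\lambda_{M^{(m)}_k} \approx (q_{AB})^{-k}\chi^{m-2k}\left(\sum_{x\in \mathcal{B}^{(m)}_k}f_{q_A}(x)f_{q_B}(x)\right)^2,
\end{equation}
I need only evaluate the scalar factor in parentheses.

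The first key observation is that $f_q$ is multiplicative in $q$: by its definition, $f_q(x) = q^{-N(x)}$, where $N(x)$ is the number of closed loops formed on concatenating $x$ with the mirror of $e_0$. Since $N(x)$ depends only on $x$, this yields $f_{q_A}(x)\,f_{q_B}(x) = (q_A q_B)^{-N(x)} = f_{q_{AB}}(x)$, collapsing the product of two $f$-functions into a single one at the combined parameter $q_{AB}$. The two-variable sum therefore reduces to $\sum_{x\in \mathcal{B}^{(m)}_k} f_{q_{AB}}(x)$.

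The second step is to apply Proposition~\ref{lem:gen_fun} to identify this reduced sum as a coefficient of $G(q,r,z)$. With $g_{m/2,k}(q)$ defined via $G(q,r,z)=\sum_{\mu,k}g_{\mu,k}(q)z^{\mu}r^k$, the $f_q$-weighting convention of the proposition gives the identification $\sum_{x\in \mathcal{B}^{(m)}_k} f_{q}(x) = g_{m/2,k}(q^{-1})$. Substituting $q = q_{AB}$ and squaring produces exactly the factor $(g_{m/2,k}(q_{AB}^{-1}))^2$ claimed in \eqref{eq:lambda_m}. The multiplicity follows from the rank-one property already established on each irrep $\mathcal{V}^{(m)}_k$: the single nonzero eigenvalue appears once per copy of $\mathcal{V}^{(m)}_k$ in the representation realizing $\text{Tr}_{\text{TL}_m}$, and Proposition~\ref{lem:dk} supplies the total copy count $d_k$, whose leading-$\chi$ behavior is $\chi^{2k}$ as used throughout the discussion of \eqref{eq:spectrum}.

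The main obstacle is not algebraic but notational: one must pin down the convention so that the argument plugged into $g_{m/2,k}$ is $q_{AB}^{-1}$ rather than $q_{AB}$. I would verify this by comparing the low-order terms in the expansion of $G(q,r,z)$ written in Proposition~\ref{lem:gen_fun} against the explicit $m=2$ and $m=4$ spectra from Section~\ref{sub:finite}; the $z^1$ coefficient $q+r$ and the corresponding $k=0,1$ leading eigenvalues of $M^{(2)}_k$ in Table~\ref{tab:m=2} give an immediate consistency check that fixes the convention unambiguously. Once the convention is secured, the rest of the argument is the two substitutions described above.
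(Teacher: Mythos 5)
Your proposal is correct and follows essentially the same route as the paper: the paper obtains the Corollary directly from Eq.~\eqref{eq:leadinglambda} by using the multiplicativity $f_{q_A}(x)f_{q_B}(x)=q_{AB}^{-N(x)}$ (left implicit there) to identify the scalar sum with the coefficient $g_{m/2,k}(q_{AB}^{-1})$ of the generating function in Proposition~\ref{lem:gen_fun}, and your low-$m$ consistency check against Table~\ref{tab:m=2} is exactly how the convention is fixed. One remark: your multiplicity $d_k\approx\chi^{2k}$ is the correct one and is what the paper uses everywhere else (Eq.~\eqref{eq:spectrum}, Section~\ref{sub:sr}); the ``$\chi^k$'' in the Corollary as stated appears to be a typo, so your derivation should not be read as disagreeing with the intended claim.
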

To conclude, for each even $m$ we identify $m/2$ eigenvalues given by \Eqref{eq:lambda_m}, labeled by $k\in\{0,\cdots,m/2\}$. At large $\chi$ the multiplicity of these eigenvalues goes as $d_k\approx \chi^{2k}$. All the other peaks found in \secref{sub:finite} vanish in the limit.


\subsubsection{Analytic continuation}
\label{ssub:analytic}
Here, we will perform the analytic continuation away from even integer $m$ to obtain the reflected spectrum at $m=1$. This will be accomplished by analytically continuing the generating function coefficient to $g_{1/2,k}$. We have seen that for even integer $m$, only a finite number of eigenvalues are present, i.e., only the sectors labelled by even integer $k\in\{0,\cdots,m/2\}$. However when taking the $m\to1$ limit, we will see that all integer values of $k$ contribute to the spectrum, forming an infinite tower of eigenvalues.

We start with \Eqref{eq:gen_fun}. In particular, we need an analytic form for the coefficients in the series
\footnote{We use the integer variable $\mu$ as the $z$ exponent of the generating function $G(q,r,z)$, as opposed to the even integer valued $m$. The analytic continuation of $m\to 1$ is equivalent to continuing $\mu\to 1/2$.}
\begin{equation}
  G(q,r,z) = \sum_{\mu,k=0}^{\infty} g_{\mu,k}(q)z^\mu r^k
\end{equation}
The $r$ expansion is easy to perform:
\begin{align}
  G(q,r,z) = \sum_{k=0}^\infty \frac{(1-zC(q,z))z^k}{(1-zq-zC(q,z))^{k+1}}r^k = \sum_{k=0}^\infty g_k(q,z)r^k
\end{align}
and thus, we identify
\begin{align}
  \label{eq:g_k_simple}
  \begin{split}
    g_k(q,z) &= \frac{(1-zC(q,z))z^k}{(1-zq-zC(q,z))^{k+1}}  \\
  &= C(q,z)(C(q,z)-1)^kq^{-k}
  \end{split}
\end{align}
where we have made use of \Eqref{eq:Cz_identity} to arrive at the final expression.
We then still need to expand this in terms of $z$ and get a closed form for the coefficients. We will use a contour integral trick to pick out the appropriate coefficient, i.e.,
\begin{equation}
  g_{\mu,k} = \frac{1}{2\pi i}\oint \frac{dz}{z^{\mu+1}} g_k(q,z) = \frac{1}{2\pi i}\oint \frac{dz}{z^{\mu+1}}C(q,z)(C(q,z)-1)^kq^{-k}
\end{equation}
Where the contour is chosen to enclose a neighborhood of $z=0$.
The function $C(q,z)$ has a branch cut running between
\begin{equation}
  z_\pm = \frac{1}{(1\mp\sqrt{q})^2},
\end{equation}
and there are no other poles or branch cuts on the complex plane
\footnote{One must be careful about choosing the square root branch cut in $C(q,z)$. In particular if we want the function to behave nicely at $z\to \infty$ we have to use $C(q,z)=(1-z(q-1)+|q-1|\sqrt{z-z_+}\sqrt{z-z_-})/(2z)$. This new $C(q,z)$ is equal to the original definition in Proposition~\ref{lem:gen_fun} near the $z=0$ neighborhood and hence they have the same Taylor expansion.}.
We can then deform the contour to enclose the branch cut running between $(z_-,z_+)$.

Now we would like to perform the analytic continuation for $\mu\to1/2$. This introduces an extra branch cut emanating from $z=0$, which does not affect our choice of contour.
Since $C(q,z)$ has the property
\begin{equation}
    \lim_{\text{Im } z\to 0+} \text{Im}(C(q,z)) = -\lim_{\text{Im } z\to 0-} \text{Im} (C(q,z)),
\end{equation}
one can rewrite the integral as twice of the imaginary part of the UHP contour :
\begin{equation}
  g_{1/2,k}(q) = \frac{1}{\pi} \, \text{Im} \int^{z_+}_{z_-} \frac{dz}{z^{3/2}} \, C(q,z)(C(q,z)-1)^kq^{-k}.
\end{equation}
For illustration, we evaluate the integral for the first few values of $k$:
\begin{itemize}
\item $k=0$\\
  We have
  \begin{equation}
    g_{1/2,0}(q) = \frac{1}{\pi} \, \text{Im} \int^{z_+}_{z_-} \frac{dz}{z^{3/2}} \, C(q,z)
  \end{equation}
  This integral results in the analytic continuation of the $q$-Catalan numbers at $\mu=1/2$,
  \begin{equation}
    g_{1/2,0}(q) = C_{1/2}(q).
  \end{equation}
  $C_\mu(q)$ is a piecewise function depending on whether $q>1$:
  \begin{equation}
    C_\mu(q) =
    \begin{cases}
      q \;_2F_1 (1-\mu,-\mu;2;q), \quad &q\le 1,\\
      q^\mu \;_2F_1(1-\mu,-\mu;2,q^{-1}), \quad&q>1
    \end{cases},
  \end{equation}
  which will be important seeds for expressing the coefficients for all the subsequent $k$. The eigenvalue in this sector is 
  \begin{equation}
    \lambda_0 =  q_{AB}C_{1/2}(q_{AB}^{-1})^2
  \end{equation}
  with multiplicity $1$.
  
 \item $k=1$ \\
   The $k=1$ integral reads
   \begin{equation}
     g_{1/2,1}(q)=\frac{1}{\pi} \, \text{Im} \int^{z_+}_{z_-} \frac{dz}{z^{3/2}} \,q^{-1}(C^2(q,z)-C(q,z))
   \end{equation}
   Using \Eqref{eq:Cz_identity} it can be shown that $C(q,z)$ satisfies the quadratic equation
   \begin{equation}
     \label{eq:Cz_quad}
     C^2(q,z)+\left(q-1-\frac{1}{z}\right)C(q,z)+\frac{1}{z}=0,
   \end{equation}
   so that one can write the integrand as a linear functional of $C(q,z)$:
   \begin{align}
     \begin{split}
        g_{1/2,1}(q)&=\frac{1}{\pi} \, \text{Im} \int^{z_+}_{z_-} dz (q^{-1}z^{-5/2}-z^{-3/2})C(q,z) \\
     &=q^{-1}C_{3/2}(q)-C_{1/2}(q)   
     \end{split}
   \end{align}
   The eigenvalue of this sector is given by
   \begin{equation}
     \lambda_1 = \chi^{-2}(q_{AB}C_{3/2}(q_{AB}^{-1})-C_{1/2}(q_{AB}^{-1}))^2
   \end{equation}
   with multiplicity $\chi^2$.
      
 \item $k=2$ \\
   We can use the same trick to reduce the integrand to be linear in $C(q,z)$:
   \begin{align}
     \begin{split}
     g_{1/2,2}(q) &=\frac{1}{\pi} \, \text{Im} \int^{z_+}_{z_-} \frac{dz}{z^{3/2}} \,q^{-2}C(q,z)(C(q,z)-1)^2 \\
                  &=   \frac{1}{\pi} \, \text{Im} \int^{z_+}_{z_-} dz \,(q^{-2}z^{-7/2}-z^{-5/2}(2q^{-1}+q^{-2})+z^{3/2})C(q,z) \\
                  &= q^{-2}C_{5/2}(q)-(2q^{-1}+q^{-2})C_{3/2}(q)+C_{1/2}(q)       
     \end{split}
   \end{align}
   The eigenvalue of this sector is
   \begin{equation}
     \lambda_2 = q^{-1}_{AB}\chi^{-4}\left(q_{AB}^2C_{5/2}(q_{AB}^{-1})-(2q_{AB}+q_{AB}^2)C_{3/2}(q_{AB}^{-1})+C_{1/2}(q_{AB}^{-1})\right)^2
   \end{equation}
   with multiplicity $\chi^4$.
\end{itemize}

Thus, we see a general pattern emerging. The integrand of $g_{1/2,k}$ is a polynomial of $C^k(q,z)$ and we can always reduce it to some linear functional $C(q,z)$ by repeated uses of \Eqref{eq:Cz_quad}. The outcome can in turn be expressed in terms of (half-integer valued) $q$-Catalan numbers. Therefore, we only need to find out how the reduction works for general $k$.
\begin{proposition}
  \label{lem:gen_fun_2}
  The generating function $G(q,r,z)$ in Proposition~\ref{lem:gen_fun} can be alternatively resummed as
  \begin{equation}
    \label{eq:gen_fun_2}
    G(q,r,z) = \frac{(r+q)C(q,z)-r/z}{r^2+(q+1-1/z)r+q}
  \end{equation}
\end{proposition}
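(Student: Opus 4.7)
The plan is to verify \eqref{eq:gen_fun_2} by cross-multiplying with the original expression \eqref{eq:gen_fun} and reducing the resulting equation to the defining quadratic \eqref{eq:Cz_quad} for $C(q,z)$. The proof is purely algebraic: both sides of the claim are elements of $\mathbb{Q}(q,z,r)$, and equality holds modulo \eqref{eq:Cz_quad}. No new combinatorics or structural input is needed.

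First, I would introduce the shorthand $v \equiv 1 - zC$, which is the numerator of \eqref{eq:gen_fun} and streamlines all of the manipulations. Cross-multiplying \eqref{eq:gen_fun} against \eqref{eq:gen_fun_2} and clearing the $1/z$ in the denominator of the latter yields
$$v\bigl[zr^2 + (z(q+1)-1)r + zq\bigr] = \bigl[q - (r+q)v\bigr]\bigl[v - z(r+q)\bigr].$$
Expanding the right-hand side and grouping by powers of $v$, the $v^2$ coefficient is $(r+q)$ and the $v^0$ coefficient is $qz(r+q)$. For the $v^1$ coefficient, the $zr^2$ terms cancel against those on the left, and the remaining $r$-dependent pieces regroup as $-(r+q)[1+z(q-1)]$. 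Hence the cross-multiplied equation reduces to
$$(r+q)\bigl\{v^2 - [1+z(q-1)]v + qz\bigr\} = 0.$$

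Second, since $(r+q)$ is a generic factor in $r$, it suffices to verify $v^2 - [1+z(q-1)]v + qz = 0$. Substituting $v = 1-zC$ and expanding, the left-hand side simplifies to $z^2\bigl[C^2 + (q-1-1/z)C + 1/z\bigr]$, which vanishes identically by \eqref{eq:Cz_quad}. This establishes the claimed identity.

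The main obstacle is the bookkeeping in the cross-multiplication, specifically in recognizing that the $v^1$ coefficient factors as $-(r+q)[1+z(q-1)]$; this single factorization is the only nontrivial algebraic manipulation, and once it is extracted, the entire proof collapses to a single application of the quadratic satisfied by $C(q,z)$. The payoff of the rewriting in \eqref{eq:gen_fun_2} is that the denominator is now quadratic in $r$, which makes it straightforward to extract the $r^k$ coefficients $g_k(q,z)$ and hence the half-integer $q$-Catalan evaluations appearing in the analytic continuation to $m=1$.
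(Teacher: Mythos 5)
Your proof is correct, but it takes a different route from the paper. The paper's argument is constructive: starting from the closed form $g_k(q,z)=C(q,z)\tilde{C}(q,z)^k q^{-k}$ with $\tilde{C}=C-1$ established in \Eqref{eq:g_k_simple}, it resums $X(q,z,t)=\sum_k \tilde{C}^k t^k$ by using the quadratic satisfied by $\tilde{C}$ to set up and solve a linear recursion for $X$, and then evaluates at $t=r/q$. You instead take the target expression \Eqref{eq:gen_fun_2} as given and verify it against \Eqref{eq:gen_fun} by cross-multiplication, reducing the discrepancy to $(r+q)\{v^2-[1+z(q-1)]v+qz\}$ with $v=1-zC$, which vanishes by \Eqref{eq:Cz_quad}; I have checked the factorization of the $v^1$ coefficient and the final reduction, and both are right. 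Your approach is shorter and entirely mechanical once the claimed form is in hand, and it makes transparent that the identity is nothing more than the quadratic relation for $C(q,z)$ viewed in the ring $\mathbb{Q}(q,z,r)$; what it does not explain is where \Eqref{eq:gen_fun_2} comes from, which the paper's resummation of $\sum_k\tilde{C}^k t^k$ does. Since the proposition only asserts the equality, your verification is a complete proof. One small point worth a sentence in a final write-up: the cross-multiplication argument establishes equality of rational functions, and one should note that the denominator $r^2+(q+1-1/z)r+q$ has constant term $q$ in $r$, so the $r$-expansion used downstream to extract $g_k(q,z)$ is well defined.
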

We present the proof of this statement in Appendix~\ref{sec:proof}. This result allows us to write down an explicit form of $g_k(q,z)$ as a linear functional of $C(q,z)$:
\begin{align}
  \begin{split}
      G(q,r,z) = \, &C(q,z)  \bigg[ 1+(q^{-1}z^{-1}-1)r +\left(q^{-2}z^{-2}-(q^{-2}+2q^{-1})z^{-1}+1\right)r^2 \\
                                &\quad +\left(q^{-3}z^{-3}-(2q^{-3}+3q^{-2})z^{-2}+(q^{-3}+2q^{-2}+3q^{-1})z^{-1}-1\right)r^3+\cdots\bigg]\\
    &+ \text{other terms...}
  \end{split}
\end{align}
The detailed expression of the remaining terms not proportional to $C(q,z)$ is irrelevant for determining the contour integral. From this expansion, one can read off the coefficients $g_{1/2,k}$ as linear combinations of q-Catalan numbers easily: A negative power of $z^{-n}$ in the expansion coefficients becomes $C_{1/2+n}(q)$ after performing the contour integral.


\subsection{Reflected spectrum and the effective description}
\label{sub:sr}
In previous subsections, we have seen exact calculations of spectra for even integer $m$ and arbitrary $\chi$, as well as how the large $\chi$ limit allows us to extract analytic behavior of the eigenvalues in individual sectors. 
With these results in hand, we are finally ready to tackle the problem of analyzing the reflected entropy for the 2TN model.

In the limit $\chi\to \infty$, the leading eigenvalue in each $k$ sector is 
\begin{align}
\label{eq:pk}
  \lambda_k = \chi^{-2k}q_{AB}^{1-k}(g_{1/2,k}(q_{AB}))^2 \equiv \chi^{-2k}p_k
\end{align}
with multiplicity $d_k\approx\chi^{2k}$.
The numbers $p_k$ are independent of $\chi$ and satisfy $\sum_kp_k=1$ from the normalization condition. 
We plot the first few $p_k$'s as a function of $q_{AB}$ in \figref{fig:pk}.
The spectrum in this limit is a sum of the eigenvalue over all $k$ sectors: 
\begin{equation}
    D(\lambda) = \sum_{k=0}^\infty \chi^{2k}\delta\left(\lambda-\lambda_k\right)
\end{equation}
We present a sketch of the full reflected spectrum in \figref{fig:sr_spectrum}.

\begin{figure}[t]
  \centering
  \includegraphics[width=.7\textwidth]{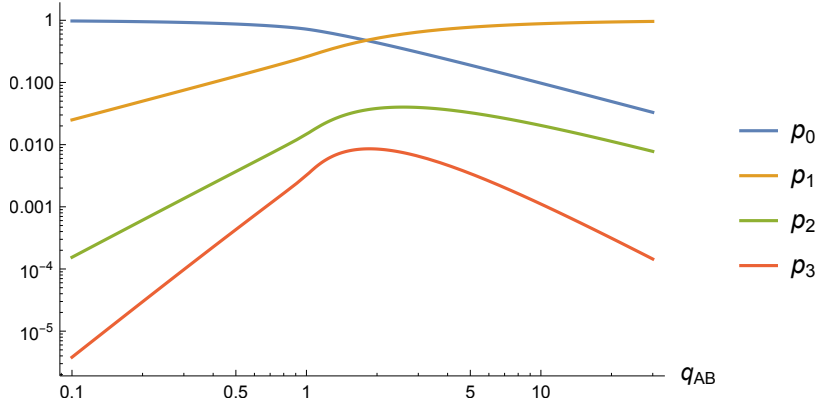}
  \caption{The plot showing the first few $p_k$'s as a function of $q_{AB}$.
    It is clear that at any point in parameter space, the dominant contribution comes from either $p_0$ (if $q_{AB}<1$) or $p_1$ (if $q_{AB}>1$).}
    \label{fig:pk}
\end{figure}
\begin{figure}[t]
  \centering
  \includegraphics[scale=.4]{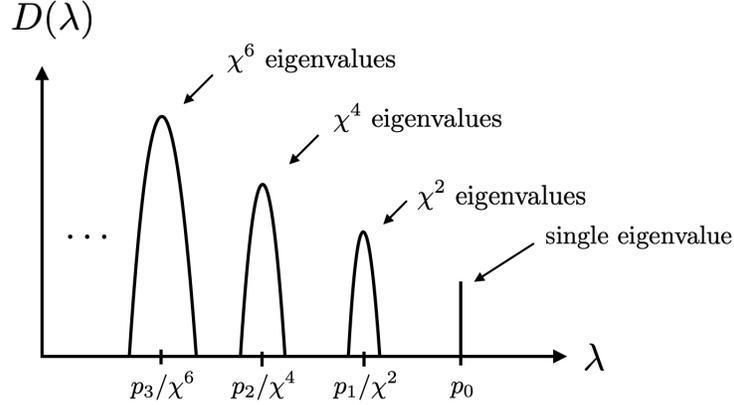}
  \caption{The sketch of the 2TN reflected spectrum at $m=1$. It features an infinite tower of eigenvalues labeled by the sector number $k$. Note that the eigenvalues of the $k>0$ sectors are drawn with a width here. This effect comes from taking the external bond dimensions $\chi_{A/B/C}$ large but finite, which we will investigate in detail in \secref{sub:corrections}.}
  \label{fig:sr_spectrum}
\end{figure}

The reflected entropy obtained from this spectrum is given by
\begin{align}
\label{eq:sr}
  \begin{split}
     S_R(q_{AB}) &= -\sum_{k=0}^{\infty} \chi^{2k}\lambda_k\ln \lambda_k \\
  &= -\sum_{k=0}^{\infty} p_k\ln p_k + \sum_{k=0}^{\infty} p_k\left(2k \ln \chi\right)   
  \end{split}
\end{align}
\Eqref{eq:sr} is the main result of this paper.
There are two different contributions to the reflected entropy.
There is a term from the classical Shannon entropy for the probabilities $p_k$, plus an infinite sum over terms proportional to $2k\ln\chi$, weighted by $p_k$. This should be compared with the result, \Eqref{eq:refgrav} obtained from the gravitational path integral in \secref{sec:gravity}. We can now interpret each $k$ sector as a state with a effective tensor network configuration with different EW cross-sections.

Based on the gravitational calculation (\secref{sec:gravity}), the results on single random tensor \cite{Akers:2021pvd}, and the analysis on finite external bond dimensions effects (\secref{sub:corrections}), we suggest that the effective TN states are built as follows:
Consider the natural doubling procedure \cite{Marolf:2019zoo} for the canonical purification, by duplicating the 2TN state and gluing the two copies across the boundary of the bulk extremal surfaces ($C_1$ and $C_2$ in our case). Call such a state $\ket{\psi_1}$.
Then, construct a series of wave functions $\ket{\psi_k}$ by further replicating $\ket{\psi_1}$ $k$ times and gluing across the $AA^*BB^*$ bonds.
By construction $\ket{\psi_k}$ will have an entropy of $2k\ln\chi$ and we have the follow effective description
\begin{equation}
\label{eq:eff_desc}
    \ket{\sqrt{\rho_{AB}}} = \sqrt{p_0}\ket{\psi_0} + \sum^\infty_{k=1}\sqrt{p_k}\ket{\psi_k},
\end{equation}
where $\ket{\psi_0}$ is a factorized state across $AA^*$ and $BB^*$.
The states $\ket{\psi_i}$ are approximately orthogonal at large $\chi$, i.e. $\braket{\psi_i|\psi_j}\sim \delta_{ij}+O(\chi^{-1})$.
Calculating the entropy of \Eqref{eq:eff_desc} gives precisely \Eqref{eq:sr} at $\chi\to \infty$.
We give a diagrammatic illustration of \Eqref{eq:eff_desc} in \figref{fig:interpret}.
\begin{figure}[t]
    \centering
    \includegraphics[width=\textwidth]{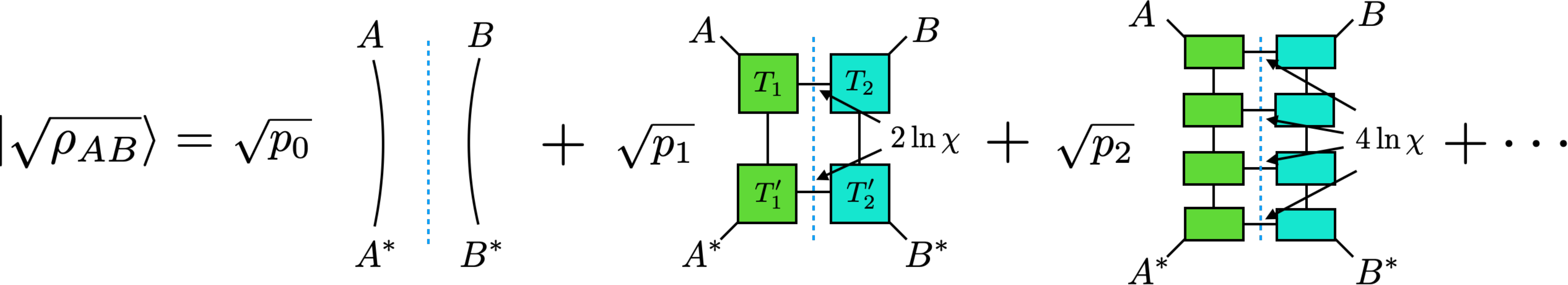}
    \caption{An illustration of \Eqref{eq:eff_desc}. The canonical purification $\ket{\rho_{AB}}$ is effectively described by a superposition of tensor network states. For $k=0$ we have the factorized state with zero cross-section. To form the $k>0$ sectors we start from a simple network state made from two copies of the 2TN state.
    Then for each $k$ we glue together $k$ such states, resulting in a TN with total cross-sectional area $2k\ln\chi$, as indicated by the number of bonds cut by blue dashed lines. This figure should be compared to \figref{fig:genus}, where each effective TN corresponds to a genus-$(2k-1)$ bulk solution of the replica boundary problem.}
    \label{fig:interpret}
\end{figure}

\Eqref{eq:eff_desc} should be compared to the states prepared by a gravitational path integral in the sense of \Eqref{eq:super}.
There we have a superposition of bulk solutions of EW cross-section $2k\ln \chi$, whose reduced density matrices $\rho_{k,AA*}$ have approximately orthogonal support.
One should think of the integer index $k$ as capturing the \emph{topology} of the effective description. By gluing together $k$ copies of the state $\ket{\psi_1}$ we have created a bulk solution with genus $2k-1$.
From the entanglement structure we also identify emergent superselection sectors labeled by the topological index $k\in \mathbb{Z}_{\ge0}$.
The area operator is:
\begin{equation}
    \mathcal{L}_{AA^*} = \sum_{k} 2k\ln\chi\, \Pi_k,
\end{equation}
where $\Pi_k$ is the projection operator down to the orthogonal subspace of $\ket{\psi_k}$.
Note that in real scenarios with finite bond dimensions, the aforementioned superselection sector is only approximate as supports of the density matrices are not exactly orthogonal.

As discussed in Ref.~\cite{Harlow:2016vwg} there is a connection between the area law and quantum error correcting codes. Taking this seriously here we see that the effective description looks like an emergent error correcting code, with only a central degree of freedom. Hence it is perhaps better thought of as a classical error correcting code. Presumably including bulk degrees of freedom in the original network, before canonically purifying, will give rise to a genuinely quantum version of this code.

Returning to the calculation of the reflected entropy, the phase transition of $S_R(A:B)$ is controlled by the list of classical probabilities $p_k$ as a function of $q_{AB}$.
They can be shown to have the following asymptotic behaviors
\begin{align}
\label{eq:pk_asym}
(p_0,p_1,p_{k>1}) \sim
\begin{cases}
    \left(q_{AB}^{-1}, ~
    1-\dfrac{5}{4}q_{AB}^{-1}, ~
    \dfrac{\Gamma(k-1/2)^2}{\pi\Gamma(k)^2}q^{1-k}_{AB}\right), &\quad q_{AB} \gg 1, \\
    \left( 1-\dfrac{1}{4}q_{AB},\dfrac{1}{4}q_{AB},\dfrac{\Gamma(k-1/2)^2}{\pi (2k)^2\Gamma(k)^2}q_{AB}^{k}\right),&\quad q_{AB} \ll 1.
\end{cases}
\end{align}
Away from phase transition, i.e., for $q_{AB}\ll1$ ($q_{AB}\gg1$), one can argue that either $k=0$ ($k=1$) is the dominant sector. 
The probabilities $p_k$ of higher $k$ sectors are always suppressed by factors of $q$ ($q^{-1}$), as shown in \Eqref{eq:pk_asym} and \figref{fig:pk}.
This matches our expectation, as in the limit $q_{AB}\ll 1$ the state $\ket{\psi_0}$ is dominant and we have $S_R(A:B)\sim O(1)$; whereas as $q_{AB}\gg 1$ the state $\ket{\psi_1}$ is dominant and we get the classical result $S_R(A:B) = 2\ln \chi = 2 EW(A:B)$.  
Near the vicinity of phase transition, i.e. $q_{AB}\approx 1$, all the $k$ sectors become important in determining the entropy, as one can check numerically that the probabilities $p_k$ for $k\geq 2$ are of comparable order of magnitude, although numerically smaller than $p_{0,1}$.

The $(1,n)$-R\'enyi reflected entropies are given by the sum
\begin{equation}\label{eq:renyiS}
    S^{(1,n)}_R = \frac{1}{1-n} \ln \left( \sum_k\chi^{2k(1-n)} p_k^n \right).
\end{equation}
In reality, only the first two terms $k=0,1$ can ever dominate the sum as long as $n>1$. Therefore we found that $S_R^{(1,n)}$ has three different approximate behavior depending on the value of $q_{AB}$:
\begin{equation}
S_R^{(1,n)} \approx 
    \begin{cases}
        \frac{1}{1-n}\ln p_0 \approx 0, \quad & q_{AB}\ll 1,\\
        \frac{1}{1-n}\ln p_0 \approx \frac{n}{n-1}\ln q_{AB}, \quad & 1\ll q_{AB}\ll \chi^{(2-2/n)},\\
        2\ln \chi, \quad & q_{AB}\gg \chi^{(2-2/n)},
    \end{cases}
\end{equation}
which matches exactly the expected results from the phase diagram in \figref{fig:2TNphase} in the parameter regime where our analysis is valid. As we take $n\to 1$, the middle regime vanishes and we get back the single sharp phase transition of reflected entropy. 

Interestingly, for $n<1$, the $k>2$ saddles can dominate. For sufficiently large $\chi$, the term $\chi^{2k(1-n)}$ in \Eqref{eq:renyiS} becomes increasingly important for larger $k$. For even integer $m$, this implies that the highest sector, i.e., $k=\frac{m}{2}$, dominates. For $m$ away from the even integers, this leads to a runoff to arbitrarily high $k$ which leads to the entropy being infinite in our approximation. In practice, such calculations would receive large corrections from the finiteness of the external bond dimensions $\chi_{A/B/C}$ since there is a constraint on the rank of the reflected spectrum arising from $\min(\chi_A^2,\chi_B^2)$.



\subsection{Corrections to the spectrum}
\label{sub:corrections}
In this subsection, we study the effect of having finite bond dimensions $\chi$ and $\chi_{A,B,C_1,C_2}$.
This is motivated by comparing to situations such as the four boundary wormholes with large but finite horizon areas and EW cross-section.
Moreover, this allows us to make better comparison with numerics we obtained in \secref{sub:num}.

Taking the internal/external bond dimensions finite alter the large $\chi$ spectrum in independent ways.
In short, the leading effect of finite \textit{internal} bond dimension $\chi$ is to shift the location of the poles in each sector.
In contrast, the leading effect of finite \textit{external} bond dimension $\chi_{A,B,C_1,C_2}$ is to spread out each pole into a narrow mound.
In the following we will examine how these effects work together to create a consistent spectrum that matches our numerical data well and deepens our understanding of the effective description of the 2TN as a sum over superselection sectors. 
However, the calculation of the finite internal bond dimension $\chi$ corrections is rather technical and involved. For this purpose we present the full analysis in Appendix~\ref{sec:finite_chi} for interested readers.

\subsubsection{Eigenvalue shifts}
\label{sub:shifts}
Here we give a qualitative summary of the effect of finite internal bond dimension $\chi$.
We identify two phenomena as we take $\chi$ finite:
First, the orthogonality of link state basis $v\in \mathcal{B}^{(m)}_k$, i.e. \Eqref{eq:approx_ortho}, fails.
Second, elements not of the form $\{|x~y|; x,y\in\mathcal{B}^{(m)}_k\}$  in the sum $\sum \pi^{(m)}_k(D(h))$ will start to contribute.
These two effects introduce $O(\chi^{-1})$ corrections to the leading eigenvalue in each sector $\lambda_k$.
The latter effect also splits the degenerate zero eigenvalues in each sector.
However, such corrections only appear at $O(\lambda_k\chi^{-2})$, so they do not affect the entropy at leading order.
Note that $O(\lambda_k\chi^{-2})$ is also the order of the leading eigenvalue $\lambda_{k+1}$ in sector $k+1$.
This is seen in our numerics from the fact that the number of eigenvalues in each mound is not exactly $d_k$, but has corrections suppressed by $O(\chi^{-2})$.
We will refer to this as \textit{sector mixing}, which should be understood as a signature that the wave functions in different superselection sectors acquire a non-zero overlap when $\chi$ is finite.
Both effects are seen in our explicit examples of integer $m$ (sector mixing is only visible in $m\ge4$) in \secref{sub:finite}.

We conclude the brief discussion by giving expressions for the leading order correction to the eigenvalues $\lambda_k$ for $k=0$ and $k=1$.
We chose to do so since the shifts of these two eigenvalues completely characterizes the leading order correction to the reflected entropy for all $q_{AB}$.
Corrections to other sectors are also obtainable via analytic continuation of the related generating function, but their contribution to the entropy is sub-leading.
Please refer to Appendix~\ref{sec:finite_chi} for a detailed treatment.

\begin{align}
\label{eq:lambda0_shift}
        \Delta \lambda_0 = \frac{q_A^{-1}+q_B^{-1}}{\chi}
    &\Bigg[ C_{1/2}(q_{AB}^{-1})\left((1+q_{AB}^{-2})D_{3/2}(q_{AB}^{-1})-2(1+q_{AB}^{-1})D_{5/2}(q_{AB}^{-1})+D_{7/2}(q_{AB}^{-1})\right)\nonumber \\
    & + \left(q_{AB}C_{3/2}(q_{AB}^{-1}) - C_{1/2}(q_{AB}^{-1})\right)^2
    \Bigg],
\\
\label{eq:lambda1_shift}
    \Delta \lambda_1 =
    \frac{q_{A}^{-1}+q^{-1}_B}{\chi^3} &\Bigg[
    3\left(C_{1/2}(q_{AB}^{-1})-(2q_{AB}+q^2_{AB})C_{3/2}(q_{AB}^{-1})+q^{2}_{AB}C_{5/2}(q^{-1}_{AB})\right)^2 \nonumber \\
    & + (-C_{1/2}(q_{AB}^{-1})+q_{AB}C_{3/2}(q_{AB}^{-1}))\nonumber \\
    &\times \big(( -(q^{-1}_{AB}-1)^2D_{1/2}(q^{-1}_{AB})+2q^{-1}_{AB}D_{3/2}(q^{-1}_{AB}) +q_{AB}^{2}D_{5/2}(q_{AB}^{-1}) \nonumber    \\
      &\quad-2(q_{AB}^{2}+q_{AB})D_{7/2}(q^{-1}_{AB})+q_{AB}^{2}D_{9/2}(q_{AB}^{-1}) \big)    \Bigg].
\end{align}
where the function $D_\mu(q)$ is given by
\begin{align}
    D_\mu(q) = 
    \begin{cases}
        \;_2F_1(1-\mu,1-\mu;1;q),\quad & q\le 1, \\
        q^{\mu-1}\;_2F_1(1-\mu,1-\mu;1;q^{-1}), \quad & q>1.
    \end{cases}
\end{align}

\subsubsection{Fluctuations in each sector}
\label{sub:fluc}
So far, the spectrum we have obtained consists of a bunch of poles, which in turn happened because we were working in the limit of large external bond dimensions. The leading effect of taking these dimensions finite will be to spread out each pole into a narrow mound, and the goal in this part will be to find a crude estimate for the width.

Based on the results obtained in the single random tensor \cite{Akers:2021pvd}, we conjecture that such spreading effects arise from including non-trivial permutations that act on the $n$-cycles. This motivates us to consider summing over a more general class of elements:
\begin{equation}
  g_1 = \gamma_p\left( \prod_{i=1}^n h_i \right)\gamma_{p}^{-1} ,  \quad g_2 = \gamma_q\left(\prod_{j=1}^n k_j\right)\gamma_{q}^{-1}
\end{equation}
where as before $h_i,k_j\in \text{NC}_m$ and additionally we pick $p,q\in \text{NC}_n$. $\gamma_p$ stands for the $n$-twist operator associated to $p$ applied to the lower half of the elements. This should be contrasted with \Eqref{eq:par} where we restricted to $\gamma_p=\gamma_\tau$ and $\gamma_q=e$. Note that these elements are only schematic since there is a possible overcounting when $h_i,k_j\in \text{NC}_{m/2}\times \text{NC}_{m/2}$, which will be accounted for later.

We remind the reader that in order to calculate the partition function $Z_{mn}$ in \Eqref{eq:partn}, we need to evaluate $\#(g_1g^{-1}_A)$ and $\#(g_1g^{-1}_B)$. These numbers depend on whether $h_i,k_j\in \text{NC}_{m/2}\times \text{NC}_{m/2}$ (in which case we call these permutations \textit{disconnected}) or not (in which case they are called \textit{connected}). For the connected sector, we furthermore classify $h_i$ and $k_j$ based on the number of crossing connections when viewed as an element of $\text{TL}_m$: We say $h\in \text{NC}_{m,k}$ if $D(h) = |x~y|$ for some $x,y\in  \mathcal{B}^{(m)}_k$.

For now we restrict the sum to the set of $h_i,k_j$ where all of the permutations are in the same subclass $\text{NC}_{m,k}$ but we allow $k$ to vary.
Using the formalism of annular non-crossing permutations (see Appendix~A of Ref.~\cite{Akers:2021pvd} for details) we get
\begin{align}
    \#(g_1g^{-1}_A) &=
    \left\{
    \begin{array}{l}
    \sum_i \#(h_i\tau^{-1}_m), \\
    \sum_i \#(h_i\tau^{-1}_m) - 2(n-\#(p\tau^{-1}_n)),
    \end{array}
  \right. 
  &
  \begin{array}{l}
  h_i\in \text{NC}_{m,0} \\
  h_i \in \text{NC}_{m,k>0}
  \end{array}
  \\
  \#(g_2g^{-1}_B) &=
  \left\{
    \begin{array}{ll}
    \sum_j \#(k_j\tau^{-1}_m), \\
    \sum_j \#(k_j\tau^{-1}_m) - 2(n-\#(q)),
    \end{array}
  \right.
  &
  \begin{array}{l}
    k_j\in \text{NC}_{m,0} \\
    k_j \in \text{NC}_{m,k>0}
  \end{array}
\end{align}
 Note that it is possible to give a formula for an arbitrary mixture of two different values of $k$ but those effects turn out to be subleading in determining the width, so we ignore them for now.
 
The partition function factorizes into different subclasses based on the number of crossings:
\begin{align}
  Z_{mn} \simeq Z^{(0)}_{mn} + Z^{(1)}_{mn} + Z^{(2)}_{mn} + \cdots
\end{align}
where the disconnected sum is
\begin{equation}
  Z^{(0)}_{mn} = \left(\frac{\chi_A\chi_B}{\chi_C^{m}\chi^{m}}\right)^n \sum_{h_i,k_j\in \text{NC}_{m,0}} q_A^{-\sum_i\#(h_i)}q_B^{-\sum_j\#(k_j)}
  \text{Tr}_{\text{TL}_m}\left(D(h_1)D(k_1)\cdots D(h_n)D(k_n)\right)
\end{equation}
The summation over $p,q$ drops out since they all overcount the same set of permutations. Since its form receives no correction from the new $p,q$ twist operators, we conclude that the single eigenvalue $\lambda_0$ remains unchanged in this approximation. We expect that by including the correction from mixed $k$ contributions, $\lambda_0$ may be shifted by a small amount or spread out to a very narrow peak.

Moving on to the connected sectors, there is no overcounting and we get extra corrections from the cycles in $p$ and $q$:
\begin{align}
  \label{eq:Z1_sum}
  Z^{(k)}_{mn} = \left(\frac{1}{\chi_A\chi_B\chi_C^{m}\chi^{m}}\right)^n \sum_{p,q\in \text{NC}_n} \chi_A^{2\#(p\tau^{-1}_n)} \chi_B^{2\#(q)} \sum_{h_i,k_j\in \text{NC}_{m,k}} q_A^{-\sum_i\#(h_i)}q_B^{-\sum_j\#(k_j)} \nonumber \\
  \times  \text{Tr}^{pq^{-1}}_{\text{TL}_m}\left(D(h_1)D(k_1)\cdots D(h_n)D(k_n)\right),
\end{align}
where the trace pattern is determined by the partition pattern in $pq^{-1}$. For example,
\begin{align}
\begin{split}
    &\text{Tr}^{(12)(3)}_{\text{TL}_m}(D(h_1)D(k_1)D(h_2)D(k_2)D(h_3)D(k_3)) \\
  &\quad = \text{Tr}_{\text{TL}_m}(D(h_1)D(k_1)D(h_2)D(k_2)) \times \text{Tr}_{\text{TL}_m}(D(h_3)D(k_3))
\end{split}
\end{align}
The dominant contribution in \Eqref{eq:Z1_sum} comes from $p=\tau_n, ~ q=e$, which gives the single eigenvalue identified in \secref{sub:finite} and \secref{sub:large}. The summation over $p$ and $q$ introduces nontrivial correlations in $Z_n$, which we now focus on analyzing. 

Denoting $pq^{-1} = \prod_i c_i$ to be the individual cycle decomposition of $pq^{-1}$, we can write
\begin{align}
  \begin{split}
  &\sum_{h_i,k_j\in NC_{m,k}} q_A^{-\sum_i\#(h_i)}q_B^{-\sum_j\#(k_j)}    \text{TL}^{pq^{-1}}_{\text{TL}_m}\left(D(h_1)D(k_1)\cdots D(h_n)D(k_n)\right) \\
    =& \prod_{\{c_i\}} \text{Tr}_{\text{TL}_m} \left[  \left(\sum_{h\in \text{NC}_{m,k} }q_A^{-\#(h)}D(h)\right) \left(\sum_{k\in \text{NC}_{m,k} }q_A^{-\#(k)}D(k) \right) \right]^{|c_i|}, 
  \end{split}
\end{align}
where $|c_i|$ is the number of elements in a given cycle. Then, using the fact that the leading order result $Z_{n}\approx \chi^{2k}\lambda_k^n$ in large $\chi$ limit for such sector, we have
\begin{align}
\label{eq:width_Z}
  Z^{(k)}_n  \approx \left(\frac{\lambda_k}{\chi^2_A\chi^2_B}\right)^n\sum_{p,q\in \text{NC}_n}\chi_A^{2\#(p\tau_n^{-1})}\chi_B^{2\#(q)} \chi^{2k\#(pq^{-1})},
\end{align}
Interestingly, the partition function $Z^{(k)}_{n}$ is identical to the partition function of an equivalent tensor network (up to an overall normalization), see \figref{fig:2TN_RT}.
\begin{figure}[t]
  \centering
  \includegraphics[width=.5\textwidth]{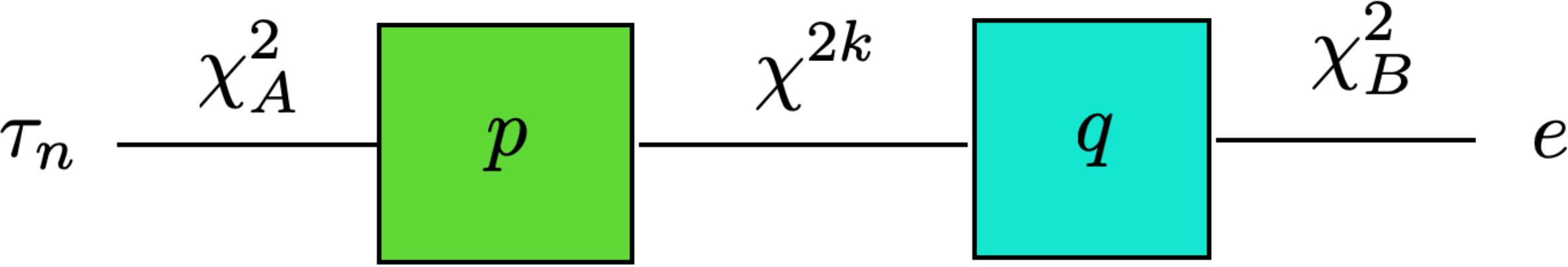}
  \caption{Calculating the partition function in the each $k$ sector is equivalent to calculating the partition function for finding the $n$-th moment of an effective two-tensor model (up to a normalization factor) with two external bonds with bond dimensions $\chi_A^2$ and $\chi_B^2$, along with an internal bond with dimension $\chi^{2k}$.}
  \label{fig:2TN_RT}
\end{figure}
One should view this effective network as an instantiation of the effective description of the $k$th superselection sector, in the sense that our result exposes the effective internal entanglement structure of a given sector.
Note that the while the picture presented here is not the same as the one proposed in \figref{fig:interpret}, they have the same entanglement spectrum up to leading order in $\chi_{A/B}^k$. 
We conjecture that by including the sum over different $k$ sectors in the full partition function, one can restore the hidden internal structure of the effective description.

The normalized spectrum of the tensor network shown in \figref{fig:2TN_RT} is the same as the spectrum of the product of two rectangular Ginibre matrices, which has been worked out in the large bond dimension limit \cite{Collins:2010fsu,2014arXiv1401.7802D,PhysRevE.92.012121} using techniques from free probability theory \cite{Mingo_2017}. The resolvent for this network can be obtained through the following cubic algebraic equation
\begin{equation}
  \label{eq:S_transform}
  \lambda W(\lambda) = (1+W(\lambda))(1+c_A W(\lambda))(1+c_B W(\lambda))
\end{equation}
where $c_A = \chi^{2k}/\chi_A^2$, $c_B = \chi^{2k}/\chi_B^2$ and $W(\lambda)$ is related to the resolvent by $\lambda R(\lambda) = (1+W(\lambda))$.
Working in the limit $c_A,c_B \ll 1$, we expand \Eqref{eq:S_transform} to first order in $c_{A/B}$:
\begin{equation}
  \label{eq:S_transform_2}
  \lambda W(\lambda) \approx (1+W(\lambda))(1+(c_A+c_B)W(\lambda)),
\end{equation}
whose solution gives a Marchenko-Pastur distribution \cite{Marchenko_1967} with parameter $c=c_A+c_B = \chi^{2k}(1/\chi_A^2+1/\chi_B^2)$. Putting back the correct normalization factors, we obtain an approximate spectrum for the $k$-th sector
\begin{align}
\label{eq:width}
  D_k(\lambda) = \frac{1}{2\pi\lambda\lambda_k(\chi_A^{-2}+\chi_B^{-2})} \sqrt{(\lambda-\lambda_{k-})(\lambda_{k+}-\lambda)},\quad 
  \lambda_{k\pm} = \lambda_k\left(1\pm \chi^k\sqrt{(\chi_A^{-2}+\chi_B^{-2})}\right)^2
\end{align}
i.e. the single eigenvalue $\lambda_k$ in each sector spreads out into a narrow peak with width $\sim 4\lambda_k\chi^k\sqrt{(\chi_A^{-2}+\chi_B^{-2})}$.
Note that the approximation in \Eqref{eq:S_transform_2} fails when $\chi^{k}$ and $\chi_{A/B}$ are of comparable size. 
This is the case in numerics when the number of eigenvalues in a higher $k$ sector approaches the finite rank constraint $\min(\chi_A^2,\chi_B^2)$. In this case one should use the full solution to the cubic equation \Eqref{eq:S_transform} instead.

Note that it is rather straightforward to adapt our calculation for arbitrary finite $\chi$ and finite integer $m$.
First, one needs to replace the number of eigenvalues in each sector by $\chi^{2k}\to d_k$ in \Eqref{eq:width_Z} to account for the finite $\chi$ effects.
Further, our previous analysis in \secref{sub:finite} shows that there are subleading eigenvalues in each sector that are suppressed by $O(\chi^{-2})$ compared to the leading one.
Repeating the calculation, we find that \emph{every} eigenvalue in a sector will obtain a width, not just the leading one.
Finally, for even integer $m$ the normalization term $(Z_{m,1})^n$ in \Eqref{eq:replica} can no longer be ignored.
Far from the EW phase transition, $Z_{m,1}$ is sharply peaked around $\min(\chi_A\chi_B,\chi_C)^{m-1}$ and it merely restores the correct normalization for the spectrum.
However around the transition, $Z_{m,1}$ has a large variance and it introduces extra fluctuations to the spectrum by spreading each mound further out in addition to the $Z_{m,n}$ effects computed here.
While this is indeed a concern for our computation, we expect our result in this subsection to hold well when our system is far from the EW phase transition at $q_{AB}\sim 1$. 

\subsection{Numerical results}
\label{sub:num}
Throughout this section, we saw that the use of TL algebra is extremely powerful and enables us to extract various analytical properties for the 2TN reflected spectrum.
To recapitulate, working in the limit $\chi_{A,B,C}\gg \chi$, it allows us to obtain $\chi$-exact spectra for even integer $m$ (\secref{sub:finite}), leading (\secref{sub:large}) and sub-leading (\secref{sub:shifts}, Appendix~\ref{sec:finite_chi}) contributions to $m\to 1$ spectra, reflected entropy (\secref{sub:sr}), as well as fluctuation effects (\secref{sub:fluc}). In this subsection we corroborate these analytical predictions by comparing to numerical results.
All the numerical results presented here are obtained directly from simulating the 2TN state by contracting two random tensors along the internal bond and calculating its reflected spectrum using exact diagonalization.

First, in \figref{fig:num_even_m} we present the histogram of the reflected spectra for $m=2$ and $m=4$. 
The analytic predictions for the locations of the eigenvalues come from Table~\ref{tab:m=2}, \ref{tab:m=4}, and the spreading within each sector is given by a modification of \Eqref{eq:width} to accommodate for finite-$\chi$ effects.
The bond dimensions used in these plots are $\{\chi_A,\chi_B,\chi_{C_1},\chi_{C_2},\chi\}=\{40,40,80,80,3\}$. 
We remind our reader that the analytic results presented here are exact for arbitrary $\chi$ and when the system is sufficiently far from phase transition. \figref{fig:num_even_m} serves as an exceptional confirmation of our formalism in the regime of small internal bond dimension $\chi$.

Moving on to the analytic continuation $m\to 1$ which is relevant to the canonical purification and reflected entropy, we present a similar histogram of the spectra for $m=1$ in \figref{fig:num_m=1}. 
The bond dimensions used here are $\{\chi_A,\chi_B,\chi_{C_1},\chi_{C_2},\chi\}=\{16,16,40,40,4\}$ (top) and $\{30,30,70,70,4\}$ (bottom).
As opposed to the case of even integer $m$, here we only have analytic control of the spectra up to first order corrections in $\chi$. The leading order analytic results of the eigenvalues are given by \Eqref{eq:pk}, and the first order corrections for $\lambda_0$ and $\lambda_1$ are given by \Eqref{eq:lambda0_shift} and \Eqref{eq:lambda1_shift}.
The corrections to higher $\lambda_k$'s are obtained via direct numerical contour integration to extract the relevant generating function coefficients, i.e. \Eqref{eq:corr_dl1} and \Eqref{eq:corr_dl2}.
The number of sectors that show up in the numerics abides the \emph{rank constraint}, namely that the total number of eigenvalues cannot exceed the dimension of the matrix. Therefore, we expect to see higher $k$ sectors materializing as $\chi_A\chi_B$ increases, which is indeed the case here: As we increase $\chi_A\chi_B$, we see an emergent fourth peak in the bottom plot compared to the top plot.

We also plot the transition of classical sector probability $p_k$ in \figref{fig:num_pk_sr} as functions of $q_{AB}$, and likewise the normalized reflected entropy $S_R/2EW$ in \figref{fig:SR}. 
The bond dimensions used in these two plots are $\{\chi_A,\chi_B,\chi\}=\{25,25,5\}$ and we vary $\chi_{C_1}=\chi_{C_2}$ to obtain different values of $q_{AB}$.
The colored dots indicate the numerical results. We present two different analytical predictions here: Solid lines are the ones that includes first order corrections (\Eqref{eq:lambda0_shift}, \Eqref{eq:lambda1_shift}, etc.), whereas dashed lines are the leading result (\Eqref{eq:pk} for $p_k$ and \Eqref{eq:sr} for $S_R$), which is valid in the limit $\chi\to\infty$.
It is evident that the first order corrections captures the non-trivial effects of small internal bond dimension surprisingly well.

\begin{figure}[h]
    \centering
    \includegraphics[width=.9\textwidth]{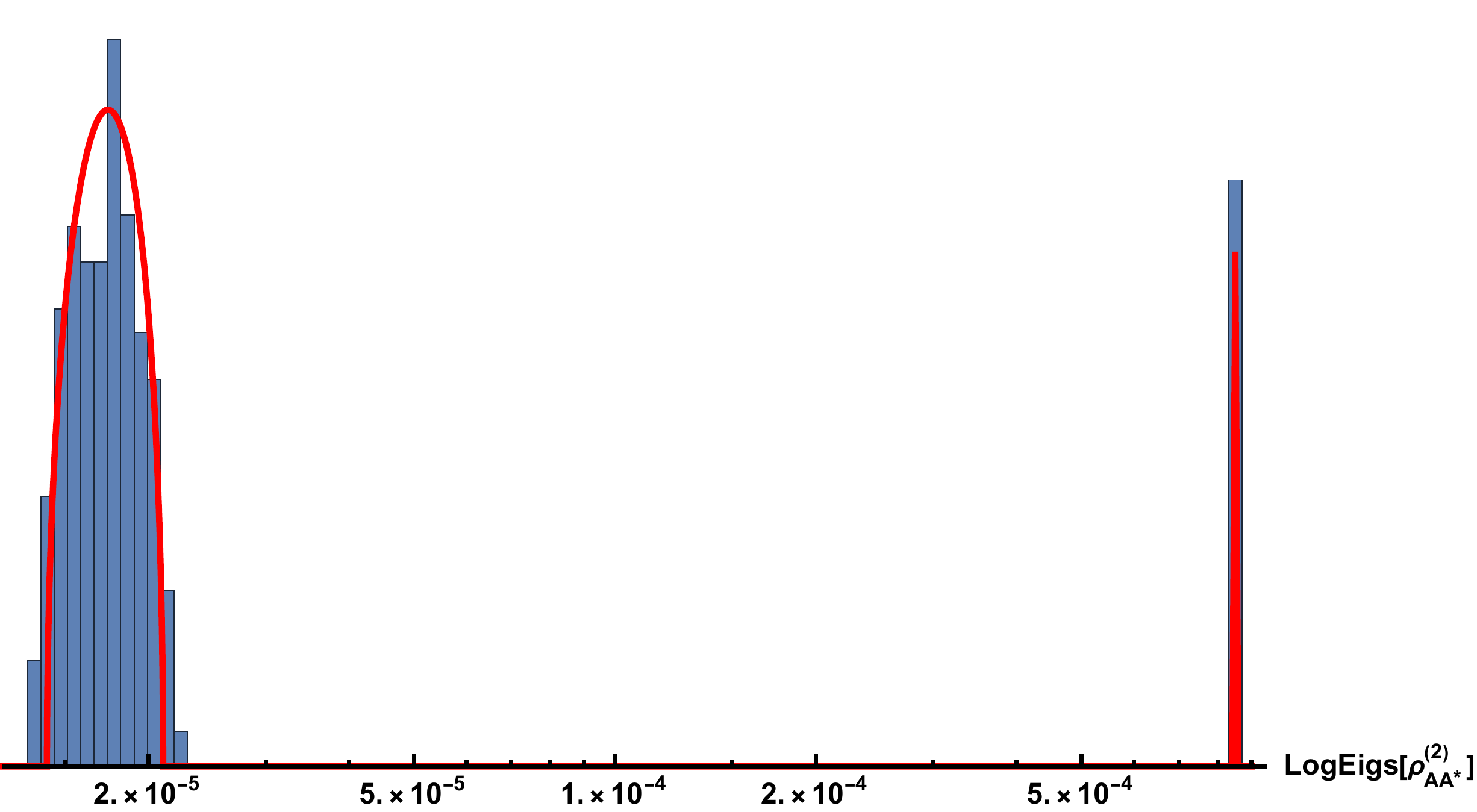}
    \includegraphics[width=.9\textwidth]{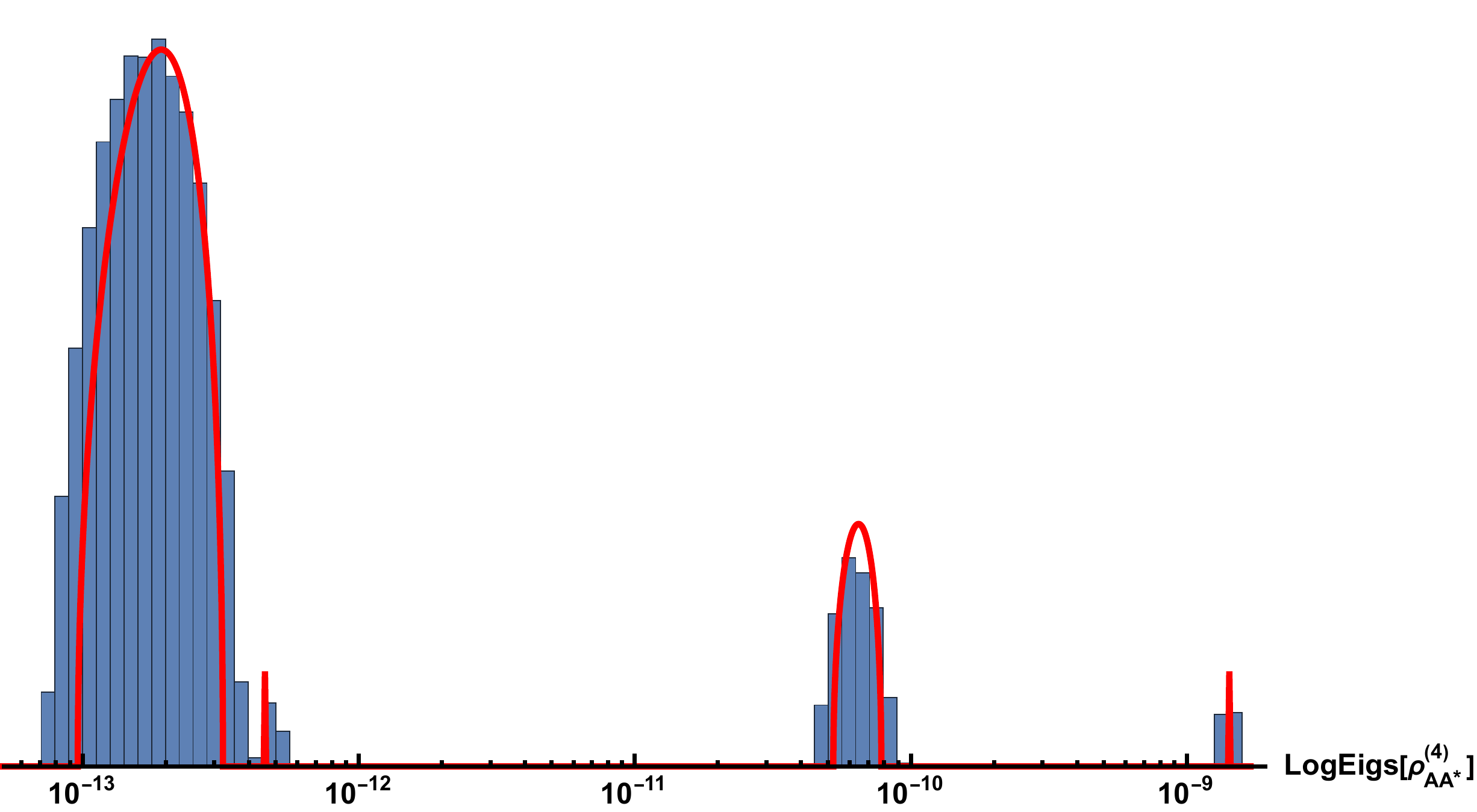}
    \caption{Plots of spectra of the (unnormalized) density matrix $\rho^{(m)}_{AB}\equiv\text{Tr}_{BB^*}\ket{\rho^{m/2}_{AB}}\bra{\rho^{m/2}_{AB}}$ for $m=2$ (top) and $m=4$ (bottom), with the red lines being the analytic predictions. Note that in the case of $m=4$ there is an additional eigenvalue that lies close to the leftmost peak. This is a finite $\chi$ effect that is not visible in our $m\to 1$ analytics. Nevertheless our results for even integer $m$ are $\chi$-exact which is well confirmed by these two plots.}
    \label{fig:num_even_m}
\end{figure}

\begin{figure}[h]
    \centering
    \includegraphics[width=.9\textwidth]{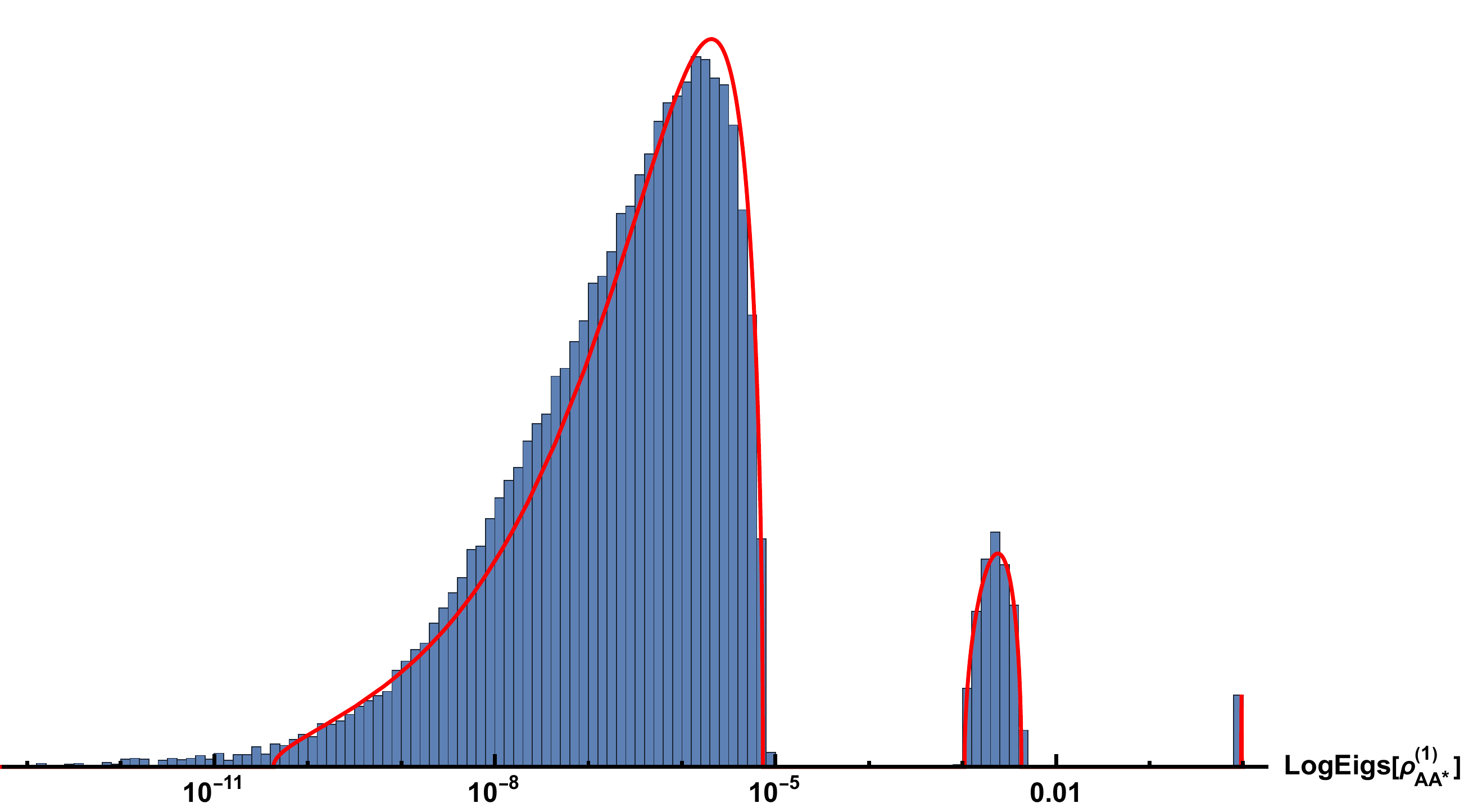}
    \includegraphics[width=.9\textwidth]{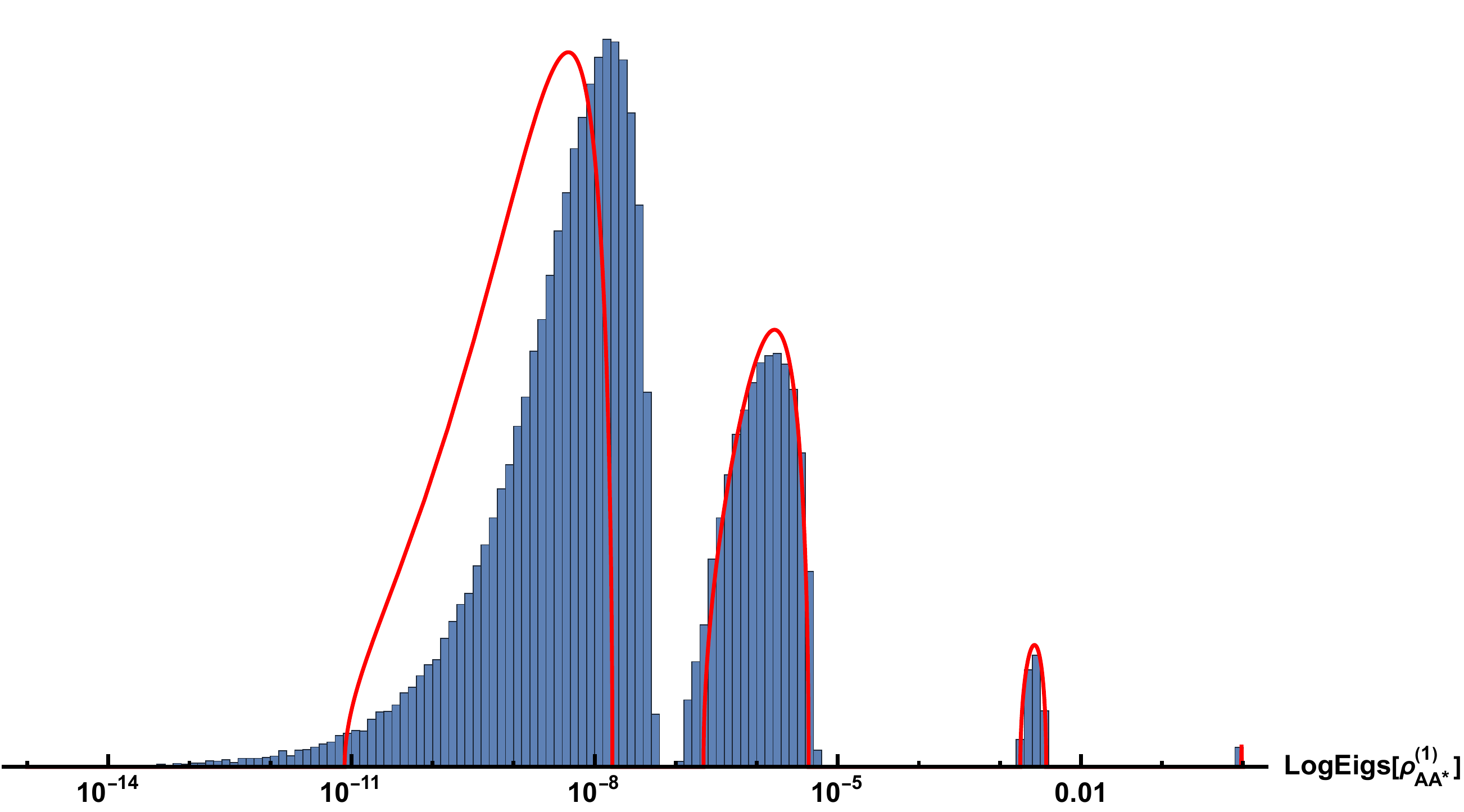}
    \caption{Plots of spectra of the $m=1$ reflected density matrix $\rho^{(1)}_{AB}\equiv\text{Tr}_{BB^*}\ket{\sqrt{\rho_{AB}}}\bra{\sqrt{\rho_{AB}}}$ for $\{\chi_A,\chi_B,\chi_{C_1},\chi_{C_2},\chi\}=\{16,16,40,40,4\}$ (top) and $\{30,30,70,70,4\}$ (bottom). The analytic predictions are shown as red lines.  The mismatch of the $k=3$ peak in the bottom plot is due to the fact that we can only work numerically in low bond dimension. As we scale up the bond dimensions in numerics, the agreement becomes better. We expect that it can also be resolved by improving the analytics, including subleading corrections to the eigenvalues beyond first order.}
    \label{fig:num_m=1}
\end{figure}

\begin{figure}[h]
    \centering
    \includegraphics[scale=.9]{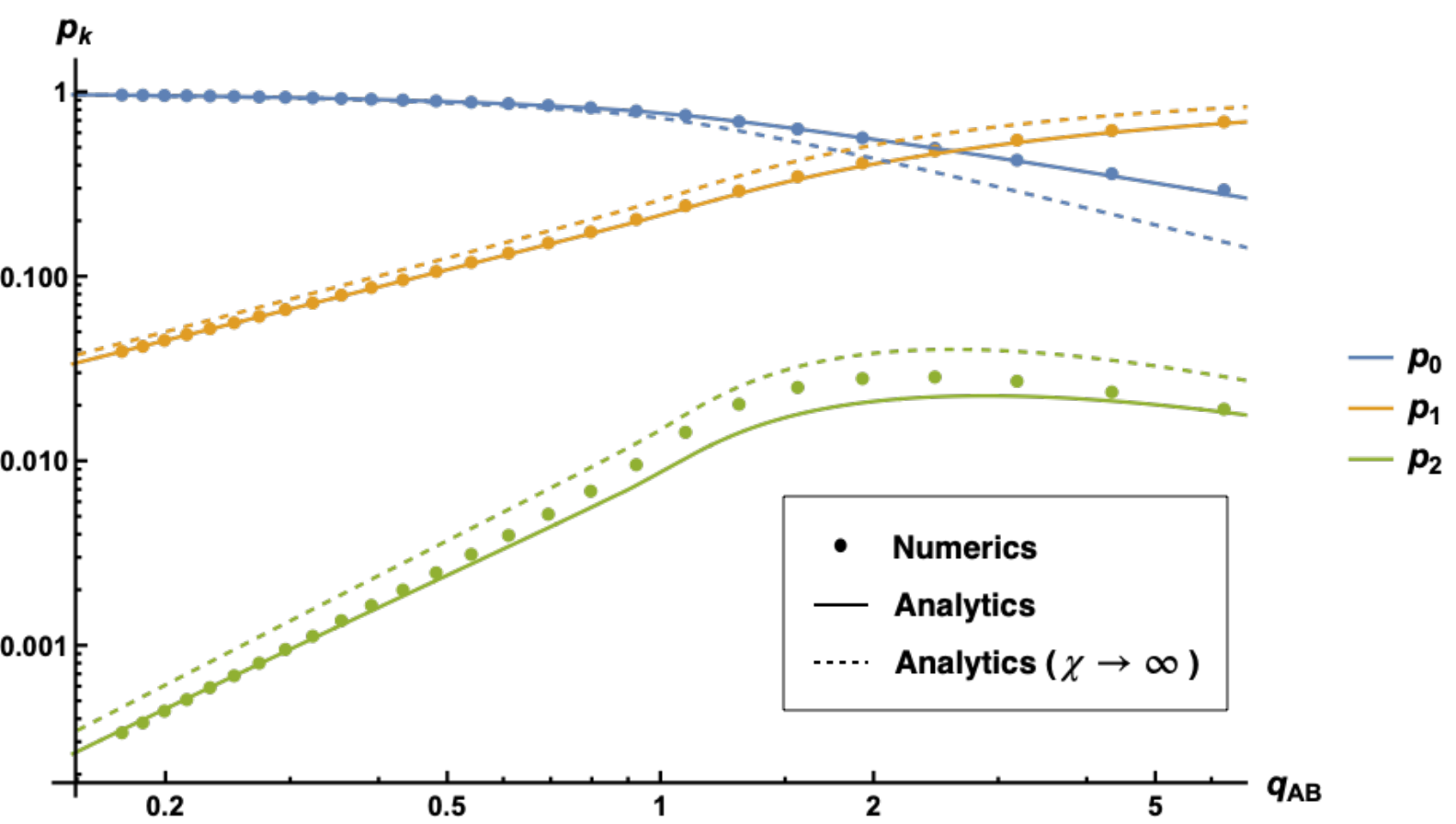}
    \caption{Plots of the sector probability $p_k$. The analytic result (solid lines) is obtained from the product of sector multiplicity $d_k$ and eigenvalues $\lambda_k$. The $d_k$ used here are the exact expressions in \Eqref{eq:dk} and the $\lambda_k$ used here includes the first order shifts. We also incorporated the exact expression of $p_k$ at $\chi\to\infty$ (\Eqref{eq:pk}) as dashed lines.}
    \label{fig:num_pk_sr}
\end{figure}
\begin{figure}[h]
    \centering
    \includegraphics[scale=.9]{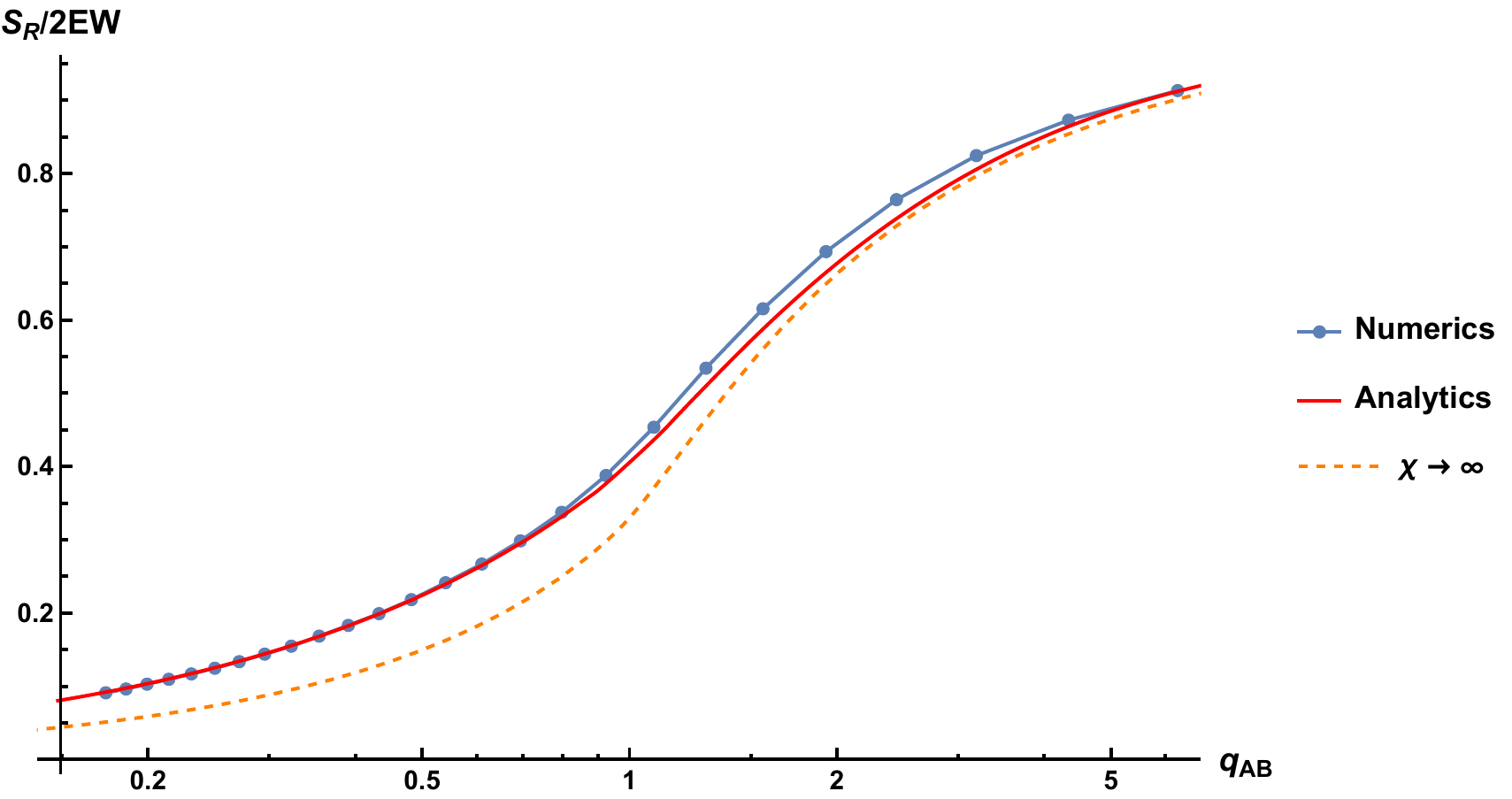}
    \caption{Plots of the reflected entropy, normalized by twice the EW cross-section $2\ln \chi$. The analytic prediction (red) is obtained by summing over the eigenvalues in $k=0,1,2$ sectors (these are the only available ones due to the rank constraint), including the first order corrections and the contribution from spreading (\Eqref{eq:width}). We have also included the form of $S_R$ in the limit $\chi\to\infty$ shown as the orange dashed line for comparison.}
    \label{fig:SR}
\end{figure}



\section{Discussion} 
\label{sec:disc}

In this paper we have continued the study of
canonical purification and reflected entropy for random tensor networks. The picture developed in Ref.~\cite{Akers:2021pvd} for the tensor network version of the gravitational gluing construction persists to more complicated tensor networks. In particular we have found a detailed match between gravitational saddles that contribute to the canonical purification with higher genus equal time surfaces and certain saddles in the statistical mechanics model governing random tensor network calculations. Area fluctuations, probed using reflected entropy, are represented by the different topological sectors. 

For the model at hand the topological sectors are governed mathematically by the Temperley-Lieb algebra. The representation theory of the TL algebra then gives a nice picture of the emergent superselection sectors that can be vividly seen in the numerics. The higher genus equal time surface arise from cutting open the TL diagrams. The genus, and hence topological index, is determined by the number of strands that are cut.

In the rest of the discussion section we summarize some possible avenues for future work and some intriguing speculative connections to the theory of emergent non-trivial von Neumann algebras.

\subsection{General RTNs and Multiboundary Wormholes}

Our analysis in this paper was focused on the 2TN model, where we performed a concrete calculation using the TL algebra, and matched the results to those obtained from the gravitational path integral in \secref{sec:gravity}. There is in fact a more general connection between RTNs and multiboundary wormholes \cite{Balasubramanian:2014hda}. In this section, we would like to make some comments on the presence of similar saddles for general multiboundary wormholes.

Consider an arbitrary RTN built out of constituent random tensors with three legs, an example is shown in \figref{fig:4TN}. Such an RTN models a multiboundary wormhole where the tensors correspond to the constituent pair-of-pants decomposition of the spatial geometry. In general, there are multiple ways to decompose a hyperbolic geometry into pairs-of-pants. However, the RTN has fixed-area surfaces which pick a preferred pair-of-pants decomposition. The network is then interpreted as a coarse-grained descriptions of the given geometry. In Ref.~\cite{Balasubramanian:2014hda}, it was argued that this model captures the entanglement entropies of such multiboundary wormholes accurately. Our results here show that the same is true for the canonical purification and the reflected entropy, thus enlarging the scope of the random state model discussed in Ref.~\cite{Balasubramanian:2014hda}.

Firstly, note that the construction of geometries contributing to the canonical purification of a multiboundary wormhole is similar to that discussed in \secref{sec:gravity}. Since all the horizons have fixed areas, the saddles contributing to the path integral in \figref{fig:gluing} are identical, except that we now have more ways to glue together the fixed-area saddles. In terms of the Cauchy surface obtained on the $\mathbb{Z}_2$-symmetric slice, the saddles can be classified by picking a particular choice of entanglement wedge and then gluing together multiple copies of the respective bulk regions.

\begin{figure}[t]
  \centering
  \raisebox{-0.5\height}{\includegraphics[scale=.33]{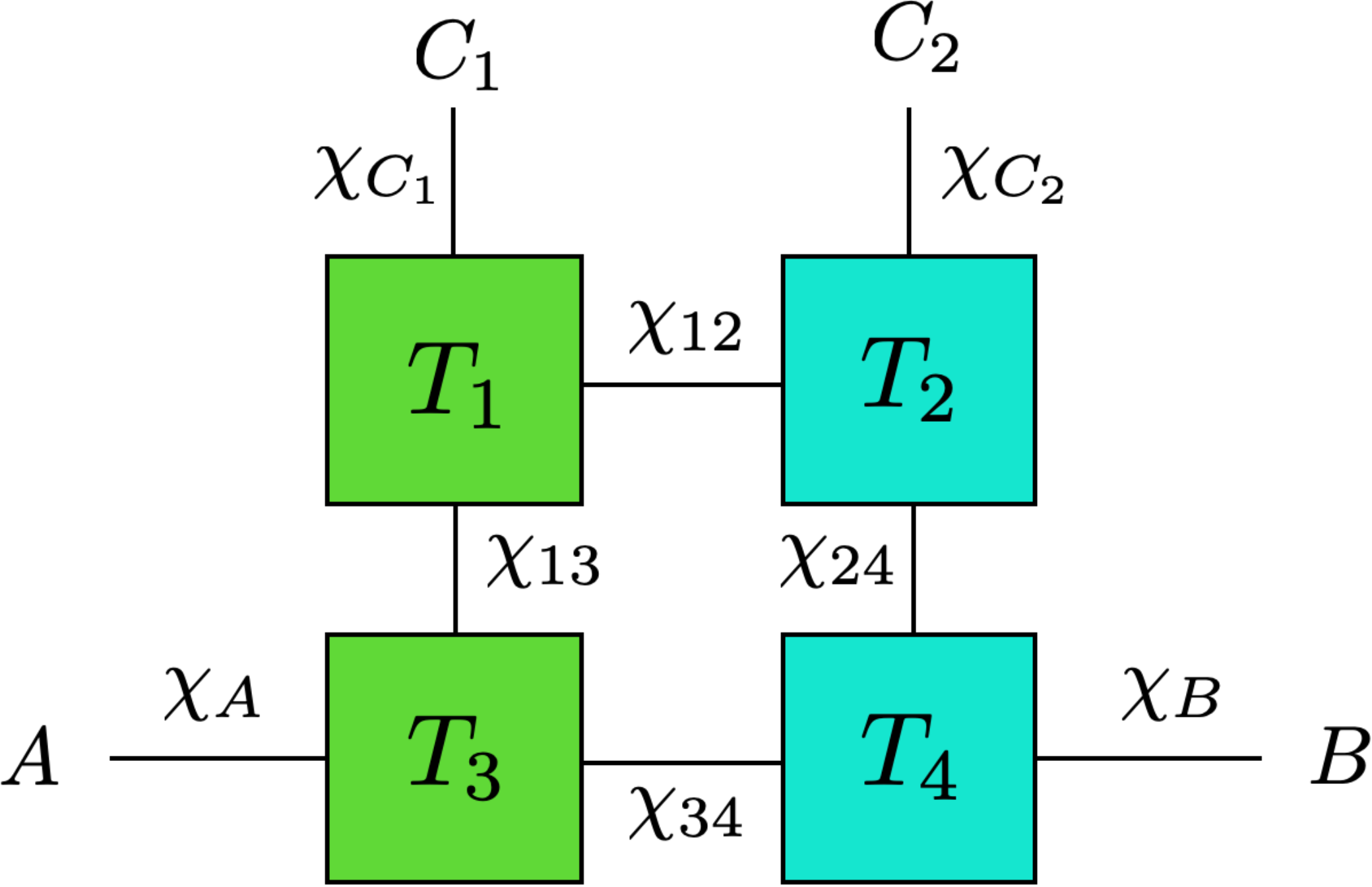}}
  \hspace{.2in}
  \raisebox{-0.53\height}{\includegraphics[scale=.33]{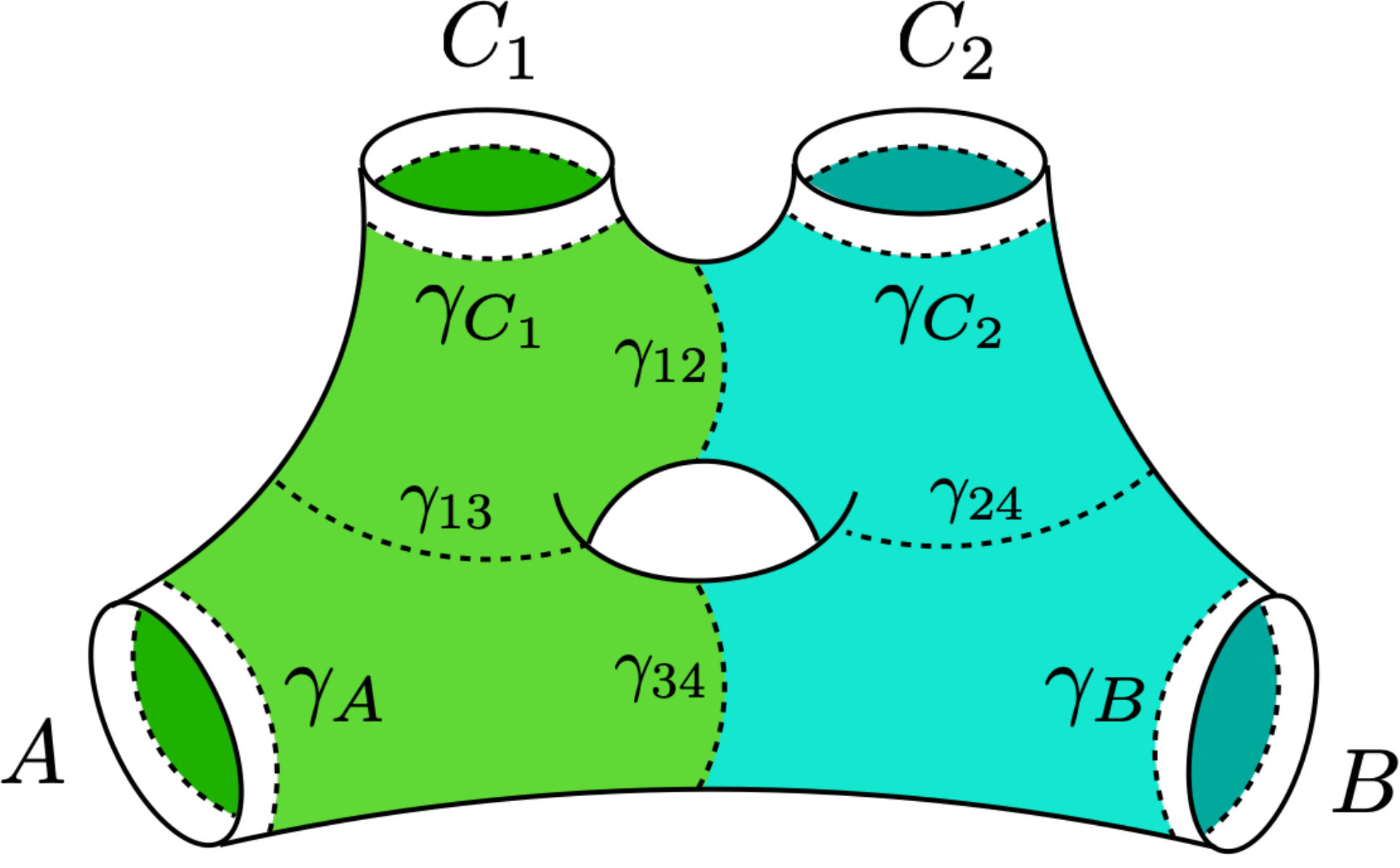}}
  \caption{(left): The 4TN network consists of four internal vertices connected by internal bonds $\chi_{ij}$. (right): It models a four-boundary wormhole with a handle.}
  \label{fig:4TN}
\end{figure}

In general, the TL algebra techniques can be applied to more general RTNs as well. For example, consider the 4TN network shown in \figref{fig:4TN}. The computation of the average partition function $\overline{Z_{mn}}$ in this network includes a sum over independent permutations $g_{i}$ on each of the tensors. As before, we can take the limit of external bond dimensions to be large. In addition, if we also take the internal vertical bonds to be large, it is easy to see that we can then restrict the sum to the non-crossing permutations $g_{1/3}\in \Gamma(g_A,e)$ and $g_{2/4}\in \Gamma(g_B,e)$ since the other permutations are suppressed. For each internal horizontal bond $\chi_{ij}$, there is a domain wall cost of the form $\chi_{ij}^{\#(g_i g_j^{-1})}$. As in \Eqref{eq:bondTL}, this can be done by introducing a TL algebra labelled by the bond dimension. Thus, $\overline{Z_{mn}}$ is computed by a product of TL traces, one for each horizontal bond. However, the analysis is complicated by the fact that the traces are coupled to each other. It would be interesting to analyze this using free probability theory techniques \cite{Cheng:2022ori,Wang:2022ots} in the future.

\subsection{Junctions for the Cross Section}

Our results in this paper rigorously apply to situations where the entanglement wedge cross section is spacelike separated from all the external RT surfaces. This guarantees that we can simultaneously fix their areas and thus, model the holographic state by an RTN. Our TL algebra techniques can then be applied to such situations. However, we now argue that the TL algebra analysis can be useful for more generic situations in holography where all the surfaces are not necessarily spacelike separated. 

For example, consider a three-boundary wormhole with horizons $\gamma_i$ for $i=A,B,C$, each of the subregions being one asymptotic boundary. As discussed in Ref.~\cite{Akers:2021pvd}, when the minimal entanglement wedge cross section is given by one of the external horizons, this situation is modelled by a single random tensor with three boundary legs, one for each subregion. However, in general there is a non-trivial cross section surface $\gamma_W$ as shown in \figref{fig:interval}, which can be important for the reflected entropy. In fact, we show in Appendix~\ref{sec:multi} that this is true for a large region in parameter space. 

In such a situation, the TL algebra analysis cannot be applied directly since there is a codimension-3 junction where the surface $\gamma_W$ meets $\gamma_C$. One can check that the areas of $\gamma_C$ and $\gamma_W$ can indeed be simultaneously fixed. This is manifest from the fact that the area of $\gamma_W$ generates a kink transform \cite{Bousso:2020yxi,Kaplan:2022orm,xi}, which preserves the minimal entanglement wedge cross section. Nevertheless, the RTN in fact fixes the areas of $\gamma_{C_1}$ and $\gamma_{C_2}$ as well, since they correspond to bonds in the network. It can be checked in various models \cite{Wu} that these areas do not commute with $\gamma_W$.

Nevertheless, the saddles contributing to the gravitational path integral for $\tr(\rho_{AB}^m)$ discussed in \secref{sec:gravity} are valid even if we don't fix the area of $\gamma_W$. Thus, for the state $\ket{\rho_{AB}^{m/2}}$, we still obtain the same geometries $\Sigma_k$, labelled by their genus. While, we cannot then use the TL algebra technology to compute the R\'enyi reflected entropies, we can gain some mileage from directly assuming that the RT formula can still be applied to each of these geometries. Since this is a superposition over a small number of geometries (not exponentially large in the entropy), we can use the expectations from Ref.~\cite{Almheiri:2016blp} to argue that the (non-linear) entropy is given by the expectation value of the minimal area, a linear operator defined on the gravitational phase space. This expectation is clearly borne out in the situation where we could fix the area of $\gamma_W$ as seen from \Eqref{eq:avg}. With this assumption, even in this situation, we expect to have an analog of \Eqref{eq:avg}. It would be interesting to generalize our TL algebra construction to include a non-flat spectrum which allows fluctuations for $\gamma_W$ and make this heuristic argument more rigorous.

\begin{figure}[t]
  \centering
  \includegraphics[scale=.35]{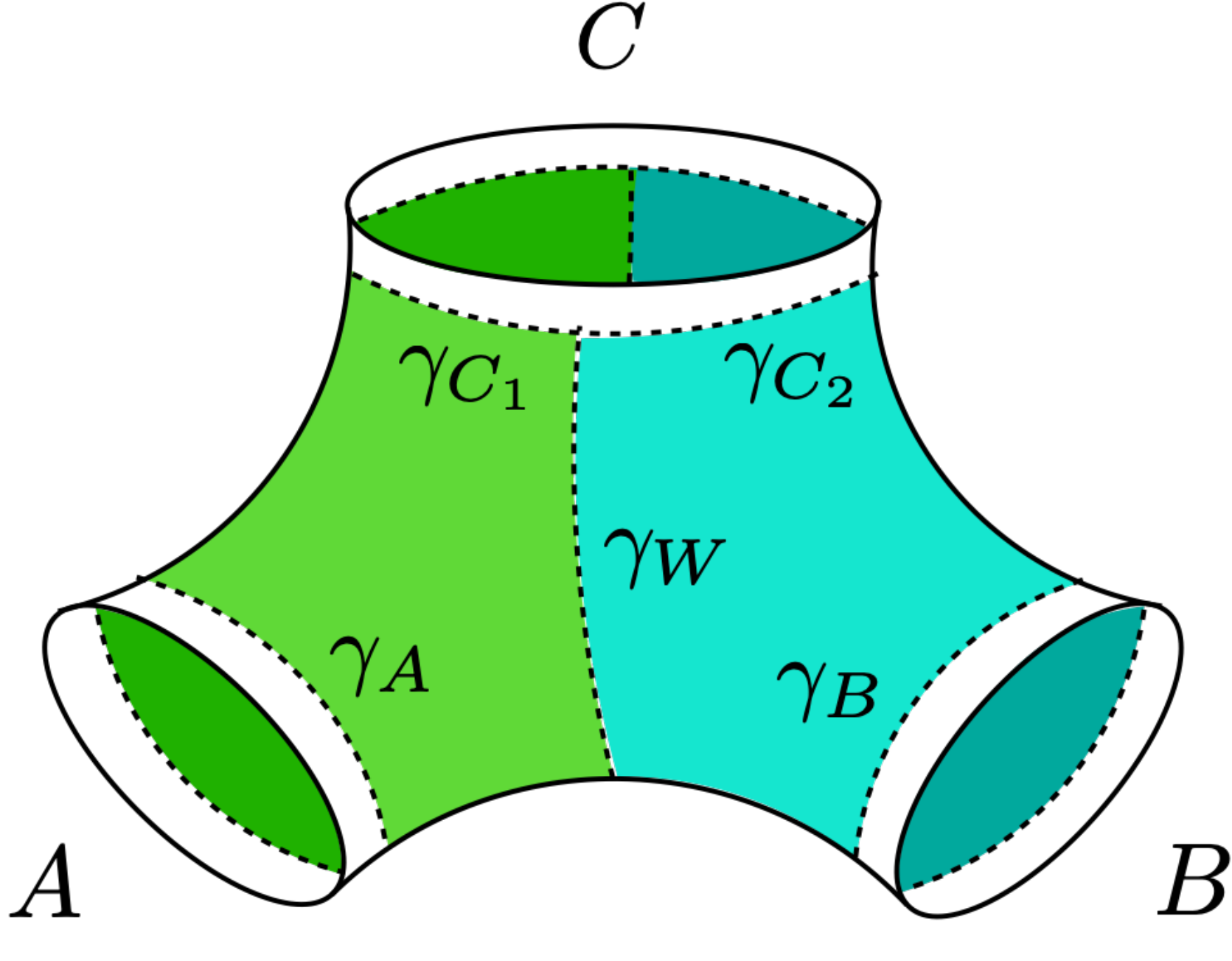}
  \caption{A three-boundary wormhole with horizons labeled as $\gamma_{A/B/C}$ and the non-trivial EW cross-section  $\gamma_W$. While one cannot fix the individual area of $\gamma_W$ and $\gamma_{C_1/C_2}$ simultaneously, the area of $\gamma_W$ and the sum $\gamma_C=\gamma_{C_1}+\gamma_{C_2}$ can be fixed simultaneously.}
  \label{fig:interval}
\end{figure}

With the above heuristic understanding, our 2TN results also apply to a rather generic situation that arises for subregions in a CFT (see \figref{fig:EWCS}). In such a situation, the areas of the extremal surfaces or equivalently the external bond dimensions are IR divergent, whereas the cross section is finite. The external areas can be regularized by allowing for a small splitting between the regions. Having done so, the density matrix $\rho_{AB}$ can in fact be defined rigorously by the split construction described in Ref.~\cite{Dutta:2019gen}. The regularization removes subtleties associated with fixing an IR divergent area. With this understanding, the analysis is similar to the heuristic argument described above for the three-boundary wormhole and thus, our results are rather generic.

\subsection{Emergent von Neumann algebras}

Temperley-Lieb (TL) algebras play an important role in the study of type-II$_1$ subfactors, initiated by Jones. A subfactor is a subalgebra 
 $\mathcal{B} \subset \mathcal{A}$ of von Neumann algebras both of which do not have centers. In this case the TL algebra emerges from a sequence of further subalgebras $ \mathcal{B} \subset \mathcal{A} \subset \mathcal{A}_1 \subset \mathcal{A}_2 \ldots $ constructed using the Jones basic construction \cite{Jones1983}. Then the TL operators are generated by a sequence of projection operators that arise from the basic construction.  
The TL$_m$ algebra for even $m$ then appears in the relative commutants $\mathcal{A}_{i+m} \cap \mathcal{A}_{i}'$ of two of the algebras in this sequence. The algebra generated by all of these relative commutants would reproduce a type-II$_1$ algebra. We might then speculate that our model gives one more avenue \cite{Leutheusser:2021qhd,Witten:2021unn,Chandrasekaran:2022cip} through which non-trivial von Neumann algebras can arise from gravitational like theories. 

Comparing to the emergent TL algebras in our tensor network model the relevant inclusion involves decreasing the number of handles on the fixed time slice.  To really see an emergent type-II$_1$ algebra we would need to move to high topological index. In particular $m \rightarrow \infty$ would be interpreted as giving rise to such a non-trivial von Neumann algebra.
Interestingly by continuing $m$ away from an even integer we have a mechanism by which arbitrarily high topological index can arise, unfortunately however these states become increasingly suppressed at high $k$ by the sector probabilities. Near the phase transition all sectors are excited, however to really claim an emergent type-II$_1$ algebra we would need to somehow project to a high $k$ sector so that $p_k$ is peaked around $k_0$ which is then approaching $\infty$. If we managed this then the leading reflected entropy $\approx k_0 \ln \chi$ is diverging with $k_0$, a necessary condition for an emergent type-II$_1$ von Neumann algebra. Perhaps this can be achieved by perturbing the original density matrix by some $\chi_{A,B}$ dependent operator that excites the topological mode $k_0(\chi_{A,B})$, with $k_0 \rightarrow \infty$ as we send $\chi_{A,B} \rightarrow \infty$ while holding fixed $\chi$. We certainly need to send the external bond dimensions to $\infty$ so that the rank condition does not come into play. A natural guess for such a perturbation is to apply $\rho_{AB}^{is}$ (half sided) modular flow to the original state $\left|\rho_{AB}^{1/2} \right>$. 
In fact one can show, using similar techniques to Section~\ref{sec:2TN}, that under modular flow the probabilities $p_k$ get modified to
\begin{equation}
    p_k(s) = q^{1-k}_{AB} \, g_{1/2+is,k}(q_{AB}) g_{1/2-is,k}(q_{AB})
\end{equation}
at leading order in $\chi$.
This change of $p_k(s)$ is seemingly consistent with the above requirements. We have confirmed this numerically, see \figref{fig:pk_modular}. We would then need to pick $s(\chi_A,\chi_B)$ diverging and one important question is can we control such a computation while taking $s$ large? 
This would be an interesting avenue to pursue since it would give rise to a computable model of topological gravitational like fluctuations. It is also important to understand the exact nature of the putative emergent type-II$_1$ von Neumann algebra other than simply via entropy computations. 

\begin{figure}[t]
    \centering
    \includegraphics[width=.7\textwidth]{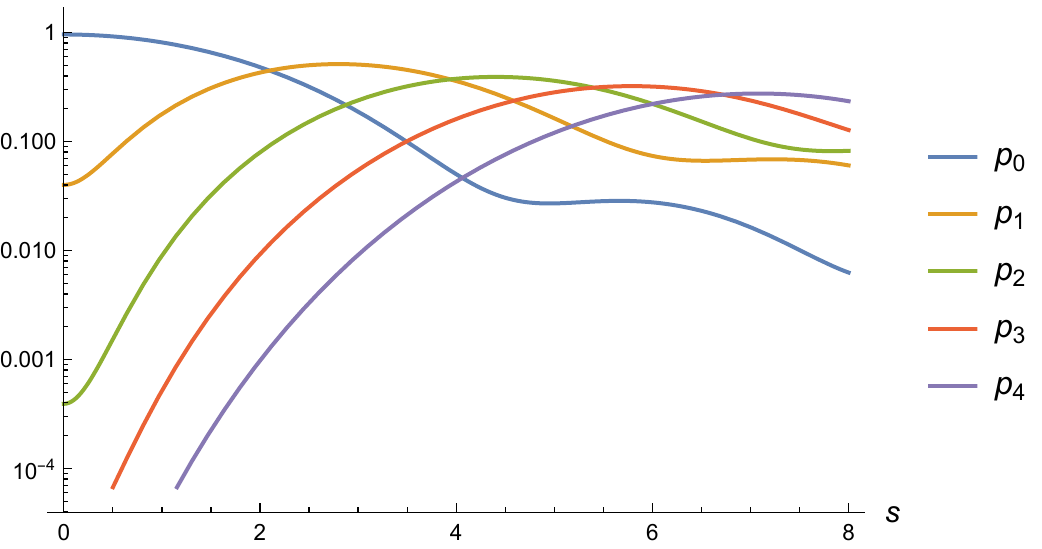}
    \includegraphics[width=.7\textwidth]{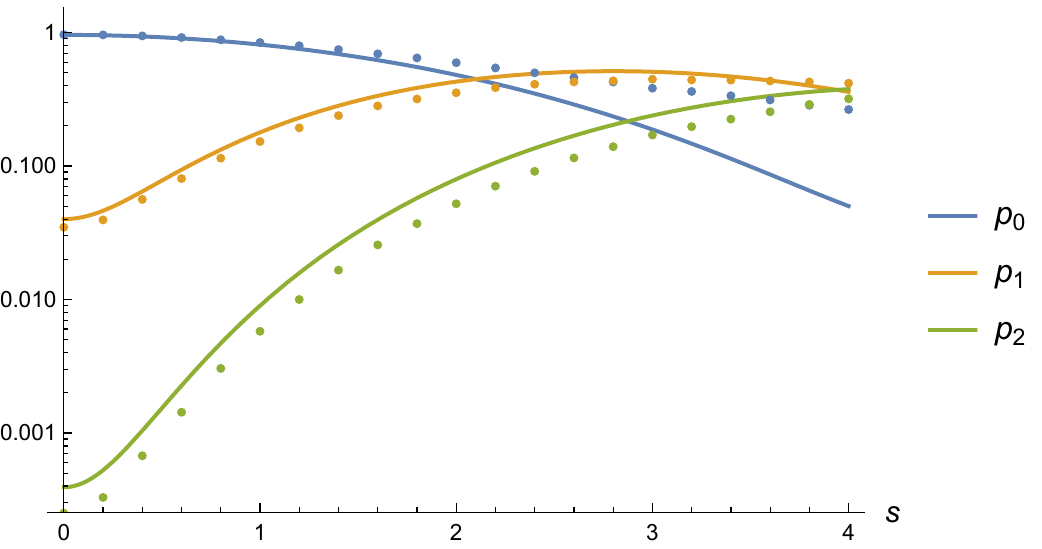}
    \caption{(Top) Sketch of the ($\chi$-leading) modular flowed sector probability $p_k$ as a function of modular parameter $s$. One can see the shift of dominance to higher $k$ sectors as $s$ increases.
    (Bottom) The numerical result of $p_k$ for $k=0,1,2$, shown as colored dots. The bond dimensions used in the numerics are $\{\chi_A,\chi_B,\chi_{C_1},\chi_{C_2},\chi\}=\{16,16,40,40,4\}$. One sees a good match with the analytical prediction at small $s$. To match the behavior at larger $s$ one would need to access higher $k$ sectors in the numerics.}
    \label{fig:pk_modular}
\end{figure}

There are also some superficial similarities to Ref.~\cite{Lin:2022rbf} that are worth exploring further. Ref.~\cite{Lin:2022rbf} gives a Hilbert space interpretation of the double scaled SYK model, for which a diagrammatic solution was given in Ref.~\cite{Berkooz:2018jqr}. These are the chord diagrams and the Hilbert space description is in terms of cutting open these chord diagrams, similar to the Temperley-Lieb Hilbert spaces. In this case an emergent type-$II_1$ von Neumann algebra naturally arises.


\acknowledgments


We would like to thank Xi Dong and Fikret Ceyhan for useful discussions. SL would especially like to thank Shiliang Gao and Jingwei Xu for their help regarding various mathematical aspects in this note. PR is supported in part by a grant from the Simons Foundation, and by funds from UCSB. 
CA is supported by the Simons foundation as a member of the It from Qubit collaboration, the NSF grant no. PHY-2011905, and the John Templeton Foundation via the Black Hole Initiative.
This material is based upon work supported by the Air Force Office of Scientific Research under award number FA9550-19-1-0360. This work benefited from the Gravitational Holography program at the KITP and we would like to thank the KITP, supported in part by the National Science Foundation under Grant No. NSF PHY-1748958.

\appendix 

\section{Multiboundary wormholes}

\label{sec:multi}

\begin{figure}[t]
  \centering
  \includegraphics[width=.6\textwidth]{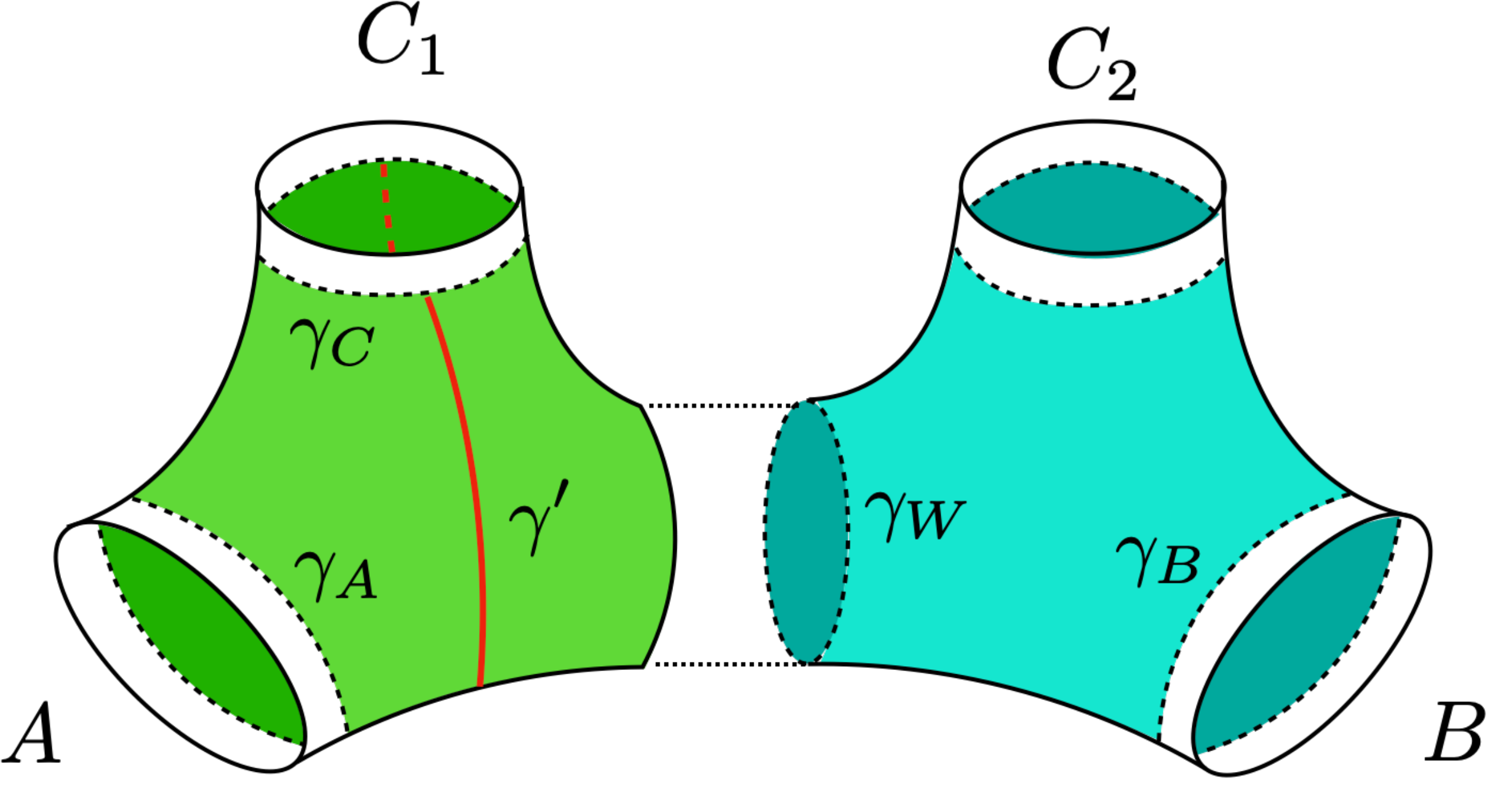}
  \caption{A pair of pants decomposition of a four-boundary wormhole. The constituent three-boundary wormholes have horizons $\gamma_i$ with area $L_i$. The surface $\gamma_W$ was treated as the minimal entanglement wedge cross section in our analysis. However, another possible non-trivial entanglement wedge cross section surface $\gamma'$ exists with an area $L'$ that could be lesser in regions of parameter space.}
  \label{fig:pants}
\end{figure}

In this appendix, our primary goal is to analyze the extremal surfaces in a multiboundary wormhole. In particular, we are interested in understanding the competition between different candidates for the minimal entanglement wedge cross section.

The four boundary wormhole we are interested in can be decomposed into two pair-of-pants geometries (see \figref{fig:pants}). Without loss of generality, consider one of the constituent geometries with extremal surfaces $\gamma_i$ that have area $L_i$ for $i=A,C_1,W$, corresponding to each of the asymptotic boundary regions as shown in \figref{fig:pants}. In this geometry, there are three potential candidates for the minimal entanglement wedge cross section for the region $AB$, i.e., $\gamma_A$ and $\gamma_W$, which have obvious analogs in the 2TN model, as well as a non-trivial surface $\gamma'$, whose area $L'$ is completely fixed in terms of the moduli $L_i$. In order for the tensor network to faithfully model this wormhole's minimal cross section, we need to ensure that $L_W<L'$.

To compute $L'$ as a function of $L_i$, we can use the symmetries of the problem and identities from hyperbolic geometry (see Ref.~\cite{Hayden:2021gno} and references therein). In particular, an identity satisfied by right-angled pentagons in hyperbolic space is
\begin{equation}
	\sinh(a)\sinh(b)=\cosh(c),
\end{equation}
where $a$ and $b$ are the lengths of adjacent sides of the pentagon and $c$ is the unique side not adjacent to either of them. Note that all lengths are measured in units of the AdS scale $l$. Suppose $\gamma'$ splits $\gamma_{C_1}$ into portions of length $x$ and $L_{C_1}-x$, then we have
\begin{align}
\begin{split}
	\sinh(x)\sinh\left(\frac{L'}{2}\right) &= \cosh\left(\frac{L_W}{2}\right),\\
	\sinh\left(\frac{L_{C_1}}{2}-x\right)\sinh\left(\frac{L'}{2}\right) &= \cosh\left(\frac{L_A}{2}\right).
 \end{split}
\end{align}
Moreover, we are interested in a limit where $L_A,L_{C_1}\gg L_W$. Solving the above set of equations in this limit, we obtain
\begin{equation}
L'= 2 \sinh ^{-1}\left(\sqrt{\exp\left(L_A-L_{C_1}\right)+2 \exp\left(\frac{L_A-L_{C_1}}{2}\right)\cosh\left(\frac{L_W}{2}\right)}\right).
\end{equation}
We can now compare $L'$ to $L_W$ either by plotting the function as in \figref{fig:plotL}, or by comparing them in various limits. When $L_A> L_{C_1}$, we find that $L'\gg L_W$ which means we can ignore the surface $\gamma'$ for the cross section. On the other hand, for $L_A<L_{C_1}$, we find that $L'\ll L_W$ and thus, it is very important to consider $\gamma'$. This means that the 2TN approximation fails as we go far into the disconnected phase. However, for our calculations we were mostly interested in the region near phase transitions. Thus for simplicity, consider the case $L_A=L_{C_1}$. Then, for small $L_W$, we have $L'=2\sinh^{-1}(\sqrt{3})\gg L_W$. For large $L_W$, we instead have $L'=\frac{L_W}{2}\ll L_W$. As seen from \figref{fig:plotL}, there is a transition at an $O(1)$ value of $L_W$. Identical arguments can be applied to the other constituent pair-of-pants geometry. Thus, as long as we tune the moduli to such a regime, we find a wormhole whose minimal entanglement wedge cross section is indeed given by $\gamma_W$ and is modelled well by the 2TN. 

\begin{figure}[t]
  \centering
  \includegraphics[width=.6\textwidth]{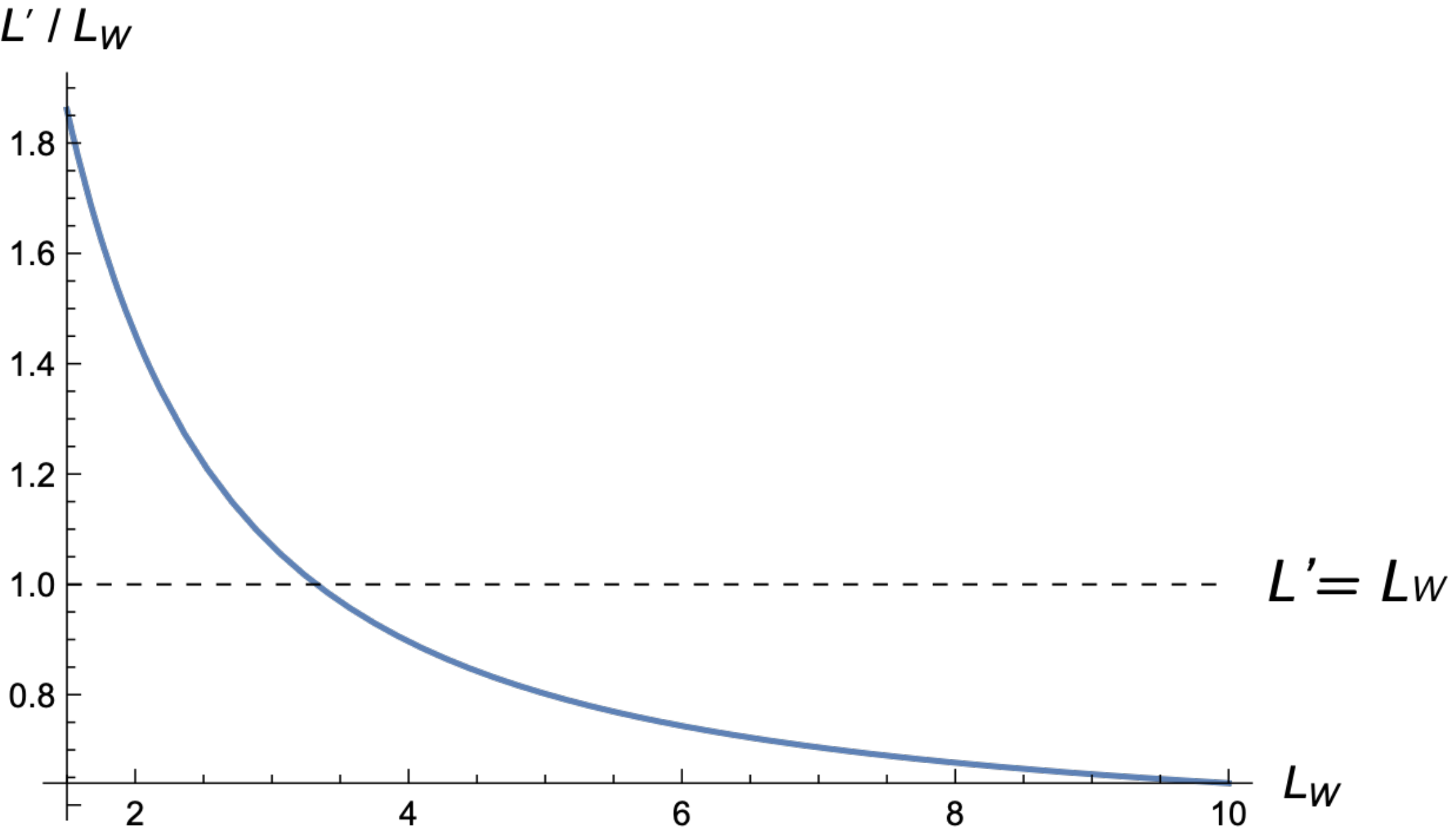}
  \caption{A plot of $L'/L_W$ as a function of $L_W$ for fixed $L_A$ and $L_{C_1}$. The dashed line labels the location where the shift of minimal cross-section happens.}
  \label{fig:plotL}
\end{figure}

On the other hand, it isn't clear whether such saddles are dominant in the gravitational path integral. An analysis similar to Ref.~\cite{Maxfield:2016mwh} would be useful to determine if this is the case. Even if such a saddle is not dominant, we can consider a few alternate options: matter can be used to source a larger interior region in which case it is clear that $L'>\min(L_A,L_W)$. It may also be possible to simply choose to project on the relevant saddles \cite{Akers:2019nfi}.

\section{Temperley-Lieb algebra}
\label{sec:TLalgebra}
In this appendix, we will summarize basic aspects of the Temperley-Lieb (TL) algebra and its representation theory.
There are many different flavors of classifying the representations of TL algebra, with the two most prominent approaches being:
1. As a quotient of the Iwahori-Hecke algebra \cite{10.2307/1971403,10.1112/S0024611597000282}, or 2. As a algebraic module acting on a specific class of ``link diagrams''.
We will take the second approach here, also known under the name of \textit{standard module} \cite{Ridout:2012gg,Westbury1995}, or \emph{cell modules} \cite{Graham1996}.
The goal of this appendix is to provide a pedagogical review on various properties of the TL algebra and the standard module used in our computations.
Due to the length constraint we will omit the proof of many important theorems. Interested reader can refer to \cite{Ridout:2012gg} for a more complete and rigorous treatment on this subject.

\subsection{Basic definitions}
First, we define, for any positive integer $n$, an \textit{n-diagram}. It is constructed by first drawing two parallel lines, each with $n$ marked points. Then, the set of $2n$ points are connected pairwise via $n$ non-crossing strands that lies entirely within the space between two parallel lines.
We also define a pairwise product of two $n$-diagrams by concatenating two diagrams side by side and replacing each closed loop by a factor of $\chi$. For example, for $n=4$ we write:
\begin{equation}
  \begin{matrix}
    \includegraphics[width=.6\textwidth]{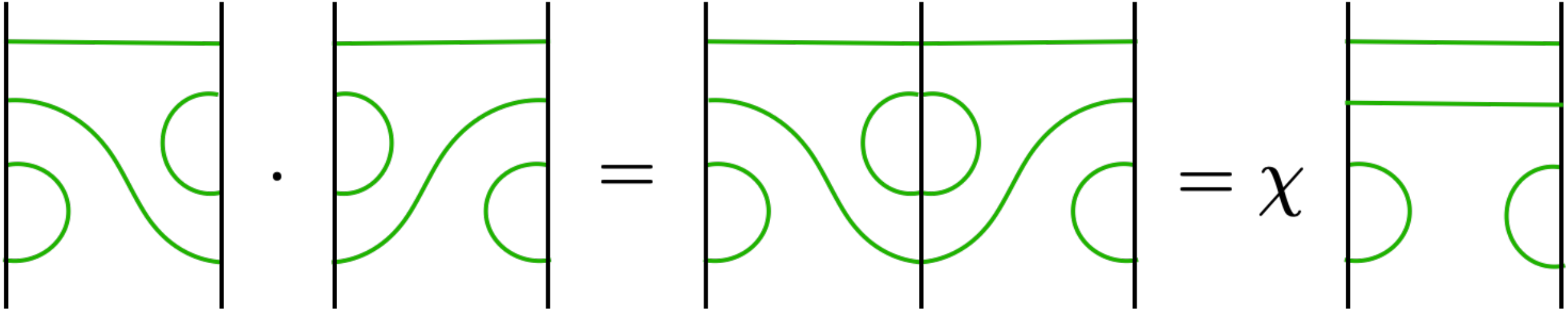}
  \end{matrix}
\end{equation}
This product is associative by construction.
Although the parameter $\chi$ is considered to be complex in more general settings, in what is relevant to this note we will assume $\chi\in \mathbb{R}_{+}$ and we will often consider the limiting case where $\chi\gg 1$.
\begin{definition}[Temperley-Lieb algebra]
\label{def:TL}
  The Temperley-Lieb algebra $\text{TL}_n$ for a positive integer $n$, is the (complex) vector space spanned by $n$-diagrams. This vector space is equipped with a pairwise product by taking the diagrammatic product from above and extend it to the whole vector space bilinearly. 
\end{definition}
\begin{lemma}
\label{lem:q_catalan}
  The dimension of $\text{TL}_n$ vector space is equal to the $n$-th Catalan number $C_n$.
\end{lemma}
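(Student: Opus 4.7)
The plan is to reduce the enumeration to the classical fact that non-crossing perfect matchings of $2n$ points on a circle are counted by $C_n$. By Definition~\ref{def:TL}, a basis element of $\text{TL}_n$ is a non-crossing pairing of the $2n$ marked points arranged on two parallel vertical lines, with strands drawn in the strip between them. My first step would be to continuously deform the two vertical lines into the upper and lower semicircles of a disk, carrying the marked points along the boundary and the strands along into the interior. Because this deformation is a homeomorphism of the ambient region, the non-crossing condition is preserved, and the result is a non-crossing perfect matching of $2n$ points on a circle. The inverse deformation straightens the circle back into two lines and recovers an $n$-diagram, giving a bijection between the basis of $\text{TL}_n$ and non-crossing matchings of $2n$ points on a circle.

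The second step is to enumerate non-crossing matchings by a direct recursion. Let $T_n$ denote this count, with $T_0 = 1$. Fix the first marked point on the circle and consider its matched partner; the non-crossing condition forces this partner to split the remaining $2n-2$ points into two arcs of even length $2k$ and $2n-2-2k$, each of which must be matched independently inside its arc. Summing over $k$ gives
\begin{equation}
T_n \;=\; \sum_{k=0}^{n-1} T_k \, T_{n-1-k},
\end{equation}
which is precisely the Catalan recursion, so $T_n = C_n$. Combined with the bijection above, $\dim \text{TL}_n = C_n$.

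There is no substantive obstacle: the only point requiring care is the topological check that the deformation of the strip into a disk genuinely preserves the non-crossing property and is invertible at the level of isotopy classes of strand configurations, but this is immediate since both sides are just isotopy classes of non-crossing chord diagrams in a disk. An alternative that bypasses the deformation altogether is to argue directly on $n$-diagrams: consider the strand emerging from the top marked point on the left line and condition on whether its partner is on the left line or the right line, and on the position of the partner; this splits the remaining strands into two smaller independent $n$-diagram problems and yields the same Catalan recursion. Either route gives the claimed dimension count.
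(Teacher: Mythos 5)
Your proof is correct, but it takes a different route from the paper's. You biject $n$-diagrams with non-crossing perfect matchings of $2n$ points on a circle (by deforming the strip into a disk) and then count those matchings with the classical Catalan recursion $T_n=\sum_{k=0}^{n-1}T_kT_{n-1-k}$. The paper instead folds the right-hand boundary line underneath the left to identify each $n$-diagram with an element of $\text{NC}_n$, the non-crossing \emph{permutations} of $n$ elements (the map $D$), and then enumerates $\text{NC}_n$ via a Schwinger--Dyson equation for a $q$-weighted generating function $C(q,z)$, recovering $C_n$ at $q=1$. Your argument is more elementary and entirely self-contained, and the deformation step you flag is indeed immediate since both sides are isotopy classes of non-crossing chord diagrams. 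What the paper's heavier route buys is reusable machinery: the bijection $D(\cdot)$ between $\text{NC}_n$ and TL diagrams is exactly what is needed to rewrite the replica sum as a TL trace in the main text, and the $q$-Catalan generating function $C(q,z)$ introduced in the counting is the central object in the large-$\chi$ analysis and the analytic continuation in $m$. So the paper's proof is less a minimal dimension count than a setup for later sections; as a proof of the stated lemma alone, yours is perfectly adequate and arguably cleaner.
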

\begin{proof}
  To find the dimension of $\text{TL}_n$, we need to count the total number of distinct $n$-diagrams for given $n$.
  There is a bijection $D(\cdot)$ between $n$-diagrams and $\text{NC}_n$, the non-crossing permutations on $n$ elements, defined as follows.
  First, one deforms the $n$-diagram $h$ by rotating the RHS of the diagram by 180 degrees and append it below the LHS.
  Having done so one now has a diagram with non-crossing strands connecting $2n$ marked points arranged on a single line.
  These diagrams are of one to one correspondence to the double line notations of non-crossing permutations on $n$ elements.
  This fact can be made clear by concatenating the left of the diagram with $n$ ``caps'' and assigning a number to each cap.
  After doing so, the diagram factors into a set of non-crossing closed blocks partitioning the set $\{1,\cdots,n\}$, which can be further identified to the cycle decomposition of a non-crossing permutation $D(h)\in\text{NC}_n$.
  We give a example of the map $D$ below.
  \begin{equation*}
    \includegraphics[width=.75\textwidth]{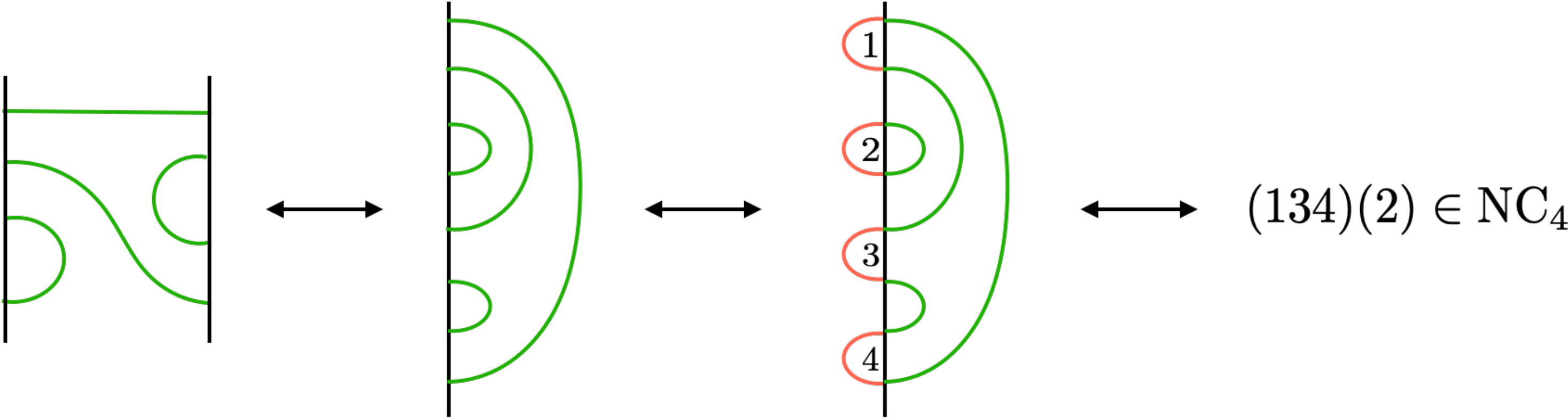}    
  \end{equation*} 
  The number of distinct non-crossing permutations on $n$ elements is given by Catalan numbers $C_n$.
  The detailed form of $C_n$ can be worked out using a simple generating function, but for reference here let us consider a slight generalization by weighing each cycle of $D(h)$ by a power of $q$. Define
  \begin{equation}
    \begin{matrix}
        \includegraphics[scale=.3]{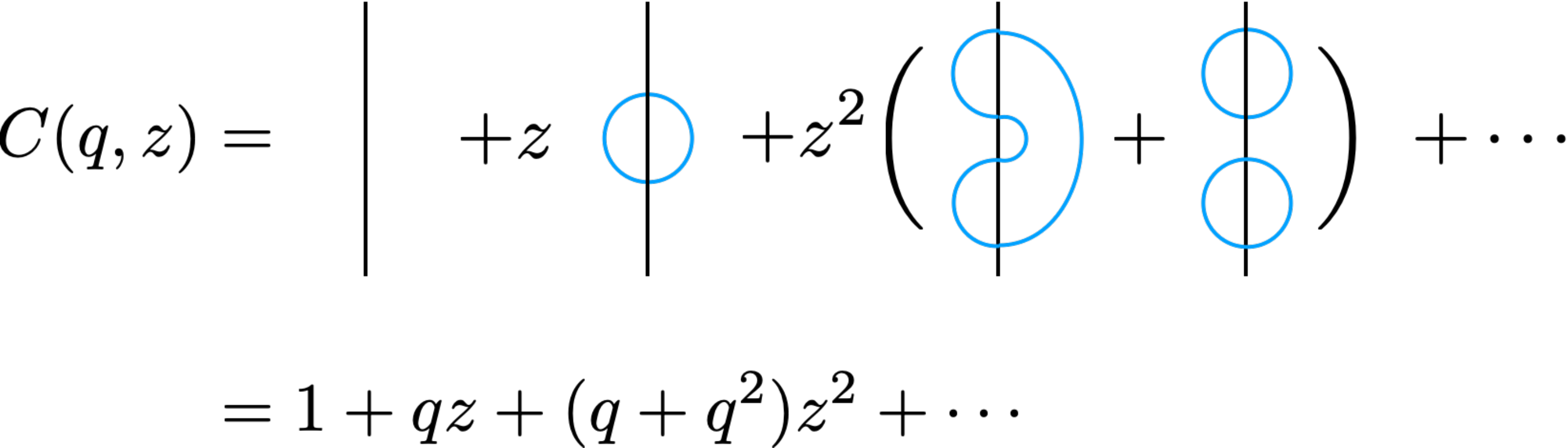}      
    \end{matrix}
  \end{equation}
  to be the $q-$weighted generating function for $NC_n$ and $F(q,z)$ the generating function for connected diagrams, i.e.
  \begin{equation}
    C(q,z) = 1+F(q,z)+F(q,z)^2+\cdots = \frac{1}{1-F(q,z)}
  \end{equation}
  $C(q,z)$ and $F(q,z)$ satisfies the following Schwinger-Dyson equation
  \begin{equation}
    \begin{matrix}
      \includegraphics[scale=.3]{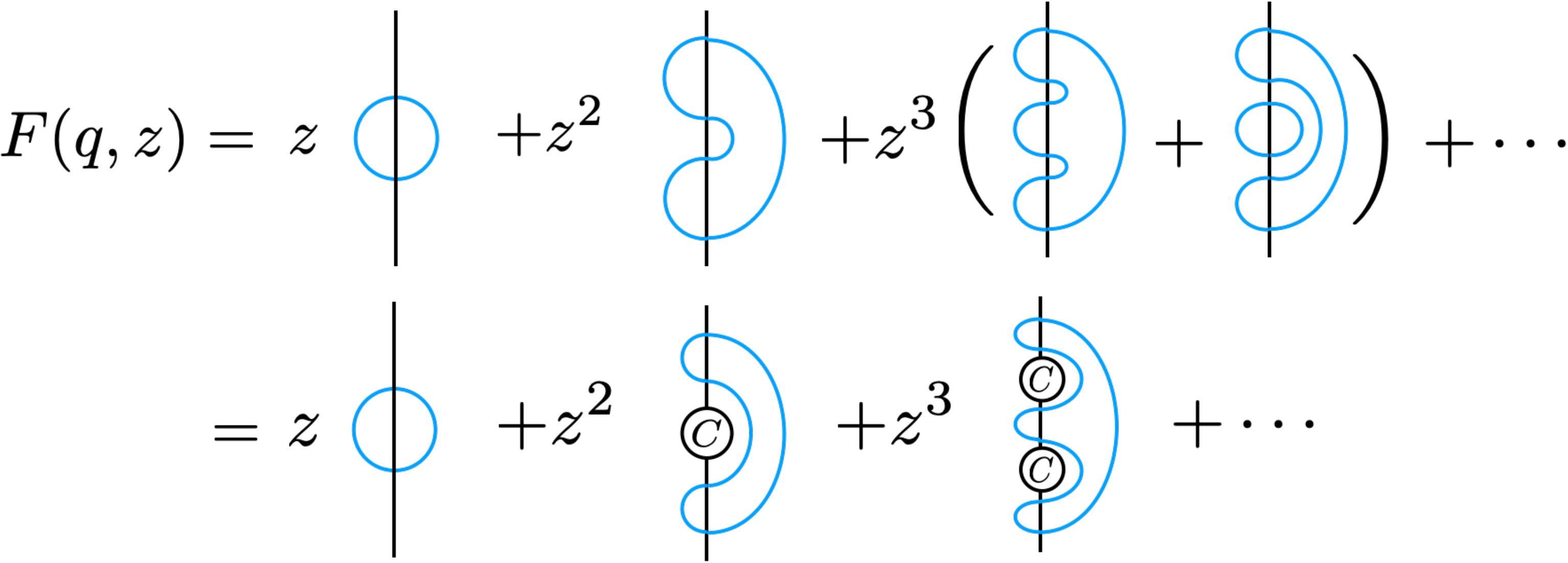}      
    \end{matrix}
  \end{equation}
  or
  \begin{equation}
    F(q,z) = \frac{qz}{1-zC(q,z)}
  \end{equation}
  which can be solved to give
  \begin{equation}
    C(q,z) = \frac{1-z(q-1)-\sqrt{(1+z(q-1))^2-4qz}}{2z}
  \end{equation}
  where we pick the sign of the solution by matching the small $z$ behavior.
  The expansion coefficients for $C(q,z)$ are called \textit{q-Catalan numbers}, which can be extracted using the contour integral trick:
  \begin{equation}
    C_n(q) = \frac{1}{2\pi i} \oint \frac{dz}{z^{n+1}}C(q,z)
    =
    \begin{cases}
      q \;_2F_1(1-n,-n;2;q), \quad &q\le 1, \\
      q^n\;_2F_1(1-n,-n;2;q^{-1}), \quad &q > 1
    \end{cases}
  \end{equation}
  The ordinary Catalan number $C_n$ is related to $C_n(q)$ by
  \begin{equation}
    C_n = C_n(q=1) = \frac{1}{n+1}
    \begin{pmatrix}
      2n \\ n
    \end{pmatrix}
  \end{equation}
\end{proof}

It is easy to check that the diagram
\begin{equation}
    \begin{matrix}
      \includegraphics[scale=.33]{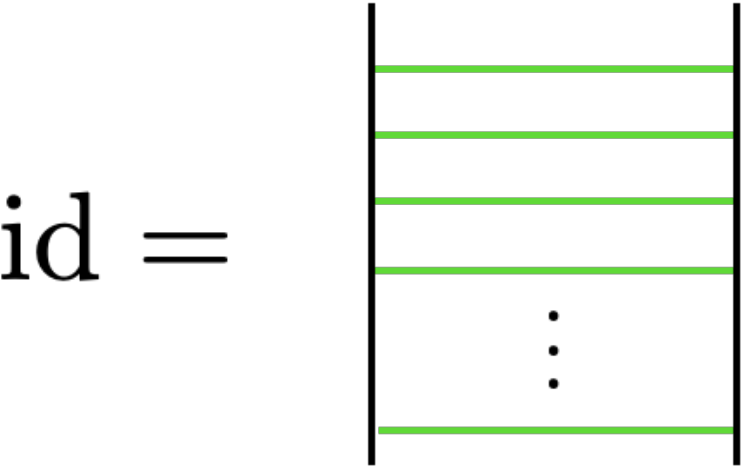}      
    \end{matrix}
\end{equation}
acts as an unit in the algebra.
Aside from the unit, every $n$-diagram can be obtained by multiplying a set of generators $e_1,e_2,\cdots,e_{n-1}$, where
\begin{equation}
  \begin{matrix}
      \includegraphics[scale=.33]{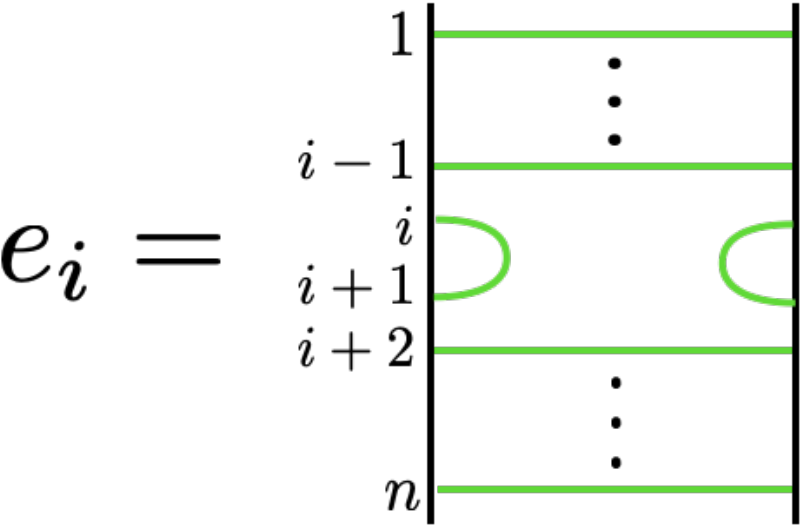}      
    \end{matrix}
\end{equation}
These generators satisfy the following relations
\begin{align}
    e_i^2&=\begin{matrix}
      \includegraphics[scale=.3]{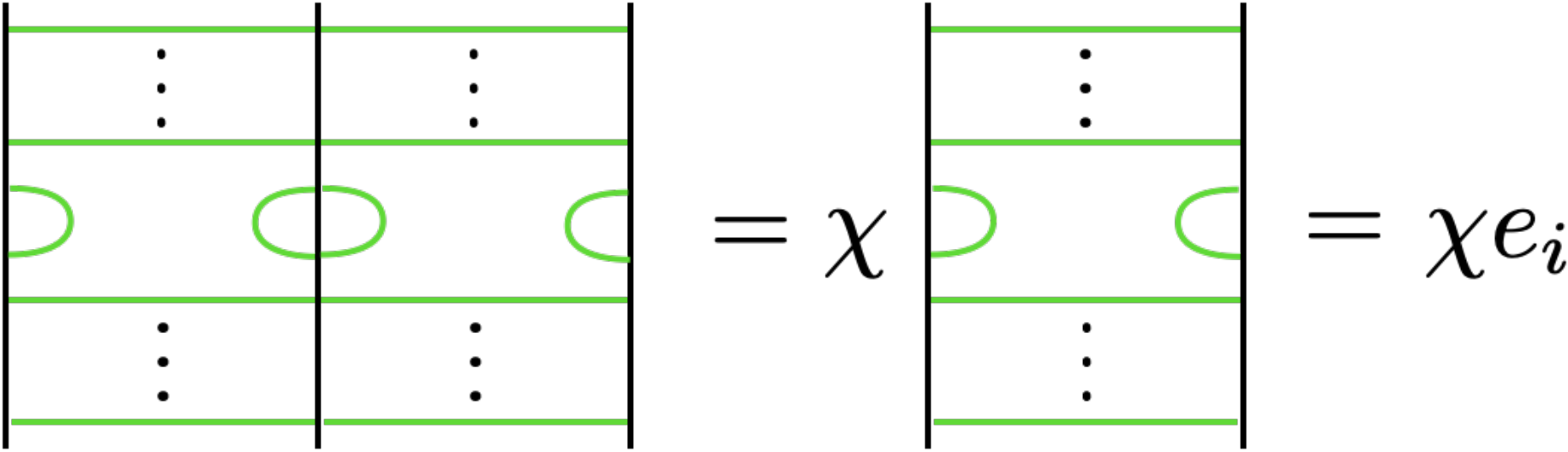}      
    \end{matrix}, \\
    e_ie_{i+1}e_i&=\begin{matrix}
      \includegraphics[scale=.3]{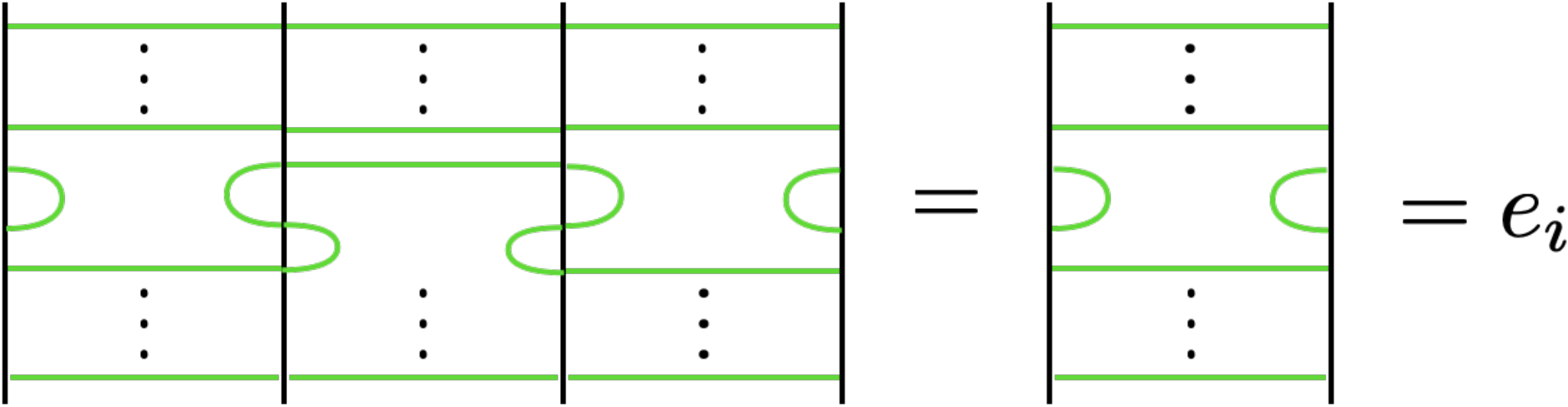}      
    \end{matrix}, \\
    e_ie_j&=\begin{matrix}
      \includegraphics[scale=.3]{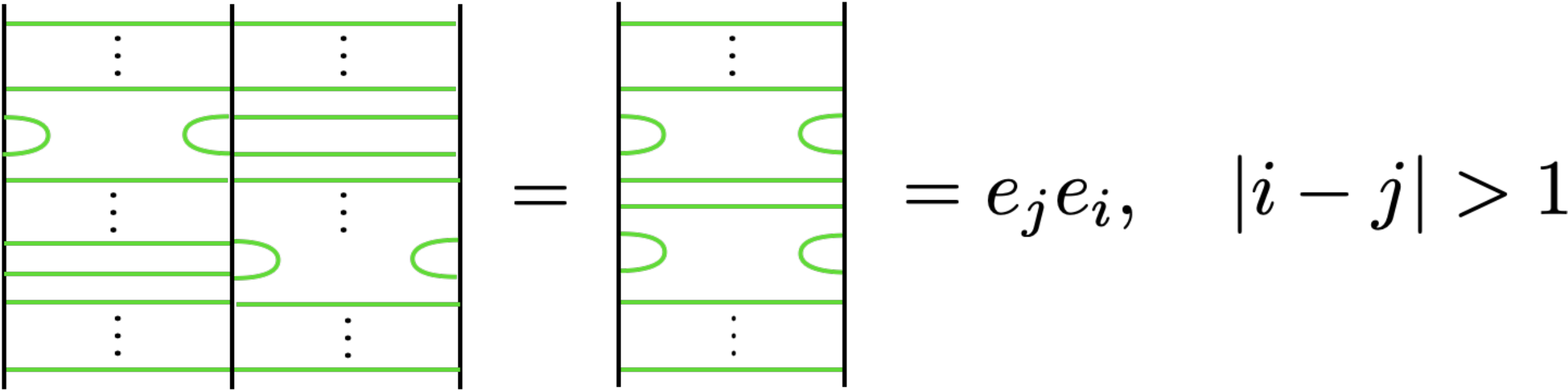}      
    \end{matrix},
\end{align}
and also $e_ie_{i-1}e_i=e_i$, which can be verified through the upside-down mirror of the second identity.
This motivates the another more abstract definition of TL algebra:
\begin{definition}[alternative definition of Temperley-Lieb algebra]
  The Temperley-Lieb algebra $\text{TL}_n$ is the algebra generated by a unit \text{id} and a set of $n-1$ generators $e_i$ with $i=1,\cdots,{n-1}$ which satisfies the following relations:
  \begin{align}
  \label{eq:TL_generator}
    e_i^2 = \chi e_i, \quad e_i e_{i\pm 1}e_i = e_i, \quad e_i e_j = e_j e_i\quad \text{if} |i-j|>1.
  \end{align}
\end{definition}
The proof that the algebra generated from this definition is isomorphic to the diagrammatic one (Definition~\ref{def:TL}) can be found in standard references of the subject.

We conclude this subsection by introducing a trace function on $\text{TL}_n$.
\begin{definition}
  For a positive integer $n$, $\text{Tr}_{\text{TL}_n}: \text{TL}_n\to \mathbb{C}$ is a cyclic linear function on $\text{TL}_n$.
  It is defined diagrammatically on an $n$-diagram by closing the strands on opposing marked points on two lines and assign a power of $\chi$ for each closed loop obtained this way.
  $\text{Tr}_{\text{TL}_n}$ is cyclic from definition, and it is extended to the full algebra using linearity.
\end{definition}
For example,
\begin{equation}
  \begin{matrix}
      \includegraphics[scale=.35]{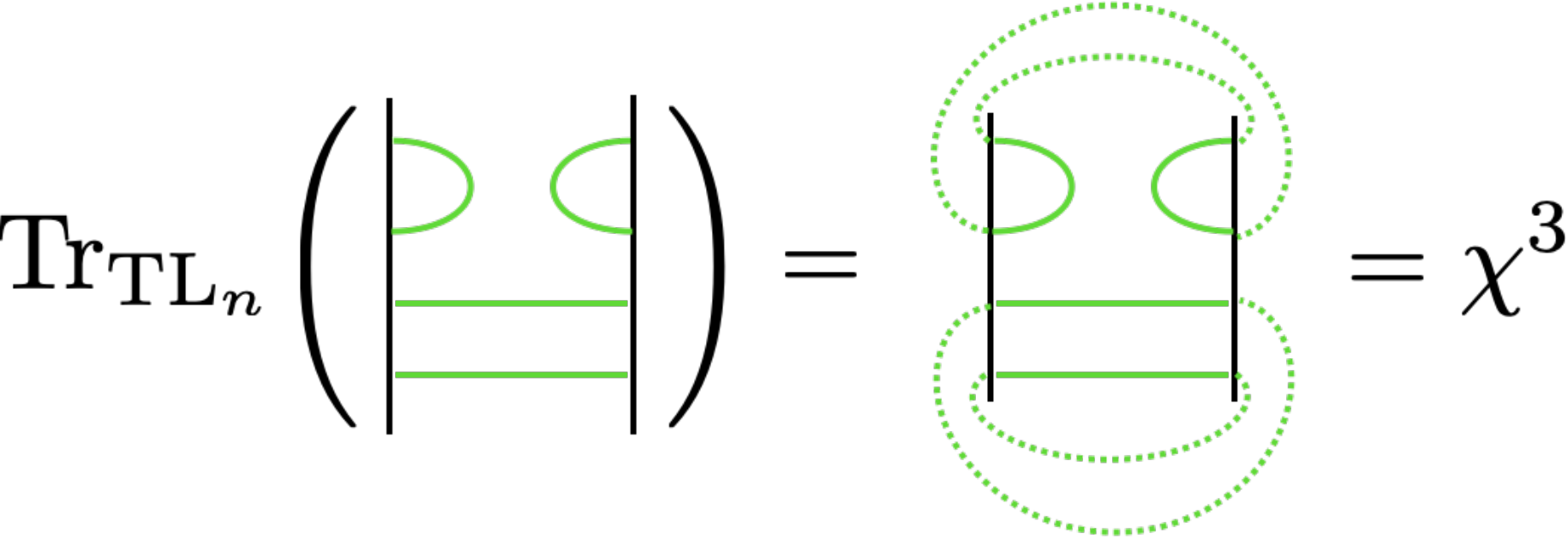}
  \end{matrix}.
\end{equation}
We also define a normalized version of this trace, denoted by $\text{Tr}_{\widehat{\text{TL}}}=\chi^{-n}\textup{Tr}_{\textup{TL}_n}$. 
This normailzed trace is consistent with respect to all $n$ and satisfies the following properties:
\begin{align}
\label{eq:tr_char}
    \text{Tr}_{\widehat{\text{TL}}}(\textup{id}) = 1, \quad 
    \text{Tr}_{\widehat{\text{TL}}}(h) = \chi \text{Tr}_{\widehat{\text{TL}}}(he_{n-1})
\end{align}
where $h\in \text{TL}_{n-1} \subset \text{TL}_n$ is an element in the subalgebra of $\text{TL}_n$ consisting of elements such that the $n$-th marked point on both sides of its diagram are connected via a strand. 
This fact can be illustrated using the diagrams below. 
\begin{equation}
    \begin{matrix}
        \includegraphics[scale=.35]{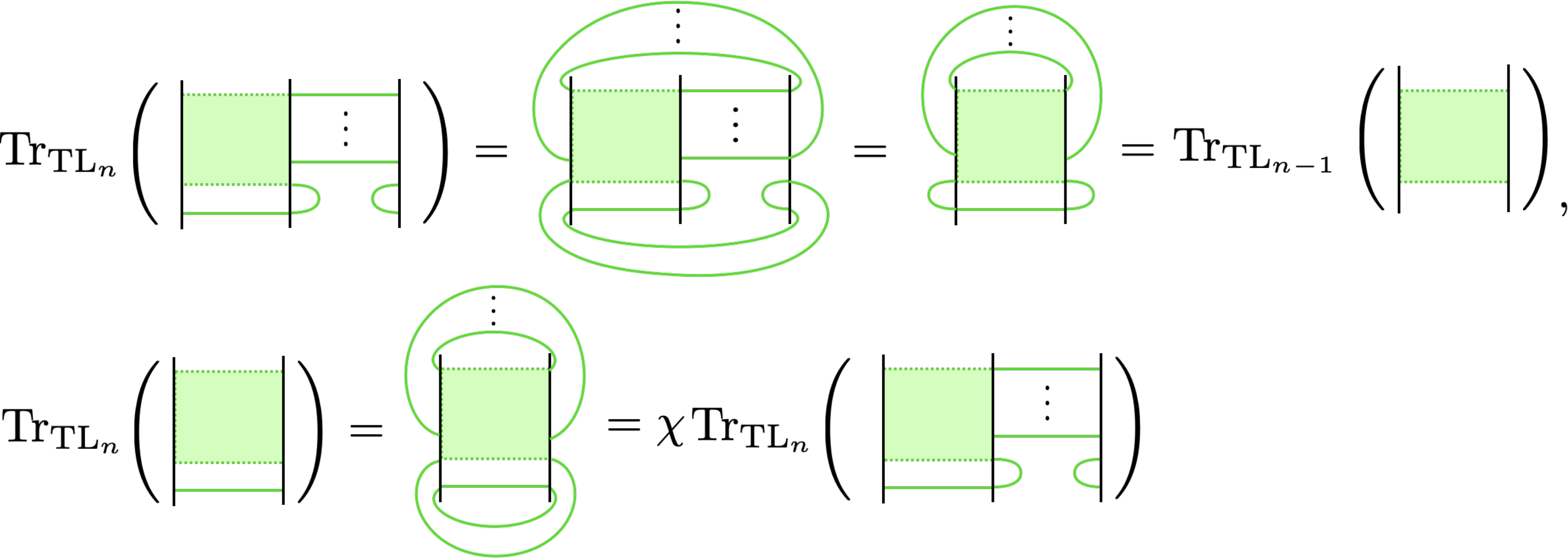}
    \end{matrix}
\end{equation}
where we used colored blocks to indicate arbitrary connections in the region.
After putting back the normalization factors one recovers \Eqref{eq:tr_char}.
In fact the property \Eqref{eq:tr_char} completely characterizes $\text{Tr}_{\widehat{\text{TL}}}$, as shown in the following lemma:
\begin{lemma}
\label{lem:trace}
    $\textup{Tr}_{\widehat{\textup{TL}}}$ is the unique linear function on $\textup{TL}_n$ that is cyclic $(\textup{Tr}(ab) = \textup{Tr}(ba))$ and such that \Eqref{eq:tr_char} is true.
\end{lemma}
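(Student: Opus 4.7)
The plan is to establish uniqueness by induction on $n$. For the base case $n=1$, the algebra $\text{TL}_1$ is one-dimensional and spanned by $\text{id}$, so the normalization $\text{Tr}(\text{id}) = 1$ completely determines the trace. For the inductive step, I assume the claim holds for $\text{TL}_{n-1}$, viewed as a subalgebra of $\text{TL}_n$ via the standard inclusion (adding a trivial through-strand at position $n$); note that the normalized trace is consistent across this inclusion, so once uniqueness on $\text{TL}_{n-1}$ is granted, $\text{Tr}$ is fixed on the subalgebra.

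The central structural input is a decomposition lemma: every diagrammatic basis element $d$ of $\text{TL}_n$ either has a through-strand joining the bottom marked points on both sides (in which case $d \in \text{TL}_{n-1}$), or admits a factorization $d = a\, e_{n-1}\, b$ with $a, b \in \text{TL}_{n-1}$. I would prove this by a direct combinatorial argument: if the bottom marked points on the two sides are not joined by a through-strand, then the non-crossing condition forces a cap involving the bottom point to be isolatable from the rest of the diagram on at least one side, and by cutting the diagram along a horizontal line just above that cap, one identifies the $e_{n-1}$ factor and the two pieces $a,b$ supported on the upper $n-1$ marked points.

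Granting the decomposition lemma, uniqueness follows mechanically. For $h \in \text{TL}_{n-1}$, $\text{Tr}(h)$ is fixed by the inductive hypothesis. For an element of the form $a\, e_{n-1}\, b$ with $a, b \in \text{TL}_{n-1}$, cyclicity gives $\text{Tr}(a\, e_{n-1}\, b) = \text{Tr}(b a\, e_{n-1})$, and because $ba \in \text{TL}_{n-1}$, the Markov-type reduction \Eqref{eq:tr_char} gives $\text{Tr}(ba\, e_{n-1}) = \chi^{-1}\,\text{Tr}(ba)$, which is determined by the inductive hypothesis. Extending linearly, the trace is uniquely fixed on all of $\text{TL}_n$, and by construction it agrees with $\text{Tr}_{\widehat{\text{TL}}}$ on every basis diagram, completing the induction.

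I expect the decomposition lemma to be the main obstacle: although visually obvious, turning it into a rigorous claim requires care, because the bottom marked points can be connected to strands that weave deep into the diagram, and one must show the cap can always be extracted without disturbing the non-crossing structure of the rest. A cleaner alternative route is to argue using the generator presentation \Eqref{eq:TL_generator}: any word in $e_1, \ldots, e_{n-1}$ that involves $e_{n-1}$ at least once can be normal-ordered, using the braid-like relations, into the form $a\, e_{n-1}\, b$ with $a, b$ words in $e_1, \ldots, e_{n-2}$; the words not involving $e_{n-1}$ at all manifestly sit in $\text{TL}_{n-1}$. This algebraic route trades diagrammatic intuition for slightly longer bookkeeping but sidesteps the delicate geometric argument.
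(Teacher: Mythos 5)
Your proof is correct and follows essentially the same route as the paper's: both reduce the trace on $\text{TL}_n$ to the trace on $\text{TL}_{n-1}$ by isolating a single occurrence of $e_{n-1}$, cycling it next to an element of the subalgebra via cyclicity, applying the Markov relation \Eqref{eq:tr_char}, and then iterating/inducting down the tower $\text{TL}_1\subset\cdots\subset\text{TL}_n$. Your primary packaging via the decomposition $\text{TL}_n=\text{TL}_{n-1}+\text{TL}_{n-1}\,e_{n-1}\,\text{TL}_{n-1}$ is simply a cleaner statement of the reduction the paper carries out on words in the generators (indeed your ``alternative route'' is precisely the paper's argument), so there is no substantive difference.
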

\begin{proof}
Every element $h\in \text{TL}_{n}$ can be written as a product of finitely many generators $h = e_{i_1}\cdots e_{i_k}$ where $i_k\in \{1,\cdots,n-1\}$.
The $e_{i_j}$'s in the string can be further made unique by using \Eqref{eq:TL_generator} to permute the list and eliminate any duplicates.
To evaluate $\textup{Tr}_{\widehat{\textup{TL}}}(h)=\textup{Tr}_{\widehat{\textup{TL}}}(e_{i_1}\cdots e_{i_k})$, we use cyclicity of the trace to cycle the possible $e_{n-1}$ to the end of the list, and we pick up a factor of $\chi$ if there is one.
Now since the new list of generators does not contain $e_{n-1}$, the new element $h'$ obtained by multiplying the new list of generators is in the subalgebra $\text{TL}_{n-1}\subset TL_{n}$. Using \Eqref{eq:tr_char} we see that $\text{Tr}_{\widehat{\text{TL}}}(h)=\chi^b\text{Tr}_{\widehat{\text{TL}}}(h')$ where $b=0$ if there is no $e_{n-1}$ in the generating string of $h$ and $b=-1$ if there is one.
Now by repeated iteration of this whole process $n-1$ times, one can eliminate all the generators of $h$ and obtain $\text{Tr}_{\widehat{\text{TL}}}(h) = \chi^{-k}\text{Tr}_{\widehat{\text{TL}}}(\text{id})=\chi^{-k}$ where $k$ is the length of the generating list $h = e_{i_1}\cdots e_{i_k}$ with non-duplicating generators. Thus the condition \Eqref{eq:TL_generator} uniquely fixes the action of $\text{Tr}_{\widehat{\text{TL}}}$ on every element in $\text{TL}_n$.
\end{proof}

\subsection{The standard module}
To define an algebraic module, one must first form the vector space for the algebra to act on.
In the standard module of Temperley-Lieb algebra such vector space is spanned by a collection of diagrams called \textit{link states} (or \textit{cup diagrams} in Westbury \cite{Westbury1995}).

An $n$-diagram can be cut in half across the middle line and expose a number of ``defects'', each corresponding to a connection between a point in the left and another point in the right.
The number of defects, denoted $l$, is always even if $n$ is even and odd if $n$ is odd, which we keep track of using an integer $k\in\mathbb{Z}_{\ge0}$, by saying that $l=2k$ when $n$ is even and $l=2k+1$ when $n$ is odd.
We will call a half diagram obtained this way an $(n,k)$\emph{-link state}.
\footnote{Note that we label link states by $k$, (one half of) the number of \textit{defects}, as opposed to the common practice of labeling by the number of \textit{closed arcs} $p=\floor{n/2}-k$. The main advantage for this notation change is to make the various formulae regarding to different $k$ sector in the main text cleaner.}
A link state will be drawn with the lines on the left and any possible defects facing right, as shown in \figref{fig:linkstates}.
We will refer to the set of all $(n,k)$-link states as $\mathcal{B}^{(n)}_{k} $.
\begin{figure}[h]
  \centering
  \includegraphics[scale=.35]{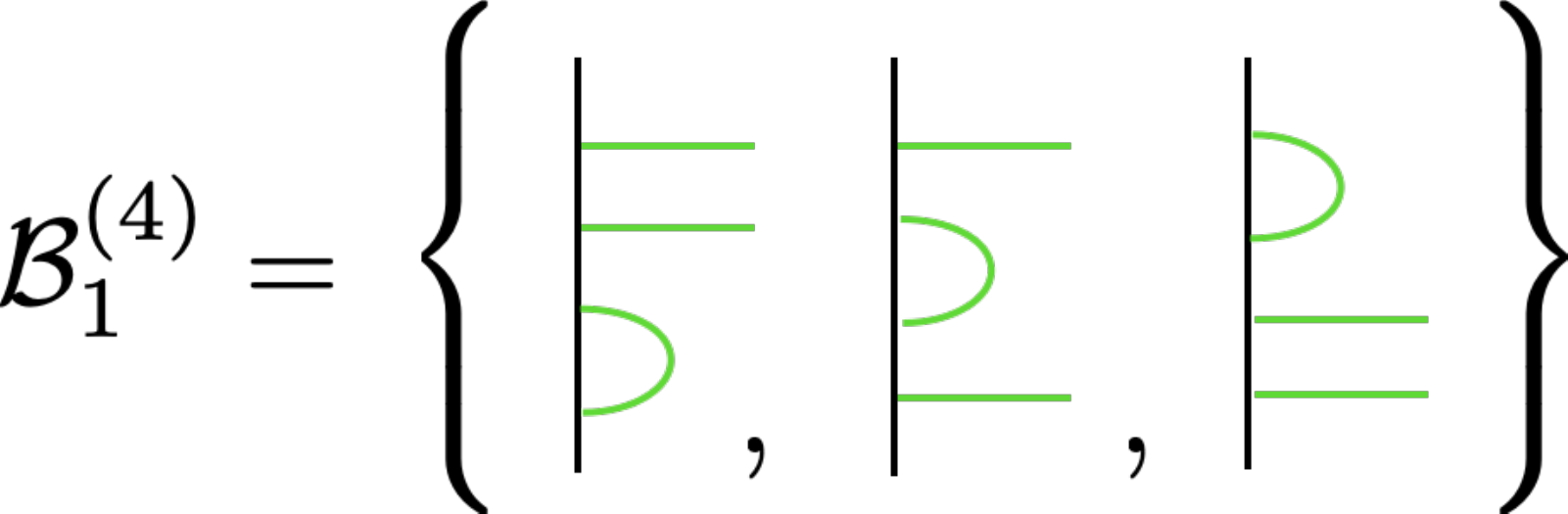}
  \caption{All the possible $(4,1)$-link states.}
  \label{fig:linkstates}
\end{figure}

\begin{lemma}
\label{lem:module_dim}
  For integer $n$ and $k\le \lfloor n/2 \rfloor$, the number of distinct $(n,k)$-link states is equal to the number of standard Young tableaux
  of shape $(\ceil{n/2}+k,\floor{n/2}-k)$.
  \begin{equation}
     |\mathcal{B}^{(n)}_k| = \#\textup{SYT}\left(\ceil*{\frac{n}{2}}+k,\floor*{\frac{n}{2}}-k\right)
  \end{equation}
\end{lemma}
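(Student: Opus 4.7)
The plan is to exhibit an explicit bijection between $(n,k)$-link states and length-$n$ ballot sequences with $\ceil*{n/2}+k$ pluses and $\floor*{n/2}-k$ minuses, and then invoke the standard identification of such ballot sequences with SYT of the two-row shape $(\ceil*{n/2}+k,\floor*{n/2}-k)$.

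First, I would label the marked points of an $(n,k)$-link state by $1,2,\ldots,n$ from top to bottom and assign $\epsilon_i = -$ if point $i$ is the lower endpoint of an arc and $\epsilon_i = +$ otherwise, so that openers and defects both receive $+$. Since the link state has $\floor*{n/2}-k$ arcs, the number of minuses equals $\floor*{n/2}-k$ and the number of pluses equals $n - (\floor*{n/2}-k) = \ceil*{n/2}+k$, matching the required counts. Non-crossing of the arcs, together with the fact that a defect strand must reach the right edge without meeting any other strand (so it cannot be enclosed by an arc, else the two would have to cross), ensures that the partial sums $\sum_{j\le i}\epsilon_j$ stay non-negative at every $i$, so $(\epsilon_1,\ldots,\epsilon_n)$ is a valid ballot sequence.

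For the inverse, I would apply the canonical bracket matching to a ballot sequence: pair each minus with the most recent previously unmatched plus, and declare the pluses left unmatched at the end to be defects. The resulting arc pattern is automatically non-crossing, and the unmatched pluses occur precisely at positions where the running partial sum attains a new maximum, which forces them to lie outside every arc and thus to be legitimate defect positions. This provides a well-defined inverse and establishes the bijection.

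Finally, ballot sequences of length $n$ with $a=\ceil*{n/2}+k$ pluses and $b=\floor*{n/2}-k$ minuses are in evident bijection with SYT of shape $(a,b)$: place $i$ in row one if $\epsilon_i=+$ and in row two if $\epsilon_i=-$. The ballot condition is exactly the requirement that rows increase left-to-right and columns strictly increase top-to-bottom, so $|\mathcal{B}^{(n)}_k| = \#\,\textup{SYT}(a,b)$. The main obstacle I anticipate is a careful verification that the geometric non-crossing condition on defects is captured exactly by the \textit{unmatched-maximum} characterization used in the inverse map; once this short case analysis is done, the remaining steps are a direct unravelling of the definitions.
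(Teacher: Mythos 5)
Your proof is correct and follows essentially the same route as the paper's: both encode a link state as a ballot/lattice word (openers and defects as $+$ or $[$, closers as $-$ or $]$) and then invoke the standard bijection between such words and two-row standard Young tableaux. The only small imprecision is your characterization of the unmatched pluses as occurring at \emph{new maxima} of the partial sum --- the correct criterion is that the partial sum never subsequently drops below its value at that position (e.g.\ in $++-+$ the last $+$ is unmatched without setting a new maximum) --- but this does not affect the validity of the argument.
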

\begin{proof}
  Note that there is a bijection between a link state and a \emph{lattice word} consisting of left and right brackets $\{[,]\}$ such that every prefix has more opening brackets than closing brackets.
  For example, $[\,[\,[\,]\,[\,]\,]\,]\,[$ is a lattice word but $[\,]\,[\,]\,]\,[\,[\,]\,]$ is not since the prefix $[\,]\,[\,]\,]$ has more $]$'s than $[$'s.
  To put it in another way, there can never be a closing bracket without a previous matched opening bracket, but a standalone opening bracket is allowed.
  To see that this is true, simply assign an opening-closing pair to a closed strand in the link state and any standalone opening brackets to defects, e.g.
  \begin{equation}
    \begin{matrix}
       \includegraphics[scale=.4]{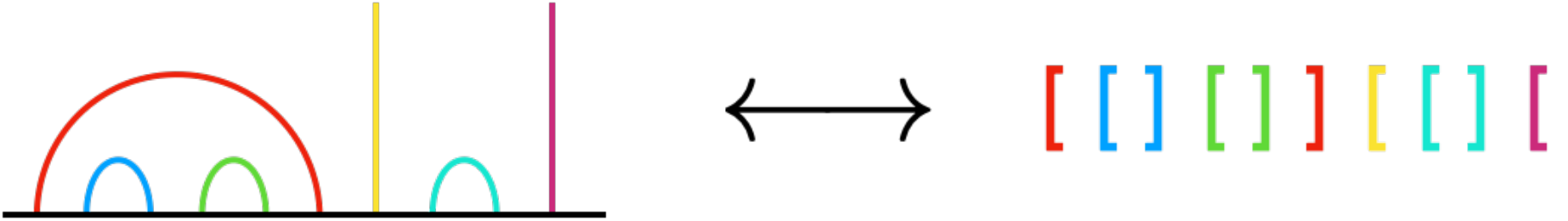}   
    \end{matrix}
  \end{equation}
  where we have color coded the brackets to the matching strands.
  This correspondence works in either way so it is really a bijection.
  Now it is a standard result that the number of lattice words are counted by standard Young tableaux:
  Given a standard Young tableau of shape $(p,q)$, construct the following word $(s_1,\cdots,s_{p+q})$ via
  \begin{equation}
    s_i =
    \begin{cases}
      ~[, \quad \text{if $i$ is in first row}, \\
      ~], \quad \text{if $i$ is in second row}.
    \end{cases}
  \end{equation}
  Then at any given point $i$ there can never be more closing brackets than opening brackets since the tableaux is standard.
  The number of total brackets are $p+q $ whereas the number of standalone opening brackets are $p-q$ from which one immediately sees that $p=\ceil{n/2}+k$ and $q=\floor{n/2}-k$.
\end{proof}

For a $n$-diagram $h\in \text{TL}_n$, we define a left action of $h$ on a link state $v$ in a similar fashion by concatenating $h$ and $v$ from left to right, removing any disconnected open strands and assign a factor of $\chi$ for each closed loop.
We illustrate this definition with an example:
\begin{equation}
  \label{eq:LS_action}
  \begin{matrix}
    \includegraphics[scale=.3]{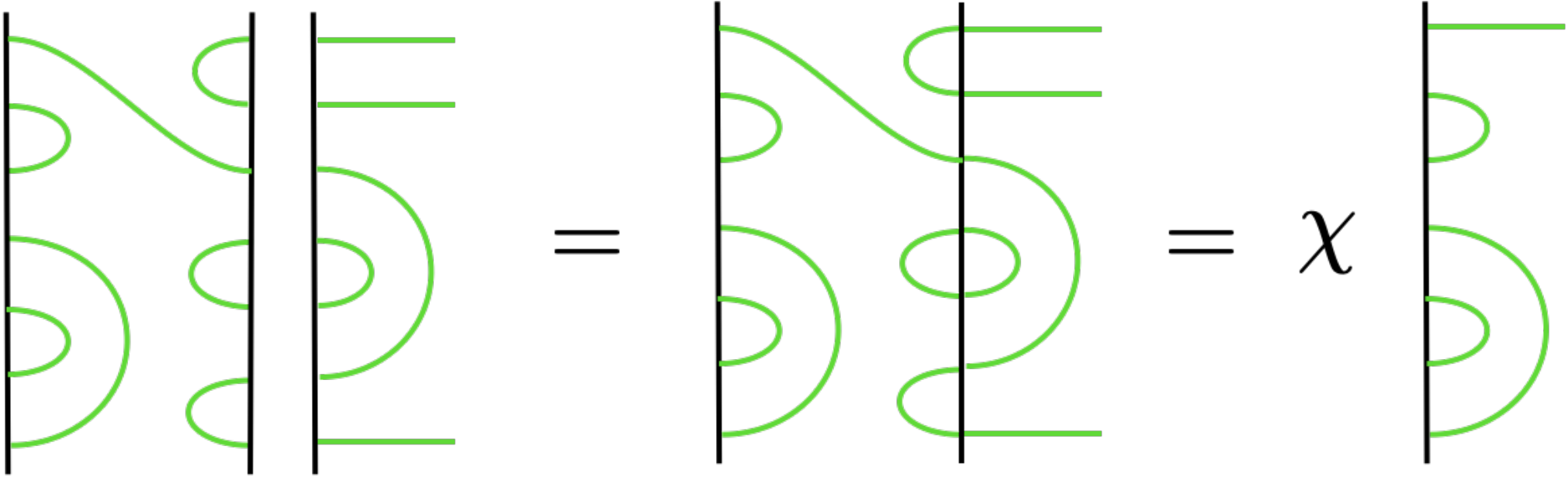}
  \end{matrix}
\end{equation}
This action naturally defines a algebraic module (representation) on the vector space spanned by link states.
We will refer to such module as the \textit{link module}.
\begin{definition}[link module]
  For given integer $n$, the \textit{link module}, denoted by $\mathcal{M}^{(n)}$, is a left $\text{TL}_n$-module on the collective vector space spanned by all link states whose action is given by the diagrammatic action above and is extended to the whole vector space through linearity.
\end{definition}

Note that the number of defects is not necessarily conserved under the action. (\Eqref{eq:LS_action} is an example where it decreases by two)
However, it can only close defects in pairs, so $k$ is always non-increasing under this action.
Acknowledging this fact, one can identify $\text{TL}_n$-submodules $\mathcal{M}^{(n)}_k\subseteq\mathcal{M}^{(n)}$ spanned by all $(n,k')$-link states with $k'<k$. There is a natural chain of submodules
\begin{equation}
  \mathcal{M}^{(n)}_0 \subset \mathcal{M}^{(n)}_1 \subset \cdots \subset \mathcal{M}^{(n)}_{\floor{n/2}} = \mathcal{M}^{(n)}
\end{equation}
Consecutive quotients on this chain determines quotient $\text{TL}_n$-modules called \emph{standard modules}:
\begin{definition}[The standard module]
  For integer $n$ and $k<\floor{n/2}$, the $(n,k)$-\emph{standard module}, denoted by $\mathcal{V}^{(n)}_k$, is the quotient module
  \begin{equation}
    \mathcal{V}^{(n)}_k = \frac{\mathcal{M}^{(n)}_{k+1}}{\mathcal{M}^{(n)}_k}.
  \end{equation}
\end{definition}
The standard module vector space (which we will also refer to as $\mathcal{V}^{(n)}_k$ by an abuse of notation) is the coset $[\mathcal{M}^{(n)}_{k+1}:\mathcal{M}^{(n)}_{k}]$, which is isomorphic to the vector space spanned by $(n,k)$-link states.
In this regard, we will often talk about the action of $(n,k)$-standard module on $(n,k)$-link states directly, forgetting the fact that it is really the coset that $\mathcal{V}^{(n)}_k$ is acting on.
Diagrammatically, the action of an $n$-diagram $h$ on a $(n,k)$-link state $v$ is given by the same action as in the link module, but with the further requirement that $h$ maps $v$ to zero whenever the number of defects $k$ decreases under the action.

There is a natural inner product $\Braket{\cdot,\cdot}$ on the vector space $\mathcal{V}^{(n)}_k$ defined as follows:
\begin{definition}
  \label{def:inner}
  For $x,y\in\mathcal{B}^{(n)}_k$, the inner product between $x$ and $y$, denoted by $\Braket{x,y}$, is obtained by first flipping $x$ around the vertical axis and identifying it with the vertical border of $y$.
  Then, define $\Braket{x,y}$ by assigning a factor of $\chi$ for every closed loop obtained this way.
  Furthermore, we require $\Braket{x,y}$ to be nonzero only when every defect of $x$ is connected to a defect of $y$.
  This product is extended to the full vector space $\mathcal{V}^{(n)}_k$ by requiring the form being linear in the $y$ and anti-linear in $x$.
\end{definition}
As an example, we have
\begin{align}
  \begin{matrix}
      \includegraphics[scale=.3]{Vk_inner}
  \end{matrix}, 
\end{align}
Note that the link state basis $\mathcal{B}^{(n)}_k$ is \emph{not} orthogonal under this inner product.
However, they are approximate orthogonal at large $\chi$, with corrections $\sim O(1/\chi)$.
This inner product is invariant under the TL action.
\begin{lemma}
  For all $x,y\in \mathcal{V}^{(n)}_k$ and $h\in \text{TL}_n$,
  \begin{equation}
    \Braket{x,Uy} = \Braket{U^\dag x,y},
  \end{equation}
  where the ``adjoint'' $U^\dag$ is obtained by flipping the $n$-diagrams across horizontally and conjugating the complex coefficients.
\end{lemma}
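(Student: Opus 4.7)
The plan is to reduce the identity to a purely diagrammatic statement about planar diagrams by expanding everything in the link-state and $n$-diagram bases. Since $\braket{\cdot,\cdot}$ is sesquilinear and both $U\mapsto Uy$ and $U\mapsto U^\dagger x$ are (conjugate-)linear in $U$, it is enough to verify the equation when $x,y\in\mathcal{B}^{(n)}_k$ are single link states and $U=h$ is a single $n$-diagram in $\mathrm{TL}_n$. For such basis elements every quantity involved is manifestly real, so the complex conjugation built into ``$\dagger$'' plays no role and can be ignored.

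The key step is then to read off what $\braket{x,hy}$ and $\braket{h^\dagger x, y}$ each produce diagrammatically. By Definition~\ref{def:inner}, $\braket{x,hy}$ is obtained by stacking three pieces horizontally from left to right: the horizontal reflection $\bar x$ of $x$, then $h$, then $y$, and counting a factor of $\chi$ for every closed loop, with the stipulation that the result is nonzero only when every defect of $\bar x$ is matched to a defect of $y$ through the middle region. Similarly, $\braket{h^\dagger x, y}$ is obtained by first producing $h^\dagger x$, which by definition of $\dagger$ means stacking $\bar h$ (the horizontal flip of $h$) to the left of $x$, then forming the inner product with $y$, which horizontally flips the resulting link state and abuts it against $y$. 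Carrying out the two flips on the combined picture $\bar h \cdot x$ gives precisely $\bar x \cdot h$ on the left of $y$, i.e.\ the same three-piece stack $\bar x \,h\, y$ as before.

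Once this geometric identification is in place, one checks that the nonvanishing conditions also coincide: on the left-hand side one needs every defect of $\bar x$ to thread through $h$ and terminate on a defect of $y$; on the right-hand side one needs every defect of the link state $h^\dagger x$ (which are exactly the defects of $x$ that survived the action of $\bar h$) to terminate on a defect of $y$ after the flip. Both conditions say the same thing, namely that no defect of $y$ is ``capped off'' by an arc inside $\bar x\,h$, equivalently that $k$ defects traverse the whole combined diagram. With identical diagrams and identical nonvanishing conditions, the loop counts agree, and both sides equal the same power of $\chi$.

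The only potentially tricky point is bookkeeping: making sure that the ``horizontal flip'' used to define $\dagger$ and the ``horizontal flip'' used in $\braket{\cdot,\cdot}$ are applied consistently so that they really do cancel to give the same planar picture, and that the quotienting from the link module to the standard module $\mathcal{V}^{(n)}_k$ (which zeros out configurations in which defects get closed off) is respected on both sides. Both subtleties are handled uniformly by the defect-matching clause of Definition~\ref{def:inner}. Extending by (anti-)linearity in $x,y$ and linearity in $U$ then yields the lemma.
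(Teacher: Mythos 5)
The paper does not actually supply a proof of this lemma: it is one of the statements the appendix leaves unproved (``Due to the length constraint we will omit the proof of many important theorems''), so there is no in-paper argument to compare against. Your proof is correct and complete on its own terms. The reduction to basis elements is legitimate on both sides: $\braket{x,Uy}$ is linear in $U$ because the inner product is linear in its second slot, and $\braket{U^\dagger x,y}$ is also linear in $U$ because the conjugation in $\dagger$ is undone by the antilinearity of the first slot, so checking single diagrams and single link states (where everything is a real power of $\chi$) suffices. The double-flip identity $\overline{\bar h\, x}=\bar x\, h$ correctly identifies both sides with the one planar picture $\bar x\,h\,y$, and the only genuinely nontrivial point --- that the two different orders of evaluation impose the same vanishing condition --- is handled correctly: a strand that caps off two defects on one side never touches the opposite boundary, so it survives into the full diagram and violates the defect-matching clause there as well; hence both sides vanish precisely when the $2k$ defects fail to be matched across the whole diagram, and otherwise both equal $\chi$ to the total number of closed loops (the loop count being independent of the order in which the three pieces are glued).
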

Along with the inner product we define another sesquilinear form $|\cdot~\cdot|:\mathcal{V}^{(n)}_k\times\mathcal{V}^{(n)}_k\to \text{TL}_n$ that takes two $(n,k)$-link states and form a $n$-diagram.
\begin{definition}
  \label{def:outer}
  For $x,y\in\mathcal{B}^{(n)}_k$, denote $|x~y|$ to be the unique $n$-diagram formed by flipping $y$ across the vertical axis and identifying its defects with the defects of $x$. This form is extended to the full vector space in the same way as in Definition~\ref{def:inner}.
\end{definition}
For example,
\begin{equation}
  \begin{matrix}
    \includegraphics[scale=.35]{Vk_outer}.
  \end{matrix}
\end{equation}
The usefulness of Definition~\ref{def:outer} comes from the following lemma.
\begin{lemma}
  If $x,y,z\in \mathcal{V}^{(n)}_k$, then
  \begin{equation}
    \label{eq:outer_inner}
    |x~y|\,z = \Braket{y,z} x
  \end{equation}
\end{lemma}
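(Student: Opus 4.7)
By sesquilinearity of both sides, it suffices to establish the identity when $x,y,z$ are basis link states in $\mathcal{B}^{(n)}_k$. The plan is to interpret the left-hand side diagrammatically and observe that the concatenation naturally factorizes into a ``right piece'' that reproduces $\braket{y,z}$ and a ``left piece'' that is exactly $x$.

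First I would unpack $|x~y|$ according to Definition~\ref{def:outer}: it is the $n$-diagram consisting of the link state $x$ on the left half and the mirror image of $y$ on the right half, with the $2k$ defects of each glued across the central vertical line in the unique planar (non-crossing) way. To compute $|x~y|\,z$, I concatenate this diagram on the left of $z$. The concatenation happens along the vertical line separating the right boundary of $|x~y|$ from the left boundary of $z$. Crucially, this interface only sees the \emph{mirrored $y$} part of $|x~y|$ on one side and $z$ on the other. So the portion of the picture to the right of the central vertical line of $|x~y|$, together with $z$, forms exactly the diagram used in Definition~\ref{def:inner} to compute $\braket{y,z}$.

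Next I would separate the two contributions to the concatenation. Any closed loop produced by the concatenation lies entirely in the right half plus $z$, and each contributes a factor of $\chi$; these are precisely the loops counted by $\braket{y,z}$. The remaining strands of the concatenated picture are those attached to the left boundary of $|x~y|$, which by construction is the link state $x$. Finally, I need to handle the defects. In $\mathcal{V}^{(n)}_k$, the standard module action sends a link state to zero whenever the number of defects drops after concatenation. If some defect of $y$ fails to meet a defect of $z$, then some defect of $z$ must be swallowed into a closed arc against an arc of the mirrored $y$, producing a link state of strictly fewer defects, hence zero in the quotient $\mathcal{V}^{(n)}_k$; and this is exactly the condition under which $\braket{y,z}=0$. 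Conversely, if every defect of $y$ pairs with a defect of $z$, then the $k$ defects of $x$ on the outer left boundary survive as the defects of the output, so the result is $x$ times the scalar $\chi^{\#\mathrm{loops}}=\braket{y,z}$.

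The main obstacle is verifying the defect-matching step cleanly, i.e., checking that the zero case on the left matches the zero case on the right. Concretely, one must argue that any pairing failure on the $y$\textendash $z$ interface forces a defect of $x$ (carried through the glue between $x$ and mirrored $y$ inside $|x~y|$) to close off against another defect, reducing the number of outer defects below $2k$. This is a purely planar/non-crossing combinatorial statement: in a non-crossing matching of the $4k$ defects on the interface, the pairing of $y$'s and $z$'s defects is forced as soon as one demands that no arc crosses the plane of the $x$ defects. Once this planarity argument is in hand, the identity $|x~y|\,z=\braket{y,z}\,x$ follows by reading off the diagram, and extension to all of $\mathcal{V}^{(n)}_k$ is immediate by (anti)linearity.
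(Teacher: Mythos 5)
Your proposal is correct and follows essentially the same route as the paper's proof: reduce to basis link states, split into the case where concatenation closes defects (both sides vanish, since a closed defect forces a disconnected strand in $\Braket{y,z}$) and the case where it does not (the output is $x$ times $\chi^{\#\text{loops}}$, which is exactly $\Braket{y,z}$). The extra detail you supply on the planarity of the defect matching is a fleshed-out version of the step the paper treats implicitly, not a different argument.
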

\begin{proof}
  Without loss of generality we can consider the case where all of $x,y,z$ are $(n,k)$-link states.
  If $|x~y|$ decreases the number of defects in $z$, then the LHS of \Eqref{eq:outer_inner} vanishes. This is consistent with the RHS as since there must exist disconnected strands in $\Braket{y,z}$ when such a pair of defect is closed, forcing $\Braket{y,z}=0$.
  It remains to check the case where no defects are closed. In this case the LHS of \Eqref{eq:outer_inner} will be proportional to $x$.
  The proportionality factor is given by $\chi^\#$, where $\#$ is the number of closed loops in the concatenation, which is the same as the number of closed loops in calculating $\Braket{y,z}$.
\end{proof}

It is sometimes useful to think of $|x~y|$ as the outer product of link states $x y^\dag$, with the adjoint $y^\dag\in \bar{V}^{(n)}_k$ in the dual module obtained by flipping $y$ around its vertical axis.
However such analogy is only true when all of $x,y,z$ lie in the same $(n,k)$-standard module.
When it is not the case, say when $x,y \in V^{(n)}_k$ and $z\in V^{(n)}_{k'}$, then not only $\Braket{y,z}$ is not defined, $|x~y|z$ will not be proportional to $x$.
For example,
\begin{equation}
  \begin{matrix}
      \includegraphics[scale=.35]{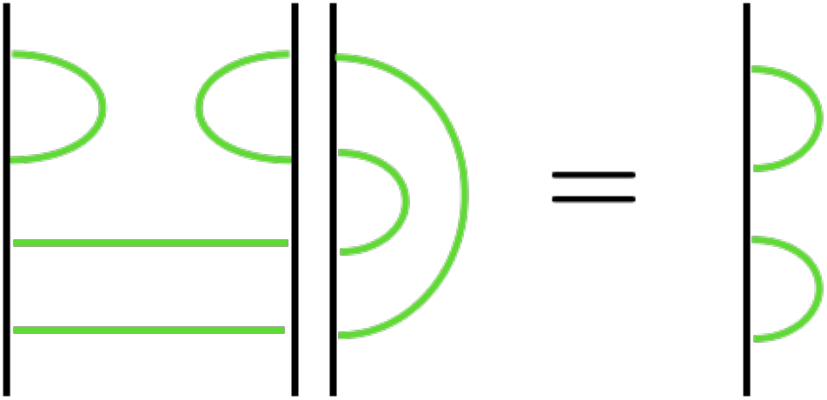}.
  \end{matrix}
\end{equation}

In general, the standard module classify all the finite dimensional irreducible representations of $\text{TL}_n$ for almost all values of $\chi$, except for a discrete set of points when $\chi < 2$.
The key to proving this proposition is by studying the degeneracy of the inner product $\Braket{\cdot,\cdot}$ through its \emph{Gram matrices} $G^{(n)}_k$.
The idea is very similar to the process of determining the reducibility of Verma modules of Virasoro algebra in 2d CFT: $G^{(n)}_k$ being non-singular implies the irreducibility of $\mathcal{V}^{(n)}_k$.
When $G^{(n)}_k$ is singular (i.e. $\det G^{(n)}_k=0$), there will be additional null states that one needs to discard to obtain a irreducible representation.

To end this short review we quote an important theorem, first due to Jones \cite{Jones1983} and studied in more detail by Westbury \cite{Westbury1995}.
\begin{theorem}[Jones]
  When $\chi \neq 2\cos(m\pi/n)$ for some integer $m\ge 3$, the algebra $\text{TL}_n$ is semisimple and the standard modules $\mathcal{V}^{(n)}_k$  form a complete set of finite dimensional irreducible non-isomorphic representations of $\text{TL}_n$.
\end{theorem}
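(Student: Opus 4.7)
The plan is to study $\textup{TL}_n$ through the Gram matrix $G^{(n)}_k$ of the invariant inner product $\braket{\cdot,\cdot}$ on $\mathcal{V}^{(n)}_k$ from Definition~\ref{def:inner}, in direct analogy with the Kac-determinant analysis of Virasoro representations. The starting observation is that the radical $\textup{rad}(G^{(n)}_k) = \{v \in \mathcal{V}^{(n)}_k : \braket{v,w} = 0 \ \forall w\}$ is a $\textup{TL}_n$-submodule, since the form is invariant under the adjoint $U \mapsto U^\dagger$. Hence whenever $\det G^{(n)}_k \neq 0$ the standard module has no nonzero proper submodule, i.e.\ is irreducible, and the whole argument reduces to locating the zeros of the Gram determinants as functions of $\chi$.

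The hardest step is to compute $\det G^{(n)}_k$ explicitly. I would proceed by ordering the link-state basis by the position of the rightmost closed arc and closing that arc to reduce the computation to a smaller TL algebra, producing one explicit Chebyshev factor per closure. Iterating this recursion yields a factorization of the form
\[
\det G^{(n)}_k \;=\; \prod_{j > k} \bigl(U_{j}(\chi/2)\bigr)^{a_j(n,k)}
\]
up to an overall positive constant, with nonnegative integer exponents $a_j(n,k)$ and $U_j$ the Chebyshev polynomial of the second kind. Since $U_j(\chi/2)$ vanishes precisely at $\chi = 2\cos(m\pi/(j+1))$ for $1 \le m \le j$, the hypothesis $\chi \neq 2\cos(m\pi/n)$ for every $m \geq 3$ rules out all zeros appearing in the product (the only potentially problematic values $\chi = 2\cos(\pi/n), 2\cos(2\pi/n)$ correspond to $m=1,2$, which do not appear among the allowed $m$'s in the exponents $a_j$). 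Consequently every $\mathcal{V}^{(n)}_k$ is irreducible.

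Next I would dispose of non-isomorphism and completeness. Pairwise non-isomorphism follows by computing the traces of simple diagrams in each representation: the elements $e_1 e_3 \cdots e_{2j-1}$ produce traces $\chi^{\lfloor n/2\rfloor - j}$ on $\mathcal{V}^{(n)}_k$ when $j \le \lfloor n/2 \rfloor - k$ and vanish otherwise, so the defect number $2k$ is recoverable from the character alone. For completeness and semisimplicity I would invoke Wedderburn--Artin and verify the dimension identity
\[
\sum_{k=0}^{\floor{n/2}} \bigl(\dim \mathcal{V}^{(n)}_k\bigr)^2 \;=\; \dim \textup{TL}_n \;=\; C_n,
\]
which, via Lemma~\ref{lem:module_dim} and the two-row hook-length formula $\dim \mathcal{V}^{(n)}_k = \binom{n}{\floor{n/2}-k} - \binom{n}{\floor{n/2}-k-1}$, reduces to a known Catalan identity provable by a lattice-path reflection argument. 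The evaluation map $\rho: \textup{TL}_n \to \bigoplus_k \textup{End}(\mathcal{V}^{(n)}_k)$ is injective because any element of $\ker\rho$ pairs trivially with all $\mathcal{V}^{(n)}_k$ and hence lies in the radical of the non-degenerate bilinear form $(x,y) \mapsto \textup{Tr}_{\textup{TL}_n}(xy)$ by the sector decomposition of the trace (Lemma~\ref{lem:trace}); matching dimensions then promotes $\rho$ to an isomorphism, proving semisimplicity and completeness in one stroke.

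The principal obstacle is the explicit Chebyshev factorization of $\det G^{(n)}_k$ and the careful verification that the hypothesis on $\chi$ excludes \emph{all} of its zeros uniformly in $k$. The recursion on the Gram matrix is combinatorially delicate --- one must track how each loop closure interacts with the defect structure and with loops formed against the opposite boundary --- and pinning down the exponents $a_j(n,k)$ precisely (rather than merely the zero locus) typically requires either passing to the Jones--Wenzl orthogonal basis or running a simultaneous induction in $n$ and $k$. Once this factorization is in hand, the remainder of the argument is essentially bookkeeping.
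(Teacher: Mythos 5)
Your proposal follows essentially the same strategy the paper indicates: the paper itself omits the proof (deferring to Refs.~\cite{Westbury1995,Ridout:2012gg}) but explicitly identifies the Gram matrices $G^{(n)}_k$ of the invariant form and the Kac-determinant analogy as the key to irreducibility, which is exactly the route you take, supplemented by the standard Chebyshev factorization of $\det G^{(n)}_k$ and the Wedderburn dimension count $\sum_k(\dim\mathcal{V}^{(n)}_k)^2=C_n$ for completeness. This matches the literature proof the paper points to, so there is nothing genuinely different to compare.
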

Due to the length constraint we will not present a proof for this theorem. Interested reader can refer to the various literature (e.g. Refs.~\cite{Westbury1995,Ridout:2012gg}) on this subject.
Throughout this note we always work with integer $\chi\ge 2$, and we frequently consider the case $\chi\gg 1$. In this regime the standard modules are always irreducible.

\section{Finite $\chi$ corrections}
\label{sec:finite_chi}
In this appendix we will study the effect of finite bond dimension $\chi$.
As opposed to the corrections from finite external bond dimensions $\chi_{A,B,C}$, the effect of finite $\chi$ acts only on eigenvalues within each $k$ sectors by shifting them by a small amount.
In the following we will give generating functions pertain to these eigenvalue shifts.
Analytically continuing the coefficient of the relevant generating function gives the leading correction the the reflected spectrum.
We give explicit formulae of such analytic continuation for $k=0,1$.
The correction to these two sectors completely characterizes the leading order corrections to the reflected entropy in the finite $\chi$ limit, which matches well with our numerics (\secref{sub:num}).
While the detailed effect of sector mixing is interesting on its own in that there seems to be an additional hierarchy structure from our TL numerics, its effect on the reflected entropy is to the order $O(\chi^{-2})$ which is outside the main interest of this note.

\subsection{Corrections to orthogonality condition}
  We will find the first order $\chi$ corrections to \Eqref{eq:leadinglambda}.
  Denote the related generating function
  \footnote{We will suppress the $q_{A,B}$ dependence in this section unless necessary to make our notation more clear.}
  \begin{equation}
    G(r,z) = \sum_{m\in 2\mathbb{Z}_+,k\in \mathbb{Z}_+}X^{(m)}_{k}(q_A,q_B,\chi)z^{m/2}r^{k} ,\quad X^{(m)}_k(q_A,q_B,\chi) = \sum_{x,y\in \mathcal{B}^{(m)}_k} f_{q_A}(x)f_{q_B}(y)\Braket{x,y}
  \end{equation}
  We write the function $X^{(m)}_k$ as a sum of diagrams:
  \begin{equation}
    \begin{matrix}
       \includegraphics[scale=.33]{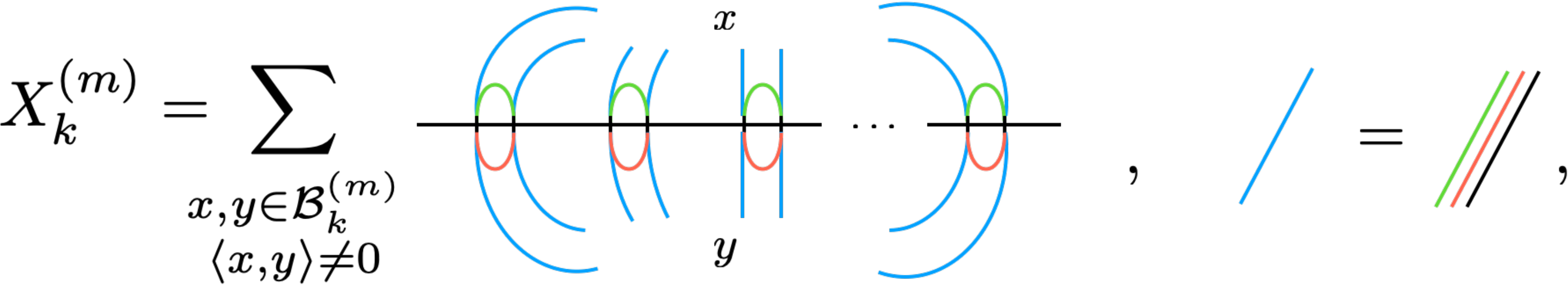}   
    \end{matrix}
  \end{equation}
  where we assign each closed red loop a factor of $q_A$, each closed green loop a factor of $q_B$ and each closed black loop a factor of $\chi$.
  Note that we work in the limit of large $\chi$ and $q_A,q_B\sim O(1)$.
  The contraction pattern of blue strands is determined by the link states $x$ (on top) and $y$ (on bottom) where we exclude all contractions  in which $\Braket{x,y}=0$.
  The dominant contribution comes from $x=y$, reflecting the fact that the link states form a approximately orthogonal basis for $\mathcal{V}^{(m)}_k$.
  We now try to determine the first order correction to this result.
  Begin by splitting $G(r,z)$ in the following fashion
  \begin{equation}
    G(r,z) = G_0(r,z) + G_1(r,z),
  \end{equation}
  where $G_0(r,z)$ contains all the leading diagrams (that is with $x=y$).
  It is related to the link state generating function \Eqref{eq:gen_fun} by $G_0(r,z) = G(q_{AB},r/\chi,z\chi)$.
  $G_1(r,z)$ contains all the next order corrections. 
  Also denote $H(r,z)$ to be the generating function for 1PI diagrams and split it in a similar fashion
  \begin{equation}
    H(r,z) = H_0(r,z) + H_1(r,z)
  \end{equation}
  $G(r,z)$ and $H(r,z)$ are related by
  \begin{equation}
    \label{eq:Gconsistent_dig}
    \begin{matrix}
      \includegraphics[scale=.33]{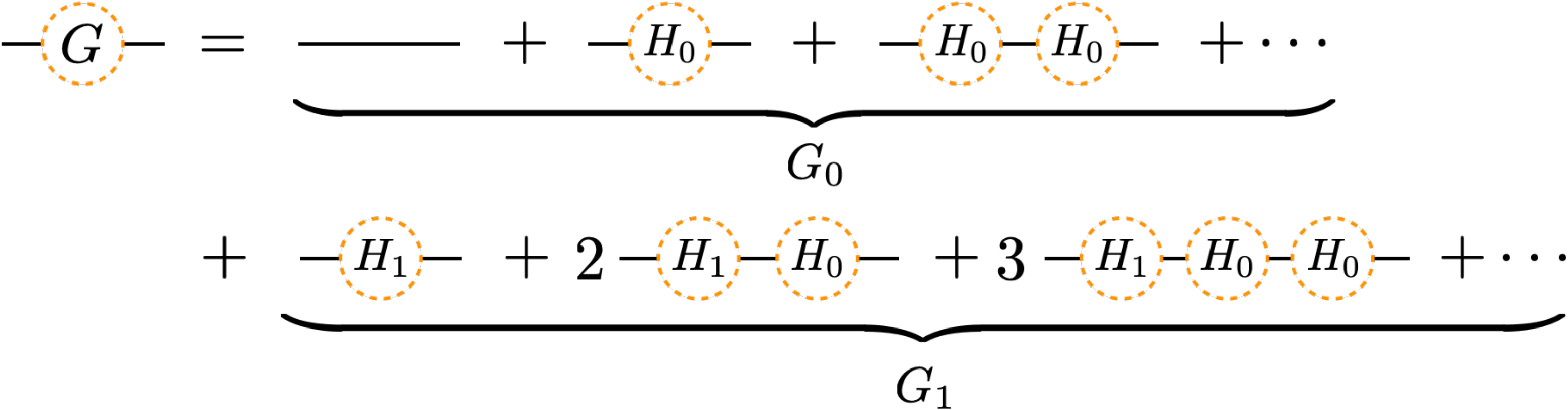}   
    \end{matrix}
  \end{equation}
  The weights in the sum of $G_1$ counts the number of permutations for the 1PI diagrams.
  Algebraically, \Eqref{eq:Gconsistent_dig} can be written as
  \begin{align}
    \label{eq:Gconsistent}
    \begin{split}
       G_0(r,z) &= \sum^\infty_{n=0}H^n_0(r,z) = \frac{1}{1-H_0(r,z)}, \\
    G_1(r,z) &= H_1(r,z)\sum^\infty_{n=0}(n+1)H_0^n(r,z) = G^2_0(r,z)H_1(r,z)   
    \end{split}
  \end{align}
  An important seed for the generating function $G$ is the special case $k=0$, where the link states are non-crossing partitions $x,y\in NC_m$. 
  Schematically the SD equation for $H_1$ and $k=0$ is
  \begin{equation}
    \begin{matrix}
       \includegraphics[scale=.33]{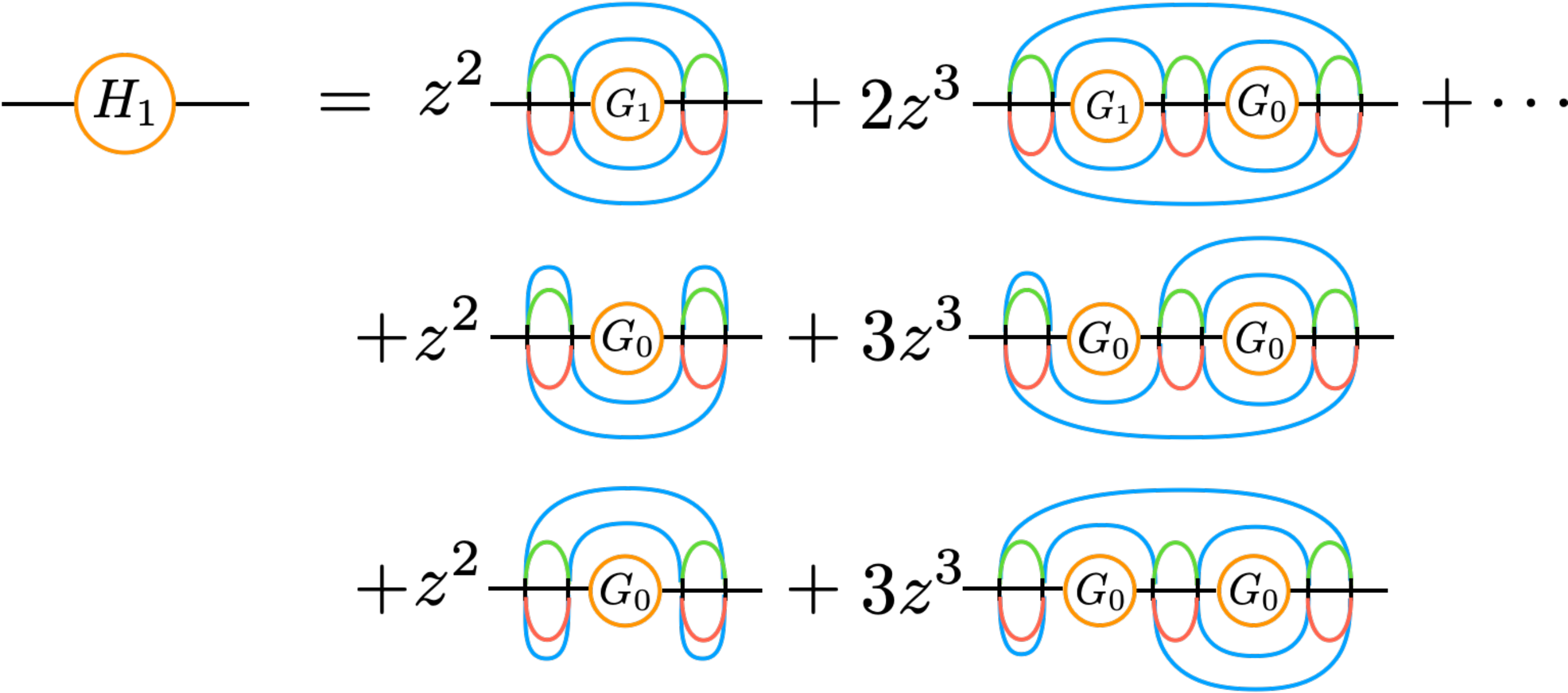}   
    \end{matrix}
  \end{equation}
    For the diagrammatic notations used here: dashed circles indicate generating functions such as $G(r,z)$ and $H(r,z)$; and solid line circles indicate the generating functions for the $k=0$ sector, i.e. $G(0,z)$ and $H(0,z)$.
  In the last two rows the blue connections feature non-crossing permutations $(x,y)$ of the form $(\tau_n, \tau_n\sigma)$ or $(\tau_n\sigma,\tau_n)$, where $\sigma$ is a simple transposition (permutation that swaps two elements and keeping all others invariant).
  The number of these permutations $\sigma\tau_n$ are $n(n-1)/2$.
  We get the following equation
  \begin{align}
    \label{eq:SD_subleading}
    \begin{split}
      H_1(0,z) &= q_{AB}\chi^2 z^2 G_1(0,z)\sum_{n=0}^\infty(n+1)(\chi z G_0(0,z))^n \\
               &\quad + q_{AB}(q_A+q_B)\sum^\infty_{n=0}\frac{n(n-1)}{2}z^n(\chi G_0(0,z))^{n-1} \\
    &= q_{AB}\frac{\chi^2 z^2 G_1(0,z)}{(1-\chi z G_0(0,z))^2} + q_{AB}(q_A+q_B)\frac{\chi z^2G_0(0,z)}{(1-\chi zG_0(0,z))^3}   
    \end{split}
  \end{align}
  This equation, together with \Eqref{eq:Gconsistent} yields a linear equation for $G_1(0,z)$ in terms of $G_0(0,z)$.
  We get
  \begin{equation}
    \label{eq:G1_0}
    G_1(0,z) = \frac{q_{AB}(q_A+q_B)\chi z^2G_0^3(0,z)}{(\chi zG_0(0,z)-1)\left(q_{AB}\chi^2z^2G^2_0(0,z)-(\chi zG_0(0,z)-1)^2\right)}
  \end{equation}
  This generating function characterizes the correction to the single pole $\lambda_{M^{(m)}_0}$ of the $k=0$ sector.
  
  For the $k>0$ sectors we must include in our sum connections with open strands, but only such that each open strand on top is connected to an open strand in bottom, as required by the condition $\Braket{x,y} \neq 0$.
  To better organize the SD equation we define an auxiliary matrix $\mathbb{F}_0(z)$
  \begin{equation}
    \begin{matrix}
      \includegraphics[scale=.35]{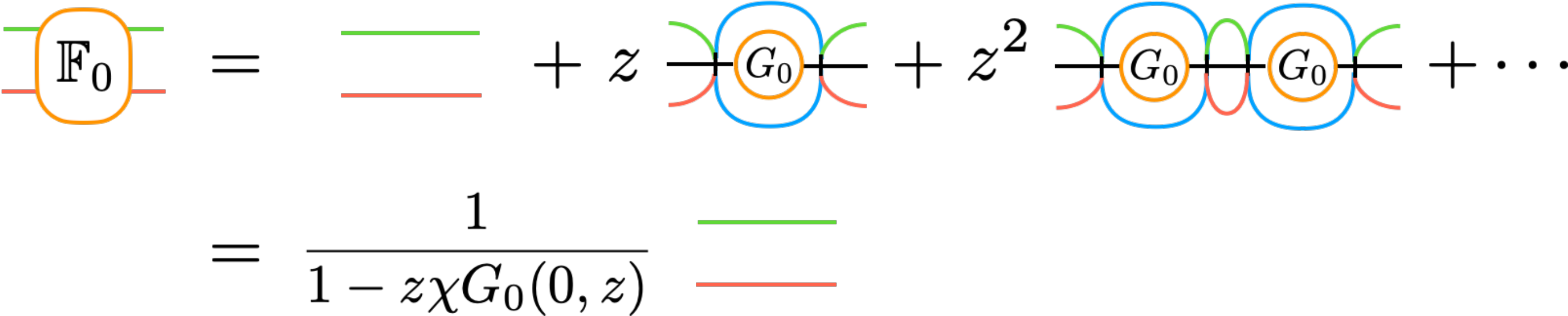} 
    \end{matrix}
  \end{equation}
  $\mathbb{F}_0(z)$ can be thought of as an stripped down version of $H_0(z)$ in the sense of the following equality
  \begin{equation}
    \label{eq:Fconsistent}
    \begin{matrix}
      \includegraphics[scale=.35]{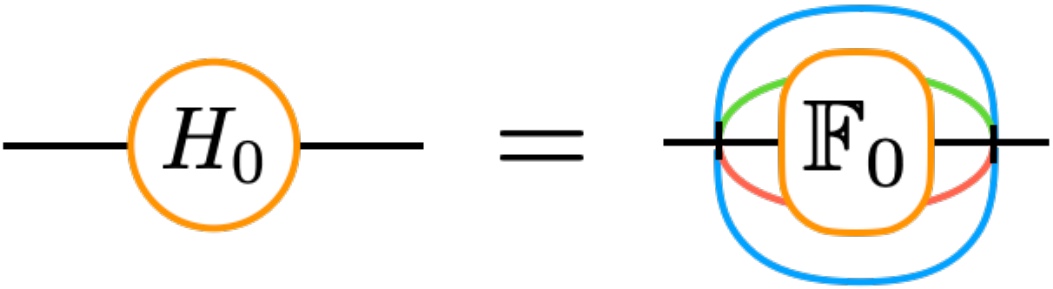} 
    \end{matrix}
  \end{equation}
  Similarly we define the matrix $\mathbb{F}_1(z)$ that is comprised of the next order diagrams
  \begin{equation}
    \begin{matrix}
      \includegraphics[scale=.35]{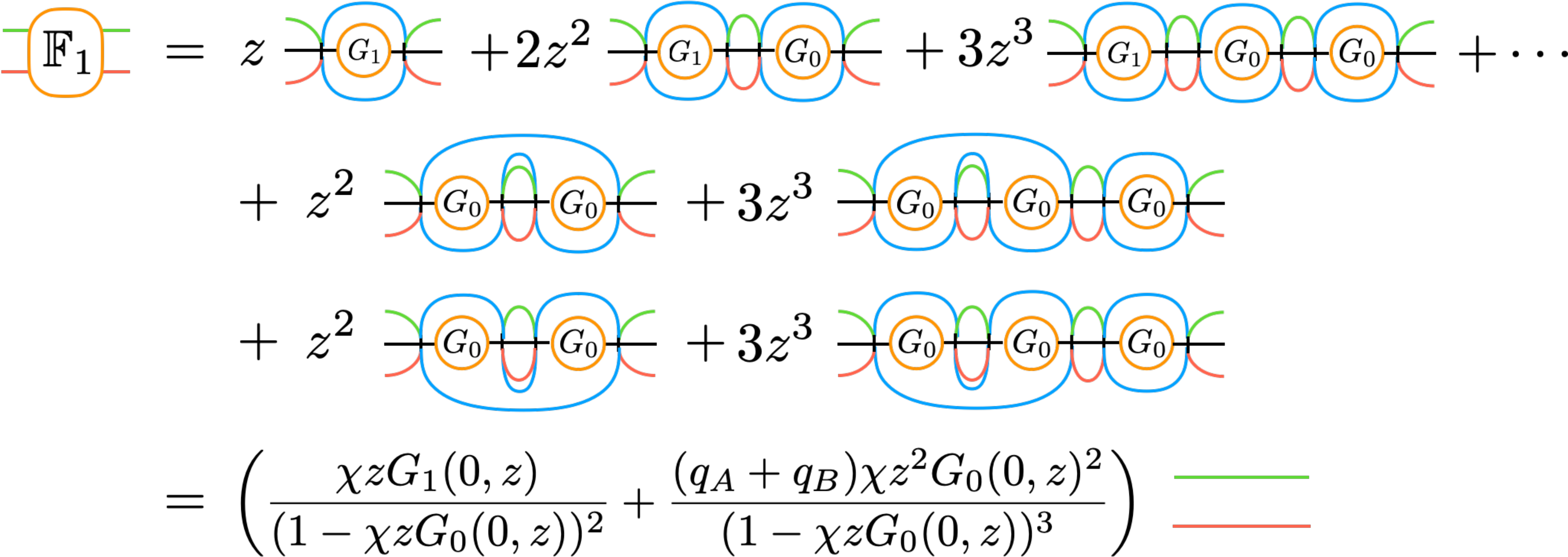}      
    \end{matrix}
  \end{equation}
  Note that despite the similarity, $\mathbb{F}_1(z)$ does not satisfy a simple relation to $H_1(z)$ like \Eqref{eq:Fconsistent}.
  The SD equation for the connected generating function $H(r,z)$ can then be formulated using these matrices
  \begin{equation}
    \begin{matrix}
             \includegraphics[scale=.35]{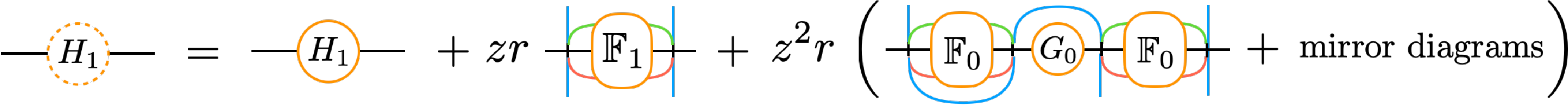}   
    \end{matrix}
  \end{equation}
  As indicated there are three additional diagrams consists of vertical and horizontal reflections of the ``zigzag'' diagram.
  Since each ``zigzag'' decrease the power of $\chi$ by one, diagrams that with more than one ``zigzags'' only contribute at higher orders.
  Algebraically this equation is
  \begin{align}
    \begin{split}
      H_1(r,z) = H_1(0,z) + zr F_1(z) + 2z^2r(q_A+q_B)F_0(z)^2G_0(z)
    \end{split}  
  \end{align}
  where the non-bold version of $\mathbb{F}_i$ denote the common prefactor of the respective matrix.
  Together with \Eqref{eq:Gconsistent} this determines the form of $G_1(r,z)$ (since both $G_1(0,z)$ and $G_0(r,z)$ are known):
  \begin{align}
      G_1(r,z) = G_0(r,z)^2&\left( \frac{G_1(0,z)}{G_0(0,z)^2} + \frac{\chi z^2r G_1(0,z)}{(1-\chi z G_0(0,z))^2}+\frac{(q_A+q_B)\chi z^3r G_0(0,z)^2}{(1-\chi z G_0(0,z))^3} \right. \nonumber \\
       &\quad \left. + \frac{2z^2r(q_A+q_B)G_0(0,z)}{(1-\chi zG_0(0,z))^2} \right)
  \end{align}
  The $k=0$ result \Eqref{eq:G1_0} is recovered by taking $r=0$.
  Denote formally the series coefficient of $G_1(r,z)$ by
  \begin{equation}
    G_1(r,z) = \sum_{\mu,k=0} g^{(1)}_{\mu,k}(q_A,q_B) z^\mu r^k
  \end{equation}
  The correction to the matrix eigenvalue $\lambda_{M^{(m)}_k}$ \Eqref{eq:leadinglambda} from orthogonality is then
  \begin{equation}
  \label{eq:corr_dl1}
    \Delta_{1}\lambda_{M^{(m)}_k} = 2(q_{AB})^{-k}\chi^{m/2-k}g_{m/2,k}(q_{AB}^{-1})g^{(1)}_{m/2,k}(q^{-1}_A,q^{-1}_B) + O(\chi^{m-2k-2})
  \end{equation}
  where $g_{m/2,k}(q_{AB}^{-1})$ is the series coefficient of the link state generating function \Eqref{eq:gen_fun}.

\subsection{Corrections from subleading TL diagrams}
We now write down the corrections from including the subleading TL diagrams in  $M^{(m)}_k(q)$.
  We begin by splitting the sum
  \begin{equation}
    M^{(m)}_k(q) = \sum_{h\in NC_m} q^{-\#(h)}\pi^{(m)}_k(D(h)) = (M_0)^{(m)}_k(q) + \Delta M^{(m)}_k(q)
  \end{equation}
  where $(M_0)^{(m)}_k = \sum_{h\in NC_{m,k}}q^{-\#(h)} \pi^{(m)}_k(D(h))$ contains diagrams of the form $\{|x~y|;x,y\in \mathcal{B}^{(m)}_k\}$
  \footnote{Certainly there are diagrams in $M_0$ that contribute at $O(\chi^{-1})$ but their effect is to alter the orthogonality condition \Eqref{eq:leadinglambda}, which we have already dealt with earlier.}
  and $\Delta M^{(m)}_k = \sum_{h\in NC_m\setminus NC_{m,k}}q^{-\#(h)} \pi^{(m)}_k(D(h))$ contains the rest.
  Then we have
  \begin{equation}
    \Delta\left( M^{(m)}_k(q_A)M^{(m)}_k(q_B) \right) \approx \Delta M^{(m)}_k(q_A)(M_0)^{(m)}_k(q_B) +(M_0)^{(m)}_k(q_A)\Delta M^{(m)}_k(q_B)
  \end{equation}
  up to first order corrections.
  Our goal is to apply first order QM perturbation theory on the matrix $\Delta\left( M^{(m)}_k(q_A)M^{(m)}_k(q_B) \right)$.
  Since $M^{(m)}_k(q_A)M^{(m)}_k(q_B)$ is not Hermitian, we need a slightly modified version of usual first order  perturbation theory.
  
\begin{itemize}
\item correction to leading eigenvalue \\
  Denote $v^{(m)}_k(q) = \sum_{x\in \mathcal{B}^{(k)}_m} f_{q}(x)x$, we have $(M_0)^{(m)}_k(q) = q^{-k} \ket{v^{(m)}_k(q)}\bra{v^{(m)}_k(q)}$
  and the right (left) eigenvector of the unperturbed matrix $\ket{v^{(m)}_k(q_A)}\, (\bra{v^{(m)}_k(q_B)})$.
  \footnote{We will use $v$ and $\ket{v}$ interchangeably as a vector in the module $\mathcal{V}^{(m)}_k$ and $\bra{v}$ as the associated vector in the dual module $\mathcal{V}^{\star(m)}_k$ defined from the inner product $\bra{v}\equiv\Braket{\cdot,v}$ to match the notation of QM perturbation theory. We hope the change of notation is not too confusing for the readers.}.
  The first order correction to the leading eigenvalue is
  \begin{align}
    \label{eq:lambda_corr}
    \begin{split}
      \Delta\lambda_{M^{(m)}_k} &= \frac{\Braket{v^{(m)}_k(q_B)|\Delta M^{(m)}_k(q_A)(M_0)^{(m)}_k(q_B) +(M_0)^{(m)}_k(q_A)\Delta M^{(m)}_k(q_B)|v^{(m)}_k(q_A)}}{\Braket{v^{(m)}_k(q_B)|v^{(m)}_k(q_A)}}   \\
           &= q^{-k}_A\Braket{v^{(m)}_k(q_A)|\Delta M^{(m)}_k(q_B)|v^{(m)}_k(q_A)} 
             +q^{-k}_B\Braket{v^{(m)}_k(q_B)|\Delta M^{(m)}_k(q_A)|v^{(m)}_k(q_B)}
    \end{split}
  \end{align}
  Hence we must evaluate the expectation values of the form
  \begin{equation}
    \Braket{v^{(m)}_k(q)|\Delta M^{(m)}_k(q')|v^{(m)}_k(q)} = \sum_{x,y \in \mathcal{B}^{(m)}_k}  f_{q}(x)f_{q}(y)\sum_{h\in \text{NC}_{m}\setminus \text{NC}_{m,k}}q^{\prime -\#(h)}\Braket{x|\pi^{(m)}_k(D(h))|y}
  \end{equation}
  We can think of the sum $\sum_{h\in \text{NC}_{m}\setminus \text{NC}_{m,k}}\Braket{x|\pi^{(m)}_k(D(h))|y}$ as flipping the diagram of $\ket{x}$ horizontally and try to ``fill in'' the spaces between $\ket{x}$ and $\ket{y}$ by TL diagrams associated with $D(h)$.
  Every $D(h)$ contributes to this sum in different powers of $\chi$ and for our purpose here it suffices to find the diagrams that contribute at leading order.
  
  Again an important seed is the $k=0$ sector.
  For $k=0$ the leading elements in the sum $\sum_{h\in \text{NC}_{m,k>0}} q^{-\#(h)}\Braket{x|\pi^{(m)}_k(D(h))|y}$ are elements such that $D(h) = |z~w|$ where $z,w \in \mathcal{B}^{(m)}_2$ in which they can be thought of as the ``cut-open'' versions of $x$ and $y$
  , as shown in \figref{fig:mirrorcut}.
  \begin{figure}
    \centering
    \includegraphics[width=.7\textwidth]{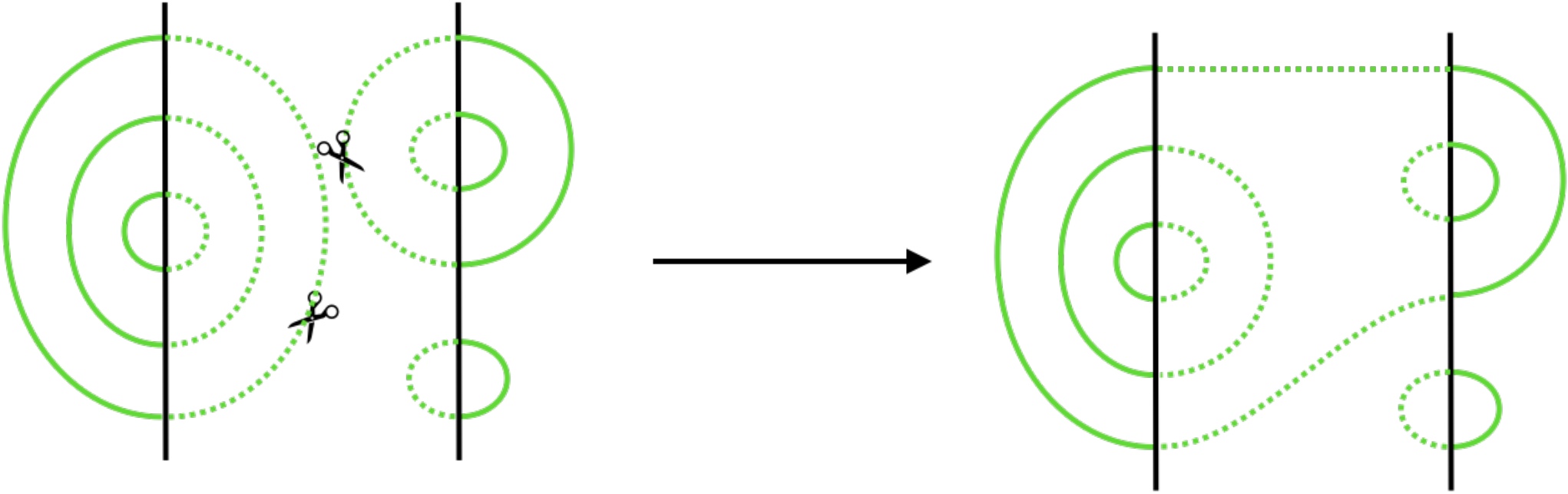}
    \caption{The leading correction diagrams for $\Braket{x|D(h)|y}$ in the $k=0$ sector ($D(h)$ shown as dashed lines in the middle) is constructed by starting with mirroring the $x$ and $y$ link states (shown as solid lines on the left and right sides). Then an outer strand is cut on both mirrored link states before connecting them to form $D(h)$. This construction decreases the contribution of such diagram by one order of $\chi$ compared to $|x~y|$.}
    \label{fig:mirrorcut}
  \end{figure}
  For a link state $x\in \mathcal{B}^{(m)}_k$, denote $c(x)$ to be the number of outer strands it possesses,
  then the number of diagrams we should include to get the leading behavior of $\sum_h\Braket{x|D(h)|y}$ is equal to $c(x)c(y)$.
  We have, up to leading order in $\chi$:
  \begin{equation}
    \sum_{h\in \text{NC}_{m,k>0}} q^{-\#(h)}\Braket{x|\pi^{(m)}_k(D(h))|y} \approx \chi^{2m-1}q f_q(x)f_q(y) c(x)c(y), \quad x,y \in \mathcal{B}^{(m)}_0
  \end{equation}
  $c(x)$ is also equal to the number of connected components $x$ contains, which can be counted using a modified generating function.  
  Recall that $H(q,r=0,z)$ \Eqref{eq:1PIgen} is the 1PI generating function of $k=0$ link states.
  Instead of $G(q,r=0,z) = 1/(1-H(q,0,z))$ we can additionally weigh the sum by the number of 1PI diagrams using
  \begin{equation}
    \label{eq:gen_fun_Y}
    Y(q,z) = H(q,0,z) + 2H^2(q,0,z)+ 3H^3(q,0,z) = \frac{H(q,0,z)}{(1-H(q,0,z))^2}
  \end{equation}
  Denote formally the expansion coefficient of $Y$ by
  \begin{equation}
    Y(q,z) = \sum^\infty_{\mu=0} y_\mu(q)z^\mu
  \end{equation}
  Then one can write $\Braket{v^{(m)}_0(q)|\Delta M^{(m)}_0(q')|v^{(m)}_0(q)}$  (up to leading order) as
  \begin{align}
     \Braket{v^{(m)}_0(q)|\Delta M^{(m)}_0(q')|v^{(m)}_0(q)} \approx q'\,y^2_{m/2}((qq')^{-1})\chi^{m-1} + O(\chi^{m-2})
  \end{align}
  The additional factor of $q'$ in front comes from the fact that the total number of loops decreases by one when performing the surgery shown in \figref{fig:mirrorcut}.
  Finally from \Eqref{eq:lambda_corr} we obtain the first order correction to $\lambda_{M^{(m)}_0}$:
  \begin{align}
    \label{eq:lambda_corr_2}
    \Delta_2 \lambda_{M^{(m)}_0} \simeq\left( q_A+ q_{B}\right)y^2_{m/2}(q_{AB}^{-1})\chi^{m-1} + O(\chi^{m-2})
  \end{align}

  For the $k>0$ sectors one proceeds in a similar fashion but the rule is more complicated.
  As shown in \figref{fig:mirrorcut2}, connecting the open strands dissects $|x~y|$ into different blocks.
  Due to the non-crossing nature of the TL diagram, we must only perform the surgery on outer strands within the same block.
  \begin{figure}
    \centering
    \includegraphics[width=.9\textwidth]{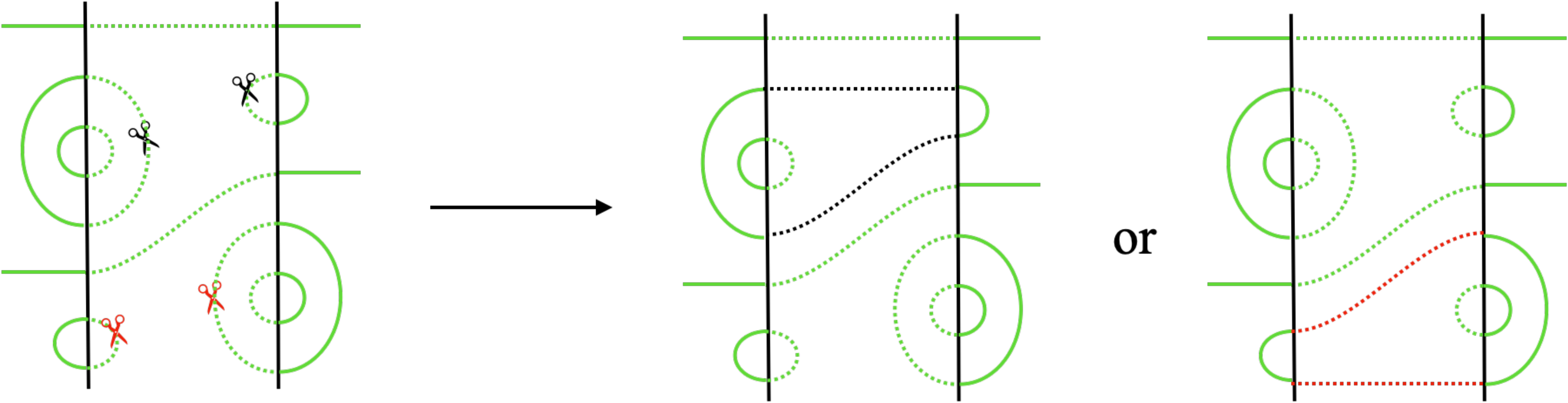}
    \caption{The procedure of constructing leading diagrams for $\Braket{x|D(h)|y}$ for the $k>0$ sector.
      Starting with the diagram $|x~y|$, a similar procedure of cutting and gluing two outer strands are performed.
      However different to the $k=0$ case, the two outer strands being cut must stay inside the same block.
      This is demonstrated using scissors with the same color in the above figure: The surgery on $|x~y|$ can be performed either on the strands with two black scissors (which produces the first diagram on RHS), or strands with two red scissors (which produces  the second diagram on RHS), but performing surgery on a black-red scissor pair is prohibited.}
    \label{fig:mirrorcut2}
  \end{figure}
  To do the required counting here we introduce the two-variable generating functions $\mathcal{G}(z,w)$ and $\mathcal{J}_0(r,z,w)$, defined as
  \begin{equation}
    \begin{matrix}
      \includegraphics[scale=.35]{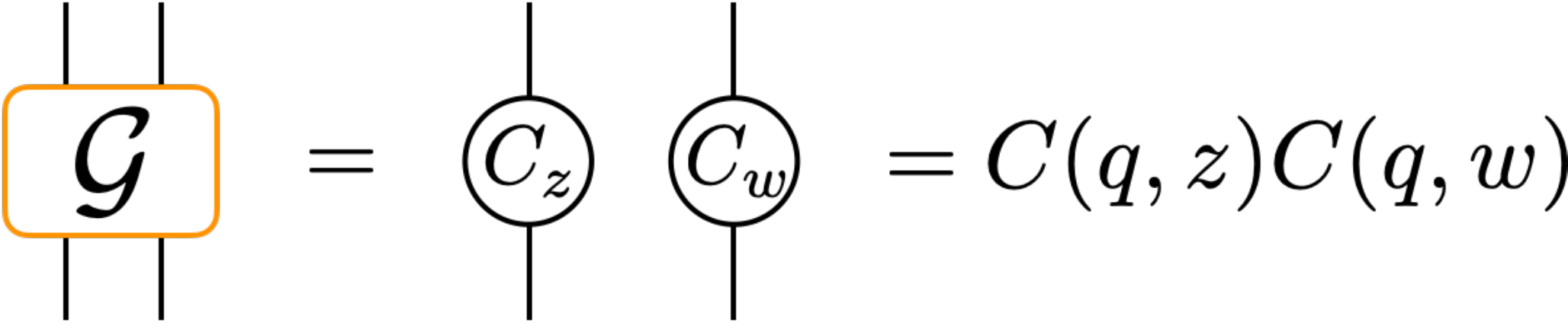}
    \end{matrix}
  \end{equation}
  and
  \begin{equation}
    \begin{matrix}
      \includegraphics[scale=.35]{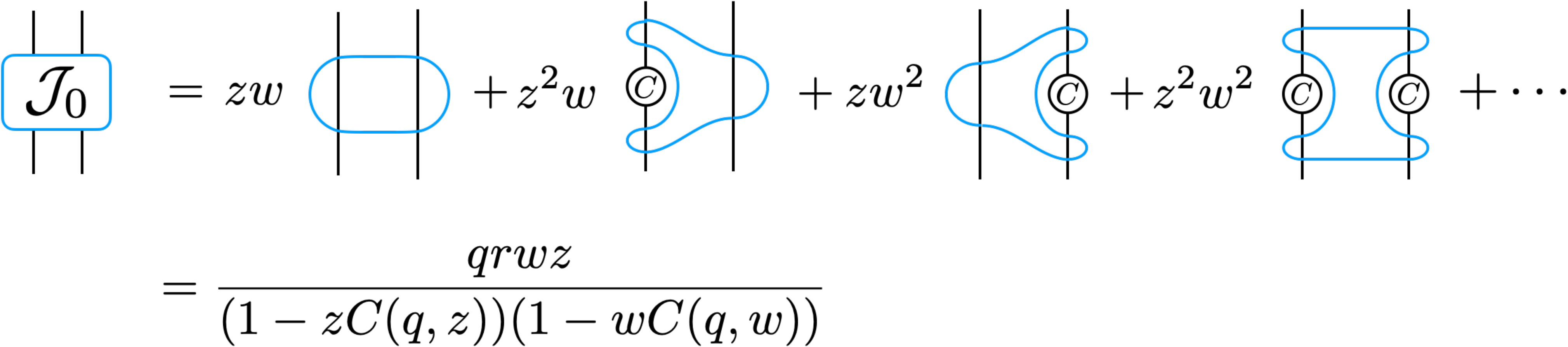}
    \end{matrix}
  \end{equation}
  where the variable $z$ counts the number of TL sites on the left and $w$ counts the number of TL sites on the right.
  One can think of $\mathcal{G}(z,w)$ as a two-point ``propagator'' with $k=0$ and $\mathcal{J}_0(r,z,w)$ as the source which increases $k$ by $1$.
  For example, using these elements we can write down the two-variable generating function for two-sided diagrams
  \begin{equation}
    \begin{matrix}
      \includegraphics[scale=.33]{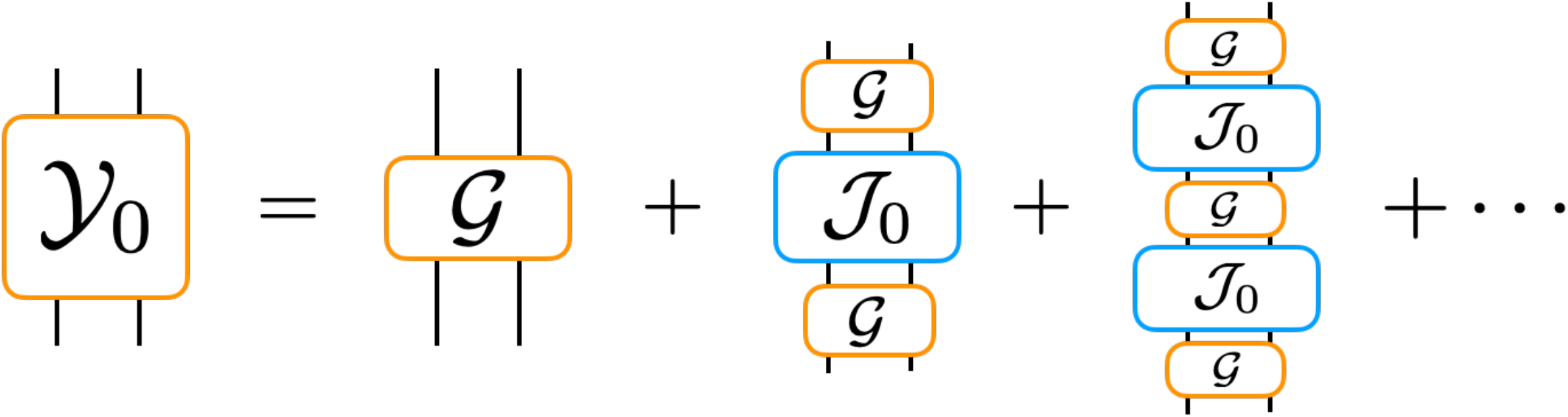}
    \end{matrix}
  \end{equation}
  or in algebraic form 
  \begin{align}
    \begin{split}
        \mathcal{Y}_0(r,z,w) &= \mathcal{G}(z,w)\left(1+\mathcal{G}(z,w)\mathcal{J}_0(r,z,w)+\mathcal{G}(z,w)^2\mathcal{J}_0(r,z,w)^2+\cdots\right) \\
    &=\frac{\mathcal{G}(z,w)}{1-\mathcal{G}(z,w)\mathcal{J}_0(r,z,w)}  
    \end{split}
  \end{align}
  To write down the generating for leading diagrams in $\Braket{x|D(h)|y}$, we introduce additional $c(\cdot)$-weighted source functions $\mathcal{J}_1(z,w)$ and $\mathcal{J}_2(r,z,w)$, defined by
  \begin{equation}
    \begin{matrix}
      \includegraphics[scale=.33]{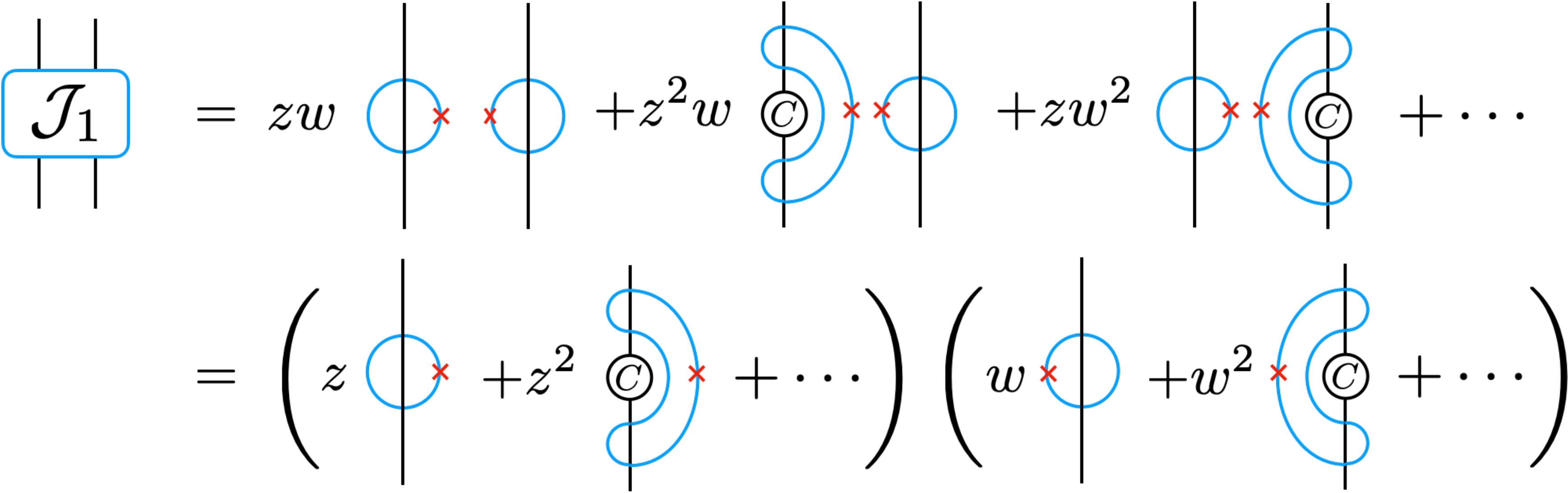}      
    \end{matrix}
  \end{equation}
  \begin{equation}
    \begin{matrix}
      \includegraphics[scale=.33]{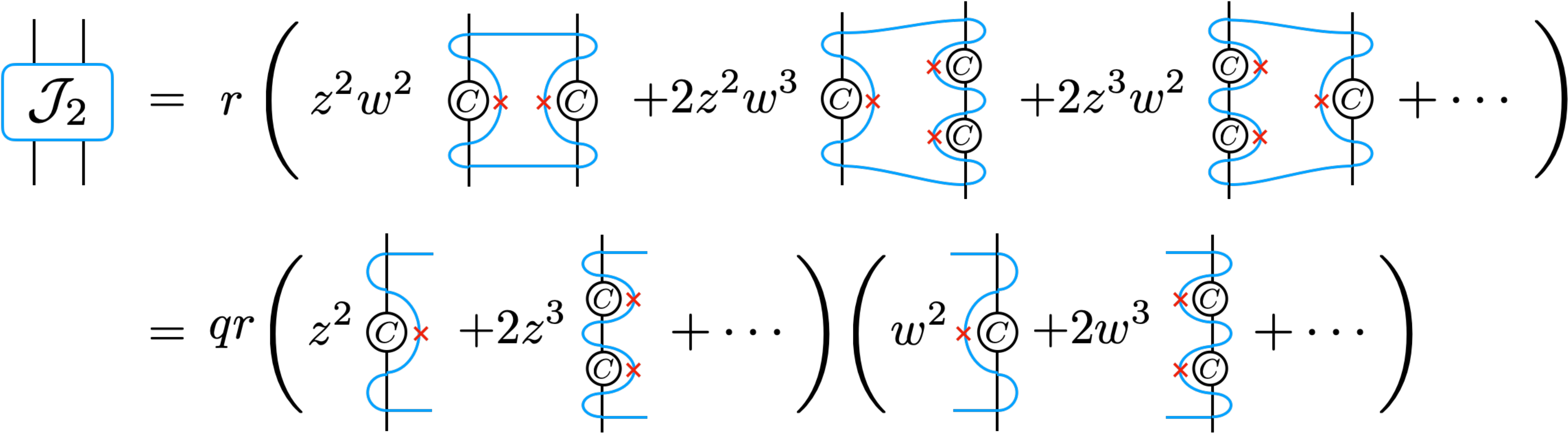}      
    \end{matrix}
  \end{equation}
  where we weight each diagram by product of the number of outer strands on each sides (marked by red crosses above).
  Algebraically these equations read
  \begin{align}
    \mathcal{J}_1(z,w) &= \frac{qz}{1-zC(q,z)}\frac{qw}{1-wC(q,w)} = H(q,0,z)H(q,0,w) \\
    \mathcal{J}_2(r,z,w) &= qr\frac{z^2C(q,z)}{(1-zC(q,z))^2}\frac{w^2C(q,w)}{(1-wC(q,w))^2}
  \end{align}
  Finally we can form the generating function for $\Braket{x|D(h)|y}$ using
  \begin{equation}
    \begin{matrix}
      \includegraphics[scale=.33]{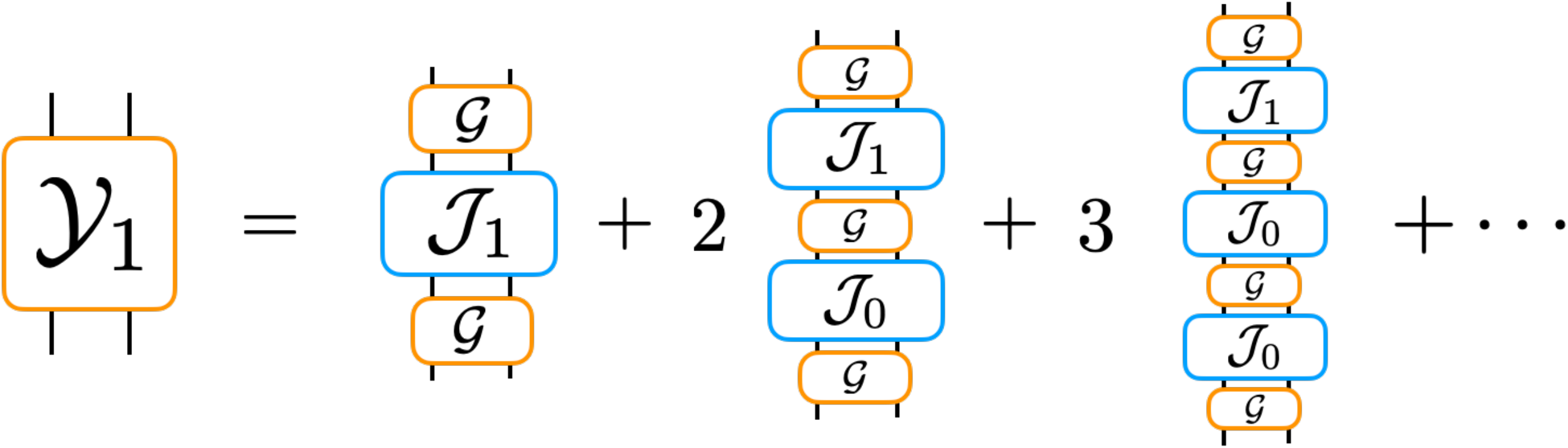}      
    \end{matrix}
  \end{equation}
  and similarly for $\mathcal{J}_2$. Algebraically this equation is
  \begin{equation}
    \mathcal{Y}_i(r,w,z)=\frac{\mathcal{J}_i(r,w,z)\mathcal{G}(w,z)^2}{(1-\mathcal{J}_0(r,w,z)\mathcal{G}(w,z))^2}, \quad i=1,2
  \end{equation}
  The reason we separate the weighted generating function into $\mathcal{Y}_1$ and $\mathcal{Y}_2$ is that after the indicated surgery is performed, the number of loops in diagrams of type $\mathcal{J}_1$ \textit{decreases} by $1$ (this is the only case for $k=0$); whereas diagrams of type $\mathcal{J}_2$ \textit{increases} by $1$.
  Denote the triple expansion coefficient of $\mathcal{Y}_i(r,z,w)$ by
  \begin{equation}
    \mathcal{Y}_i(q,r,w,z) = \sum^\infty_{\mu,\nu, k=0} y^{(i)}_{\mu\nu k}(q)z^\mu w^\nu r^k
  \end{equation}
  This allows us to write down
  \begin{align}
    \begin{split}
        &q^{-k}\Braket{v^{(m)}_k(q)|\Delta M^{(m)}_k(q')|v^{(m)}_k(q)} \\
      &\approx \left(q'y^{(1)}_{m/2,m/2,k}((qq')^{-1}+q^{\prime-1}y^{(2)}_{m/2,m/2,k}((qq')^{-1})\right) \chi^{m-2k-1} + O(\chi^{m-2k-2})
    \end{split}    
    \end{align}
    The first order correction to $\lambda_{M^{(m)}_k}$ from subleading diagrams is thus
  \begin{align}
  \label{eq:corr_dl2}
    \Delta_2 \lambda_{M^{(m)}_k} \approx \left((q_A+q_B)y^{(1)}_{m/2,m/2,k}(q^{-1}_{AB})+(q_A^{-1}+q^{-1}_B)y^{(2)}_{m/2,m/2,k}(q_{AB}^{-1})\right)\chi^{m-2k-1}
  \end{align}
  Although it is not immediately apparent, \Eqref{eq:lambda_corr_2} can be recovered by taking $k=0$.
  
\item sector mixing \\
We now give a brief argument that the correction to the degenerate zero eigenvalues in each sector vanish to the first order.
In short, we need to find the first order corrections to the null states of the product matrix $M^{(m)}_k(q_A)M^{(m)}_k(q_B)$.
We need to apply first order degenerate perturbation theory.
  Denote
  \begin{equation}
    \Pi^{(m)}_k = \text{id} - \frac{\ket{v^{(m)}_k(q_A)}\bra{v^{(m)}_k(q_B)}}{\Braket{v^{(m)}_k(q_A)|v^{(m)}_k(q_B)}}
  \end{equation}
  to be the projector onto the left and right null spaces.
  To find the perturbation one needs to diagonalize the following matrix
  \begin{align}
    \begin{split}
      \Pi^{(m)}_k\left(\Delta M^{(m)}_k(q_A)(M_0)^{(m)}_k(q_B)+(M_0)^{(m)}_k(q_A)\Delta M^{(m)}_k(q_B)\right)\Pi^{(m)}_k = 0,
    \end{split}
  \end{align}
  since $(M_0)^{(m)}_k(q)$ is annihilated by the projector $\Pi^{(m)}_k$.
  Therefore the order $O(\chi^{-1})$ corrections to the zero eigenvalues vanish and the degeneracy is not removed.
  
  The degeneracy is only broken at $O(\chi^{-2})$. 
  At the next order one has the second order contribution of $\Delta M^{(m)}_k(q_A)(M_0)^{(m)}_k(q_B)+(M_0)^{(m)}_k(q_A)\Delta M^{(m)}_k(q_B)$ as well as the first order contribution of $\Delta M^{(m)}_k(q_A)\Delta M^{(m)}_k(q_B)$, and all other contribution is annihilated by the projector.
  While the detailed effect of sector mixing is interesting on its own in that there seems to be an additional hierarchy
  structure from our TL numerics, its effect on the reflected entropy is to the order $O(\chi^{-2})$ which is outside our main interest here.
  
\end{itemize}

\subsection{explicit form for $k=0$ and $k=1$}
Here we will perform explicit analytic continuation $m\to1$ to find the leading order corrections to the spectral eigenvalue $\lambda_0$ and $\lambda_1$.
Together they determine the leading corrections to the reflected entropy in powers of $\chi$.

\begin{itemize}
\item $k=0$ \\
  The generating function for $\Delta_1\lambda$ can be written as
  \begin{align}
    \frac{\chi}{q_A+q_B}G_1(r=0,z)=\frac{-1+(1+q_{AB})\chi z}{2q_{AB}^2\chi^2z^2}+ \frac{1-2(1+q_{AB})\chi z+(1+q_{AB}^2)\chi^2 z^2}{2q_{AB}\chi^2z^2\sqrt{(1+\chi z(q_{AB}-1))^2-4q_{AB} \chi z}}
  \end{align}
  As in the case of main text there is a branch cut induced by the square root.
  To perform the analytic continuation via contour integral, we chose our contour to wrap around the branch cut.
  Additional poles at $z=0$ does not contribute to the integral.
  To set up our notation, define
  \begin{align}
    D_n(q) &\equiv -\frac{1}{\pi}\text{Im}\int^{z_+}_{z_-} \frac{dz}{z^{n}\sqrt{(1+z(q-1))^2-4q z}} \\
    &=
    \begin{cases}
      \;_2F_1(1-n,1-n;1;q), \quad &q\leq1,\\
      q^{n-1}\;_2F_1(1-n,1-n;1;q^{-1}), \quad &q>1\\
    \end{cases}
  \end{align}
  as the analytic continuation for the series coefficients of $z/\sqrt{(1+z(q-1))^2-4q z}$.
  An contour integral gives the expression for $\delta g_{1/2,0}$:
  \begin{align}
    \begin{split}
        \delta g_{1/2,0}(q_A,q_B)&=-\frac{1}{\pi}\text{Im}\int^{z_+/\chi}_{z_-/\chi}G_1(0,z)z^{-3/2}dz \\
                      &=\frac{(q_A+q_B)}{2q_{AB}\sqrt{\chi}} \left((1+q^2_{AB})D_{3/2}(q_{AB})-2(1+q_{AB})D_{5/2}(q_{AB})+D_{7/2}(q_{AB})\right)
    \end{split}
  \end{align}
  and
  \begin{align}
    \begin{split}
        \Delta_1 \lambda_0 &= \frac{2q_{AB}}{\sqrt{\chi}} g_{1/2,0}(q^{-1}_{AB}) g^{(1)}_{1/2,0}(q_A^{-1},q_B^{-1}) \\
                &=  \frac{(q^{-1}_A+q^{-1}_B)}{\chi}C_{1/2}(q^{-1}_{AB})\left((1+q^{-2}_{AB})D_{3/2}(q^{-1}_{AB})-2(1+q^{-1}_{AB})D_{5/2}(q^{-1}_{AB})+D_{7/2}(q^{-1}_{AB})\right)
    \end{split}
  \end{align}
  The expression for $\Delta_2\lambda_0$ is also easy to write down.
  We work with the single variable generating function \Eqref{eq:gen_fun_Y} here.
  Use \Eqref{eq:Cz_quad} one arrives at
  \begin{equation}
    Y(q,z) = \left(\frac{1}{z}-q\right)C(q,z)-\frac{1}{z}
  \end{equation}
  which gives
  \begin{align}
    \begin{split}
        \Delta_2\lambda_0 &= \frac{q_{AB}(q_A+q_B)}{\chi}\left(\frac{1}{\pi}\text{Im}\int^{z^+}_{z^-} Y(q^{-1}_{AB},z)z^{-3/2}dz \right)^2\\    
             &=\frac{q_{AB}(q_A+q_B)}{\chi}(C_{3/2}(q^{-1}_{AB})-q^{-1}_{AB}C_{1/2}(q^{-1}_{AB}))^2  
    \end{split}
  \end{align}
  and the first order correction is
  \begin{align}
    \begin{split}
          \Delta\lambda_0 = \Delta_1\lambda_0+\Delta_2\lambda_0 
    \end{split}
  \end{align}  
  
\item $k=1$ \\
  Extracting the linear $r$ order coefficient of the generating functions we get
  \begin{align}
    \frac{\chi^2}{q_A+q_B}\frac{\partial G_1(r,z)}{\partial r}\Big|_{r=0} &= \frac{-1+(q_{AB}+1)\chi z+q_{AB}^2z^2-(q^2_{AB}-q_{AB})^2\chi^3z^3}{2q_{AB}^2\chi^3z^3} \nonumber \\
     &+\frac{1-2(q_{AB}+1)\chi z+\chi^2z^2+2q_{AB}^3\chi^3z^3-(q^2_{AB}-q_{AB})^2\chi^4z^4}{2q_{AB}^2\chi^3z^3\sqrt{(1+\chi z(q_{AB}-1))^2-4q_{AB} \chi z}}
  \end{align}
  The first term on RHS is irrelevant for determining the contour integral.
  We have
  \begin{align}
    \begin{split}
        g^{(1)}_{1/2,1}(q_A,q_B) &= -\frac{1}{\pi}\text{Im}\int^{z_+/\chi}_{z_-/\chi}\partial_r G_1(0,z)z^{-3/2}dz \\
                             &=\frac{q_A+q_B}{2\chi^{3/2}}\left( -(q_{AB}-1)^2D_{1/2}(q_{AB})+2q_{AB}D_{3/2}(q_{AB})+q_{AB}^{-2}D_{5/2}(q_{AB})\right.\\
      &\qquad\qquad\qquad \left.-2(q_{AB}^{-2}+q^{-1}_{AB})D_{7/2}(q_{AB})+q_{AB}^{-2}D_{9/2}(q_{AB})  \right)
    \end{split}
  \end{align}
  and
  \begin{align}
    \begin{split}
        \Delta_1\lambda_1 &= \frac{2}{\chi^{3/2}}g_{1/2,1}(q^{-1}_{AB})g^{(1)}_{1/2,1}(q^{-1}_A,q^{-1}_B)\\
               &=\frac{(q_{A}^{-1}+q^{-1}_B)}{\chi^3}(-C_{1/2}(q_{AB}^{-1})+q_{AB}C_{3/2}(q_{AB}^{-1}))\\
               &\quad \times (( -(q^{-1}_{AB}-1)^2D_{1/2}(q^{-1}_{AB})+2q^{-1}_{AB}D_{3/2}(q^{-1}_{AB}) +q_{AB}^{2}D_{5/2}(q_{AB}^{-1})     \\
      &\qquad-2(q_{AB}^{2}+q_{AB})D_{7/2}(q^{-1}_{AB})+q_{AB}^{2}D_{9/2}(q_{AB}^{-1}) )
    \end{split}
  \end{align}

  The two variable generating function $\mathcal{Y}_i(q,r,z,w)$ factorizes when expanded as a series of $r$. We are interested in terms linear to $r$ here, which is
  \begin{align}
    \begin{split}
        \frac{\partial\mathcal{Y}_2(q,r,z,w)}{\partial r}\Big|_{r=0} &= \frac{qz^2w^2C(q,z)^3C(q,w)^3}{(1-zC(q,z))^2(1-wC(q,w))^2}, \\
      \frac{\partial\mathcal{Y}_1(q,r,z,w)}{\partial r}\Big|_{r=0} &= 2q^2\frac{\partial\mathcal{Y}_2(q,r,z,w)}{\partial r}\Big|_{r=0}  
    \end{split}
  \end{align}
  Using \Eqref{eq:Cz_quad} we can write the individual factors as
  \begin{align}
    \begin{split}
        \hat{Y}(q,z) \equiv \frac{z^2C(q,z)^2}{(1-zC(q,z))^2} 
                     =  \frac{-1+(1+q)z}{q^2z^2} + \frac{1-(2q+1)z+q^2z^2}{q^2z^2}C(q,z)
    \end{split}
  \end{align}
  $\hat{Y}(q,z)$ has the following contour integral
  \begin{align}
    -\frac{1}{\pi}\text{Im}\int^{z_+}_{z_-}\frac{dz}{z^{3/2}} \hat{Y}(q,z) = C_{1/2}(q)-(2q^{-1}+q^{-2})C_{3/2}(q)+q^{-2}C_{5/2}(q)    
  \end{align}
  which gives the following analytic continuation for the expansion coefficient
  \begin{align}
    \begin{split}
      y^{(2)}_{1/2,1/2,1}(q) &= q\left(C_{1/2}(q)-(2q^{-1}+q^{-2})C_{3/2}(q)+q^{-2}C_{5/2}(q)\right)^2, \\
      y^{(1)}_{1/2,1/2,1}(q) &= 2q^2y^{(2)}_{1/2,1/2,1}(q)
    \end{split}
  \end{align}
  and
  \begin{align}
    \Delta_2\lambda_1 = \frac{3(q_A^{-1}+q^{-1}_B)}{\chi^3}
    \left(C_{1/2}(q_{AB}^{-1})-(2q_{AB}+q^2_{AB})C_{3/2}(q_{AB}^{-1})+q^{2}_{AB}C_{5/2}(q^{-1}_{AB})\right)^2
  \end{align}
  The first order correction to the leading eigenvalue in $k=2$ sector is given by
  \begin{equation}
    \Delta\lambda_1 = \Delta_1\lambda_1+\Delta_2\lambda_1
  \end{equation}
  
\end{itemize}

\section{Proofs}
\label{sec:proof}
\subsection{Proof of Proposition~\ref{lem:dk}}
\begin{proof}
  For given $m$ we have a series of unknowns $d^{(m)}_k$ where $k=0,1,\cdots,m/2$.
  We pick $m/2$ elements $t_i\in\text{TL}_m$ such that the trace conditions $\text{Tr}_{\text{TL}_m}t_i = \sum_kd^{(m)}_k\text{tr}(\pi^{(m)}_k(t_i))$ give rise to $m/2$ linear equations for which we check that the proposed solutions $d^{(m)}_k=[2k+1]_q$ satisfy.
  We pick:
  \begin{equation}
    \includegraphics[scale=.3]{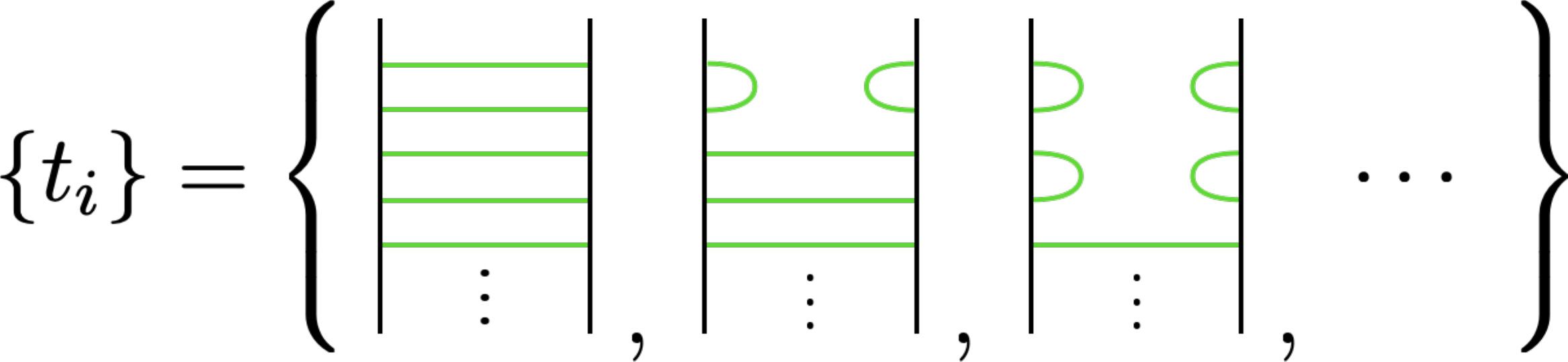}
  \end{equation}
  or in terms of generators of $\text{TL}_m$:
  \begin{equation}
    \{t_i\} = \{\text{id},e_1,e_1e_3,e_1e_3e_5,\cdots\}
  \end{equation}
  The element $\text{id}$ maps to the identity matrix in every submodule, so the trace condition for it is simply
  \begin{equation}
    \sum^{m/2}_{k=0} d^{(m)}_k |\mathcal{V}^{(m)}_k| = \chi^{m}
  \end{equation}
  The element $e_1$ annihilates the module $\mathcal{V}^{(m)}_{m/2}$ since the action of $e_1$ on any link state will always contain at least $2$ closed strands.
  Next, to calculate $\text{tr}(\pi^{(m)}_k(e_1))$ we must find all elements $v_j\in V^{(m)}_k$ such that $\pi^{(m)}_k(e_1)v_j\propto v_j$.
  Such a link state must have a closed strand connecting first two sites, but can otherwise have arbitrary connections in the remaining $m-2$ sites since $e_1$ act on these sites just as identity.
  The closed loop after concatenation contribute a single $\chi$ for all $\pi^{(m)}_k(e_2)v_j$ and the trace condition for $e_2$ is thus
  \begin{equation}
    \sum^{m/2-1}_{k=0} d^{(m)}_k |\mathcal{V}^{(m)}_k| = \chi^{m-2}
  \end{equation} 
  For other elements in the list a similar argument also works as one simply consider link states with progressively more closed strands connecting adjacent pair of sites when constructing the eigenvector of $\pi^{(m)}_k(t_i)$.
  The trace condition for $t_i$ is
  \begin{equation}
    \sum^{m/2-i+1}_{k=0} d^{(m)}_k |\mathcal{V}^{(m)}_k| = \chi^{m-2i+2}
  \end{equation}  
  We observe that all the trace conditions have the same form -- in fact for any given $m$ only the $t_1$ condition is new and the remaining $m/2-1$ conditions coincide with the conditions of the $m-1$ module.
  This means a solution for $d^{(m)}_k$ automatically solves the equations of $d^{(m-1)}_k$. Thus we conclude that $d^{(m)}_k$ is independent of $m$.
  As a result, we will drop the $m$ superscript in the remainder of this proof.

  Let us reorganize the $d_k$ equations we obtained. Introducing the variable $q$ which satisfies $\chi = q+q^{-1}$ and rewriting $|\mathcal{V}^{(m)}_k|$ using Lemma~\ref{lem:module_dim} we have an infinite set of conditions
  \begin{equation}
    \sum_{k=0}^{m/2}d_{k} \,\#\text{SYT}\left(m/2+k,m/2-k\right) = (q+q^{-1})^m, \quad \forall m \in  2\mathbb{Z}_+
  \end{equation}
  We now claim $d_k=[2k+1]=q^{-2k}+q^{-2k+2}+\cdots+q^{2k}$ solves this equation for every even $m$.  
  Plug in $d_k$ and match coefficients with the same power of $q$ we find
  \begin{equation}
    \sum_{k=0}^{n} \#\text{SYT}\left(m/2+k,m/2-k\right) \underset{?}{=}
    \begin{pmatrix}
      m \\ m/2-n
    \end{pmatrix}
  \end{equation}
  Or equivalently
  \begin{equation}
    \#\text{SYT}\left(m/2+n,m/2-n\right) \underset{?}{=}
    \begin{pmatrix}
      m \\ m/2-n
    \end{pmatrix}-
    \begin{pmatrix}
      m \\ m/2-n-1
    \end{pmatrix}
  \end{equation}
  which one can check by explicit computation
  \begin{align}
  \begin{split}
   \#\text{SYT}\left(m/2+n,m/2-n\right)  &=
    \begin{pmatrix}
      m \\ m/2-n
    \end{pmatrix}-
    \begin{pmatrix}
      m \\ m/2-n-1
    \end{pmatrix}\\
    &= \frac{m!(2n+1)}{(m/2+n+1)!(m/2-n)!}
  \end{split}
\end{align}

To show that the matrix trace $\text{tr}=\sum_k d_k \text{tr}_k$ on $\mathcal{V}^{(m)}$ coincides with $\text{Tr}_{\text{TL}_m}$ for all elements in $\mathcal{V}^{(m)}$, it suffices to check that \Eqref{eq:tr_char} is true.
We have $\text{tr} (\pi^{(m)}(\text{id})) = \chi^m$ by construction, and $\forall h\in \text{TL}_{m-1}\subset \text{TL}_{m}$ we have
\begin{equation}
\label{eq:tr_sum_v}
    \text{tr} (\pi^{(m)}(he_{m-1})) = \text{tr} (\pi^{(m)}(e_{m-1}h)) =\sum_{k} d_k\tr\left(\pi^{(m)}_k(e_{m-1}h)\right)
\end{equation}
To evaluate this trace we must find all links states $v\in\mathcal{B}^{(m)}_k$ such that $e_{m-1}hv \propto v$. 
We can classify $v$'s by the contraction pattern of the last two sites. 
We denote $v_1$ to be the set such that the last two sites are connected through a closed strand and $v_2$ to be the remaining ones. Schematically,
\begin{equation}
\begin{matrix}
    \includegraphics[scale=.35]{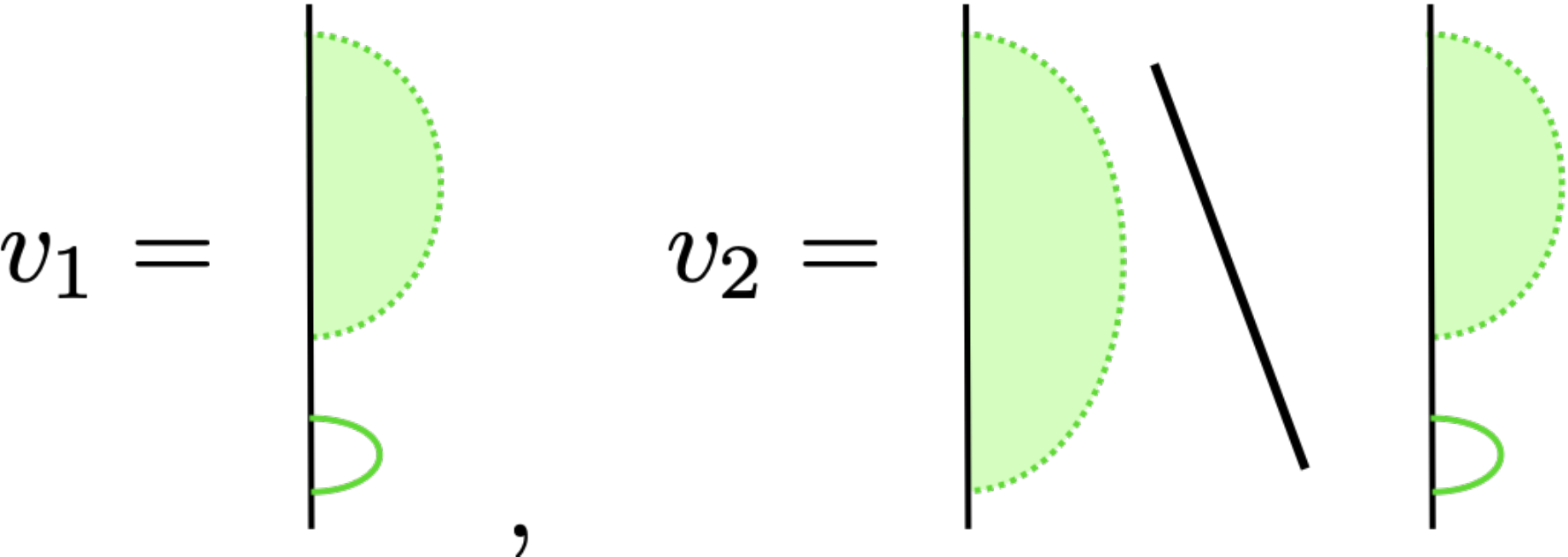},
\end{matrix}
\end{equation}
where we used colored block to represent arbitrary connections.
It is easy to check that 
\begin{equation}
\begin{matrix}
    \includegraphics[scale=.35]{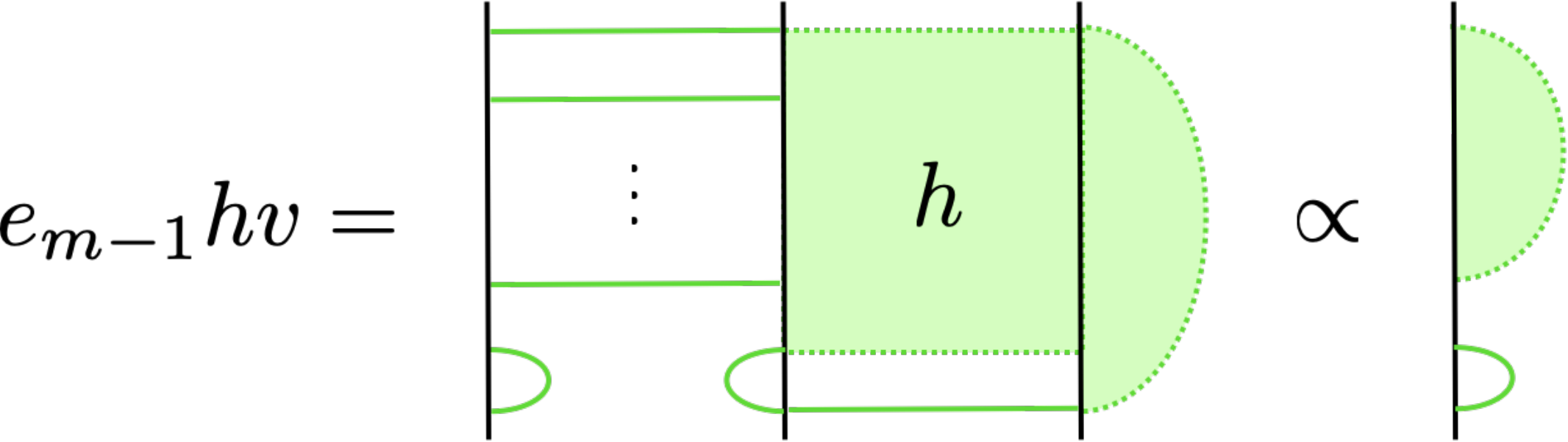},
\end{matrix}
\end{equation}
and thus $he_{m-1}v_2\not\propto hv_2$ so they do not contribute to the trace.
For $v_1$, note that
\begin{equation}
\label{eq:ehv1}
\begin{matrix}
    \includegraphics[scale=.35]{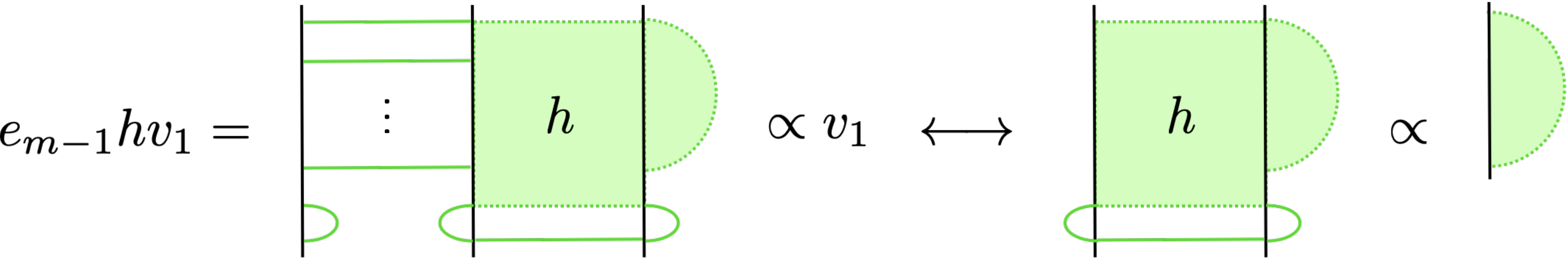}.
\end{matrix}
\end{equation}
But this is only possible if in the diagram of $h$ there is a path connecting the $(m-1)$-th site on the left to the $(m-1)$-th site on the right, otherwise either the RHS of \Eqref{eq:ehv1} vanishes or it is not possible to reproduce the connection pattern of $v_1$.
Also,
\begin{equation}
\begin{matrix}
\label{eq:hv1}
    \includegraphics[scale=.35]{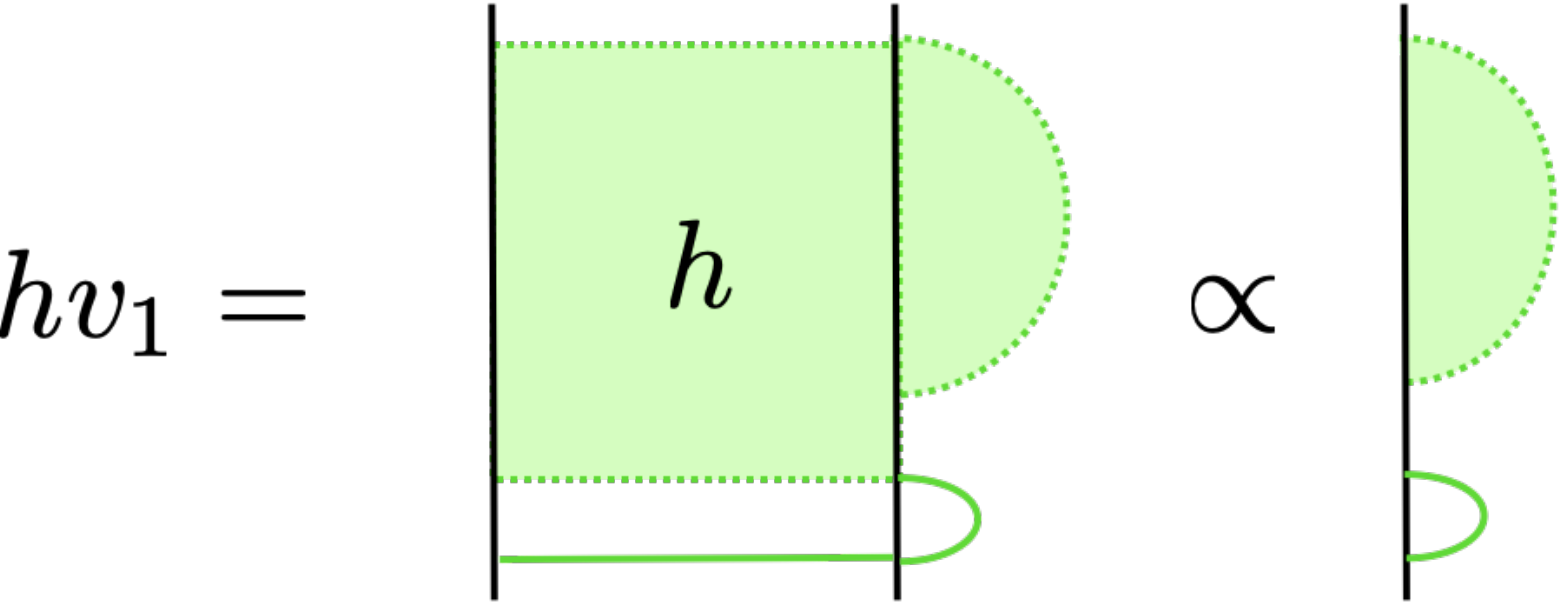}
\end{matrix}
\end{equation}
is only possible for the same subset of $h$ that \Eqref{eq:ehv1} is true for a similar reason.
For such $h$ the contraction in the RHS of \Eqref{eq:ehv1} produces an extra factor of $\chi$, but is otherwise identical to \Eqref{eq:hv1}. 
Hence we find that $hv=av\leftrightarrow e_{m-1}hv=\chi av$ for some $a\in \mathbb{C}$ and thus $\text{tr}(h)=\chi\text{tr}(he_{m-1})$. This result naturally generalizes to all sequential inclusions of algebras in the list $\text{TL}_1\subset\cdots\subset\text{TL}_{m-1}\subset\text{TL}_{m}$.
Using Lemma~\ref{lem:trace} we conclude that the trace function we constructed on $\mathcal{V}^{(m)}$ is indeed the same as $\text{Tr}_{\text{TL}_m}$.

\end{proof}

\subsection{Proof of Proposition~\ref{lem:gen_fun}}
\begin{proof}
  The generating function $G(q,r,z)$ is defined as an infinite sum over diagrams
  \begin{equation}
    \begin{matrix}
       \includegraphics[scale=.3]{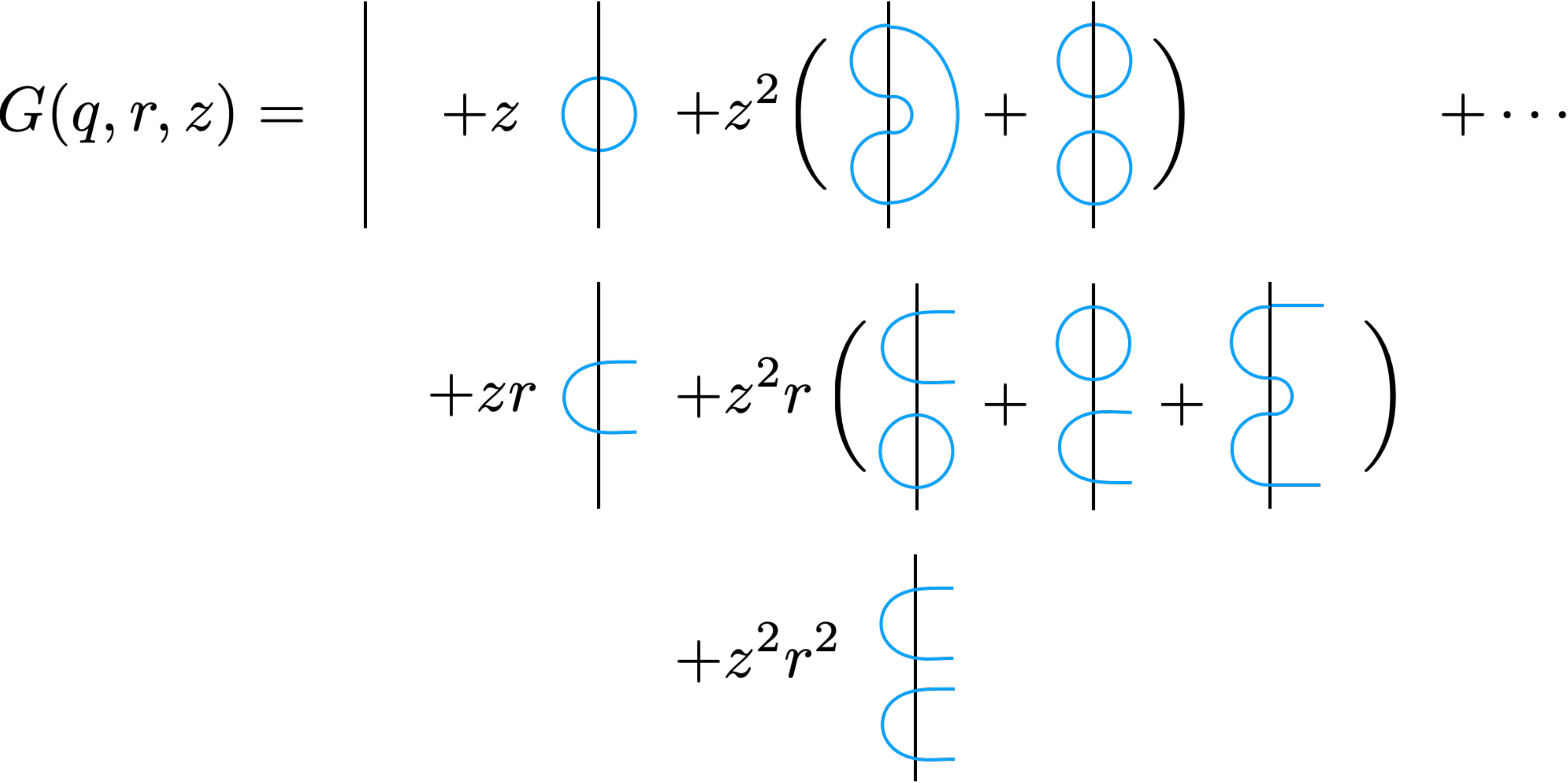}   
    \end{matrix}
  \end{equation}
  where the power of $z$ counts (half) the number of elements $m/2$ and power of $r$ counts (half) the number of defects $k$.
  Each diagram is evaluated in a similar fashion as the definition of linear functional $f_q$ in \Eqref{eq:f_q} -- for every closed loop in diagram assign a (positive rather than negative) power of $q$ to the result.
  Also introduce 1PI generating function $H(q,r,z)$ defined as the sum over all connected diagrams:
  \begin{equation}
    \begin{matrix}
       \includegraphics[scale=.3]{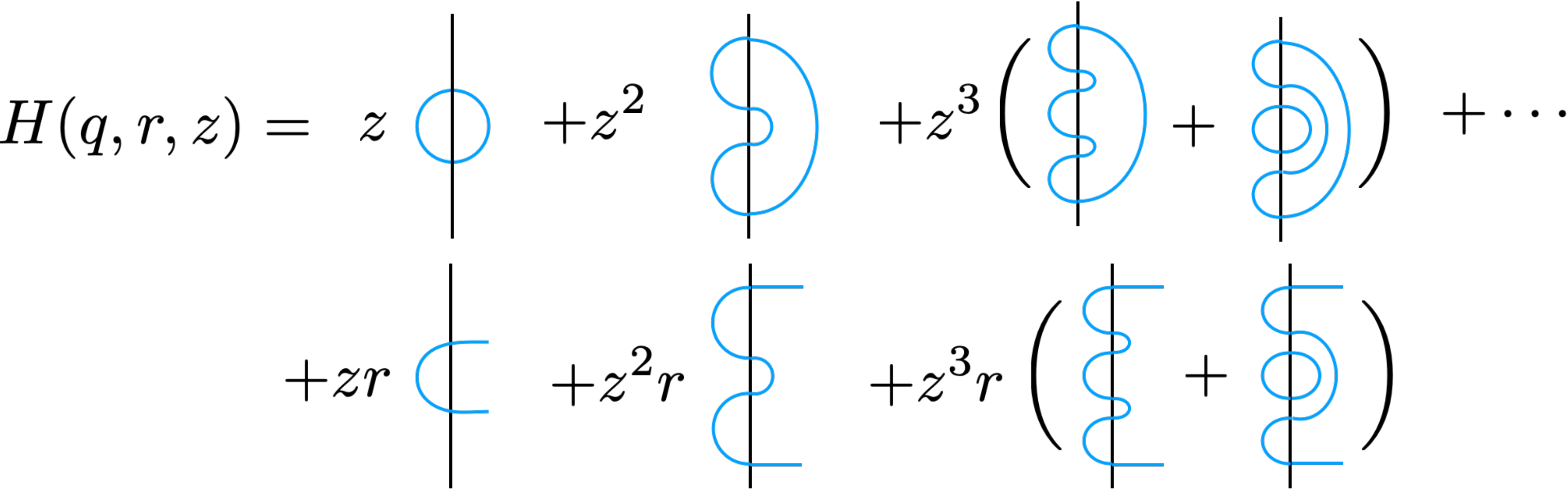}   
    \end{matrix}
  \end{equation}
  The form of $H$ allows us to reorganize it as
  \begin{equation}
    \begin{matrix}
       \includegraphics[scale=.3]{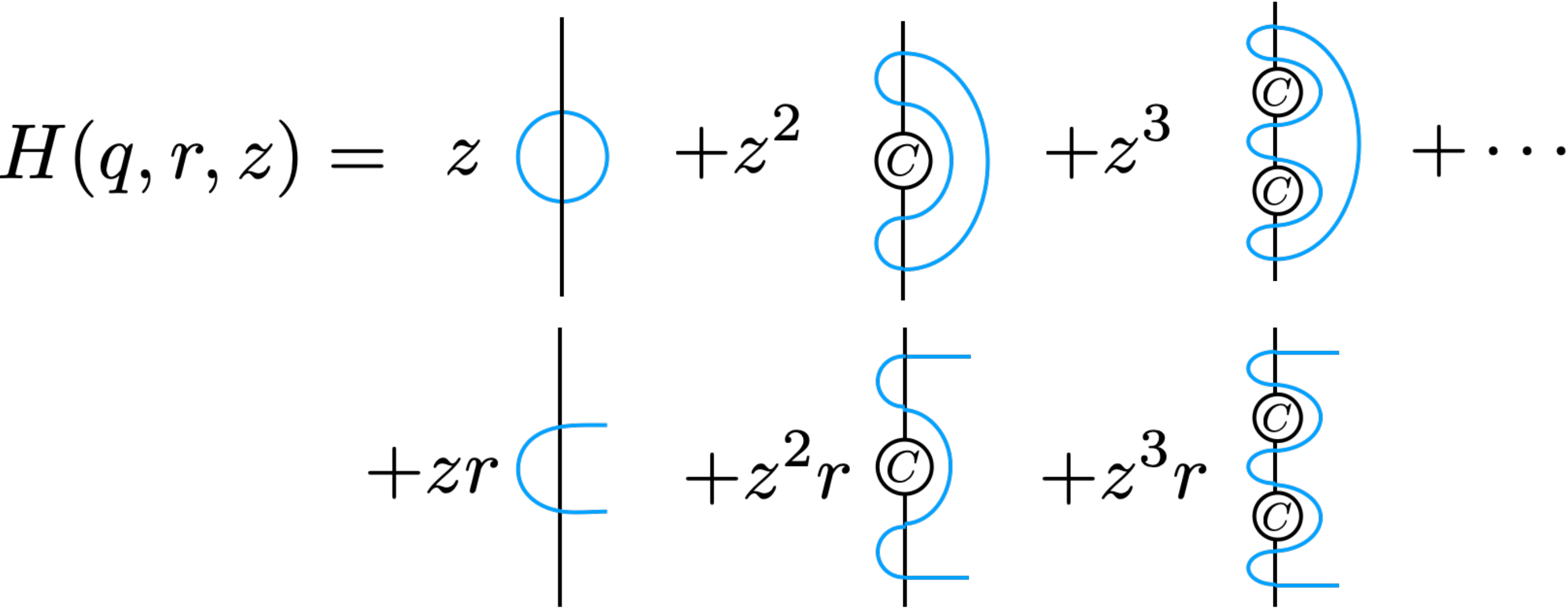}   
    \end{matrix}
  \end{equation}
  or
  \begin{align}
  \label{eq:1PIgen}
    \begin{split}
      H(q,r,z) &= z(q+r)(1+zC(q,z)+z^2C^2(q,z)+\cdots) \\
      &= \frac{z(q+r)}{1-zC(q,z)}   
    \end{split}
  \end{align}
  $G$ and $H$ are related by
  \begin{equation}
    \label{eq:genfun_app}
    \begin{split}      
       G(q,r,z) &= 1+H(q,r,z) + H^2(q,r,z) + \cdots = \frac{1}{1-H(q,r,z)} \\
   &= \frac{1-zC(q,z)}{1-z(q+r)-zC(q,z)}
    \end{split}
  \end{equation} 
  The function $C(q,z)$ is the generating function of link states with no defects:
  \begin{equation}
    \label{eq:Cz_identity}
    \begin{matrix}
       \includegraphics[scale=.3]{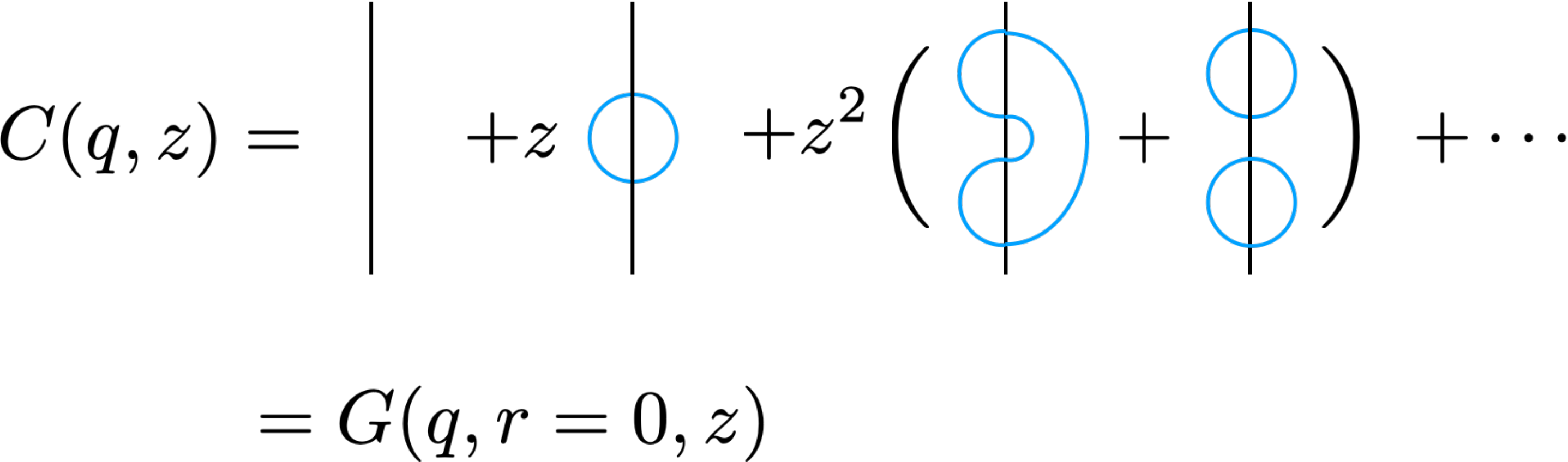}   
    \end{matrix}
  \end{equation}
  $C(q,z)$ is also known under the name of the generating function of \textit{q-Catalan numbers}.
  \begin{equation}    
    C(q,z) = \frac{1-z(q-1)-\sqrt{(1+z(q-1))^2-4qz}}{2z}
  \end{equation}
  For a derivation of $C(q,z)$ please refer to the proof of Lemma~\ref{lem:q_catalan}.
\end{proof}

\subsection{Proof of Proposition~\ref{lem:gen_fun_2}}
\begin{proof}
  Define $\tilde{C}(q,z)=C(q,z)-1$. It satisfies a different quadratic equation
  \begin{equation}
    \tilde{C}(q,z)+(q+1-\frac{1}{z})\tilde{C}(q,z)+q=0
  \end{equation}
  Consider the generating function of $\tilde{C}^k(q,z)$:
  \begin{align}
    \begin{split}
      X(q,z,t) &= \sum^\infty_{k=0} \tilde{C}^k(q,z)t^k \\
               &=(\frac{1}{z}-q-1)\sum^\infty_{k=2}\tilde{C}^{k-1}t^k-q\sum^\infty_{k=2}\tilde{C}^{k-2}t^k+\tilde{C}(q,z)t+1 \\
               &=(\frac{1}{z}-q-1)t(X(q,z,t)-1)-qt^2X(q,z,t)+\tilde{C}(q,z)t+1
    \end{split}
  \end{align}
  Solving for $X$ gives
  \begin{align}
    \begin{split}
       X(q,z,t) &= \frac{\tilde{C}(q,z)t+(q+1-\frac{1}{z})t+1}{qt^2+(q+1-\frac{1}{z})t+1} \\
    &= \frac{C(q,z)t+(q-\frac{1}{z})t+1}{qt^2+(q+1-\frac{1}{z})t+1}   
    \end{split}
  \end{align}
  Now we can write $G(q,r,z)$ as
  \begin{align}
    \begin{split}
      G(q,r,z) &= C(q,z)X(q,z,r/q) \\
      &= \frac{(r+q)C(q,z)-\frac{r}{z}}{r^2+(q+1-\frac{1}{z})r+q}   
    \end{split}
  \end{align}
  where we have again used \Eqref{eq:Cz_quad} to swap out $C^2(q,z)$. 

\end{proof}

\bibliographystyle{jhep}
\bibliography{mybibliography}

\providecommand{\href}[2]{#2}\begingroup\raggedright\begin{thebibliography}{10}

\bibitem{Akers:2021pvd}
C.~Akers, T.~Faulkner, S.~Lin, and P.~Rath, {\it {Reflected entropy in random
  tensor networks}},  {\em JHEP} {\bf 05} (2022) 162,
  [\href{http://arxiv.org/abs/2112.09122}{{\tt arXiv:2112.09122}}].

\bibitem{Ryu:2006bv}
S.~Ryu and T.~Takayanagi, {\it {Holographic derivation of entanglement entropy
  from AdS/CFT}},  {\em Phys. Rev. Lett.} {\bf 96} (2006) 181602,
  [\href{http://arxiv.org/abs/hep-th/0603001}{{\tt hep-th/0603001}}].

\bibitem{Ryu:2006ef}
S.~Ryu and T.~Takayanagi, {\it {Aspects of Holographic Entanglement Entropy}},
  {\em JHEP} {\bf 08} (2006) 045,
  [\href{http://arxiv.org/abs/hep-th/0605073}{{\tt hep-th/0605073}}].

\bibitem{Hubeny:2007xt}
V.~E. Hubeny, M.~Rangamani, and T.~Takayanagi, {\it {A Covariant holographic
  entanglement entropy proposal}},  {\em JHEP} {\bf 07} (2007) 062,
  [\href{http://arxiv.org/abs/0705.0016}{{\tt arXiv:0705.0016}}].

\bibitem{Dutta:2019gen}
S.~Dutta and T.~Faulkner, {\it {A canonical purification for the entanglement
  wedge cross-section}},  \href{http://arxiv.org/abs/1905.00577}{{\tt
  arXiv:1905.00577}}.

\bibitem{Chandrasekaran:2020qtn}
V.~Chandrasekaran, M.~Miyaji, and P.~Rath, {\it {Including contributions from
  entanglement islands to the reflected entropy}},  {\em Phys. Rev. D} {\bf
  102} (2020), no.~8 086009, [\href{http://arxiv.org/abs/2006.10754}{{\tt
  arXiv:2006.10754}}].

\bibitem{Hayden:2021gno}
P.~Hayden, O.~Parrikar, and J.~Sorce, {\it {The Markov gap for geometric
  reflected entropy}},  \href{http://arxiv.org/abs/2107.00009}{{\tt
  arXiv:2107.00009}}.

\bibitem{Marolf:2020vsi}
D.~Marolf, S.~Wang, and Z.~Wang, {\it {Probing phase transitions of holographic
  entanglement entropy with fixed area states}},  {\em JHEP} {\bf 12} (2020)
  084, [\href{http://arxiv.org/abs/2006.10089}{{\tt arXiv:2006.10089}}].

\bibitem{Dong:2020iod}
X.~Dong and H.~Wang, {\it {Enhanced corrections near holographic entanglement
  transitions: a chaotic case study}},  {\em JHEP} {\bf 11} (2020) 007,
  [\href{http://arxiv.org/abs/2006.10051}{{\tt arXiv:2006.10051}}].

\bibitem{Akers:2020pmf}
C.~Akers and G.~Penington, {\it {Leading order corrections to the quantum
  extremal surface prescription}},  {\em JHEP} {\bf 04} (2021) 062,
  [\href{http://arxiv.org/abs/2008.03319}{{\tt arXiv:2008.03319}}].

\bibitem{Akers:2019gcv}
C.~Akers and P.~Rath, {\it {Entanglement Wedge Cross Sections Require
  Tripartite Entanglement}},  \href{http://arxiv.org/abs/1911.07852}{{\tt
  arXiv:1911.07852}}.

\bibitem{Lewkowycz:2013nqa}
A.~Lewkowycz and J.~Maldacena, {\it {Generalized gravitational entropy}},  {\em
  JHEP} {\bf 08} (2013) 090, [\href{http://arxiv.org/abs/1304.4926}{{\tt
  arXiv:1304.4926}}].

\bibitem{Kusuki:2019evw}
Y.~Kusuki and K.~Tamaoka, {\it {Entanglement Wedge Cross Section from CFT:
  Dynamics of Local Operator Quench}},
  \href{http://arxiv.org/abs/1909.06790}{{\tt arXiv:1909.06790}}.

\bibitem{Hayden:2016cfa}
P.~Hayden, S.~Nezami, X.-L. Qi, N.~Thomas, M.~Walter, and Z.~Yang, {\it
  {Holographic duality from random tensor networks}},  {\em JHEP} {\bf 11}
  (2016) 009, [\href{http://arxiv.org/abs/1601.01694}{{\tt arXiv:1601.01694}}].

\bibitem{Penington:2019kki}
G.~Penington, S.~H. Shenker, D.~Stanford, and Z.~Yang, {\it {Replica wormholes
  and the black hole interior}},  \href{http://arxiv.org/abs/1911.11977}{{\tt
  arXiv:1911.11977}}.

\bibitem{Akers:2022max}
C.~Akers, T.~Faulkner, S.~Lin, and P.~Rath, {\it {The Page curve for reflected
  entropy}},  {\em JHEP} {\bf 06} (2022) 089,
  [\href{http://arxiv.org/abs/2201.11730}{{\tt arXiv:2201.11730}}].

\bibitem{Dong:2018seb}
X.~Dong, D.~Harlow, and D.~Marolf, {\it {Flat entanglement spectra in
  fixed-area states of quantum gravity}},  {\em JHEP} {\bf 10} (2019) 240,
  [\href{http://arxiv.org/abs/1811.05382}{{\tt arXiv:1811.05382}}].

\bibitem{Akers:2018fow}
C.~Akers and P.~Rath, {\it {Holographic Renyi Entropy from Quantum Error
  Correction}},  {\em JHEP} {\bf 05} (2019) 052,
  [\href{http://arxiv.org/abs/1811.05171}{{\tt arXiv:1811.05171}}].

\bibitem{Dong:2019piw}
X.~Dong and D.~Marolf, {\it {One-loop universality of holographic codes}},
  {\em JHEP} {\bf 03} (2020) 191, [\href{http://arxiv.org/abs/1910.06329}{{\tt
  arXiv:1910.06329}}].

\bibitem{Engelhardt:2017aux}
N.~Engelhardt and A.~C. Wall, {\it {Decoding the Apparent Horizon:
  Coarse-Grained Holographic Entropy}},  {\em Phys. Rev. Lett.} {\bf 121}
  (2018), no.~21 211301, [\href{http://arxiv.org/abs/1706.02038}{{\tt
  arXiv:1706.02038}}].

\bibitem{Engelhardt:2018kcs}
N.~Engelhardt and A.~C. Wall, {\it {Coarse Graining Holographic Black Holes}},
  {\em JHEP} {\bf 05} (2019) 160, [\href{http://arxiv.org/abs/1806.01281}{{\tt
  arXiv:1806.01281}}].

\bibitem{Temperley:1971iq}
H.~N.~V. Temperley and E.~H. Lieb, {\it {Relations between the 'percolation'
  and 'colouring' problem and other graph-theoretical problems associated with
  regular planar lattices: some exact results for the 'percolation' problem}},
  {\em Proc. Roy. Soc. Lond. A} {\bf 322} (1971) 251--280.

\bibitem{Shapourian:2020mkc}
H.~Shapourian, S.~Liu, J.~Kudler-Flam, and A.~Vishwanath, {\it {Entanglement
  negativity spectrum of random mixed states: A diagrammatic approach}},
  \href{http://arxiv.org/abs/2011.01277}{{\tt arXiv:2011.01277}}.

\bibitem{Dong:2021oad}
X.~Dong, S.~McBride, and W.~W. Weng, {\it {Replica Wormholes and Holographic
  Entanglement Negativity}},  \href{http://arxiv.org/abs/2110.11947}{{\tt
  arXiv:2110.11947}}.

\bibitem{Vardhan:2021npf}
S.~Vardhan, J.~Kudler-Flam, H.~Shapourian, and H.~Liu, {\it {Bound entanglement
  in thermalized states and black hole radiation}},
  \href{http://arxiv.org/abs/2110.02959}{{\tt arXiv:2110.02959}}.

\bibitem{Brill:1995jv}
D.~R. Brill, {\it {Multi - black hole geometries in (2+1)-dimensional
  gravity}},  {\em Phys. Rev. D} {\bf 53} (1996) 4133--4176,
  [\href{http://arxiv.org/abs/gr-qc/9511022}{{\tt gr-qc/9511022}}].

\bibitem{Krasnov:2000zq}
K.~Krasnov, {\it {Holography and Riemann surfaces}},  {\em Adv. Theor. Math.
  Phys.} {\bf 4} (2000) 929--979,
  [\href{http://arxiv.org/abs/hep-th/0005106}{{\tt hep-th/0005106}}].

\bibitem{Skenderis:2009ju}
K.~Skenderis and B.~C. van Rees, {\it {Holography and wormholes in 2+1
  dimensions}},  {\em Commun. Math. Phys.} {\bf 301} (2011) 583--626,
  [\href{http://arxiv.org/abs/0912.2090}{{\tt arXiv:0912.2090}}].

\bibitem{Balasubramanian:2014hda}
V.~Balasubramanian, P.~Hayden, A.~Maloney, D.~Marolf, and S.~F. Ross, {\it
  {Multiboundary Wormholes and Holographic Entanglement}},  {\em Class. Quant.
  Grav.} {\bf 31} (2014) 185015, [\href{http://arxiv.org/abs/1406.2663}{{\tt
  arXiv:1406.2663}}].

\bibitem{Yin:2007at}
X.~Yin, {\it {On Non-handlebody Instantons in 3D Gravity}},  {\em JHEP} {\bf
  09} (2008) 120, [\href{http://arxiv.org/abs/0711.2803}{{\tt
  arXiv:0711.2803}}].

\bibitem{Maxfield:2016mwh}
H.~Maxfield, S.~Ross, and B.~Way, {\it {Holographic partition functions and
  phases for higher genus Riemann surfaces}},  {\em Class. Quant. Grav.} {\bf
  33} (2016), no.~12 125018, [\href{http://arxiv.org/abs/1601.00980}{{\tt
  arXiv:1601.00980}}].

\bibitem{Harlow:2018tng}
D.~Harlow and H.~Ooguri, {\it {Symmetries in quantum field theory and quantum
  gravity}},  {\em Commun. Math. Phys.} {\bf 383} (2021), no.~3 1669--1804,
  [\href{http://arxiv.org/abs/1810.05338}{{\tt arXiv:1810.05338}}].

\bibitem{Dong:2022ilf}
X.~Dong, D.~Marolf, P.~Rath, A.~Tajdini, and Z.~Wang, {\it {The spacetime
  geometry of fixed-area states in gravitational systems}},  {\em JHEP} {\bf
  08} (2022) 158, [\href{http://arxiv.org/abs/2203.04973}{{\tt
  arXiv:2203.04973}}].

\bibitem{Almheiri:2016blp}
A.~Almheiri, X.~Dong, and B.~Swingle, {\it {Linearity of Holographic
  Entanglement Entropy}},  {\em JHEP} {\bf 02} (2017) 074,
  [\href{http://arxiv.org/abs/1606.04537}{{\tt arXiv:1606.04537}}].

\bibitem{newpaper}
C.~Akers, T.~Faulkner, S.~Lin, and P.~Rath, {\it To appear}, .

\bibitem{Ridout:2012gg}
D.~Ridout and Y.~Saint-Aubin, {\it {Standard modules, induction and the
  structure of the Temperley-Lieb algebra}},  {\em Adv. Theor. Math. Phys.}
  {\bf 18} (2014), no.~5 957--1041, [\href{http://arxiv.org/abs/1204.4505}{{\tt
  arXiv:1204.4505}}].

\bibitem{Marolf:2019zoo}
D.~Marolf, {\it {CFT sewing as the dual of AdS cut-and-paste}},  {\em JHEP}
  {\bf 02} (2020) 152, [\href{http://arxiv.org/abs/1909.09330}{{\tt
  arXiv:1909.09330}}].

\bibitem{Harlow:2016vwg}
D.~Harlow, {\it {The Ryu\textendash{}Takayanagi Formula from Quantum Error
  Correction}},  {\em Commun. Math. Phys.} {\bf 354} (2017), no.~3 865--912,
  [\href{http://arxiv.org/abs/1607.03901}{{\tt arXiv:1607.03901}}].

\bibitem{Collins:2010fsu}
B.~Collins, I.~Nechita, and K.~Zyczkowski, {\it {Random graph states, maximal
  flow and Fuss-Catalan distributions}},
  \href{http://arxiv.org/abs/1003.3075}{{\tt arXiv:1003.3075}}.

\bibitem{2014arXiv1401.7802D}
T.~{Dupic} and I.~{P{\'e}rez Castillo}, {\it {Spectral density of products of
  Wishart dilute random matrices. Part I: the dense case}},  {\em arXiv
  e-prints} (Jan., 2014) arXiv:1401.7802,
  [\href{http://arxiv.org/abs/1401.7802}{{\tt arXiv:1401.7802}}].

\bibitem{PhysRevE.92.012121}
W.~M\l{}otkowski, M.~A. Nowak, K.~A. Penson, and K.~\ifmmode~\dot{Z}\else
  \.{Z}\fi{}yczkowski, {\it Spectral density of generalized wishart matrices
  and free multiplicative convolution},  {\em Phys. Rev. E} {\bf 92} (Jul,
  2015) 012121.

\bibitem{Mingo_2017}
J.~A. Mingo and R.~Speicher, {\em Free Probability and Random Matrices}.
\newblock Springer New York, 2017.

\bibitem{Marchenko_1967}
V.~A. Mar{\v{c}}enko and L.~A. Pastur, {\it Distribution of eigenvalues for
  some sets of random matrices},  {\em Mathematics of the {USSR}-Sbornik} {\bf
  1} (apr, 1967) 457--483.

\bibitem{Cheng:2022ori}
N.~Cheng, C.~Lancien, G.~Penington, M.~Walter, and F.~Witteveen, {\it {Random
  tensor networks with nontrivial links}},
  \href{http://arxiv.org/abs/2206.10482}{{\tt arXiv:2206.10482}}.

\bibitem{Wang:2022ots}
J.~Wang, {\it {Beyond islands: A free probabilistic approach}},
  \href{http://arxiv.org/abs/2209.10546}{{\tt arXiv:2209.10546}}.

\bibitem{Bousso:2020yxi}
R.~Bousso, V.~Chandrasekaran, P.~Rath, and A.~Shahbazi-Moghaddam, {\it {Gravity
  dual of Connes cocycle flow}},  {\em Phys. Rev. D} {\bf 102} (2020), no.~6
  066008, [\href{http://arxiv.org/abs/2007.00230}{{\tt arXiv:2007.00230}}].

\bibitem{Kaplan:2022orm}
M.~Kaplan and D.~Marolf, {\it {The action of HRT-areas as operators in
  semiclassical gravity}},  {\em JHEP} {\bf 08} (2022) 102,
  [\href{http://arxiv.org/abs/2203.04270}{{\tt arXiv:2203.04270}}].

\bibitem{xi}
X.~Dong, D.~Marolf, and P.~Rath, {\it {Geometric entropies and their geometric
  flow: the power of Lorentzian methods}},  \href{http://arxiv.org/abs/to
  appear}{{\tt to appear}}.

\bibitem{Wu}
J.~Held, M.~Kaplan, D.~Marolf, and J.-q. Wu, {\it To appear}, .

\bibitem{Jones1983}
V.~Jones, {\it Index for subfactors.},  {\em Inventiones mathematicae} {\bf 72}
  (1983) 1--26.

\bibitem{Leutheusser:2021qhd}
S.~Leutheusser and H.~Liu, {\it {Causal connectability between quantum systems
  and the black hole interior in holographic duality}},
  \href{http://arxiv.org/abs/2110.05497}{{\tt arXiv:2110.05497}}.

\bibitem{Witten:2021unn}
E.~Witten, {\it {Gravity and the crossed product}},  {\em JHEP} {\bf 10} (2022)
  008, [\href{http://arxiv.org/abs/2112.12828}{{\tt arXiv:2112.12828}}].

\bibitem{Chandrasekaran:2022cip}
V.~Chandrasekaran, R.~Longo, G.~Penington, and E.~Witten, {\it {An Algebra of
  Observables for de Sitter Space}},
  \href{http://arxiv.org/abs/2206.10780}{{\tt arXiv:2206.10780}}.

\bibitem{Lin:2022rbf}
H.~W. Lin, {\it {The bulk Hilbert space of double scaled SYK}},
  \href{http://arxiv.org/abs/2208.07032}{{\tt arXiv:2208.07032}}.

\bibitem{Berkooz:2018jqr}
M.~Berkooz, M.~Isachenkov, V.~Narovlansky, and G.~Torrents, {\it {Towards a
  full solution of the large N double-scaled SYK model}},  {\em JHEP} {\bf 03}
  (2019) 079, [\href{http://arxiv.org/abs/1811.02584}{{\tt arXiv:1811.02584}}].

\bibitem{Akers:2019nfi}
C.~Akers, N.~Engelhardt, and D.~Harlow, {\it {Simple holographic models of
  black hole evaporation}},  \href{http://arxiv.org/abs/1910.00972}{{\tt
  arXiv:1910.00972}}.

\bibitem{10.2307/1971403}
V.~F.~R. Jones, {\it Hecke algebra representations of braid groups and link
  polynomials},  {\em Annals of Mathematics} {\bf 126} (1987), no.~2 335--388.

\bibitem{10.1112/S0024611597000282}
A.~Ram, {\it {Seminormal Representations of Weyl Groups and Iwahori-Hecke
  Algebras}},  {\em Proceedings of the London Mathematical Society} {\bf 75}
  (07, 1997) 99--133.

\bibitem{Westbury1995}
B.~Westbury, {\it The representation theory of the temperley-lieb algebras.},
  {\em Mathematische Zeitschrift} {\bf 219} (1995), no.~4 539--566.

\bibitem{Graham1996}
J.~J. Graham and G.~I. Lehrer, {\it Cellular algebras},  {\em Inventiones
  mathematicae} {\bf 123} (1996), no.~1 1--34.

\end{thebibliography}\endgroup
 
\end{document}